\documentclass[twoside,11pt]{article}
\usepackage{extarrows}   %

\newcommand*{\bbC}{\mathbb{C}}

\newcommand*{\bbE}{\mathbb{E}}

\newcommand*{\bbP}{\mathbb{P}}

\newcommand*{\bbR}{\mathbb{R}}

\newcommand*{\mcalF}{\mathcal{F}}

\newcommand*{\mcalH}{\mathcal{H}}

\newcommand*{\mcalL}{\mathcal{L}}
\newcommand*{\mcalM}{\mathcal{M}}
\newcommand*{\mcalN}{\mathcal{N}}

\newcommand*{\mcalT}{\mathcal{T}}

\newcommand*{\mcalV}{\mathcal{V}}

\newcommand*{\bfA}{\boldsymbol{A}}

\newcommand*{\qqquad}{\quad\,\quad\,\quad}

\newcommand*{\pair}[2]                   %
  {
    \langle
      #1,#2
    \rangle
  }

\newcommand*{\Pair}[2]                   %
  {
    \left\langle
      #1,#2
    \right\rangle
  }

\newcommand*{\ppair}[2]                   %
  {
    \langle\!\!
      \langle
        #1,#2
      \rangle\!\!
    \rangle
  }

\newcommand*{\Ppair}[2]                   %
  {
    \left\langle\!\!
      \left\langle
        #1,#2
      \right\rangle\!\!
    \right\rangle
  }

\newcommand*{\norm}[1]                   %
  {
    |\!|#1|\!|
  }

\newcommand*{\Norm}[1]                   %
  {
    \left|\!
      \left|
        #1
      \right|\!
    \right|
  }

\newcommand*{\nnorm}[1]                   %
  {
    |\!|\!|
      #1
    |\!|\!|
  }

\newcommand*{\Nnorm}[1]                   %
  {
    \left|\!
      \left|\!
        \left|
          #1
        \right|\!
      \right|\!
    \right|
  }

\newcommand*{\bra}[1]   %
 {
   \left\langle #1
   \right|
 }

\newcommand*{\ket}[1]   %
 {
   \left| #1
   \right\rangle
 }

\newcommand*{\braket}[2] %
 {
  \left\langle
   #1
   \middle|
   #2
  \right\rangle
 }

\newcommand*{\bracket}[3] %
 {
  \left\langle
   #1
   \middle|
   #2
   \middle|
   #3
  \right\rangle
 }

\newcommand*{\vacexp}[1] %
 {\left\langle #1 \right\rangle}

\newcommand*{\normal}[1] %
{\,:#1:\,}

\newcommand*{\Choose}[2]                  %
  {\left[{
      #1 \atop
      #2
  }\right]}

\newcommand*{\dd}[1]                       %
  {\frac{\mathrm{d}   }
        {\mathrm{d} #1}
  }

\newcommand*{\mulint}[1]                   %
  {
    \underbrace{
      \int\int\cdots\int
    }_{#1}\,\!
  }

  {
    \begin{equation*}
    \addtocounter{mythm}{1}
  }
  {
    \tag{\textbf{\themythm}}
    \end{equation*}
  }

\usepackage{amsthm}
\usepackage{subcaption}
\usepackage{enumitem}
\setlist[enumerate,1]{label=(\arabic*),ref=(\arabic*)}
\usepackage{latexsym,amsmath,amssymb,amsfonts,epsfig,graphicx,psfrag}
\usepackage{eepic,color,colordvi,amscd,xcolor}
\usepackage{float} 
\usepackage{comment}
\usepackage{placeins}
\usepackage{blindtext}
\usepackage[normalem]{ulem}

\usepackage[preprint]{jmlr2e}

\usepackage{lastpage}

\ShortHeadings{Inference on Dynamic Signed Network}{Shang et al.}
\firstpageno{1}

\begin{document}

\title{Inference for Balance in Dynamic Signed Networks\textsuperscript{\dag}}

\author{\name Ergan Shang \email eshang@andrew.cmu.edu \\
       \addr Department of Statistics and Data Science\\
       Carnegie Mellon University\\
       Pittsburgh, PA 15213, USA
       \AND
       \name Yuan Zhang \email yzhanghf@stat.osu.edu \\
       \addr Department of Statistics\\
       The Ohio State University\\
       Columbus, OH 43210, USA
       \AND
       \name Weijing Tang\thanks{Corresponding author.} \email weijingt@andrew.cmu.edu \\
       \addr Department of Statistics and Data Science\\
       Carnegie Mellon University\\
       Pittsburgh, PA 15213, USA}


\maketitle
{\renewcommand{\thefootnote}{\fnsymbol{footnote}}%
\footnotetext[2]{Codes for the simulation study (\Cref{Simulation})
and real data analysis (\Cref{Sec:real_data}) are available on GitHub at
\url{https://github.com/ergan-shang/inference_for_balance_in_dynamic_signed_networks}}}

\begin{abstract}%

Signed networks consist of both positive and negative relations, and structural balance theory provides an important conceptural framework for understanding their global tension structure. 
While existing statistical methods mainly focus on assessing empirical evidence of balance in a single observed network, many real-world signed relations evolve over time. 
This paper develops nonparametric inference for the population degree of structural balance at specified time points in dynamic signed networks, where the target time may or may not coincide with an observed snapshot. 
We consider a dynamic signed graphon model in which both edge formation and sign generation are governed by smoothly time-varying graphon functions. 
To exploit temporal smoothness, we construct a kernel-smoothed estimator that borrows information from snapshots near the target time point. 
Our theoretical analysis establishes a studentized inference procedure and a higher-order distributional approximation based on Edgeworth expansion, showing that temporal smoothing improves inference in sparse networks by reducing variance of observation noise, up to smoothing bias and time-discretization errors. 
We demonstrate the finite-sample performance and practical usefulness of the proposed method through extensive simulation studies and an application to a dynamic international relation network in political science.

\end{abstract}

\begin{keywords}
  Dynamic signed network, degree of balance, kernel smoothing, statistical inference
\end{keywords}
\setcounter{tocdepth}{-1}

\section{Introduction}

Signed networks arise naturally in many social, political, and relational systems, where interactions may be cooperative or antagonistic, trusting or distrustful, friendly or hostile~\citep{leskovec2010signed, tang2016survey, kunegis2014applications}. 
Examples include alliances and rivalries in international relations, trust and distrust in online platforms, and positive and negative interpersonal ties in social groups. One important conceptural framework for understanding connectivity patterns in such signed networks is \textit{structural balance theory}~\citep{heider1946attitudes, harary1953notion}, which suggests that signed relations tend to organize into locally stable configurations. 
At the level of a triangle (i.e., a configuration of three nodes conntected to each other), balance is characterized by the product of the three edge signs: a triangle is called \textit{balanced} when this product is positive and \textit{unbalanced} otherwise.
This formalizes familiar heuristics such as ``the friend of my friend is my friend'' and ``the enemy of my enemy is my friend.'' 
The prevalence of balanced versus unbalanced triangles provides a natural summary statistics of the global tension structure of a signed network. 

A growing literature has developed statistical inference tools to assess empirical evidence for balance theory in observed signed networks. 
One line of work develops hypothesis tests against \textit{balance-free} null models, where an empirical balance statistic is compared with its distribution under a reference model that removes explicit preference for balanced triangles while preserving certain network features~\citep{facchetti2011computing, leskovec2010signed, feng2022testing}.
For example, \citet{leskovec2010signed} considered a null model that fixes the positions of edges while reshuffling their signs, whereas \citet{facchetti2011computing} assigned edge signs through an independent Bernoulli model.
Another line of work uses parametric random network models, including signed exponential random graph models (ERGMs),  to assess whether balance-related local configurations have significant effects on tie formation~\citep{gallo2024testing, fritz2025exponential}. 
In a different direction, \citet{chen2024nonparametric} developed an inference procedure for the population degree of balance under a  nonparametric signed graphon model. 
These works provide important tools for studying balance in real-world signed networks, but they primarily focus on a single observed network snapshot.

In many applications, signed relations are not static. 
Social relations, political alliances, and antagonistic interactions often evolve  over time.
As a result, the degree of structural balance may strengthen, weaken, or fluctuate across periods. 
This setting is related to the broader literature on  statistical modeling and inference tools for dynamic network data, which has developed a range of tools for capturing temporal dependence, evolving latent structure, and time-varying connectivity patterns~\citep[e.g.,][]{kim2018review,sewell2015latent,matias2017statistical,pensky2019dynamic,wang2021optimal,JMLR:v26:23-0444,athreya2025euclidean,lin2026dynamic}. 
Recent work has begun to study balance in dynamic signed networks from a mechanism-testing perspective. 
For example, \citet{fritz2025exponential} extended parametric ERGMs to dynamic signed networks, with the goal of assessing whether balance-related local configurations, such as common enemies or common friends, are associated with tie formation and sign evolution over time. 
In this work, we study a different inferential target: the population degree of balance at given time points, which may or may not coincide with observed network snapshots. 
This target is useful in practice because it provides a quantitative summary of the global tension structure of an evolving signed network. 
For example, in international relations, a lower degree of balance may reflect a less stable configuration of alliances and disputes, whereas a higher degree of balance may suggest the emergence of more coherent blocs. 
Thus, inference for the time-specific degree of balance can help researchers describe how balanced a system is at a given time and quantify its uncertainty.

When the target time point coincides with an observed snapshot, one could in principle apply existing one-snapshot network-moment inference methods for the degree of balance \citep{10.1214/11-AOS904, 10.1214/15-AOS1338, zhang2022edgeworth, chen2024nonparametric}. However, such methods only rely on the information available in a single network. 
In sparse signed networks, one snapshot may contain limited triadic information for reliable uncertainty quantification, and the validity of one-snapshot inference typically requires restrictions on the sparsity level. 
The availability of dynamic observations of signed relations provide an opportunity to overcome this limitation.  
In many systems, the mechanisms governing tie formation and tie sign do not change abruptly from one moment to the next, but instead evolve smoothly. 
When this temporal smoothness holds, snapshots near the target time point contain relevant information about the balance degree. 
Rather than treating repeated observations as isolated networks, one can borrow information from neighboring time points to stabilize inference. 

To formalize this idea without imposing a parametric model for tie formation or sign evolution, we consider a nonparametric dynamic graphon model for signed networks. 
This model accommodates node heterogeneity through latent variables and captures smooth temporal evolution in both edge formation and sign generation via two time-varying graphon functions. 
It can be viewed as a dynamic extension of time-static signed graphon models and as a signed-network counterpart of dynamic graphon models for binary networks~\citep{chen2024nonparametric, pensky2019dynamic}. 
Within this framework, the smooth temporal evolution of the underlying graphon functions implies that the population degree of balance itself also varies smoothly over time.

Motivated by this perspective, we propose an inference procedure based on edge-level kernel smoothing. 
We first smooth the edges observations across nearby time points and then construct the signed triangle moment from the smoothed edges. This edge-level construction is important because it reduces observation noise before the nonlinear triadic aggregation step. Compared with directly smoothing time-specific triangle moments computed from each network snapshot, edge-level smoothing leads to stronger variance reduction in sparse networks. We also develop a local leave-one-out cross-validation rule for bandwidth selection.

Our theoretical analysis establishes a studentized inference procedure and a higher-order distributional approximation based on Edgeworth expansion. 
The key insight is that edge-level temporal smoothing reduces the contribution of observation noise by a factor of order $(Th)^{-1}$, where $T$ is the number of observed time points and $h$ is the bandwidth. 
Consequently, compared to single-snapshot inference, the proposed method enlarges the sparsity regime under which valid inference is possible and improves the accuracy of the distributional approximation, up to the bias and discretization terms introduced by smoothing. 

The remainder of the paper is organized as follows. 
\Cref{sec:methodology} introduces the dynamic signed graphon model, the kernel-smoothed estimator, and the bandwidth selection procedure. \Cref{Theoretical Results} develops the studentized inference method and establishes the Edgeworth approximation. \Cref{Simulation} examines the finite-sample validity of the inference procedure using extensive simulation studies, followed by an application to a dynamic international relation network in \Cref{Sec:real_data}.

\paragraph{Notation} We use the standard asymptotic notation $O(\cdot)$, $o(\cdot)$, $\Omega(\cdot)$, and $\omega(\cdot)$, together with their probabilistic counterparts $O_p(\cdot)$ and $o_p(\cdot)$. 
For a random variable $Z$ and a deterministic sequence $\{\alpha_n\}$, we write $Z=\tilde O_{p,1}(\alpha_n)$ if $\bbP(|Z|\ge C\alpha_n)=O(n^{-1})$ for some constant $C>0$.
We write $Z_1 \gtrsim Z_2$ if there exists a constant $C>0$ such that $Z_1 \geq C Z_2$ for all sufficiently large $n$. We write $Z_1 \asymp Z_2$ if $Z_1 \gtrsim Z_2$ and $Z_2 \gtrsim Z_1$.

\section{Methodology} \label{sec:methodology}
We analyze dynamic interactions among $n$ nodes over a time interval $\mcalT\subset\bbR$, within which $T$ timestamps are observed, denoted by $\{t_\ell\in\mcalT\}_{\ell=1}^T$. At each timestamp $t_\ell$, we observe an undirected signed network represented by a signed symmetric adjacency matrix $\bfA(t_\ell)=[A_{ij}(t_\ell)]_{i,j=1}^n\in\{0, 1, -1\}^{n\times n}$. For $1\leq i<j\leq n$, $A_{ij}(t_\ell)=1$ if a positive edge exists between the nodes $i$ and $j$ at timestamp $t_\ell$, $A_{ij}(t_\ell)=-1$ if a negative edge exists, and $A_{ij}(t_\ell)=0$ if no edge is present. We assume the network has no self-loops, so $A_{ii}(t_\ell)=0$ for $1\leq i \leq n$.

Our goal is to infer the degree of structural balance in the dynamic signed network at a given time point $t^* \in \mcalT$ based on the observed sequence of signed networks. The target time point$t^*$ may or may not be one of the observed timestamps $\{t_\ell\in\mcalT\}_{\ell=1}^T$. We quantify the degree of balance through the expectation 
\begin{equation}
\label{mu_ijk}
\mu_{ijk}(t^*):=\bbE[A_{ij}(t^*)A_{jk}(t^*)A_{ki}(t^*)],
\end{equation}
which captures the expected sign of a triangle formed by nodes $i,j,k\in [n]$. Recall that the triangle is called balanced when the product $A_{ij}(t^*)A_{jk}(t^*)A_{ki}(t^*)$ is positive. Thus, $\mu_{ijk}(t^*)>0$ indicates that the triangle formed between nodes $i,j,k$ is more likely to be balanced, while $\mu_{ijk}(t^*)<0$ indicates the opposite. Its average over all node triplets, i.e.,
\[\mu(t^*):=\frac{1}{\binom{n}{3}}\sum_{i<j<k}\mu_{ijk}(t^*), \]
serves as a population-level global measure of balance. For example, if unbalanced triangles dominate the network at time $t^*$, the average will tend to be negative, whereas in a ``balance-free'' network, where positive and negative edges occur with equal probability, we expect the average to be near zero.

To estimate and quantify the uncertainty of the global measure of balance, we propose a dynamic graphon model for signed networks in \Cref{model}. This model captures node heterogeneity through latent variables associated with each node and accommodates flexible connectivity patterns via nonparametric graphon functions. When the graphon evolves smoothly over time, we further introduce a kernel-smoothed network moment estimator in \Cref{procedure}, which adaptively incorporates information from observed networks near the target time, along with a data-driven bandwidth selection method described in \Cref{Tuning}.

\subsection{Dynamic Graphon Model for Signed Networks}
\label{model}
Our dynamic graphon model is characterized by two measurable functions $F(\cdot, \cdot, \cdot), G(\cdot, \cdot, \cdot):[0, 1]^2\times\mcalT\rightarrow[0, 1]$, which are symmetric in the first two arguments, i.e., $F(x, y, t)=F(y, x, t)$ and $G(x, y, t)=G(y, x, t)$ for $x, y\in[0, 1]$ and $t\in\mcalT$. 

\begin{definition}(Dynamic Graphon Model for Signed Networks)
Let $X_1, \dots, X_n$ be i.i.d.\ latent variables drawn from the uniform distribution on $[0, 1]$. Conditional on $\{X_i\}_{i=1}^n$, for $t\in \mcalT$ and $1\leq i<j\leq n$, an edge between nodes $i$ and $j$ is independently drawn with probability $\rho_nF(X_i, X_j, t)$, i.e., $$\bbP(\vert A_{ij}(t)\vert=1\big\rvert X_i, X_j)=\rho_nF(X_i, X_j, t).$$ Then further conditional on an edge existing between nodes $i$ and $j$ at time $t$, independently of all others, it takes the negative sign with probability $G(X_i, X_j, t)$ and the positive sign otherwise, i.e., $$\bbP(A_{ij}(t)=-1\big\rvert\vert A_{ij}(t)\vert=1, X_i, X_j)=G(X_i, X_j, t).$$
Besides, $A_{ij}(t)=A_{ji}(t)$ and $A_{ii}=0$ for $i\neq j$ and $t \in \mcalT$. \label{def_dynamic_signed}
\end{definition}

In this model, the function $F$ governs the probability of edge formation between two nodes given their latent variables and time, while the function $G$ determines the distribution of edge signs. We impose no parametric assumptions on $F$ and $G$, which allows the model to flexibly capture complex time-varying connectivity and sign-generation patterns. To accommodate the increasing sparsity observed in many real-world networks as their size grows, we include a sparsity parameter $\rho_n=o(1)$ that diminishes as the network size $n$ grows. 
Since $\rho_n$ and $F$ are not separately identifiable, without loss of generality, we fix the scale by imposing the normalization $\int_{\mcalT}\int_{0}^1 \int_{0}^1 F(x,y,t)dx dy dt=1.$ Under this normalization, the parameter $\rho_n$ characterizes the global scale of network sparsity with respect to the network size $n$ and is assumed to be invariant over time. This corresponds to a setting where the overall scale of edge probabilities is governed by network growth, while temporal variation in connectivity is captured locally through the time-varying function $F(x,y,t)$. 

This model generalizes existing models in two directions. Compared to the static graphon model for signed networks~\citep{chen2024nonparametric}, both functions now include an additional time argument $t\in \mcalT$, which accommodates evolving connectivity and sign patterns over time. Moreover,  if the second layer of sign generation is omitted, the model reduces to a dynamic graphon model for binary networks~\citep{pensky2019dynamic}.

Under the proposed dynamic graphon model in \Cref{def_dynamic_signed}, the expectation $\mu_{ijk}(t^*)$ depends only on two functions $F$ and $G$ and is invariant to the specific node indices $i, j, k$, as the node latent variables are drawn i.i.d.\ from the same distribution. For clarity,  we therefore omit these subscripts in what follows. Our inference target, the global measure of balance, then simplifies to: $$\mu(t^*)= \bbE[A_{ij}(t^*)A_{jk}(t^*)A_{ki}(t^*)].$$

\subsection{Network Moment Estimation via Kernel Smoothing}
\label{procedure}
Our estimation procedure is motivated by the observation that network structures in many real-world systems evolve gradually. For example, relationships in social networks tend to develop or dissolve in response to an accumulation of events over extended periods. When the underlying probability of network connections varies smoothly over time, this temporal smoothness provides an opportunity to improve efficiency in estimation and inference for the balance measure $\mu(t^*)$. 
In particular, instead of relying solely on a single network snapshot at time $t^*$, we can borrow information from connectivity patterns observed at nearby time points. This is especially useful in sparse signed networks, where each snapshot contains limited observations and may not provide sufficient information for reliable estimation on its own.

To formalize this intuition, we assume that the underlying graphon functions $F$ and $G$ are smooth with respect to the time argument $t$ (see Assumption~\ref{assump: time obs} in Section~\ref{Theoretical Results} for a formal characterization). Under this condition, the signed network moment $\mu(t^*)$ naturally inherits the same temporal smoothness.
We propose a network moment estimator based on kernel smoothing, a widely used nonparametric technique for leveraging local neighborhood information~\citep{chen2017tutorial, zambom2013review}. 
Specifically, we estimate $\mu(t^*)$ using the empirical moment
\begin{equation}\tilde U_{nh}(t^*):=\binom{n}{3}^{-1}\sum_{i<j<k}\tilde A_{ij}(t^*)\tilde A_{jk}(t^*)\tilde A_{ki}(t^*).\label{estimator}
\end{equation} 
Unlike the standard network moment, this estimator is constructed from the kernel-smoothed edges $\tilde{A}_{ij}(t^*)$, which are defined as a weighted average of the observed edges over the time interval $\mcalT$: 
\[\tilde A_{ij}(t^*):=\frac{\sum_{\ell=1}^TK_h(t^*-t_\ell)A_{ij}(t_\ell)}{\sum_{\ell=1}^TK_h(t^*-t_\ell)},\]
where $K_h(\cdot)=K(\cdot / h)/h$ is the scaled kernel function with bandwith $h$. 

The kernel assigns weights to observed edges based on their scaled temporal distance to~$t^*$. 
We consider an order-$\nu$ kernel $K(\cdot):\mathbb{R}\rightarrow \mathbb{R}$ satisfying standard regularity conditions:
    \begin{enumerate}\label{assump: kernel}
        \item[(K1)] $K(\cdot)$ is even and satisfies $\int K(u) d u = 1$ with vanishing moments up to order $\nu-1$, i.e., $\kappa_{j1}=\int u^jK(u) d u=0$ for $1\le j\le \nu-1$ and $\kappa_{\nu1}=\int u^\nu K(u) d u \neq 0$;
        \item[(K2)] $K(\cdot)$ is bounded with $M:=\|K\|_{\infty} <\infty$; 
        \item[(K3)] There exist constants $C', c>0$, and $\alpha \ge 1$ such that for all $x \ge 0$ and $j\leq \nu$, \[\int_{x}^\infty |K(u)| d u \le C' e^{-c x^\alpha}\quad \text{and}\quad \int_x^\infty u^j|K(u)|du\leq C'x^{j-1}e^{-cx^\alpha};\]
        \item[(K4)] The function $u^jK(u)$ has bounded total variation for $0\leq j \le \nu$.
    \end{enumerate}
These conditions ensure bias control and exponential tail decay, which are satisfied by common compact-support kernels (e.g., uniform and Epanechnikov) and by the Gaussian kernel. For simplicity, we adopt the Gaussian kernel in subsequent sections.

The bandwidth parameter $h$ controls the degree of the smoothing. Intuitively, larger $h$ reduces the variance of the estimate by incorporating a wider range of observations but may introduce bias by over-smoothing local patterns. We provide a rigorous theoretical analysis of the impact of the bandwidth on the estimator in \Cref{Theoretical Results} and develop a practical data-driven selection procedure in \Cref{Tuning}.

\subsection{Bandwidth Selection}
\label{Tuning}

A common approach to bandwidth selection in nonparametric smoothing is \textit{leave-one-out} (LOO) cross-validation. The idea is to choose a bandwidth $h$ such that the estimator with bandwidth $h$ predicts each observation well when that observation is excluded from the smoothing procedure. Following this idea, we develop an LOO-based procedure for data-driven bandwidth selection in dynamic network moment estimation.

Specifically, for a given time point $s$ and each pair of nodes, we construct the LOO kernel-smoothed edge using all time points except $s$:
$$\overset{\circ}{A}_{ij}(h, s)=\frac{\sum_{\ell=1, t_\ell\neq s}^TK_h(t_\ell-s)A_{ij}(t_\ell)}{\sum_{\ell=1, t_\ell\neq s}^TK_h(t_\ell-s)} \quad \text{for } 1\leq i<j \leq n.$$
This serves as the LOO kernel smoothing predictor of the edge $A_{ij}(s)$. 
To evaluate prediction accuracy, we compare the network moment estimator based on $\overset{\circ}{A}_{ij}(h, s)$ with that based on the observed edges $A_{ij}(s)$ at time $s$. Recall that the population parameter is $\mu(s)=\bbE[A_{ij}(s)A_{jk}(s)A_{ki}(s)]$ with the empirical network moment 
\[\hat \bbE(A(s)):=\binom{n}{3}^{-1}\sum_{i<j<k}A_{ij}(s)A_{jk}(s)A_{ki}(s).\] 
The corresponding LOO network moment estimator is 
\[\hat \bbE(\overset{\circ}{A}(h, s)):= \binom{n}{3}^{-1}\sum_{i<j<k}\overset{\circ}{A}_{ij}(h,s)\overset{\circ}{A}_{jk}(h,s)\overset{\circ}{A}_{ki}(h,s).\]
The cross-validation criterion is then defined as the aggregated squared prediction error over time points in a neighborhood  $N(t^*, \tau)$ of the target inference point $t^*$, and the optimal bandwidth is selected as
\begin{equation}
    h=arg\min_h\sum_{s\in \mcalN(t^*, \tau)}\left(\hat E(\overset{\circ}{A}(h, s))-\hat E(A(s))\right)^2.\label{opt_select_h}
\end{equation}
Here, we restrict the summation to a local neighhorhood of $t^*$ because the smoothness of $\mu(t^*)$ may vary across different inference points. The neighborhood is defined by a user-specified fraction $\tau\in(0, 1]$ and contains the $\lfloor T\cdot\tau\rfloor$ time points closest to $t^*$. 
The parameter~$\tau$ therefore controls the extent to which local information is used for bandwidth selection. 
Our sensitivity analysis in \Cref{mcalF} suggests that larger values of $\tau$ are preferred when the number of observed time points $T$ is small, to stabilize optimization, while for large $T$, smaller $\tau$ values also yield stable results.

\section{Inference Procedure and its Theoretical Validity}
\label{Theoretical Results}
In this section, we develop an inference procedure for the global measure of balance $\mu(t^*)$ at the target time point. 
Our approach is based on analyzing the distribution of the studentized estimator $\tilde U_{nh}(t^*)$ defined in~\eqref{estimator}. 
Specifically, we construct a variance estimator and use it to form a studentized statistic.
As we shall show later, by leveraging connectivity patterns from nearby time points, kernel smoothing effectively reduces the edge-level noise variance by a factor of order $(Th)^{-1}$. 
The additional temporal information brings two benefits for network moment inference: 
(i) it broadens the validity regime of the inference procedure by relaxing the network sparsity condition, and (ii) it improves the accuracy of the distributional approximation, thereby yielding more reliable inference. 
After introducing the studentized statistic, we establish a higher-order distributional approximation using an Edgeworth expansion, which forms the basis for our confidence interval construction.

To simplify notation, we omit the superscript ${}^*$ for the target time point and use $t$ throughout this section. 
Additional technical details explaining the source of the $(Th)^{-1}$ variance reduction are deferred to \Cref{subsec: variance_tech}.

\subsection{Studentization and Distributional Approximation}
\label{subsec: studentization}
\paragraph{Studentization.}
To derive the studentization, we first need a variance estimator. Recall that the point estimator $\tilde U_{nh}(t)$ in~\eqref{estimator} involves two sources of randomness: one arises from the latent variables $\{X_i\}_{i=1}^n$ and another arises from observation error when sampling $\{\tilde{A}_{ij}(t)\}_{i,j\in [n]}$ given latent variables $\{X_i\}_{i=1}^n$. Different from one-snapshot method, for every pair of nodes, the observation error in $\tilde{A}_{ij}(t)$ is itself a kernel-weighted average of independent noise terms, which effectively reduces its variance. 
Guided by the analysis of the leading term in the variance decomposition of $\tilde U_{nh}(t^*)$ (see \citet{zhang2022edgeworth} and \Cref{subsec: variance_tech} for further details), we estimate the variance by
\begin{equation}
\hat S_{nh}^2(t)=\frac{9}{n^2}\sum_{i=1}^n\left\{\binom{n-1}{2}^{-1}\sum_{\substack{j<k\\ j, k\neq i}}\tilde A_{ij}(t)\tilde A_{jk}(t)\tilde A_{ki}(t)-\tilde U_{nh}(t)\right\}^2,\label{denominator}
\end{equation}
and define the studentized statistics 
\[\hat T_{nh}(t):=\frac{\tilde U_{nh}(t)-\mu(t)}{\hat S_{nh}(t)}.\]

\paragraph{Distributional approximation.}
Having constructed the studentized statistic, we next approximate its sampling distribution for inference.
To obtain a higher-order accurate approximation to the distribution of $\hat T_{nh}(t)$, we use an Edgeworth expansion based on the Hoeffding decomposition \citep{hoeffding1992class} of the underlying noiseless U-statistic, which captures the dominant source of variation.
Let $\tilde W_{ij}(t):= \bbE[\tilde A_{ij}(t)\mid X_i, X_j]$ and define 
\begin{align}\label{eq: noiseless U}
    U_{nh}(t):=\bbE[\tilde U_{nh}(t)|X_1, \cdots, X_n]=\binom{n}{3}^{-1}\sum_{i<j<k}\tilde W_{ij}(t)\tilde W_{jk}(t)\tilde W_{ki}(t).
\end{align}
The quantity $U_{nh}(t)$ is a noiseless U-statistics whose randomness arises only from the latent variables $X_1, \cdots, X_n$. 
Let $\mu_{h}(t):=\bbE[U_{nh}(t)]$.
The first- and second-order Hoeffding projections are defined as
\begin{align*}
g_{1h}(x,t)
&:=
\bbE\!\left[
\tilde W_{12}(t)\tilde W_{23}(t)\tilde W_{31}(t)
\mid X_1=x
\right]
-\mu_h(t),\\
g_{2h}(x_1,x_2,t)
&:=
\bbE\!\left[
\tilde W_{12}(t)\tilde W_{23}(t)\tilde W_{31}(t)
\mid X_1=x_1,X_2=x_2
\right]
-
g_{1h}(x_1,t)-g_{1h}(x_2,t)-\mu_h(t).
\end{align*}
Using these projection kernels, the population Edgeworth expansion of the studentized statistic is
\[G_{nh}(x)=\Phi(x)+\frac{\varphi(x)}{\sqrt n\xi_{1h}^3}\cdot\bigg\{\frac{2x^2+1}{6}\bbE[g_{1h}^3(X_1, t)]+(x^2+1)\bbE[g_{1h}(X_1, t)g_{1h}(X_2, t)g_{2h}(X_1, X_2, t)]\bigg\},\]
where $\xi_{1h}^2:=var(g_{1h}(X_1, t))$, and $\varphi(\cdot)$ and $\Phi(\cdot)$ denote the density and distribution functions of the standard normal distribution.

The population version $G_{nh}(x)$ involves coefficients that depend on the unknown data-generation distribution. We estimate these coefficients by their
empirical counterparts:
\begin{align*}
	\hat\xi_{1h}^3:=&\left(\frac{1}{n}\sum_{i=1}^n\hat g_{1h}^2(X_i, t)\right)^{3/2},\\
	\hat\bbE[g_{1h}^3(X_1, t)]:=&\frac{1}{n}\sum_{i=1}^n\hat g_{1h}^3(X_i, t),\\
	\hat\bbE[g_{1h}(X_1, t)g_{1h}(X_2, t)g_{2h}(X_1, X_2, t)]:=&\binom{n}{2}^{-1}\sum_{i<j}\hat g_{1h}(X_i, t)\hat g_{1h}(X_j, t)\hat g_{2h}(X_i, X_j, t),
\end{align*} 
where 
\begin{align*}
	\hat g_{1h}(X_i, t):=& \frac{1}{\binom{n-1}{2}}\sum_{j<k, i\neq j, i\neq k}\tilde A_{ij}(t)\tilde A_{jk}(t)\tilde A_{ki}(t)-\tilde U_{nh}(t), \\
	\hat g_{2h}(X_i, X_j, t):=&\frac{1}{n-2}\sum_{k\neq i,j}\tilde A_{ij}(t)\tilde A_{jk}(t)\tilde A_{ki}(t)-\hat g_{1h}(X_i,t)-\hat g_{1h}(X_j,t)-\tilde U_{nh}(t).
\end{align*}
The empirical Edgeworth expansion $\hat G_{nh}(x)$ is obtained by replacing the population quantities 
$\xi_{1h}^3$, $\bbE[g_{1h}^3(X_1,t)]$, and $\bbE[g_{1h}(X_1,t)g_{1h}(X_2,t)g_{2h}(X_1,X_2,t)]$
in $G_{nh}(x)$ with the corresponding empirical estimates above.

\paragraph{Smoothing the distribution.}
Note that the distribution of $\hat T_{nh}(t)$ may exhibit minor lattice effects due to the discreteness of network statistics. These small jump points can affect higher-order accuracy of distributional approximation. 
Following~\cite{Shao05082025}, we add a small independent Gaussian perturbation $\delta_T\sim\mcalN(0, c_\delta n^{-1}\log n)$ with a sufficiently large constant $c_\delta$ and conduct inference using the perturbed statistic $\tilde T_{nh}(t):=\delta_T+\hat T_{nh}(t)$.
The perturbation variance vanishes asymptotically and does not affect the limiting distribution, but it smooths the distribution sufficiently for the Edgeworth approximation.

\paragraph{Theoretical validity.}
We now provide theoretical guarantees for the accuracy of the Edgeworth expansion as an approximation to the distribution of $\tilde T_{nh}(t)$. 
We assume the following assumptions.
\begin{assumption}[Temporal resolution and bandwidth]\label{assump: time obs}
    Let $0=t_0<t_1<\cdots <t_T=\mcalL$. Assume $\Delta_T:=\max_{1\le \ell \le T}(t_\ell - t_{\ell-1}) \le C/T$ for some constant $C>0$, and that $t$ is an interior evaluation point satisfying $\min\{\mcalL-t, t\} \ge \delta$ for some constant $\delta>0$. 
    In addition, we assume $h \to 0$ and $Th\to \infty$.
\end{assumption}
\begin{assumption}[Graphon smoothness and boundedness]
\label{assump: smooth graphon}
Let $\beta = \nu' + \alpha$ with integer $\nu' \ge \nu$ and $\alpha \in (0,1]$.
Assume that for all $(x,y) \in [0,1]^2$, 
the functions $s \mapsto F(x,y,s)$ and 
$s \mapsto G(x,y,s)$ belong to the Hölder class 
$\mathcal{H}^\beta(L)$ on $\mcalT$; that is, they are $\nu'$-times differentiable in $s$, and their $\nu'$-th derivatives in $s$ satisfy, for all $s,s' \in \mcalT$, 
\[
\left| \partial_t^{\nu'} F(x,y,s) - \partial_t^{\nu'} F(x,y,s') \right|
\le L |s-s'|^{\alpha},
\quad
\left| \partial_t^{\nu'} G(x,y,s) - \partial_t^{\nu'} G(x,y,s') \right|
\le L |s-s'|^{\alpha}.
\]
Assume $0\le F(x,y,s)\le C_F$ uniformly over \((x,y,s)\in[0,1]^2\times\mcalT\) for some constant $C_F>0$.
\end{assumption}
\begin{assumption}[Non-degeneracy for U-statistics]\label{non-u}
Let $W_{ij}(t)=\bbE[A_{ij}(t)|X_i, X_j]$ and $g_1(x,t):=\bbE[W_{12}(t)W_{23}(t)W_{31}(t)|X_1=x]-\mu(t)$. Assume that $\xi_1^2:=var(g_{1}(X_1,t))\geq C_2\rho_n^{6}$ for some  constant $C_2>0$.
\end{assumption}
\begin{assumption}[Network sparsity regime] \label{assump: sparsity}
   Assume that $\rho_n\gtrsim n^{-2/3+\epsilon}(Th)^{-1}\log^{5/3}n$ for some sufficiently small constant $\epsilon>0$. %
\end{assumption}

The following theorem establishes the approximation error of the Edgeworth expansion, the proof of which is provided in Section~\ref{Proof of main thm}.  

\begin{theorem}\label{main thm}
Suppose \Cref{assump: time obs,assump: smooth graphon,assump: sparsity,non-u} hold, then we have 
 $$\left\Vert \mcalF_{\tilde T_{nh}}(\cdot) -  G_{nh}(\cdot) \right\Vert_\infty= O(\mcalM(n, \rho_n, T, h)),$$ 
 $$\left\Vert \mcalF_{\tilde T_{nh}}(\cdot) - \hat G_{nh}(\cdot) \right\Vert_\infty=\tilde O_{p, 1}(\mcalM(n, \rho_n, T, h)),$$  
where $\mcalF_{\tilde T_{nh}}$ is the cumulative distribution function of the perturbed statistics $\tilde T_{nh}$ and $$\mcalM(n, \rho_n, T, h)=n^{-1}\log^{3/2}(n)\cdot\max\left\{1, (Th\rho_n)^{-3/2}\right\}+\sqrt nh^\nu+\frac{\sqrt n}{Th}.$$
Here $\|\cdot\|_\infty$ denotes the Kolmogorov--Smirnov distance.
\end{theorem}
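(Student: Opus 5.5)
The plan is to reduce the problem to the known one-snapshot Edgeworth theory for noisy network U-statistics (\citet{zhang2022edgeworth}, in the signed version of \citet{chen2024nonparametric}), with the kernel-smoothed edges $\tilde A_{ij}(t)$ replacing raw edges. The three terms of $\mcalM$ come from three separate sources:
\begin{itemize}
\item the Edgeworth remainder of the smoothed statistic, giving $n^{-1}\log^{3/2}n\cdot\max\{1,(Th\rho_n)^{-3/2}\}$;
\item the smoothing bias $\mu_h(t)-\mu(t)$, giving $\sqrt n h^\nu$;
\item the Riemann-sum discretization error in the kernel weights, giving $\sqrt n/(Th)$.
\end{itemize}

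\emph{Step 1: centering.} Write
\[
\hat T_{nh}=\frac{\tilde U_{nh}-\mu_h}{\hat S_{nh}}+\frac{\mu_h-\mu}{\hat S_{nh}}.
\]
Conditional on $X$, the edges $\tilde W_{ij}(t)$ equal $\rho_n\sum_\ell w_\ell F(X_i,X_j,t_\ell)\{1-2G(X_i,X_j,t_\ell)\}$, where $w_\ell$ are the normalized kernel weights. Two approximations then apply:
\begin{itemize}
\item Replace the sum by the integral $\int K_h(t-s)\,\cdot\,ds$. Because $\Delta_T\le C/T$ and $u^jK(u)$ has bounded variation (K4), the Riemann error is $O((Th)^{-1})$ relative to $\rho_n$.
\item Taylor expand in $s$ to order $\nu'$, using Assumption~\ref{assump: smooth graphon}. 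The vanishing moments (K1) and the tails (K3) give a bias of $O(h^\nu)$ relative to $\rho_n$; the interior condition removes boundary effects.
\end{itemize}
Hence $|\mu_h-\mu|=O(\rho_n^3(h^\nu+(Th)^{-1}))$. The denominator satisfies $\hat S_{nh}\asymp \xi_{1h}/\sqrt n \asymp \rho_n^3/\sqrt n$ with probability $1-O(n^{-1})$. This uses Assumption~\ref{non-u} transferred to $\xi_{1h}$, since $g_{1h}-g_1=O(\rho_n^3(h^\nu+(Th)^{-1}))$ uniformly. The shift therefore perturbs the CDF by $O(\sqrt n h^\nu+\sqrt n/(Th))$, using the bounded density of $G_{nh}$. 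The same transfer shows that replacing the coefficients of $G_{nh}$ computed with $g_{1h}$ costs nothing beyond these orders.

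\emph{Step 2: Edgeworth expansion for the centered statistic.} Decompose
\[
\tilde U_{nh}-\mu_h=(U_{nh}-\mu_h)+(\tilde U_{nh}-U_{nh}).
\]
The first term is a nondegenerate noiseless U-statistic in $X$. Its standard Edgeworth expansion (Bentkus--Götze--van Zwet type, studentized) gives the main part of $G_{nh}$ with error $O(n^{-1})$. The noise part is a sum over triangles of products of the centered smoothed noises $\tilde A_{ij}-\tilde W_{ij}$, which are conditionally independent across pairs. Each noise is a weighted sum of independent Bernoulli-type terms with $\sum_\ell w_\ell^2\asymp (Th)^{-1}$, so its conditional variance is $\asymp\rho_n/(Th)$, versus $\rho_n$ in one snapshot. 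All the moment and Bernstein-type tail bounds from \citet{zhang2022edgeworth} then carry over with $\rho_n$ replaced by the effective density $Th\rho_n$ in variance calculations, while $\|\tilde A_{ij}\|_\infty\le 1$ still holds.

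The linear noise term $\sum_{i<j}(\tilde A_{ij}-\tilde W_{ij})\cdot\rho_n^2\,(\cdot)$, normalized by $\rho_n^3/\sqrt n$, has size $(n\,Th\rho_n)^{-1/2}$. Together with the higher-order noise terms, this yields a remainder $\tilde O_{p,1}\bigl(n^{-1}\log^{3/2}n\,(Th\rho_n)^{-3/2}\bigr)$ under Assumption~\ref{assump: sparsity}. The studentization error $\hat S_{nh}-S$ is handled the same way.

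The lattice effect is absorbed by the Gaussian perturbation $\delta_T$, following \cite{Shao05082025}. Its smoothing lets us convert $\tilde O_{p,1}$ perturbations of size $\epsilon$ into Kolmogorov error $O(\epsilon+n^{-1})$, using the anti-concentration of $G_{nh}$. The perturbation itself adds only $O(n^{-1}\log n)$ to the variance, which does not change the expansion to the stated order.

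\emph{Step 3: empirical expansion.} Show that $\hat\xi_{1h}$ and the two moment estimators are within $\tilde O_{p,1}(n^{-1/2}\log^{1/2}n\cdot\max\{1,(Th\rho_n)^{-1/2}\})$ of their population versions, relative to $\rho_n^3$, $\rho_n^9$ and so on. Because these coefficients are multiplied by $n^{-1/2}$, the resulting error is absorbed into $\mcalM$.

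The main obstacle is Step 2. We must verify that every noise-related moment and concentration bound in the one-snapshot proof depends on $\rho_n$ only through variance-like quantities that improve to $Th\rho_n$. The point is that weighted sums of Bernoullis are no longer Bernoulli. I would first try Bernstein's inequality for $\tilde A_{ij}-\tilde W_{ij}$, using variance $\rho_n\sum_\ell w_\ell^2$ and range $\max_\ell w_\ell\lesssim (Th)^{-1}$, to obtain the required sub-exponential tails with the improved effective sparsity.
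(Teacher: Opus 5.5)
Your overall architecture matches the paper's: bias from the Riemann-sum and Taylor analysis of the kernel weights, transfer of non-degeneracy from $g_1$ to $g_{1h}$, per-edge noise variance $\rho_n/(Th)$, the perturbation $\delta_T$, and plug-in coefficients. However, Step 2 has a genuine gap in how you dispose of the \emph{linear} noise term.

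After normalization, $\check\Delta_n=Q(t)/\sigma_{nh}(t)$ has size $(n\rho_nTh)^{-1/2}\sqrt{\log n}$, and this is not $\tilde O_{p,1}\bigl(n^{-1}\log^{3/2}n\,(Th\rho_n)^{-3/2}\bigr)$. The ratio of the two is $\sqrt n\,Th\rho_n/\log n$, which diverges once $Th\rho_n\gg n^{-1/2}\log n$. If you treat $\check\Delta_n$ as a remainder in the perturbation lemma, you only get Kolmogorov error $(n\rho_nTh)^{-1/2}\sqrt{\log n}$, and $\mcalM$ does not cover this in general. For example, with $\rho_n=n^{-0.6}$, $Th=n^{1.5}$, $h=n^{-2}$ (all assumptions hold), $\mcalM\asymp n^{-1}\log^{3/2}n$ while $(n\rho_nTh)^{-1/2}=n^{-0.95}$.

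The missing idea is that $\check\Delta_n$ must stay in the leading part and be shown to matter only through its conditional variance. The paper does this in two steps.
\begin{enumerate}
\item Given $X$, $\check\Delta_n$ is a sum of independent centered terms over $(i,j,\ell)$. One shows $\sigma_X^2:=n\rho_n\,\mathrm{var}(\check\Delta_n\mid X)\asymp(Th)^{-1}$ with probability $1-O(n^{-1})$; the lower bound needs Assumption~\ref{non-u} via a Cauchy--Schwarz argument. Conditional Berry--Esseen then replaces $\check\Delta_n$ by $\tilde\Delta_n\mid X\sim\mcalN(0,\sigma_X^2/(n\rho_n))$ at cost $n^{-1}(\rho_nTh)^{-1/2}$.
\item Since $J_n$ is $X$-measurable, the noise enters the characteristic function of $J_n+\tilde\Delta_n+\delta_T$ only through the factor $\exp\{-s^2\sigma_X^2/(2n\rho_n)\}$. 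On $|s|\le n^{\epsilon'}$, its deviation from $1$ costs $(n\rho_nTh)^{-1}$. On $n^{\epsilon'}\le|s|\le C_1\sqrt n$, one replaces $\sigma_X^2$ by its linear Hoeffding projection to keep the product structure over the $X_i$. This requires a lower-tail bound for that projection so the modified factor stays $O(1)$.
\end{enumerate}
It is this conditional centering, not a size bound, that makes the linear noise term cost only $(n\rho_nTh)^{-1}\log n\lesssim n^{-1}\log n\max\{1,(Th\rho_n)^{-3/2}\}$.

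Relatedly, Step 2 cannot rest on an off-the-shelf Bickel--G\"otze--van Zwet (or Bentkus--G\"otze--van Zwet) expansion of the noiseless U-statistic with error $O(n^{-1})$. Those theorems assume a fixed kernel and a Cram\'er-type condition on $g_{1h}(X_1)/\xi_{1h}$. Here the kernel changes with $n$ through $h$ and $\rho_n$, and $g_{1h}(X_1)$ can be lattice, e.g.\ for block-constant graphons. The main job of $\delta_T$ is to replace Cram\'er's condition:
\begin{itemize}
\item On $C_1\sqrt n\le|s|\le n$, $\delta_T$ supplies the damping $\exp\{-c_\delta s^2\log n/(2n)\}\le n^{-c_\delta C_1^2/2}$.
\item On $n^{\epsilon'}\le|s|\le C_1\sqrt n$, the BGvZ truncation argument works without Cram\'er, because a Taylor expansion gives $|\gamma_n(s)|\le e^{-s^2/(3n)}$.
\item On $|s|\le n^{\epsilon'}$, a Petrov-type expansion produces the studentized coefficients $(2x^2+1)/6$ and $(x^2+1)$. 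Note that the term $-\tfrac12L_n^{(1)}\check\delta_n$ coming from $\hat\sigma_{nh}^2$ belongs to the leading expansion; it is not an error term.
\end{itemize}
So the expansion has to be proved directly, via Esseen's smoothing lemma on $[0,n]$, for the combined statistic $J_n+\tilde\Delta_n+\delta_T$.

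Once that is done, your use of the anti-concentration of $G_{nh}$ to absorb the genuine $\tilde O_{p,1}(\mcalM)$ remainders goes through. These are the bias, the quadratic and cubic noise terms, and $\hat S_{nh}^2-\hat\sigma_{nh}^2$. For the quadratic and cubic terms, Bernstein on each $\tilde\eta_{ij}$ is not enough. You need a polynomial concentration bound whose range parameters see the $(Th)^{-1}$ weights. The paper gets this by applying Schudy--Sviridenko to the expansion in the $\eta_{ij}(t_\ell)$; an $O(1)$ range would lose part of the sparsity regime.
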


Assumption~\ref{assump: time obs} requires that the temporal observations have sufficient resolution and that the target evaluation point lies in the interior of the time interval.
Assumption~\ref{assump: smooth graphon} imposes Hölder smoothness on the time evolution of the graphon functions. The requirement $\nu'\ge \nu$ guarantees that the graphon is sufficiently smooth relative to the kernel order so that the kernel-smoothed estimator achieves bias of order $h^\nu$. These two assumptions are standard regularity conditions in nonparametric kernel smoothing; see, for example,~\citet{kim2012robust}.
Assumption~\ref{non-u} ensures non-degeneracy of the first-order projection in the Hoeffding decomposition. Similar conditions are commonly imposed in the literature on network moment inference~\citep{10.1214/11-AOS904,zhang2022edgeworth}.
Assumption~\ref{assump: sparsity} specifies the admissible sparsity regime of the network. 
Compared with the one-snapshot-based inference procedure (see Remark~\ref{rmk:one_snapshot} for details), this assumption allows a wider range of network sparsity. 
In particular, when inference is based only on a single observed network at time $t$, the attainable sparsity regime is fundamentally limited by the information contained in that snapshot, leading to the condition $\rho_n \gtrsim n^{-2/3+\epsilon}$ up to logarithmic factors~\citep{10.1214/11-AOS904, 10.1214/15-AOS1338, zhang2022edgeworth}. 
In the dynamic setting considered here, kernel smoothing allows us to borrow information from nearby time points. As a result, our condition contains an additional factor $(Th)^{-1}$, reflecting the effective increase in sample size due to temporal smoothing; see~\Cref{subsec: variance_tech} for further technical details.
Under the standard nonparametric regime $Th\to\infty$, this substantially enlarges the range of sparsity levels under which valid inference can be achieved.

Theorem~\ref{main thm} shows that the empirical Edgeworth expansion $\hat G_{nh}$ provides a uniformly accurate approximation to the distribution of the perturbed statistic $\tilde T_{nh}$. 
The approximation error is controlled by $\mcalM(n,\rho_n,T,h)$, which consists of three components. 
The first term reflects stochastic error arising from the network observations. 
The second term $\sqrt{n}h^\nu$ corresponds to the bias induced by kernel smoothing. 
The third term $\sqrt{n}(Th)^{-1}$ captures the discretization error due to observing the network only at discrete time points. 
Together, these terms quantify the trade-off introduced by kernel smoothing: incorporating temporal neighborhood information reduces stochastic variability, while introducing additional smoothing bias and discretization error.

\begin{remark}[One-snapshot-based inference]\label{rmk:one_snapshot}
As a special case, when the target time point~$t$ coincides with an observed timestamp $t_\ell$, one may conduct inference using only the single observed network $\mathbf A(t_\ell)$. 
This one-snapshot method replaces $\tilde A_{ij}(t)$ in \Cref{estimator,denominator} by the observed adjacency entries $A_{ij}(t_\ell)$.
In this case, the corresponding studentized statistic yields an Edgeworth approximation error of order $n^{-1}\,\rho_n^{-3/2}$ up to logarithmic factors.
Compared with this baseline, the proposed kernel-smoothed procedure improves the approximation accuracy through the factor $(Th)^{-3/2}$ and allows inference under sparser network regimes, at the cost of the additional bias and discretization error described above.
\end{remark}

\begin{remark}[Optimal bandwidth and the corresponding approximation error]\label{rmk:optimal_bandwidth}

The optimal bandwidth $h$ and the corresponding approximation error depend on the interplay between the network sparsity $\rho_n$, the temporal resolution $T$, and the network size $n$. Ignoring logarithmic factors, the optimal $h_{\mathrm{opt}}$ is determined by minimizing the bound $\mcalM (n,\rho_n,T,h) \approx n^{-1}(Th\rho_n)^{-3/2} + \sqrt{n}h^\nu + \sqrt{n}(Th)^{-1}$.We discuss two primary regimes of interest:
\paragraph{Case 1: Extremely sparse regime.} 
When the network is extremely sparse such that $(n\rho_n)^3 Th \ll 1$, the stochastic error dominates the temporal discretization error, i.e., $\sqrt{n}(Th)^{-1}$ $\ll n^{-1}(Th\rho_n)^{-3/2}$.
Therefore, the primary trade-off is between the network stochastic error and the smoothing bias. Balancing $n^{-1}(Th\rho_n)^{-3/2} \asymp \sqrt{n}h^\nu$ yields the optimal bandwidth:
$$h_{\mathrm{opt}} \asymp (n\rho_n T)^{-\frac{3}{2\nu+3}}.$$
Substituting $h_{\mathrm{opt}}$ back into the bound yields an optimal approximation error of $\mcalM_{\mathrm{opt}} \asymp n^{\frac{3-4\nu}{4\nu+6}}(T\rho_n)^{-\frac{3\nu}{2\nu+3}}$.
Importantly, this regime implies that $n\rho_n \to 0$, under which the sparsity condition $\rho_n \gtrsim n^{-2/3+\epsilon}$ required for valid one-snapshot-based inference is violated.
In contrast, the kernel-smoothed statistics aggregates edges across nearby time points, effectively increasing the amount of available signal by a factor of $Th$. As long as the sparsity condition in Assumption~\ref{assump: sparsity} holds, our kernel-smoothed procedure remains valid.

\paragraph{Case 2: Moderately sparse to dense regime.}
In denser networks where $(n\rho_n)^3 Th \gg 1$, which includes the common regime $n\rho_n \to \infty$, the temporal discretization error dominates the stochastic network error, i.e., $\sqrt{n}(Th)^{-1} \gg n^{-1}(Th\rho_n)^{-3/2}$. Here, the optimal bandwidth is determined by balancing the discretization error and the smoothing bias ($\sqrt{n}(Th)^{-1} \asymp \sqrt{n}h^\nu$), which yields:$$h_{\mathrm{opt}} \asymp T^{-\frac{1}{\nu+1}}.$$
This results in an optimal approximation error of $\mcalM_{\mathrm{opt}} \asymp \sqrt{n}T^{-\frac{\nu}{\nu+1}}$.
In this regime, the sparsity condition $\rho_n \gtrsim n^{-2/3+\epsilon}$ required for valid one-snapshot-based inference can hold and both procedures become applicable. 
The comparison between our kernel-smoothed procedure and the one-snapshot baseline ($\mcalM_{\mathrm{one-snapshot}} \asymp n^{-1}\rho_n^{-3/2}$) can be quantified by the ratio:
$$\frac{\mcalM_{\mathrm{opt}}}{\mcalM_{\mathrm{one-snapshot}}} \asymp \frac{(n\rho_n)^{3/2}}{T^{\frac{\nu}{\nu+1}}}.$$
This comparison suggests that the kernel-smoothed method achieves a strictly smaller approximation error provided $T^{\frac{\nu}{\nu+1}} \gg (n\rho_n)^{3/2}$, or equivalently, $T \gg (n\rho_n)^{\frac{3(\nu+1)}{2\nu}}$. 
On the other hand, if $T$ is small and the network is very dense, relying on a single snapshot may be preferable to avoid temporal discretization errors.
\end{remark}

\paragraph{Construction of confidence intervals.}
Theorem \ref{main thm} establishes that the maximum deviation between the cumulative distribution function of $\tilde T_{nh}$ and the empirical Edgeworth expansion $\hat G_{nh}$ is bounded by $\mcalM(n, \rho_n, T, h)$ with high probability. This result can be utilized to construct Cornish-Fisher confidence intervals \citep{fisher1960percentile} with provable coverage probabilities as follows.

\begin{corollary}
\label{upper-lower-CI}
    Assuming the conditions of Theorem \ref{main thm} hold, for any $\alpha\in (0,1)$, the two-sided confidence interval for $\mu(t)$ given by \begin{equation}\left(\tilde U_{nh}(t)-\hat q_{\hat T_n, 1-\alpha/2}\cdot \hat S_n(t), \tilde U_{nh}(t)-\hat q_{\hat T_n, \alpha/2}\cdot \hat S_n(t)\right)\label{interval}\end{equation} has a $1-\alpha+O(\mcalM(n, \rho_n, T, h))$ coverage probability, where \begin{equation}\hat q_{\hat T_n, \alpha}:=z_\alpha - \frac{1}{\sqrt n\hat\xi_{1h}^3}\left(\frac{2z_\alpha^2+1}{6}\hat\bbE[g_{1h}^3(X_1, t)]+(z_\alpha^2+1)\hat\bbE[g_{1h}(X_1, t)g_{1h}(X_2, t)g_{2h}(X_1, X_2, t)]\right)-\delta_T\label{quantile}\end{equation} and $z_\alpha=\Phi^{-1}(\alpha)$. 

Similarly, a one-sided upper confidence bound for $\mu(t)$ with $1-\alpha + O(\mcalM(n, \rho_n, T, h))$ coverage is defined as 
\[\left(-\infty, \tilde U_{nh}(t) - \hat q_{\hat T_n, \alpha} \cdot \hat S_n(t)\right).\] 
The one-sided lower bound is defined analogously.

\end{corollary}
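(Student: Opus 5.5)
The plan is to derive the corollary directly from the second (empirical) statement of Theorem~\ref{main thm} through a Cornish--Fisher inversion of $\hat G_{nh}$. Start from the event
\[
\mu(t)\in\Bigl(\tilde U_{nh}(t)-\hat q_{\hat T_n,1-\alpha/2}\hat S_n(t),\ \tilde U_{nh}(t)-\hat q_{\hat T_n,\alpha/2}\hat S_n(t)\Bigr).
\]
It can be rewritten as $\hat q_{\hat T_n,\alpha/2}<\hat T_{nh}(t)<\hat q_{\hat T_n,1-\alpha/2}$. Since $\hat q$ carries the term $-\delta_T$, this is equivalent to
\[
\tilde q_{\alpha/2}<\tilde T_{nh}(t)<\tilde q_{1-\alpha/2},
\]
where $\tilde q_\alpha:=\hat q_{\hat T_n,\alpha}+\delta_T$ is the pure Cornish--Fisher quantile of $\hat G_{nh}$. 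It therefore suffices to prove, for each fixed $\alpha$,
\[
\bbP\bigl(\tilde T_{nh}\le \tilde q_\alpha\bigr)=\alpha+O(\mcalM).
\]
The two-sided and one-sided statements then follow by differencing, and the lower bound is symmetric.

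\textbf{Step 1 (inversion accuracy).} Write the estimated coefficients as $\hat a=\hat\bbE[g_{1h}^3]/\hat\xi_{1h}^3$ and $\hat b=\hat\bbE[g_{1h}g_{1h}g_{2h}]/\hat\xi_{1h}^3$. Show that with probability $1-O(n^{-1})$ they are $O(1)$; this uses the consistency of these plug-in estimators, which is already implicit in the $\tilde O_{p,1}$ claim of Theorem~\ref{main thm}. Then Taylor-expand $\Phi$ and $\varphi$ around $z_\alpha$ to get
\[
\hat G_{nh}(\tilde q_\alpha)=\alpha+O_p(n^{-1})
\]
on that event, since the $n^{-1/2}$ correction cancels by construction. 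Note that $n^{-1}\le\mcalM$ because $\mcalM\ge n^{-1}\log^{3/2}n$.

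\textbf{Step 2 (transfer to the true cdf).} Decompose
\[
\mcalF_{\tilde T_{nh}}(\tilde q_\alpha)-\alpha=\bigl[\mcalF_{\tilde T_{nh}}(\tilde q_\alpha)-\hat G_{nh}(\tilde q_\alpha)\bigr]+\bigl[\hat G_{nh}(\tilde q_\alpha)-\alpha\bigr].
\]
The main obstacle is that $\tilde q_\alpha$ is random and dependent on $\tilde T_{nh}$, so $\bbP(\tilde T_{nh}\le\tilde q_\alpha)$ is not simply $\mcalF_{\tilde T_{nh}}$ evaluated at $\tilde q_\alpha$. The approach is to sandwich $\tilde q_\alpha$ between deterministic points:
\[
\bbP\bigl(\tilde T_{nh}\le\tilde q_\alpha\bigr)\le \bbP\bigl(\tilde T_{nh}\le q_\alpha+C\mcalM\bigr)+\bbP\bigl(|\tilde q_\alpha-q_\alpha|>C\mcalM\bigr),
\]
where $q_\alpha$ is the population Cornish--Fisher quantile built from $G_{nh}$, and analogously for the lower bound. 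Two estimates are needed here.

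\begin{itemize}
\item \emph{Quantile concentration:} $|\tilde q_\alpha-q_\alpha|=\tilde O_{p,1}(\mcalM)$. The difference is $n^{-1/2}$ times the coefficient errors $|\hat a-a|+|\hat b-b|$. I would bound these errors by $n^{1/2}\mcalM$, or more crudely show the product is $O(n^{-1/2}\log n)$, using the same concentration bounds that underlie Theorem~\ref{main thm}. If only the crude rate is available, I would instead compare $\hat G_{nh}$ and $G_{nh}$ uniformly, using both parts of Theorem~\ref{main thm}, which gives $\|\hat G_{nh}-G_{nh}\|_\infty=\tilde O_{p,1}(\mcalM)$. Because $G_{nh}$ has derivative bounded below near $z_\alpha$, this converts into $|\tilde q_\alpha-q_\alpha|=\tilde O_{p,1}(\mcalM)$.
\item \emph{Deterministic endpoint:} at $q_\alpha\pm C\mcalM$, the first part of Theorem~\ref{main thm} gives $\mcalF_{\tilde T_{nh}}=G_{nh}+O(\mcalM)$. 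Since $G_{nh}$ is Lipschitz, $G_{nh}(q_\alpha\pm C\mcalM)=G_{nh}(q_\alpha)+O(\mcalM)=\alpha+O(\mcalM)$, by Step~1 applied to the population coefficients.
\end{itemize}

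Combining these, the exceptional-event probability is $O(n^{-1})=O(\mcalM)$, so $\bbP(\tilde T_{nh}\le\tilde q_\alpha)=\alpha+O(\mcalM)$.

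\textbf{Step 3.} Apply Step~2 at levels $\alpha/2$ and $1-\alpha/2$ and subtract; the boundary events have probability zero because $\delta_T$ makes $\tilde T_{nh}$ continuous. This gives two-sided coverage $1-\alpha+O(\mcalM)$. The one-sided upper bound is the single event $\hat T_{nh}>\hat q_{\hat T_n,\alpha}$, whose probability is $1-\alpha+O(\mcalM)$ by the same argument, and the lower bound is analogous.
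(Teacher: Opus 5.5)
Your proposal is correct and follows the route the paper intends. The paper gives no separate proof of this corollary; it simply derives it from the empirical Edgeworth bound of Theorem~\ref{main thm} by Cornish--Fisher inversion, and your sandwich around the deterministic population quantile $q_\alpha$ supplies the step the paper leaves implicit, namely that $\tilde q_\alpha$ is random and depends on $\tilde T_{nh}$. The quantile-concentration step is in fact immediate: $\tilde q_\alpha-q_\alpha=-\{\hat G_{nh}(z_\alpha)-G_{nh}(z_\alpha)\}/\varphi(z_\alpha)$ holds exactly, so $|\tilde q_\alpha-q_\alpha|\le\|\hat G_{nh}-G_{nh}\|_\infty/\varphi(z_\alpha)=\tilde O_{p,1}(\mcalM)$, and this needs neither the empirical Step~1 nor boundedness of $\hat a,\hat b$.
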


\begin{remark}
\label{null_val}
The results in Theorem \ref{main thm} also provide a foundation for performing one-sided or two-sided hypothesis tests for the balance measure $\mu(t)$. However, how to choose a ``balance-free'' null value $\mu_0$, corresponding to a baseline under which balanced triangles are not preferred over unbalanced ones, is itself a non-trivial problem and has been discussed in the literature~\citep{facchetti2011computing, leskovec2010signed, feng2022testing}. A simple choice is to adopt a null model that assumes equal probabilities for positive and negative edges, which implies $\mu_0=0$. Alternatively, following \cite{facchetti2011computing}, one may adopt a conditional null model that preserves the overall edge density and sign proportion of the observed network.
Specifically, each observed edge is assigned a negative sign independently with
probability $s$, where $s$ is taken to be the observed fraction of negative edges
at time $t$. Under this null model, the corresponding null balance measure is determined by 
\begin{equation*}
\mu_0 = \rho_n^3 [ \underbrace{\left((1-s)^3 + 3s^2(1-s)\right)}_{\text{Expected positive triangles}} - \underbrace{\left(s^3 + 3(1-s)^2s\right)}_{\text{Expected negative triangles}}],
\end{equation*}
where $\rho_n$ denotes the observed edge density.
This choice of $\mu_0$ provides a baseline that reflects the expected degree of balance in a random signed network with the same edge density and sign proportion.

Given a null value $\mu_0$, consider the test $H_0: \mu(t) \leq \mu_0$ versus $H_1: \mu(t) > \mu_0$. The empirical $p$-value can be evaluated as:
\begin{equation}
\label{p_value}
\hat p = 1 - \hat G_{nh} \left( \frac{\tilde U_{nh}(t) - \mu_0}{\hat S_{nh}(t)} \right).
\end{equation}
Following \cite{zhang2022edgeworth}, rejecting $H_0$ if $\hat p < \alpha$ controls Type-I error at $\alpha + O(\mcalM(n, \rho_n, T, h))$ and remains sensitive to deviations of order $\omega(n^{-1/2})$.
\end{remark}

\subsection{Variance Decomposition and Technical Insights}
\label{subsec: variance_tech} 
To see where the $(Th)^{-1}$ factor in~\Cref{assump: sparsity} comes from, we decompose the estimation error into three terms
$$\tilde U_{nh}(t) - \mu(t) = \underbrace{{\tilde U_{nh}(t) - U_{nh}(t)}}_{(I) \text{ Observation Noise}} + \underbrace{{U_{nh}(t) - \mu_{h}(t)}}_{(II) \text{ Latent Variable Variation}} + \underbrace{{\mu_h(t) - \mu(t)}}_{(III) \text{ Smoothing Bias}},$$ 
where $U_{nh}(t)$ is the noiseless U-statistics defined in (\ref{eq: noiseless U}), and $\mu_{h}(t):=\bbE[U_{nh}(t)]$ denotes its expectation. 
Term (I) represents the observation noise from edge sampling conditional on the latent variables.  
Term (II) captures the variation induced by the latent variables $X_1,\ldots,X_n$.  
Term (III) is the deterministic bias introduced by kernel smoothing.

The effect of kernel smoothing can be seen clearly through the observation-noise component in the above decomposition. 
For the smoothed estimator, term (I) takes the form
\begin{align*}
	\tilde U_{nh}(t)-U_{nh}(t) = & \binom{n}{2}^{-1} \sum_{i<j}  \frac{3\sum_{k \neq i,j} \tilde W_{ik}(t) \tilde W_{jk}(t)}{n-2}\tilde \eta_{ij}(t) \\
	& + \binom{n}{3}^{-1} \sum_{i<j;k\neq i,j}\tilde W_{ij}(t)\tilde \eta_{ik}(t) \tilde \eta_{jk}(t) + \binom{n}{3}^{-1} \sum_{i<j<k}\tilde \eta_{ij}(t)\tilde \eta_{jk}(t) \tilde \eta_{ki}(t),
\end{align*}
which is a polynomials of independent mean-zero noise terms $\tilde \eta_{ij}(t):=  \tilde A_{ij}(t)-\tilde W_{ij}(t)$.
The key difference from the one-snapshot setting lies in the variance of these edge-noise terms. 
In the one-snapshot setting, the corresponding noise variables are $\eta_{ij}(t):=A_{ij}(t)-\bbE[A_{ij}(t)|X_i, X_j]$.
Under kernel smoothing, each $\tilde\eta_{ij}(t)$ is itself a weighted average of independent noise terms across nearby time points: $\tilde\eta_{ij}(t)=\sum_{\ell} w_\ell(t)\eta_{ij}(t_{\ell})$ with $w_\ell(t): = K_h(t-t_\ell) / \sum_{\ell'} K_h(t-t_{\ell'})$. 
Because the edge noises are independent across time, averaging over approximately $Th$ snapshots reduces the variance of $\tilde\eta_{ij}(t)$ by a factor of order $(Th)^{-1}$ relative to $\eta_{ij}(t)$. 
Under the sparsity condition in~\Cref{assump: sparsity}, the dominant source of variation in the smoothed estimator is governed by the latent-variable component in (II). 
The leading contribution of this component is captured by the first-order Hoeffding projection, which motivates the variance estimator in~\eqref{denominator}. 
The proof is given in~\Cref{Proof of variance prop}.

\begin{remark}[Comparison with snapshot-level smoothing]
    An alternative approach is to smooth the balanced triangle statistics at the snapshot level. Specifically, one may first compute the triangle statistic $\hat U_n(t_\ell)=\binom{n}{3}^{-1}\sum_{i<j<k}A_{ij}(t_\ell)A_{jk}(t_\ell)A_{ki}(t_\ell)$ at each snapshot $t_\ell$, and then apply kernel smoothing to obtain $\hat U_{nh}(t)=\frac{1}{\sum_{\ell=1}^TK_h(t-t_\ell)}\sum_{\ell=1}^TK_h(t-t_\ell)\hat U_n(t_\ell)$. 
    However, we choose edge-level smoothing because it can achieve a smaller variance by exploiting a higher degree of averaging. This advantage can be clearly seen in the cubic remainder term of the observation noise. 
    For the snapshot-level estimator, the cubic remainder takes the form
    \[\binom{n}{3}^{-1}\sum_{i<j<k}\sum_\ell w_\ell(t)\eta_{ij}(t_\ell)\eta_{jk}(t_\ell)\eta_{ki}(t_\ell),\]
    where $w_\ell(t): = K_h(t-t_\ell) / \sum_{\ell'}K_h(t-t_{\ell'})$.
    In contrast, for the proposed edge-smoothed stimator, the corresponding cubic remainder is 
    \begin{align*}
    	& \binom{n}{3}^{-1} \sum_{i<j<k}\tilde \eta_{ij}(t)\tilde \eta_{jk}(t) \tilde \eta_{ki}(t) \\
    	= & \binom{n}{3}^{-1} \sum_{i<j<k} \left(\sum_{\ell_1} w_{\ell_1}(t) \eta_{ij}(t_{\ell_1})\right) \left(\sum_{\ell_2} w_{\ell_2}(t) \eta_{jk}(t_{\ell_2})\right) \left(\sum_{\ell_3} w_{\ell_3}(t) \eta_{ki}(t_{\ell_3})\right).
    \end{align*}
After expanding this product, we see it averages over separate time indices $(\ell_1,\ell_2,\ell_3)$ for the three edges in the triangle.
Because edge-level noises are mean-zero and independent across both time and node pairs, most cross products vanish when computing the variance. Only a small subset of index-matching terms contributes.
Because the kernel window contains an effective number of time points of order $Th$, this yields a stronger averaging effect than snapshot-level smoothing.
In particular, for the cubic remainder in the observation noise, the variance contribution is reduced by a factor of order $(Th)^{-3}$ under edge-level smoothing, compared with $(Th)^{-1}$ under snapshot-level smoothing. 
\end{remark}

\section{Simulation Studies}
\label{Simulation}
In this section, we conduct simulation studies to compare the empirical performance of our proposed method with three alternative methods:
\begin{itemize}
    \item \textbf{Our Method (Normal Approx.)}: Uses the same kernel-smoothed studentized estimator $\hat T_{nh}$ as our proposed method, but approximates the distribution of $\hat T_{nh}$ by the standard normal distribution, omitting the higher-order correction term in the Edgeworth expansion. 
    \item \textbf{One Snapshot}: Uses only a single snapshot of the dynamic network observed at the target time point, without aggregating information across time. As described in Remark \ref{rmk:one_snapshot}, the distributional approximation is based on the Edgeworth expansion, but all quantities are computed using only this single snapshot.
    \item \textbf{One Snapshot (Normal Approx.)}: A simplified version of the one-snapshot method that replaces the Edgeworth-expansion-based quantile with that from the standard normal approximation.
\end{itemize}

We evaluate the empirical coverage relative to the nominal level of $90\%$, as well as the average lengths of both two-sided and one-sided confidence intervals constructed by all methods.
We generate data from the dynamic graphon model in~\Cref{def_dynamic_signed} under two settings of graphon functions: 
\begin{enumerate}[label=(\arabic*)]
    \item \label{two-poly}$F(x, y, t)=1.6(1-t^2)(x^2+y^2)$ and 
    $G(x, y, t)=0.4(1-t^2)(x^2+y^2)$; 
    \item \label{balance} $F(x, y, t)\equiv1$ and $G(x, y, t)=\left(1+\exp\left(500\left(t+0.5\right)(x-0.4)(y-0.4)\right)\right)^{-1}$. 
\end{enumerate}
The second setting represents a two-community structure consistent with balance theory, where edges between nodes with $x,y<0.4$ (first block) or $x,y>0.4$ (second block) are more likely to be positive. The time-dependent term further captures how the degree of structural balance evolves over time.
Unless otherwise specified, we set the network size $n=200$, the number of timestamps $T=200$, and the sparsity parameter $\rho_n=0.25$. 
The observed timestamps are equally spaced over $[-0.5, 0.5]$. %
For kernel-smoothed estimators, we adopt the Gaussian kernel and select the bandwidth using the LOO procedure proposed in \Cref{Tuning} with $\tau=0.1$.

Below, we analyze how performance varies with the network size, the sparsity, the number of observed snapshots, and the degree of balance. In \Cref{CDF approx}, we further provide a comparison of the cumulative distribution functions (CDFs) between the Edgeworth expansion-based approximation and the standard normal approximation, which demonstrates that the Edgeworth expansion yields a more accurate approximation to the true CDF than the standard normal approximation. In \Cref{mcalF}, we also provide a sensitivity analysis of the tuning parameter $\tau$, where the results remain stable across different choices of~$\tau$ when~$T$ is large, while larger values of~$\tau$ are preferred when~$T$ is small.

\subsection{Impact of Network Size}

In this subsection, we evaluate how the network size affects the coverage proportion and the average length of confidence intervals (CIs) constructed under different methods. 
Specifically, we vary the network size with a range of $n\in\{50, 100, 200, 400\}$ in two settings  and run $2,000$ Monte Carlo replications for each setting.
\begin{figure}[!h]
    \centering
    \includegraphics[width=\linewidth]{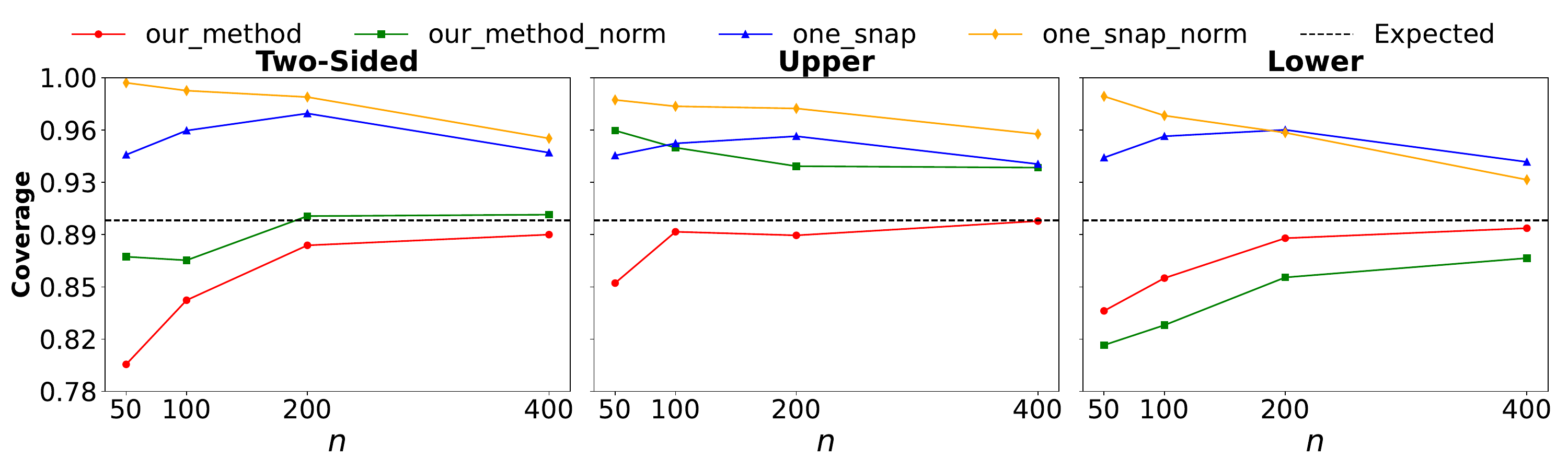}
    \caption{
    Coverage proportion  across varying network sizes under Setting~\ref{two-poly}, where the target inference time point is $t^*=0$.}
    \label{CoverageAcrossNUnderTwoPolyT1000RhoS08T0Frac10}
\end{figure}
\begin{figure}[!h]
    \centering
    \includegraphics[width=\linewidth]{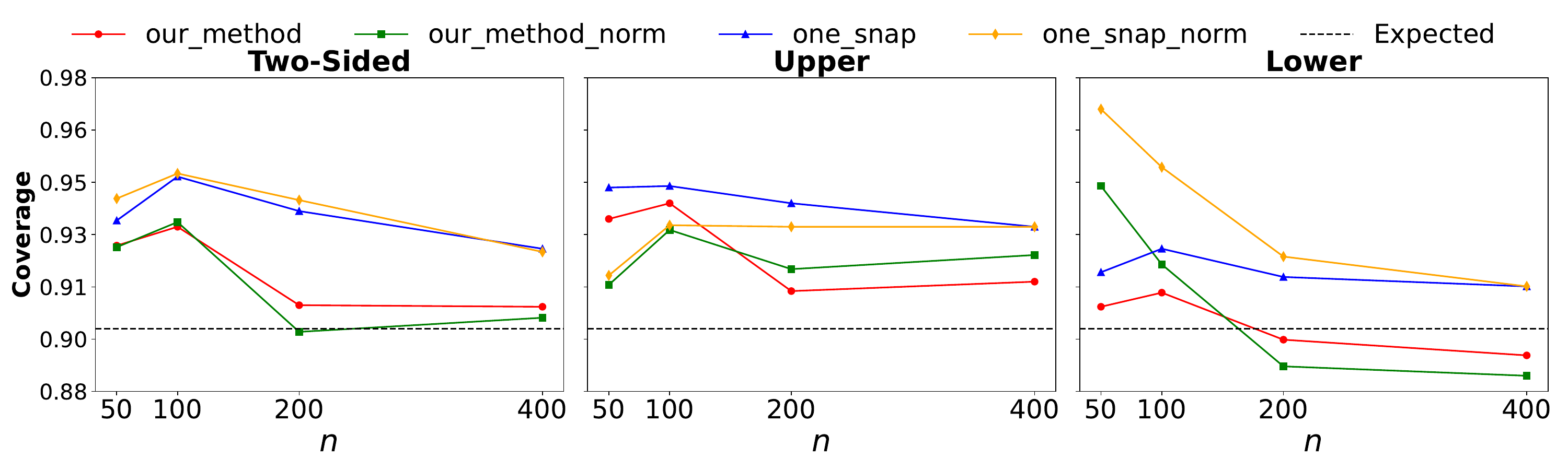}
    \caption{Coverage proportion  across varying network sizes under Setting~\ref{balance}, where the target inference time point is $t^*=-0.1$.}
    \label{CoverageAcrossNUnderBalanceT200Rho025T200Frac10}
\end{figure}
\begin{figure}[!h]
    \centering
    \includegraphics[width=\linewidth]{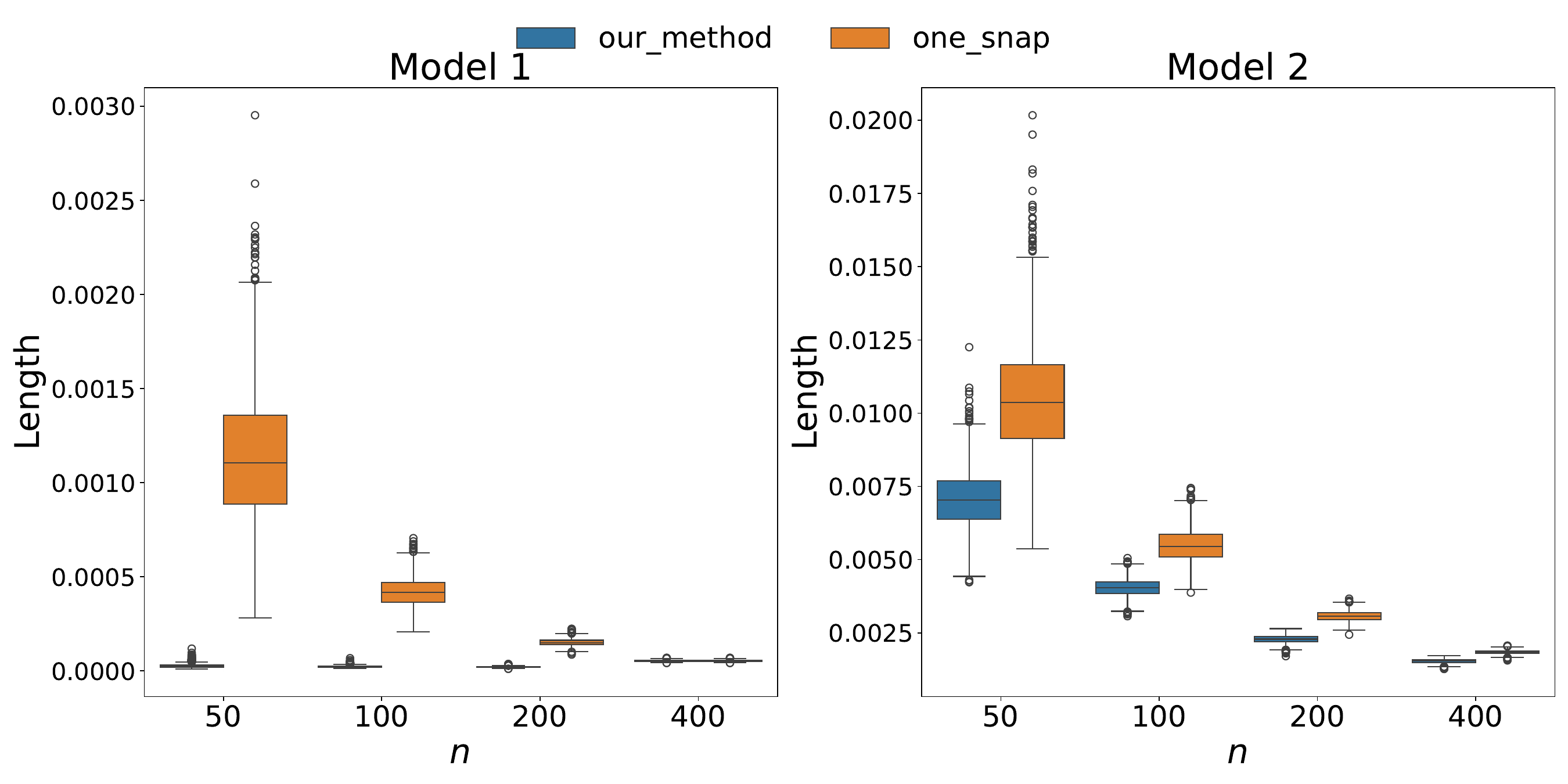}
    \caption{Length of CIs across varying network sizes. The left panel corresponds to Setting~\ref{two-poly} with $t^*=0$ and the right panel corresponds to Setting~\ref{balance} with $t^*=-0.1$.}
    \label{BoxplotOf2MethodsAcrossDifferentNetworkSizesT200Rho025}
\end{figure}
As shown in \Cref{CoverageAcrossNUnderTwoPolyT1000RhoS08T0Frac10,CoverageAcrossNUnderBalanceT200Rho025T200Frac10}, as the network size increases, the empirical coverage for all methods converges toward the nominal level $90\%$. Compared with the standard normal approximation and the one-snapshot methods, our method achieves coverage proportions closer to the nominal level in most cases. 
In addition, as shown in \Cref{BoxplotOf2MethodsAcrossDifferentNetworkSizesT200Rho025}, the lengths of the CIs for all methods consistently decrease as the network size increases. Importantly, given comparable coverage, our method that benefits from leveraging information from neighboring time points  consistently achieves shorter intervals than the one-snapshot alternatives.

\subsection{Impact of Network Sparsity}
\label{sec:one-poly}
To investigate how network sparsity affects the inference validity, we modify Setting~\ref{two-poly}
such that the probability of edge formation is entirely determined by the sparsity parameter $\rho_n$:
\begin{enumerate}[label=(\arabic*), start=3]
    \item \label{one-poly} $F(x, y, t)\equiv1$ and $G(x, y, t)=0.4(1-t^2)(x^2+y^2)$.
\end{enumerate}
We vary $\rho_n \in \{n^{-0.4}, n^{-0.25}, n^{-0.2}, n^{-0.1}\}$,
which yields network densities of approximately 12.0\%, 26.6\%, 34.7\%, 58.9\%.

\begin{figure}[!h]
    \centering
    \includegraphics[width=\linewidth]{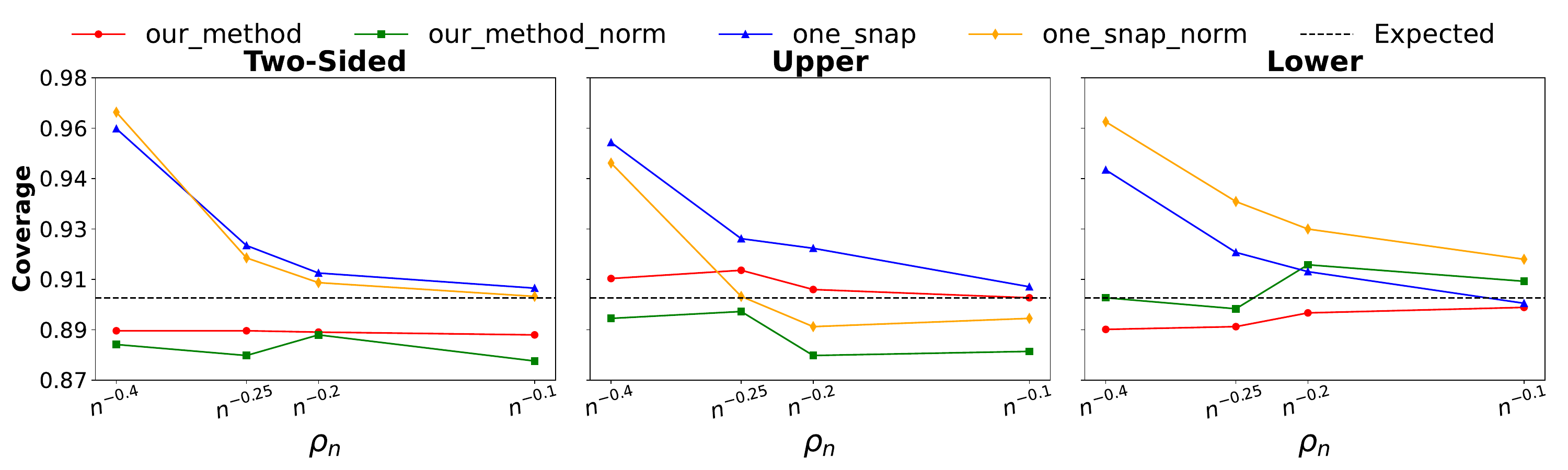}
    \caption{Coverage proportion  across varying network sparsity levels under Setting~\ref{one-poly}, where the target inference time point is $t^*=0$.
    }
    \label{CoverageAcrossRhoUnderOnePolyN200T200Frac10}
\end{figure}
\begin{figure}[!h]
    \centering
    \includegraphics[width=\linewidth]{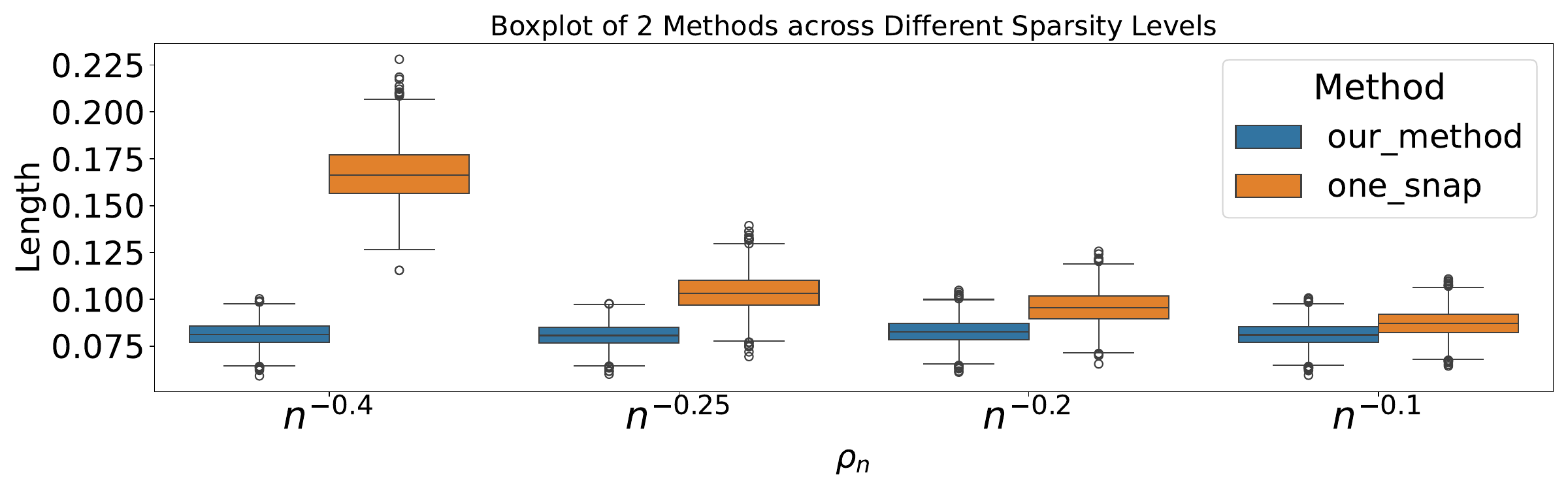}
    \caption{
    Length of CIs (normalized by $\rho_n^3$) across varying network sparsity levels under Setting~\ref{one-poly}, where the target inference time point is $t^*=0$.
    }
    \label{BoxplotOf2MethodsAcrossDifferentRhoNOnePolyN200T200T0Frac10}
\end{figure}
In terms of coverage proportion, as shown in \Cref{CoverageAcrossRhoUnderOnePolyN200T200Frac10}, our methods, based on both the Edgeworth expansion and the standard normal approximation, remain stable across varying sparsity levels and consistently achieve coverage close to the nominal level $90\%$. In contrast, the coverage proportion of the one-snapshot methods deviates substantially from the nominal level as the network becomes sparser.

Regarding CI length, note that the inference target $\mu(t^*)$, its empirical estimator, and its variance estimator all scale with $\rho_n^3$. To make a fair comparison across varying sparsity levels, we therefore normalize CI lengths by $\rho_n^3$. 
As shown in \Cref{BoxplotOf2MethodsAcrossDifferentRhoNOnePolyN200T200T0Frac10}, our methods yield consistently short and stable CIs across all sparsity regimes, whereas the one-snapshot methods lead to longer CIs as $\rho_n$ decreases. These results indicate that by incorporating information from neighboring timestamps, our method effectively mitigates the instability that arises in sparse network regimes.

\subsection{Impact of the Number of Timestamps}
\label{change T}

To further investigate the impact of available temporal information, we fix the sparsity level at $\rho_n = n^{-0.4}$ (an edge density of approximately 12.0\%) and vary the number of timestamps $T \in \{50, 75, 100, 125, 200\}$ under Setting~\ref{one-poly}. 
As shown in \Cref{CoverageAcrossTUnderOnePolyN200RhoN04T0Frac10}, the coverage of our methods improves steadily with larger $T$ and converges toward the nominal level. When $T$ is small, coverage proportion may deviate due to bias introduced by kernel smoothing. In contrast, under this sparse regime, the one-snapshot methods deviate from nominal level and cannot benefit from leveraging neighborbood information with increasing~$T$.
\begin{figure}[!h]
    \centering
    \includegraphics[width=\linewidth]{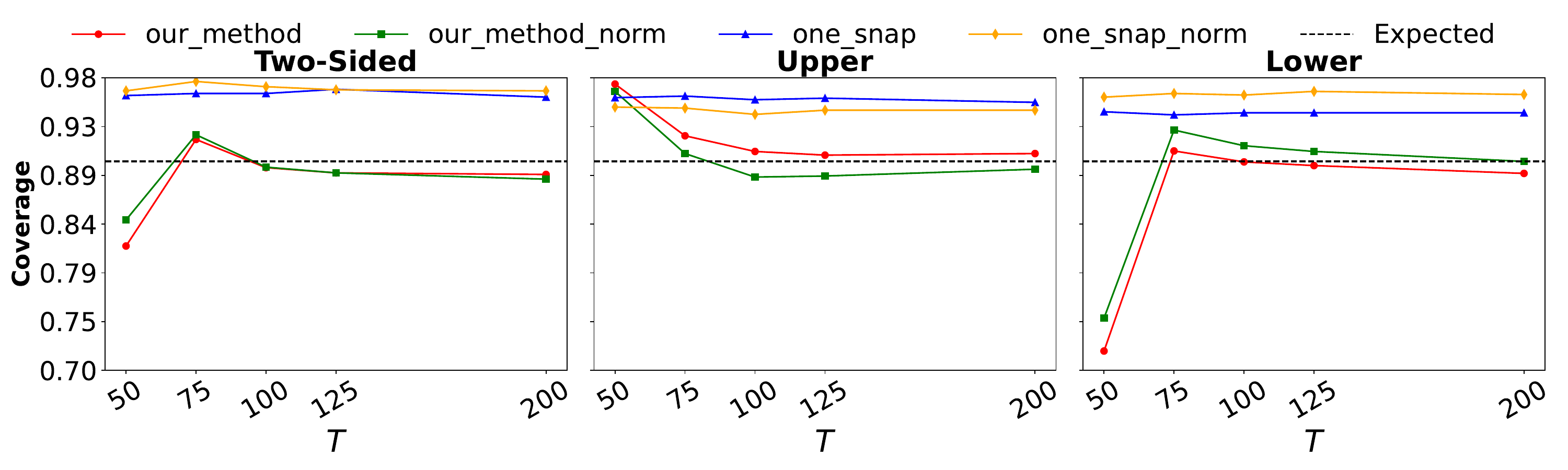}
    \caption{Coverage proportion  across varying number of timestamps under Setting~\ref{one-poly}, where  $\rho_n= n^{-0.4}$ and $t^*=0$.
    }
    \label{CoverageAcrossTUnderOnePolyN200RhoN04T0Frac10}
\end{figure}

\subsection{Impact of the Degree of Balance}
Finally, we assess the performance of our method under varying degrees of balance using the graphon model defined in Setting \ref{balance}.
In this model, the degree of balance evolves over time. When $t=-0.5$, each edge is equally likely to be positive or negative, representing a balance-free scenario. As $t$ increases, edges within the same block are more likely to be positive, while those across blocks are more likely to be negative, which leads to a higher proportion of balanced triangles. Thus, larger $t$ corresponds to stronger degree of balance. 
We vary the inference target $t^*\in\{-0.46, -0.4, -0.3, -0.1, 0.3\}$, for which the proportion of balanced triangles ranges from approximately $60\%$ to $90\%$.
\Cref{CIBandUnderBalanceN200T200Rho025S1Frac10} visualizes the true portion of balanced triangles among all triangles, given by $(1+\mu(t^*)/\rho_n^3)/2$, and the pointwise two-sided $90\%$ CIs obtained by applying the same transformation to the CIs for $\mu(t^*)$ constructed by our method, for one random replication. 
As expected, the proportion of balanced triangles increases monotonically with $t^*$ and remains well within the constructed $90\%$ CIs across all degrees of balance. 

\Cref{CoverageAcrossTUnderBalanceN200T200Frac10Rhon025} further summarizes the coverage proportions of both two-sided and one-sided CIs for all four methods based on $2,000$ replications, and \Cref{BoxplotOf2MethodsAcrossDifferentDegreesOfBalanceUnderBalanceN200T200Frac10Rho025} compares the corresponding two-sided CI lengths across different degrees of balance.
Our method achieves coverage reasonably close to the nominal level while yielding shorter intervals than the one-snapshot baseline.
These results suggest that the proposed inference procedure remains stable across varying degrees of balance.

\begin{figure}[!h]
    \centering
    \includegraphics[width=\linewidth]{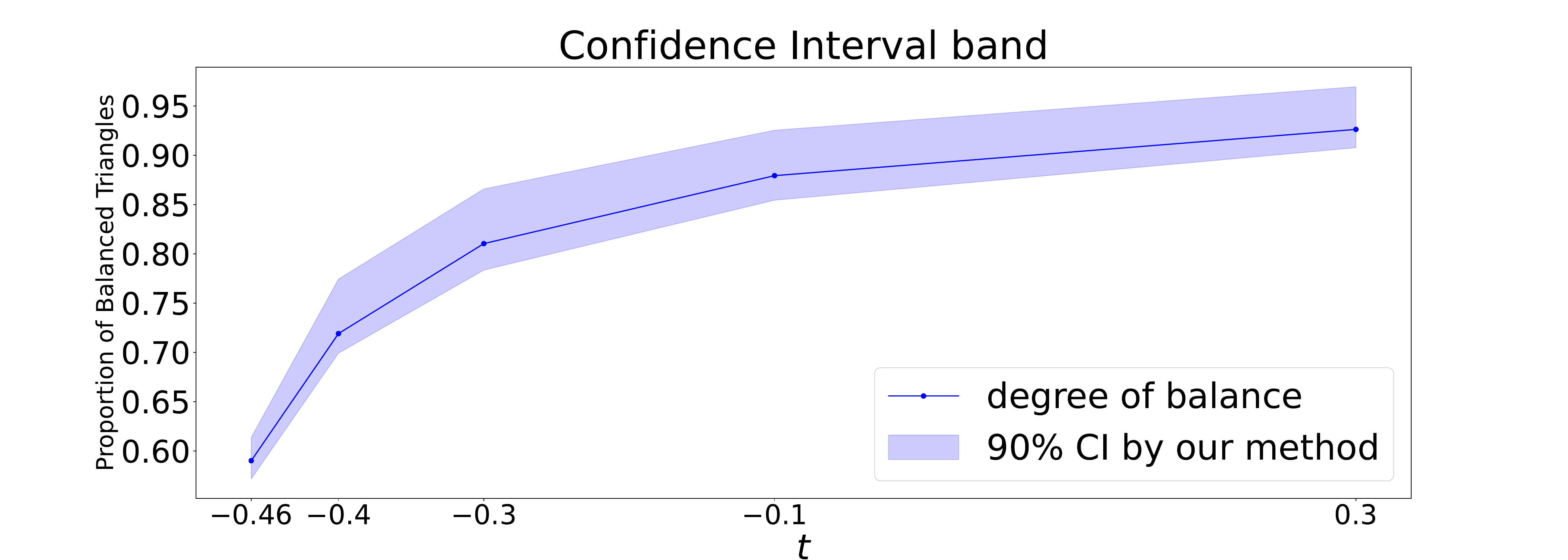}
    \caption{Portion of balanced triangles among all triangles and 90\% two-sided confidence interval constructed by our method under Setting \ref{balance}, where the portion is defined as $(1+\mu(t^*)/\rho_n^3)/2$ and $\mu(t^*)$ is the inference target.}
    \label{CIBandUnderBalanceN200T200Rho025S1Frac10}
\end{figure}

\begin{figure}[!h]
    \centering
    \includegraphics[width=\linewidth]{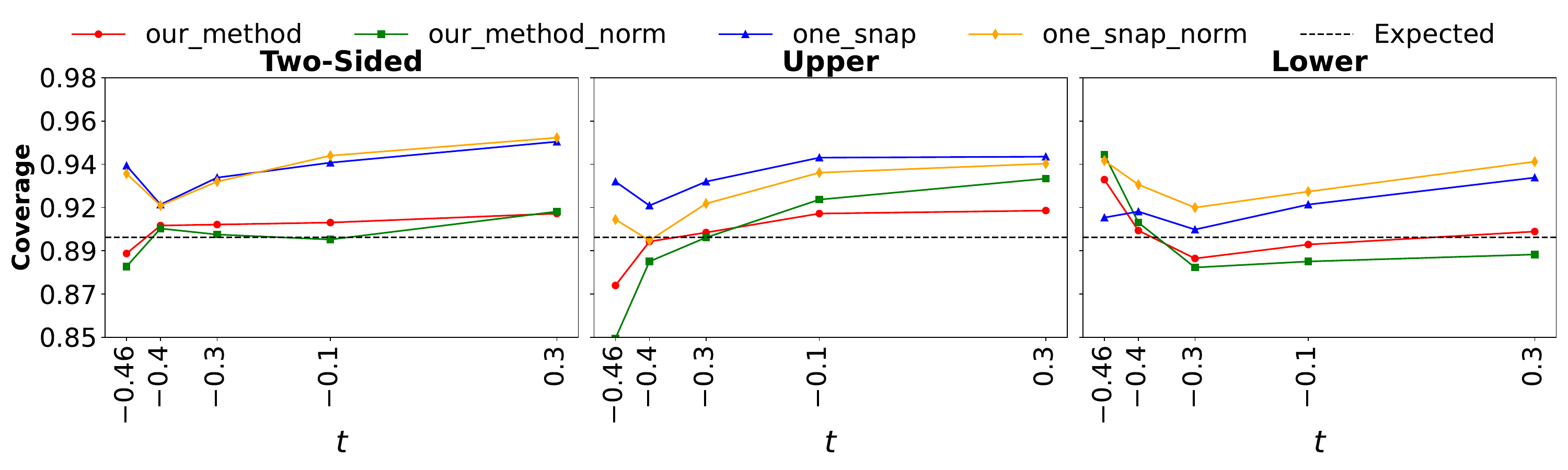}
    \caption{Coverage proportion across varying degrees of balance under Setting \ref{balance}.}
    \label{CoverageAcrossTUnderBalanceN200T200Frac10Rhon025}
\end{figure}

\begin{figure}[!h]
    \centering
    \includegraphics[width=\linewidth]{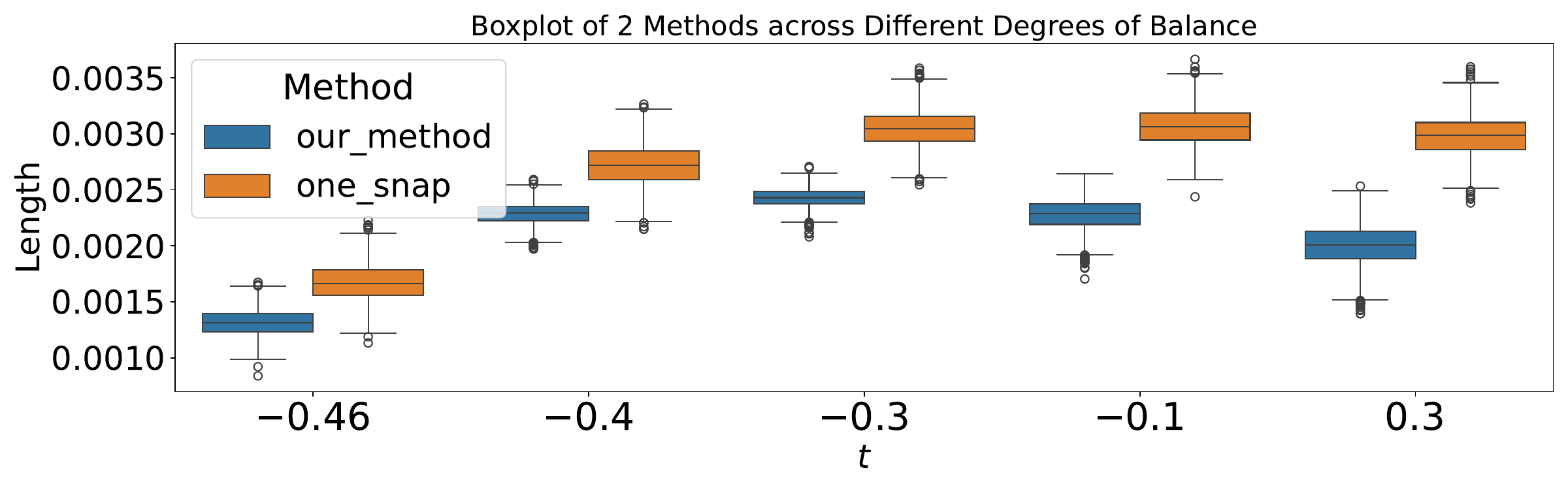}
    \caption{Length of CIs across varying degrees of balance under Setting \ref{balance}.}
    \label{BoxplotOf2MethodsAcrossDifferentDegreesOfBalanceUnderBalanceN200T200Frac10Rho025}
\end{figure}

\section{Application to Dynamic Signed International Relation Network}
\label{Sec:real_data}
In this section, we analyze a real-world dynamic signed network constructed from the Integrated Crisis Early Warning System (ICEWS) dataset~\citep{DVN/28117_2015}. The ICEWS dataset documents political interactions among countries from January 1995 to April 2023 in the form of directed event records. Each record contains a source country, a target country, the date of occurrence, and an \textit{intensity} score that quantifies the strength and sentiment of interaction (e.g., cooperative vs. conflictual) by aggregating weighted counts of relevant news articles within a given month.
To construct an undirected dynamic signed network, we aggregated the intensity scores for each country pair within each month and assigned the sign of their sum as the edge sign. Pairs of countries without any recorded events were treated as having no edge. 
Thus, a positive edge indicates that cooperative events outweighed conflictual events for that country pair in that month, whereas a negative edge indicates that conflictual events dominated.
We restricted our analysis to the period from May 2003 to November 2015 for relatively stable network densities and sufficient node connectivities, and we retained only countries that were involved in interactions in at least $80$ monthly snapshots during the observed period. The final dataset is a dynamic network among $195$ nodes (countries) observed across $150$ monthly snapshots, with network densities ranging approximately from 8\% to 12\%. %

We then apply the proposed method to quantify uncertainty in the degree of structural balance over time and compare it with one-snapshot baseline. \Cref{fig:CIRealDataPeter95} summarizes the 95\% confidence intervals for $\mu(t^*)$ at seven evenly spaced inference time points, and \Cref{fig:CILengthRealDataPeter95} compares the corresponding interval lengths.
Across the seven time points, our method produces intervals that are consistently shorter than, or comparable to, those from the one-snapshot baseline, suggesting that borrowing information from nearby network snapshots improves the precision of inference. 
Compared with the standard normal approximation, our distributional approximation with higher-order correction terms produces confidence intervals that lean toward a stronger degree of balance.

To assess whether the observed signed networks exhibit balance beyond what can be explained by the overall fraction of negative edges, we further compare the estimates with the balance-free null value $\mu_0$ discussed in~\Cref{null_val}. 
Under the balance-free null model~\citep{facchetti2011computing}, the edge signs are generated independently while preserving the observed negative-edge proportion.
As shown in \Cref{fig:CIRealDataPeter95}, at all seven time points, the confidence intervals lie above this null value, and the corresponding two-sided tests based on~\Cref{null_val} yield $p$-values below $0.01$. 
This indicates that the observed networks reflect systematic structural balance beyond what would be expected from edge density and sign proportion~alone.

\begin{figure}[!htbp]
    \centering
    \includegraphics[width=0.83\linewidth]{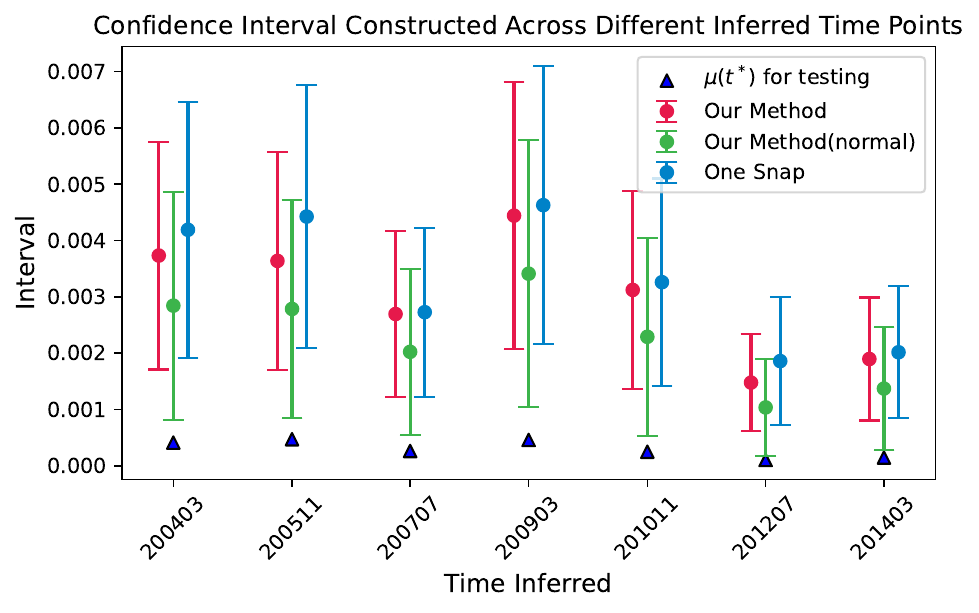}
    \caption{The 95\% confidence intervals for $\mu(t^*)$ at seven inference time points obtained by three methods. The $x$-axis denotes the year and month of each inferred time point (e.g., 200511 corresponds to November 2005). 
    }
    \label{fig:CIRealDataPeter95}
\end{figure}

\begin{figure}[!htbp]
    \centering
    \includegraphics[width=0.83\linewidth]{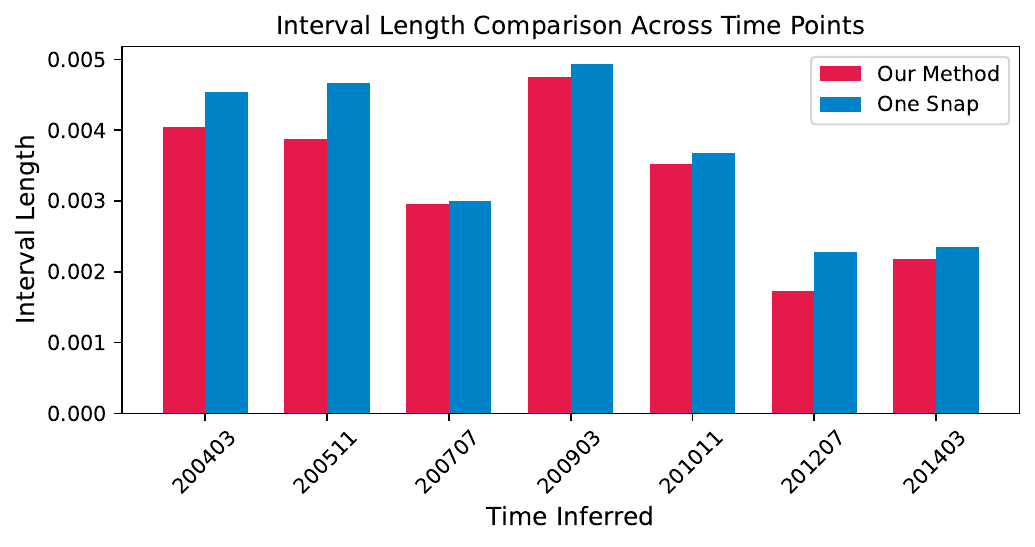}

    \caption{Length comparison of the 95\% confidence intervals for $\mu(t^*)$ at seven inference time points obtained by different methods. The $x$-axis denotes the year and month of each inferred time point (e.g., 200511 corresponds to November 2005). 
    }
    \label{fig:CILengthRealDataPeter95}
\end{figure}

In addition, the temporal pattern in \Cref{fig:CIRealDataPeter95} suggests meaningful variation in global tension structure over the study period. 
The early 2000s show relatively high levels of balance, consistent with a period in which multilateral cooperation and institutional coordination were prominent. 
The degree of balance declines modestly around 2007, possibly reflecting a more mixed or fragmented pattern of international relations during that period. 
The estimate at 2009 is higher than that at 2007, which may reflect renewed coordination during the global financial crisis.
The later time points, especially after 2012, show weaker balance, consistent with increasing fragmentation and a more multipolar geopolitical environment.

\section{Discussion}
In this work, we develop a nonparametric kernel-smoothed estimator for the degree
of balance in dynamic signed networks. 
The estimator first smooths edge observations across nearby snapshots and then constructs the empirical network moment. 
This edge-level smoothing effectively reduces observation noise.
Compared with one-snapshot methods, this kernel-smoothed approach relaxes the sparsity
requirements for valid inference. 
Furthermore, our established approximation error bounds for the Edgeworth expansion of the studentized statistic suggests the trade-off inherent to kernel methods: local averaging reduces stochastic error, while kernel smoothing introduces smoothing bias and discretization error.

Our current framework still assumes that the global scale of network sparsity, characterized by $\rho_n$, is invariant over time. 
When the sparsity level varies substantially over time, however, additional normalization is required to ensure that the degree of balance remains comparable across time points. 
In particular, the scale of the inference target  $\mu(t^*)$ depends on the sparsity level and, under our formulation, varies proportionally to $\rho_n^3(t)$, where $\rho_n(t)$ denotes the time-specific sparsity level. 
To address this issue, one can rescale the inference target by replacing $A_{ij}(t)$ with $A_{ij}(t)\cdot(\rho_n(t))^{-1}$ for $1\leq i, j\leq n$. This rescaling yields a normalized balance measure that is comparable across time.  
In practice, this requires modifying the inference procedure in Section~\ref{procedure} by replacing $\rho_n(t)$ with an empirical estimate of the network density at each time point. 
This adjustment introduces additional randomness into both the point estimator and the corresponding variance estimator. 
For static networks, a closely related issue arises in two-sample network inference, where motif counts from each network are rescaled by their empirical densities to adjust for sparsity heterogeneity before comparing structural patterns~\citep{Shao05082025}. A full theoretical investigation of this rescaling approach in dynamic signed networks is an important direction for future work.

\section*{Acknowledgements} Tang was partially supported by NSF DMS-2412853. 
Zhang was supported by NSF DMS-2311109.

\bibliography{main.bib}
\newpage

\appendix
\begin{center}
{\large\bf SUPPLEMENTAL MATERIAL}
\end{center}
\allowdisplaybreaks

\setcounter{equation}{0}
\setcounter{figure}{0}
\setcounter{table}{0}
\setcounter{page}{1}
\setcounter{section}{0}
\renewcommand{\theequation}{S\arabic{equation}}
\renewcommand{\thefigure}{S\arabic{figure}}

This supplemental material for ``Inference for Balance in Dynamic Signed Networks'' is structured as follows. 
\vspace{-7mm}
\addtocontents{toc}{\protect\setcounter{tocdepth}{1}}
\renewcommand{\contentsname}{}
\tableofcontents

\section{Notation and Preliminaries}

We first introduce the notation and definitions that will be used throughout the analysis. 
We fix a target time point $t$.  All stochastic orders below are uniform in this fixed $t$. 

\paragraph{Notation.} We use the standard asymptotic notation $O(\cdot)$, $o(\cdot)$, $\Omega(\cdot)$, and $\omega(\cdot)$, together with their probabilistic counterparts $O_p(\cdot)$ and $o_p(\cdot)$. For a random variable $Z$ and a deterministic sequence $\{\alpha_n\}$, we write $Z=\tilde O_{p,1}(\alpha_n)$ if $\bbP(|Z|\ge C\alpha_n)=O(n^{-1})$ for some constant $C>0$, and $Z=\tilde O_{p,2}(\alpha_n)$ if $\bbP(|Z|\ge C\alpha_n)=O(n^{-2})$ for some constant $C>0$.
We write $Z_1 \gtrsim Z_2$ if there exists a constant $C>0$ such that $Z_1 \geq C Z_2$ for all sufficiently large $n$. We write $Z_1 \asymp Z_2$ if $Z_1 \gtrsim Z_2$ and $Z_2 \gtrsim Z_1$.

\paragraph{Model Setup and Definitions.}
Write $X=(X_1,\ldots,X_n)$ and use $\bbE[\cdot\mid X]$ for conditional expectation given all latent variables.  Conditional on $X$, the variables $\{A_{ij}(t_\ell):1\le i<j\le n,1\le \ell\le T\}$ are independent across different edge pairs and observation times, and
\[
A_{ij}(s)=
\begin{cases}
0,&\text{with probability } a_{ij}(s)=1-\rho_n F(X_i,X_j,s),\\
-1,&\text{with probability } b_{ij}(s)=\rho_n F(X_i,X_j,s)G(X_i,X_j,s),\\
1,&\text{with probability } c_{ij}(s)=\rho_n F(X_i,X_j,s)\{1-G(X_i,X_j,s)\}.
\end{cases}
\]
We have $0\le G\le1$, $0\le \rho_n F\le1$.  Assumption~\ref{assump: smooth graphon} imposes H\"older smoothness of $F$ and $G$; since products of bounded H\"older functions remain H\"older of the same order, the same order of smoothness is available for $FG$ and for $F(1-2G)$. 
For a node pair $(i,j)$, define
\[
W_{ij}(t)=\bbE[A_{ij}(t)\mid X_i,X_j],\qquad
\eta_{ij}(t)=A_{ij}(t)-W_{ij}(t).
\]
Let \(w_\ell(t)=K_h(t-t_\ell)/\sum_{m=1}^T K_h(t-t_m)\) and define $\tilde A_{ij}(t)=\sum_{\ell=1}^T w_\ell(t) A_{ij}(t_\ell)$,
\[
\tilde W_{ij}(t)=\bbE[\tilde A_{ij}(t)\mid X_i,X_j] =\sum_{\ell=1}^T w_\ell(t) W_{ij}(t_\ell) ,\qquad
\tilde\eta_{ij}(t)=\tilde A_{ij}(t)-\tilde W_{ij}(t).
\]
For simplicity of exposition, we adopt the Gaussian kernel in all our proofs. Proofs for other kernels satisfying properties (K1)-(K4) can be derived analogously.
Together with Assumption~\ref{assump: time obs} and the interior condition on $t$, by \Cref{kappa j}, we have the following standard results for the kernel weights that will be used throughout the proof:
\[
\sum_{\ell=1}^T K_h(t-t_\ell)\asymp T,
\qquad
\sum_{\ell=1}^T K_h^2(t-t_\ell)\asymp T/h, 
\qquad
\max_\ell |w_\ell(t)|=O((Th)^{-1}).
\]

\begin{lemma}[Basic moment bounds]
\label{basic moment}
Under the probability model and Assumption~\ref{assump: time obs}, for every fixed $t$ and $i<j$, almost surely,
\[
W_{ij}(t)=O(\rho_n),\qquad |\eta_{ij}(t)|=O(1),\qquad
\bbE[|\eta_{ij}(t)|^r\mid X_i,X_j]=O(\rho_n),\qquad r=1,2,3.
\]
For the smoothed quantities,
\[
\tilde W_{ij}(t)=O(\rho_n),\qquad |\tilde\eta_{ij}(t)|=O(1),\qquad
\bbE [|\tilde\eta_{ij}(t)|\mid X_i,X_j ]=O(\rho_n),
\]
and
\[
\bbE [\tilde\eta_{ij}^2(t)\mid X_i,X_j ]=O\{\rho_n(Th)^{-1}\}.
\]
\end{lemma}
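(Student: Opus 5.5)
The plan is to get the single-snapshot bounds directly from the three-point law of $A_{ij}(s)$, and then to pass them through the kernel weights $w_\ell(t)$, using that these weights are convex and small.

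\emph{Snapshot quantities.} Conditional on $(X_i,X_j)$ we have
\[
W_{ij}(s)=c_{ij}(s)-b_{ij}(s)=\rho_nF(X_i,X_j,s)\{1-2G(X_i,X_j,s)\}.
\]
Since $0\le F\le C_F$ by Assumption~\ref{assump: smooth graphon} and $|1-2G|\le1$, this gives $|W_{ij}(s)|\le C_F\rho_n$ uniformly in $s$, which is the first claim. Both $A_{ij}(s)$ and $W_{ij}(s)$ lie in $[-1,1]$, so $|\eta_{ij}(s)|\le2$. For the conditional moments, split on the event $\{A_{ij}(s)=0\}$:
\[
\bbE[|\eta_{ij}|^r\mid X_i,X_j]=a_{ij}|W_{ij}|^r+b_{ij}|1+W_{ij}|^r+c_{ij}|1-W_{ij}|^r.
\]
The first term is at most $|W_{ij}|^r=O(\rho_n^r)$. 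The other two are at most $2^r(b_{ij}+c_{ij})=2^r\rho_nF=O(\rho_n)$. Hence the moment is $O(\rho_n)$ for $r=1,2,3$.

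\emph{Smoothed quantities.} The Gaussian kernel is positive, so the $w_\ell(t)$ are nonnegative and sum to one. It follows that $\tilde W_{ij}(t)=\sum_\ell w_\ell W_{ij}(t_\ell)$ is a convex combination of terms that are $O(\rho_n)$ uniformly in $\ell$, hence $O(\rho_n)$. Likewise $|\tilde\eta_{ij}(t)|\le\sum_\ell w_\ell|\eta_{ij}(t_\ell)|\le2$. The triangle inequality with the $r=1$ bound gives $\bbE[|\tilde\eta_{ij}|\mid X_i,X_j]\le\sum_\ell w_\ell\,\bbE|\eta_{ij}(t_\ell)|=O(\rho_n)$. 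The one point to keep straight here is that every snapshot bound must be uniform in $s\in\mcalT$; this holds because $C_F$ is a global constant.

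\emph{Second moment (the only non-trivial step).} Conditional on $X$, the noises $\eta_{ij}(t_\ell)$ are independent across $\ell$ and mean zero, so all cross terms vanish and
\[
\bbE[\tilde\eta_{ij}^2\mid X_i,X_j]=\sum_\ell w_\ell^2(t)\,\bbE[\eta_{ij}^2(t_\ell)\mid X_i,X_j]\le C\rho_n\sum_\ell w_\ell^2(t).
\]
It remains to show $\sum_\ell w_\ell^2=O((Th)^{-1})$. Using the stated kernel-sum facts (from \Cref{kappa j}), $\sum_\ell K_h^2(t-t_\ell)\asymp T/h$ and $\sum_\ell K_h(t-t_\ell)\asymp T/h$. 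Here the text writes $\asymp T$, but with $K_h=K(\cdot/h)/h$ and Riemann-sum spacing $1/T$ the denominator is of order $T$ times $\int K_h=1$, up to the spacing normalization. I will recheck this normalization carefully, since it is where a power of $h$ could slip.

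The cleanest route is the bound $\sum_\ell w_\ell^2\le\max_\ell w_\ell\cdot\sum_\ell w_\ell=\max_\ell w_\ell$, combined with the stated fact $\max_\ell|w_\ell(t)|=O((Th)^{-1})$. This avoids the normalization issue entirely. That fact itself follows from $K_h\le M/h$ together with the denominator being of order $T$, which requires the interior condition $\min\{\mcalL-t,t\}\ge\delta$ so that the kernel mass is not truncated. Therefore $\bbE[\tilde\eta_{ij}^2\mid X_i,X_j]=O(\rho_n(Th)^{-1})$, which completes the lemma.
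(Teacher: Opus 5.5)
Your proof is correct and matches the paper's argument almost step for step: the three-point law gives the snapshot bounds, these pass through the kernel weights, and conditional independence across time removes the cross terms in the second moment. Bounding $\sum_\ell w_\ell^2(t)$ by $\max_\ell w_\ell(t)$ instead of by $\sum_\ell K_h^2(t-t_\ell)/\{\sum_\ell K_h(t-t_\ell)\}^2$ is a harmless and slightly more economical variant, since it needs only $K_h\le M/h$ and the lower bound $\sum_\ell K_h(t-t_\ell)\gtrsim T$. Drop the stray aside that $\sum_\ell K_h(t-t_\ell)\asymp T/h$: the paper's order $T$ is the right one, being a Riemann sum with mesh of order $1/T$ of a kernel that integrates to one. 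Your max-weight bound depends on exactly that order, so the normalization question is not avoided, only reduced to its lower half.
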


\begin{proof}[\textbf{Proof of \Cref{basic moment}}]
Because $F$ is bounded and $0\le G\le1$, both nonzero-edge probabilities satisfy
\[
b_{ij}(t)=\rho_nF(X_i,X_j,t)G(X_i,X_j,t)=O(\rho_n),\quad
c_{ij}(t)=\rho_nF(X_i,X_j,t)\{1-G(X_i,X_j,t)\}=O(\rho_n).
\]
Moreover,
\[
W_{ij}(t)=\bbE\{A_{ij}(t)\mid X_i,X_j\}=c_{ij}(t)-b_{ij}(t)=\rho_nF(X_i,X_j,t)\{1-2G(X_i,X_j,t)\}=O(\rho_n).
\]
Since $A_{ij}(t)\in\{-1,0,1\}$ and $|W_{ij}(t)|\le1$, we have $|\eta_{ij}(t)|\le2$.  For $r=1,2,3$,
\[
\begin{aligned}
\bbE[|\eta_{ij}(t)|^r\mid X_i,X_j]
&=a_{ij}(t)|W_{ij}(t)|^r+b_{ij}(t)|-1-W_{ij}(t)|^r+c_{ij}(t)|1-W_{ij}(t)|^r\\
&\le |W_{ij}(t)|^r+2^r\{b_{ij}(t)+c_{ij}(t)\}=O(\rho_n).
\end{aligned}
\]
For the smoothed mean,
\[
|\tilde W_{ij}(t)|\le \sum_{\ell=1}^T |w_\ell(t)|\, |W_{ij}(t_\ell)|=O(\rho_n).
\]
Likewise, $|\tilde\eta_{ij}(t)|\le \sum_\ell |w_\ell(t)| |\eta_{ij}(t_\ell)|=O(1)$ and
\[
\bbE [|\tilde\eta_{ij}(t)|\mid X_i,X_j ]
\le \sum_{\ell=1}^T |w_\ell(t)|\bbE [|\eta_{ij}(t_\ell)|\mid X_i,X_j ]=O(\rho_n).
\]
Finally, by conditional independence across observation times,
\[
\begin{aligned}
\bbE [(\tilde\eta_{ij}(t))^2\mid X_i,X_j ]
&=var\{\tilde A_{ij}(t)\mid X_i,X_j\}\\
&=\sum_{\ell=1}^T w_\ell^2(t)var\{A_{ij}(t_\ell)\mid X_i,X_j\}\\
&\le C\rho_n\frac{\sum_{\ell=1}^T K_h^2(t-t_\ell)}{\left(\sum_{\ell=1}^T K_h(t-t_\ell)\right)^2}
=O(\rho_n(Th)^{-1}).
\end{aligned}
\]
This proves the lemma.
\end{proof}

\section{Variance Decomposition in \Cref{subsec: variance_tech}}
\label{Proof of variance prop}

We first give a sketch of the analysis strategy. To identify the leading term in the variance of $\tilde U_{nh}(t)-\mu(t)$, we first decompose the noiseless U-statistic $U_{nh}(t)$ into its Hoeffding projections, which represent the variation from latent variables. We then decompose the difference $\tilde U_{nh}(t)-U_{nh}(t)$ into terms of different degrees in the observation noise, which represent the variation from observation noise. Finally, we combine these decompositions to identify the leading source of variance and derive the variance estimator $\hat{S}_{nh}^2$ in \eqref{denominator}.

\paragraph{Noiseless U-statistic and projections.}
Define
\[
U_{nh}(t)=\bbE[\tilde U_{nh}(t)\mid X ]={n\choose 3}^{-1}\sum_{i<j<k}\tilde W_{ij}(t)\tilde W_{jk}(t)\tilde W_{ki}(t),
\qquad
\mu_h(t)=\bbE [U_{nh}(t)].
\]
The first two Hoeffding projections are
$$g_{1h}(x, t)=\bbE\left[\tilde W_{12}(t)\tilde W_{23}(t)\tilde W_{31}(t)|X_1=x\right]-\mu_{h}(t),$$ and $$g_{2h}(x_1, x_2, t)=\bbE\left[\tilde W_{12}(t)\tilde W_{23}(t)\tilde W_{31}(t)|X_1=x_1, X_2=x_2\right]-\mu_{h}(t)-g_{1h}(x_1,t)-g_{1h}(x_2,t).$$ 
Let $g_{3h}$ denote the fully degenerate third projection.  Then Hoeffding's decomposition gives
\begin{equation}
\label{eq:hoeffding-decomp-revised}
U_{nh}(t)-\mu_h(t)=Y_1(t)+Y_2(t)+Y_3(t),
\end{equation}
where
\[
Y_1(t)=\frac{3}{n}\sum_{i=1}^n g_{1h}(X_i,t),\qquad
Y_2(t)=\frac{6}{n(n-1)}\sum_{i<j}g_{2h}(X_i,X_j,t),
\]
and
\[
Y_3(t)=\frac{6}{n(n-1)(n-2)}\sum_{i<j<k}g_{3h}(X_i,X_j,X_k,t).
\]
By the basic moment bounds in \Cref{basic moment} and standard concentration for bounded degenerate U-statistics yields the bounds for each projection term. We state the following lemma, the proof of which is given in \Cref{proof variance Hoeffding}.

\begin{lemma}[Latent-variable variation]
\label{variance Hoeffding}
Under the probability model and \Cref{assump: time obs,assump: smooth graphon,non-u,assump: sparsity}, we have
\[
Y_1(t)=\tilde O_{p,2}(\rho_n^3n^{-1/2}\log^{1/2}n),\quad
Y_2(t)=\tilde O_{p,2}(\rho_n^3n^{-1}\log n),\quad
Y_3(t)=\tilde O_{p,2}(\rho_n^3n^{-3/2}\log^{3/2}n),
\]
and the variances and covariances satisfy
\[
    var(Y_1(t))=O(\rho_n^6n^{-1}) ,\quad
    var(Y_2(t))=O(\rho_n^6n^{-2}),\quad
    var(Y_3(t))=O(\rho_n^6n^{-3}),
\]
with all relevant cross-covariances vanishing by orthogonality of Hoeffding projections.
In particular, if \(\xi_{1h}^2(t)\asymp \rho_n^6\), then
\[
    var(Y_1(t))=\frac{9}{n}\xi_{1h}^2(t)\asymp \rho_n^6 n^{-1}.
\]
\end{lemma}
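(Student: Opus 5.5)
The proof reduces everything to one uniform bound on the kernel of the noiseless U-statistic. By \Cref{basic moment}, $|\tilde W_{ij}(t)|\le C\rho_n$ almost surely. Hence the kernel $\tilde W_{12}(t)\tilde W_{23}(t)\tilde W_{31}(t)$ is bounded by $C^3\rho_n^3$. Consequently $|g_{1h}|\le 2C^3\rho_n^3$, $|g_{2h}|\le 4C^3\rho_n^3$, and $|g_{3h}|\le 8C^3\rho_n^3$, since each Hoeffding projection is a finite signed combination of conditional expectations of the kernel. I will write $g_{3h}$ explicitly as the kernel minus $\mu_h(t)$, the three $g_{1h}$ terms and the three $g_{2h}$ terms. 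Dividing by $\rho_n^3$ then normalizes all projections to be uniformly bounded, and every claim becomes a statement about bounded (degenerate) U-statistics, rescaled by $\rho_n^3$.

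For the variances I use the standard orthogonality of Hoeffding projections. The degeneracy $\bbE[g_{2h}(X_1,X_2,t)\mid X_1]=0$, and similarly for $g_{3h}$, makes all summands in $Y_2$ (resp.\ $Y_3$) indexed by distinct index sets uncorrelated. Summands of different orders are uncorrelated because they have different degeneracy orders. This immediately gives
\[
var(Y_1(t))=\tfrac{9}{n}\xi_{1h}^2(t),\qquad var(Y_2(t))=\tbinom{n}{2}^{-1}\tfrac{36}{n^2(n-1)^2}\cdot\tbinom{n}{2}^2\bbE g_{2h}^2=O(\rho_n^6n^{-2}),
\]
and likewise $var(Y_3(t))=O(\rho_n^6n^{-3})$, along with vanishing cross-covariances. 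The bound $var(Y_1(t))=O(\rho_n^6 n^{-1})$ follows from $\xi_{1h}^2\le C\rho_n^6$. The ``in particular'' statement is then just the identity for $var(Y_1)$ combined with the hypothesis $\xi_{1h}^2\asymp\rho_n^6$.

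For the tail bounds, $Y_1$ is an i.i.d.\ average of terms bounded by $C\rho_n^3$, so Hoeffding's inequality gives $\bbP(|Y_1|\ge c\rho_n^3 n^{-1/2}\sqrt{\log n})\le 2e^{-c'c^2\log n}$. Choosing $c$ large makes this $O(n^{-2})$. For $Y_2$ and $Y_3$ I will invoke exponential inequalities for bounded canonical (degenerate) U-statistics of order $m$, such as Arcones--Gin\'e or de la Pe\~na--Montgomery-Smith decoupling followed by Bernstein/Hanson--Wright-type tails. These give $\bbP(|\sum_{i<j}g_{2h}|\ge x)\le C\exp(-c\,x/(n\rho_n^3))$ for the moderate range. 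The general form of the order-$m$ bound is $\exp(-c(x/(n^{m/2}\|g\|_\infty))^{2/m})$. With $m=2$ and $x=c n\rho_n^3\log n$, normalized by $n^{-2}$, this yields $\rho_n^3 n^{-1}\log n$. With $m=3$ and $x=c n^{3/2}\rho_n^3\log^{3/2}n$, normalized by $n^{-3}$, it yields $\rho_n^3n^{-3/2}\log^{3/2} n$, again with probability $1-O(n^{-2})$ for a large constant.

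The main obstacle is the degenerate tail step. One must ensure that the cited inequality applies with the sup-norm bound alone, without lower-order variance terms spoiling the rate, and that the constants do not depend on $h$ or $T$. Since the kernel bound $C\rho_n^3$ from \Cref{basic moment} is uniform in $h$ and $T$, it suffices to verify that the Arcones--Gin\'e inequality in its form $\bbP(|\sum g|\ge x)\le c_1\exp\{-c_2 (x/(n^{m/2}\|g\|_\infty))^{2/m}\}$ covers the range we need. If it does not, my fallback is a moment method: bound $\bbE|\sum g_{2h}|^{p}$ by decoupling plus hypercontractivity, taking $p\asymp\log n$ and applying Markov's inequality. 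That route gives the same logarithmic powers.
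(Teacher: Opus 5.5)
Your proposal is correct and follows essentially the same route as the paper. The uniform bound $|\tilde W_{12}(t)\tilde W_{23}(t)\tilde W_{31}(t)|\le C\rho_n^3$ makes all Hoeffding projections $O(\rho_n^3)$. A Bernstein/Hoeffding bound then handles $Y_1$, an exponential inequality for bounded degenerate U-statistics handles $Y_2$ and $Y_3$ (the paper cites a multivariate Bernstein inequality here), and orthogonality of the projections gives the variances and the vanishing cross-covariances.

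The obstacle you flag is not a real one. The sup-norm tail $\exp\{-c(x/(n^{m/2}\|g\|_\infty))^{2/m}\}$ follows from the moment bound $\|\sum g\|_p\lesssim p^{m/2}n^{m/2}\|g\|_\infty$, obtained by decoupling plus Rademacher-chaos hypercontractivity. That bound holds for every $p\ge 2$, with constants that do not depend on $h$ or $T$, so taking $p\asymp\log n$ gives the stated rates with probability $1-O(n^{-2})$.
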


\paragraph{Observation-noise decomposition.}
Writing $\tilde A_{ij}(t)=\tilde W_{ij}(t)+\tilde\eta_{ij}(t)$ and collecting terms according to their degree in $\tilde\eta$, we obtain
\begin{align}
\label{Decomp_QR}
\tilde U_{nh}(t)-U_{nh}(t)
&=Q(t)+R_q(t)+R_c(t),\\
Q(t)
&={n\choose2}^{-1}\sum_{i<j}\left\{\frac{3}{n-2}\sum_{k\ne i,j}\tilde W_{ik}(t)\tilde W_{jk}(t)\right\}\tilde\eta_{ij}(t),\nonumber\\
R_q(t)
&={n\choose3}^{-1}\sum_{i<j}\sum_{k\ne i,j}\tilde W_{ij}(t)\tilde\eta_{ik}(t)\tilde\eta_{jk}(t),\nonumber\\
R_c(t)
&={n\choose3}^{-1}\sum_{i<j<k}\tilde\eta_{ij}(t)\tilde\eta_{jk}(t)\tilde\eta_{ki}(t).\nonumber
\end{align}
Here $Q(t)$ is the linear term in observation noises, $R_q(t)$ is the quadratic term, and $R_c(t)$ is the cubic term. The bounds for these terms are stated in \Cref{variance linear}, the proof of which is given in \Cref{proof variance linear}.

\begin{lemma}[Observation noise decomposition]
\label{variance linear}
Under the probability model and \Cref{assump: time obs,assump: smooth graphon,non-u,assump: sparsity}, we have
\[
Q(t)=\tilde O_{p,2}(n^{-1}(Th)^{-1/2}\rho_n^{5/2}\sqrt{\log n}),
\]
\[
R_q(t)=\tilde O_{p,2}(n^{-3/2}\rho_n^2(Th)^{-1}\sqrt{\log n}),
\]
\[
R_c(t)=\tilde O_{p,2}(n^{-3/2}\rho_n^{3/2}(Th)^{-3/2}\sqrt{\log n})+\tilde O_{p,2}(n^{-2}\rho_n^2(Th)^{-1}\log n).
\]
Moreover, the variances satisfy
\[
var(Q(t))=O(n^{-2}(Th)^{-1}\rho_n^5),\ 
var(R_q(t))=O(n^{-3}(Th)^{-2}\rho_n^4),\ 
var(R_c(t))=O(n^{-3}(Th)^{-3}\rho_n^3),
\]
and the relevant cross-covariances vanish by conditional independence and centering.
\end{lemma}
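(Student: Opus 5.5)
The plan is to condition on the latent vector $X$ and exploit that, given $X$, the smoothed noises $\tilde\eta_{ij}(t)=\sum_\ell w_\ell(t)\eta_{ij}(t_\ell)$ are independent across pairs $i<j$, mean zero, bounded by $O(1)$ and have conditional second moment $O(\rho_n(Th)^{-1})$ (\Cref{basic moment}). I treat the three terms of \eqref{Decomp_QR} separately. For each term I first bound the variance exactly through second moments. I then obtain the $\tilde O_{p,2}$ bounds from a concentration inequality applied conditionally on $X$, and finally integrate over $X$, using that $\tilde W_{ij}(t)=O(\rho_n)$ almost surely.

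\emph{Linear term $Q(t)$.} Conditionally on $X$, $Q(t)$ is a weighted sum of independent centered variables. Its coefficients are
\[
c_{ij}=\binom{n}{2}^{-1}\frac{3}{n-2}\sum_{k\ne i,j}\tilde W_{ik}\tilde W_{jk}=O(n^{-2}\rho_n^2).
\]
Since $\bbE[Q\mid X]=0$, we have $var(Q)=\bbE[var(Q\mid X)]$. Moreover $var(Q\mid X)=\sum_{i<j}c_{ij}^2\,\bbE[\tilde\eta_{ij}^2\mid X]=O(n^2\cdot n^{-4}\rho_n^4\cdot\rho_n(Th)^{-1})$, which is the claimed variance. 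For the tail bound I apply Bernstein's inequality conditionally on $X$. Each summand is bounded by $|c_{ij}|\max_\ell|w_\ell|\cdot O(\rho_n(Th))$, and more simply by $O(n^{-2}\rho_n^2)$. Bernstein with deviation level $\sqrt{\log n}$ times the standard deviation gives failure probability $O(n^{-2})$, because the sub-exponential part $n^{-2}\rho_n^2\log n$ is dominated by $n^{-1}\rho_n^{5/2}(Th)^{-1/2}\sqrt{\log n}$ whenever $n\rho_n Th\gtrsim\log n$. That condition follows from \Cref{assump: sparsity}.

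\emph{Quadratic term $R_q(t)$.} This is a degree-two chaos in independent variables; terms with $i=j$ do not arise because the two noises sit on distinct edges $ik\neq jk$. For the variance, the product $\tilde\eta_{ik}\tilde\eta_{jk}\tilde\eta_{i'k'}\tilde\eta_{j'k'}$ has nonzero conditional mean only when the two edge pairs match. This leaves $O(n^3)$ matched index triples, each contributing $\rho_n^2\cdot(\rho_n(Th)^{-1})^2$, so $var(R_q)=O(n^{-6}n^3\rho_n^4(Th)^{-2})$. For the tail I will use a decoupling plus Hanson--Wright type bound for bounded independent variables (or the moment bound of order $p$ for degree-two chaos with $p\asymp\log n$). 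This yields standard deviation times $\sqrt{\log n}$ plus a smaller operator-norm correction, which is absorbed under \Cref{assump: sparsity}.

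\emph{Cubic term $R_c(t)$, the main obstacle.} The variance is direct: only identical triangles correlate, so the sum has $n^3$ terms of size $(\rho_n(Th)^{-1})^3$ each, and after the prefactor $n^{-6}$ this gives $O(n^{-3}(Th)^{-3}\rho_n^3)$. The difficulty is the high-probability bound, because the variables are sparse and a crude chaos bound loses powers of $\rho_n$. I will use a martingale (Freedman) argument. Order the pairs and write $R_c=\sum_{i<j}\tilde\eta_{ij}D_{ij}$, where $D_{ij}=\sum_{k}\tilde\eta_{jk}\tilde\eta_{ki}$ runs over earlier pairs. Freedman's inequality requires the predictable variation $\sum \bbE[\tilde\eta_{ij}^2]D_{ij}^2$, which I control by showing $\sum D_{ij}^2=\tilde O_{p,2}(n^3\rho_n^2(Th)^{-2}+\text{lower order})$ via its mean plus a concentration step. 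The leading bound comes from this predictable variation. The second, $n^{-2}\rho_n^2(Th)^{-1}\log n$, summand arises from the maximal-increment term $\max|D_{ij}|\log n$, where $D_{ij}$ itself is bounded by Bernstein at level $\tilde O_{p,2}(\sqrt{n\rho_n}\ldots)$ combined with counts of codegree. This accounts for the two-part bound.

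Finally, the cross-covariances vanish because $Q,R_q,R_c$ are conditionally mean-zero chaoses of different degrees in independent centered variables: any product of terms of different degrees contains some $\tilde\eta_{ij}$ with multiplicity one and so has zero conditional expectation.
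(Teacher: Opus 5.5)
Your variance computations and your degree-and-centering argument for the vanishing cross-covariances are correct and agree with the paper. The gap is in all three high-probability bounds, and it has one cause: you use the smoothed noises $\tilde\eta_{ij}(t)$ as the basic variables in your concentration inequalities. Kernel averaging reduces their second moment to $O(\rho_n(Th)^{-1})$, but it does not reduce their range. They are only $O(1)$ almost surely, and their sub-Gaussian norm is typically of order $\{Th\log(1/\rho_n)\}^{-1/2}$, not $\{\rho_n/(Th)\}^{1/2}$. So the range (or $\psi_2$) terms in Bernstein, Hanson--Wright and Freedman lose factors of $(Th)^{-1}$ or $\rho_n$. Assumptions~\ref{assump: time obs} and~\ref{assump: sparsity} put no upper bound on $Th$ in terms of $n$, so these terms are not absorbed. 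A concrete admissible example is $\rho_n=n^{-3/5}$, $T=n^4$, $h=n^{-2}$.

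\emph{Linear term $Q$.} Your range term $n^{-2}\rho_n^2\log n$ is dominated by $n^{-1}\rho_n^{5/2}(Th)^{-1/2}\sqrt{\log n}$ if and only if $Th\log n\lesssim n^2\rho_n$. Your condition $n\rho_nTh\gtrsim\log n$ points the wrong way. In the example the two quantities are $n^{-3.2}\log n$ and $n^{-3.5}\sqrt{\log n}$. The w.h.p. bound $|\tilde\eta_{ij}|=O(\rho_n)$ your first expression seems to use would need $\rho_nTh\gtrsim\log n$, which is not assumed.

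\emph{Quadratic term $R_q$.} Hanson--Wright with the above $\psi_2$ norm gives a leading term of order $n^{-3/2}\rho_n(Th)^{-1}\sqrt{\log n}/\log(1/\rho_n)$. This is off from the target by the diverging factor $\{\rho_n\log(1/\rho_n)\}^{-1}$. A generic chaos moment bound stated in terms of the range of $\tilde\eta$ also carries a term of order $n^{-3}\rho_n\log^2 n$, which is too large in the example.

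\emph{Cubic term $R_c$.} Freedman's inequality needs an almost-sure increment bound $b\gtrsim\binom n3^{-1}\max_{i<j}|D_{ij}|$, because $|\tilde\eta_{ij}|$ can be of order one. For late pairs, $D_{ij}$ has conditional standard deviation about $\sqrt n\,\rho_n(Th)^{-1}$, so $b\log n\gtrsim n^{-5/2}\rho_n(Th)^{-1}\log n$. In the example this is $n^{-5.1}\log n$. That exceeds both claimed terms, $n^{-5.4}\sqrt{\log n}$ and $n^{-5.2}\log n$. So the second summand of the $R_c$ bound cannot come out of your maximal-increment term.

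The missing step is to expand $\tilde\eta_{ij}(t)=\sum_\ell w_\ell(t)\eta_{ij}(t_\ell)$ and work with the time-level variables $\eta_{ij}(t_\ell)$. Given $X$, these are independent across $(i,j,\ell)$, enter with weights $O((Th)^{-1})$, and satisfy $\bbE|\eta_{ij}(t_\ell)|=O(\rho_n)$. Bernstein for $Q$ at this level has range $O(n^{-2}\rho_n^2(Th)^{-1})$. Its extra term $n^{-2}\rho_n^2(Th)^{-1}\log n$ is then negligible, since Assumption~\ref{assump: sparsity} gives $n^2\rho_nTh\gtrsim n^{4/3+\epsilon}$. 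For $R_q$ and $R_c$ you need a concentration inequality for multilinear polynomials that depends on first absolute moments rather than ranges. The paper uses the Schudy--Sviridenko inequality for this. There $\mu_{1c}=O(n^{-2}\rho_n^2(Th)^{-1})$ produces exactly the second summand $n^{-2}\rho_n^2(Th)^{-1}\log n$ of the $R_c$ bound. The $\mu_{2c}$ and $\mu_{3c}$ contributions, $n^{-3}\rho_n(Th)^{-2}\log^2 n$ and $n^{-3}(Th)^{-3}\log^3 n$, are absorbed using Assumption~\ref{assump: sparsity}; this is where its particular form comes from. A Freedman argument run at the time level, with increments of size $O((Th)^{-1}\binom n3^{-1}|D_{ij}|)$, could also work. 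It would still require controlling $\max_{i<j}|D_{ij}|$ with a sparse-chaos inequality of the same kind.
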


\paragraph{Variance decomposition of $\tilde U_{nh}(t)-\mu(t)$.}
Combining \eqref{eq:hoeffding-decomp-revised} and \eqref{Decomp_QR}, we obtain
\begin{equation}
\label{eq:master-decomp-revised}
\tilde U_{nh}(t)-\mu(t)
=Q(t)+R_q(t)+R_c(t)+Y_1(t)+Y_2(t)+Y_3(t)+\{\mu_h(t)-\mu(t)\}.
\end{equation}
The last term is deterministic bias and therefore does not contribute to the variance. We establish that the leading source of variance comes from $Y_1(t)$ in \Cref{variance prop}.
\begin{proposition}
\label{variance prop}
Suppose \Cref{assump: time obs,assump: smooth graphon,non-u,assump: sparsity} hold.  Then
\[
    var\{\tilde U_{nh}(t)-\mu(t)\}
    =
    var(Y_1(t))
    +
    var\{\tilde U_{nh}(t)-U_{nh}(t)\}
    +
    O(\rho_n^6 n^{-2}),
\]
with
\[
    var(Y_1(t))=\frac{9}{n}\xi_{1h}^2(t)
    \asymp \rho_n^6 n^{-1},
\]
and
\[
\begin{aligned}
    var(\tilde U_{nh}(t)-U_{nh}(t))
    &=
    O\left(
        n^{-2}(Th)^{-1}\rho_n^5
        +n^{-3}(Th)^{-2}\rho_n^4
        +n^{-3}(Th)^{-3}\rho_n^3
    \right).
\end{aligned}
\]
Consequently,
\[
    var(\tilde U_{nh}(t)-\mu(t))
    =
    \frac{9}{n}\xi_{1h}^2(t)
    \{1+o(1)\}.
\]
\end{proposition}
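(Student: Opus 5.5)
Using \eqref{eq:master-decomp-revised}, I would write $\tilde U_{nh}(t)-\mu(t)=\{U_{nh}(t)-\mu_h(t)\}+\{\tilde U_{nh}(t)-U_{nh}(t)\}+\{\mu_h(t)-\mu(t)\}$ and handle the three pieces in turn.

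\textbf{Step 1 (orthogonality).} The last piece is deterministic, so it does not affect the variance. For the cross term between the latent part and the noise part, note that $\bbE[\tilde U_{nh}(t)-U_{nh}(t)\mid X]=0$ by construction, since every monomial in $Q,R_q,R_c$ contains at least one conditionally centered $\tilde\eta$. Because $U_{nh}(t)-\mu_h(t)$ is $X$-measurable, the tower property gives $\mathrm{cov}(U_{nh}-\mu_h,\tilde U_{nh}-U_{nh})=\bbE[(U_{nh}-\mu_h)\,\bbE(\tilde U_{nh}-U_{nh}\mid X)]=0$. Hence
\[
var\{\tilde U_{nh}(t)-\mu(t)\}=var(U_{nh}(t))+var\{\tilde U_{nh}(t)-U_{nh}(t)\}.
\]

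\textbf{Step 2 (latent part).} Apply the Hoeffding decomposition \eqref{eq:hoeffding-decomp-revised}. Orthogonality of the projections gives $var(U_{nh})=var(Y_1)+var(Y_2)+var(Y_3)$. Lemma~\ref{variance Hoeffding} bounds the last two by $O(\rho_n^6n^{-2})$ and $O(\rho_n^6n^{-3})$, which produces the stated $O(\rho_n^6n^{-2})$ remainder. It also gives $var(Y_1)=9\xi_{1h}^2/n$.

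To get $\xi_{1h}^2\asymp\rho_n^6$, combine the upper bound $|\tilde W_{ij}|=O(\rho_n)$ from Lemma~\ref{basic moment} with a lower bound obtained by comparison to Assumption~\ref{non-u}. Write $\tilde W_{ij}(t)=\sum_\ell w_\ell(t)\rho_nF(1-2G)(X_i,X_j,t_\ell)$. By the Hölder smoothness in Assumption~\ref{assump: smooth graphon} together with the kernel moment/discretization bounds (the $\kappa_j$ lemma), $\sup_{x,y}|\tilde W_{ij}(t)-W_{ij}(t)|=O(\rho_n(h^\nu+(Th)^{-1}))=o(\rho_n)$. Therefore $g_{1h}-g_1=o(\rho_n^3)$ uniformly, and $\xi_{1h}\ge\xi_1-o(\rho_n^3)\gtrsim\rho_n^3$.

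\textbf{Step 3 (noise part).} By \eqref{Decomp_QR}, $var(\tilde U_{nh}-U_{nh})=var(Q)+var(R_q)+var(R_c)+2\sum\mathrm{cov}$. The cross-covariances between terms of different degree in $\tilde\eta$ vanish conditionally on $X$. A product of a degree-$a$ and a degree-$b$ monomial with $a\ne b$ has nonzero conditional mean only if every edge noise appears at least twice. Linear $\times$ quadratic would need $\tilde\eta_{ij}\tilde\eta_{ik}\tilde\eta_{jk}$-type matchings on three distinct edges, which fails. Linear $\times$ cubic (a triangle) also fails. Quadratic $\times$ cubic has two-edge paths against triangles, and these cannot pair all edges, though third-moment terms such as $\bbE\tilde\eta^3\neq0$ need care. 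I would check these parity arguments directly; Lemma~\ref{variance linear} already asserts they vanish. The variance bounds in Lemma~\ref{variance linear} then give the displayed $O(\cdot)$.

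\textbf{Step 4 (negligibility).} Divide each noise bound by $\rho_n^6n^{-1}$:
\[
\frac{n^{-2}(Th)^{-1}\rho_n^5}{\rho_n^6n^{-1}}=(nTh\rho_n)^{-1},\qquad \frac{n^{-3}(Th)^{-2}\rho_n^4}{\rho_n^6n^{-1}}=(n Th\rho_n)^{-2},\qquad \frac{n^{-3}(Th)^{-3}\rho_n^3}{\rho_n^6n^{-1}}=n^{-2}(Th\rho_n)^{-3}.
\]
Assumption~\ref{assump: sparsity} gives $Th\rho_n\gtrsim n^{-2/3+\epsilon}$, so $n^{-2}(Th\rho_n)^{-3}\lesssim n^{-3\epsilon}\to0$. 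It also gives $nTh\rho_n\gtrsim n^{1/3+\epsilon}\to\infty$. The $O(\rho_n^6n^{-2})$ term is relatively $O(n^{-1})$. Hence $var=\frac9n\xi_{1h}^2\{1+o(1)\}$.

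The main obstacle is the lower bound $\xi_{1h}\gtrsim\rho_n^3$ in Step 2. It requires uniform (in $x,y$) control of the smoothing bias and the Riemann-sum discretization error in $\tilde W$. I would attempt this via a Taylor expansion to order $\nu$ with the Hölder remainder, plus bounded variation of $u^jK(u)$ to control the sum-to-integral error by $O((Th)^{-1})$.
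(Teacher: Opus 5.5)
Your proposal is correct and follows the paper's proof essentially step for step. The steps are: tower-property orthogonality between the $X$-measurable part and the conditionally centered noise part, Hoeffding orthogonality with Lemma~\ref{variance Hoeffding}, transfer of non-degeneracy from $g_1$ to $g_{1h}$ via the uniform smoothing-bias bound (the paper's Lemma~\ref{xi}), the noise variances from Lemma~\ref{variance linear}, and the same three ratios checked against Assumption~\ref{assump: sparsity}. Two small remarks. Your lower bound $\xi_{1h}\ge \xi_1-\|g_{1h}-g_1\|_\infty$, from the triangle inequality for standard deviations, is actually the strictly valid way to write that step. Your concern about third moments in the cross-covariances is unnecessary: the edges within each monomial of $Q$, $R_q$, $R_c$ are distinct, so no edge appears more than twice in a product of two monomials, and $\mathbb{E}[\tilde\eta^3]$ never enters.
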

The above proposition shows that the leading source of variation comes from $Y_1(t)$, which motivates the variance estimator~$\hat{S}_{nh}^2$ in \eqref{denominator}. 

Before presenting the proof of \Cref{variance prop}, we first establish the non-degeneracy of $g_{1h}$.
Assumption~\ref{non-u} states that $\xi_1^2(t):=var\{g_1(X_1,t)\}\ge C\rho_n^6$. Under the smoothness and observation conditions in Assumptions~\ref{assump: time obs} and~\ref{assump: smooth graphon}, the smoothed projection $g_{1h}$ inherits the non-degeneracy of $g_1$ up to a small approximation error, as shown in \Cref{xi}. The proof of \Cref{xi} is given in \Cref{Proof of xi}.

\begin{lemma}[Nondegeneracy from $g_1$ to $g_{1h}$]
\label{xi}
Under the probability model and \Cref{assump: time obs,assump: smooth graphon,non-u},
we have
\[
    \xi_{1h}^2(t):=var\{g_{1h}(X_1,t)\}\gtrsim \rho_n^6,
\]
provided that
\(
    h^\nu+(Th)^{-1}=o(1).
\)
\end{lemma}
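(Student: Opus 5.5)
The plan is a perturbation argument. We show that the smoothed kernel $\tilde W_{12}\tilde W_{23}\tilde W_{31}$ is uniformly close to the unsmoothed kernel $W_{12}(t)W_{23}(t)W_{31}(t)$, up to an error of relative order $h^\nu+(Th)^{-1}$. Then $g_{1h}$ is close to $g_1$ in $L^2$, and the lower bound on $\xi_1^2$ in \Cref{non-u} transfers to $\xi_{1h}^2$.

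\textbf{Step 1: uniform closeness of the edge means.} Recall that $W_{ij}(s)=\rho_n H(X_i,X_j,s)$ with $H:=F(1-2G)$. Since $F$ and $G$ are bounded H\"older functions, $H$ is H\"older of order $\beta=\nu'+\alpha$ in $s$, with constants uniform in $(x,y)$. We write
\[
\tilde W_{ij}(t)-W_{ij}(t)=\rho_n\sum_{\ell}w_\ell(t)\{H(X_i,X_j,t_\ell)-H(X_i,X_j,t)\}.
\]
We compare the Riemann sum $\frac{1}{\sum_m K_h(t-t_m)}\sum_\ell K_h(t-t_\ell)\{\cdot\}$ with the integral $\int K_h(t-s)\{H(\cdot,s)-H(\cdot,t)\}\,ds$.

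\begin{itemize}
\item \emph{Integral (bias) part.} Taylor-expand $H$ to order $\nu'$ around $t$. By (K1), moments $1,\dots,\nu-1$ vanish. The tail condition (K3), together with the interior condition $\min\{\mcalL-t,t\}\ge\delta$, controls the truncation to $\mcalT$. Together these give a bias of $O(h^\nu)$.
\item \emph{Discretization part.} Use the bounded variation of $u^jK(u)$ from (K4) and the mesh bound $\Delta_T\le C/T$. This gives $O((Th)^{-1})$ for both the numerator and the normalizer; the normalizer is $\asymp T$ by the facts recorded before \Cref{basic moment}.
\end{itemize}
All bounds are uniform in $(x,y)$. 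Hence
\[
\sup_{x,y}\bigl|\tilde W(x,y,t)-W(x,y,t)\bigr|\le C\rho_n\varepsilon_{T,h},\qquad \varepsilon_{T,h}:=h^\nu+(Th)^{-1}.
\]

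\textbf{Step 2: transfer to the kernel.} Telescope the triple product:
\[
\tilde W_{12}\tilde W_{23}\tilde W_{31}-W_{12}W_{23}W_{31}
\]
is a sum of three terms. Each term contains one difference factor and two factors bounded by $O(\rho_n)$ (by \Cref{basic moment}). The difference is therefore bounded almost surely by $C\rho_n^3\varepsilon_{T,h}$. Taking conditional expectations given $X_1=x$ gives $|g_{1h}(x,t)-g_1(x,t)|\le 2C\rho_n^3\varepsilon_{T,h}$ uniformly in $x$; the factor $2$ accounts for the difference $\mu_h-\mu$, which obeys the same bound.

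\textbf{Step 3: conclusion.} Minkowski's inequality for standard deviations gives
\[
\xi_{1h}\ge \xi_1-\|g_{1h}-g_1\|_{L^2}\ge \sqrt{C_2}\,\rho_n^3-2C\rho_n^3\varepsilon_{T,h}.
\]
Since $\varepsilon_{T,h}=o(1)$, this is $\ge \tfrac12\sqrt{C_2}\,\rho_n^3$ for large $n$, and squaring yields $\xi_{1h}^2\gtrsim\rho_n^6$.

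\textbf{Main obstacle.} The main obstacle is Step 1, specifically the discretization error. The Riemann-sum approximation must be shown to hold uniformly in the latent arguments. The bounded-variation argument (Koksma-type inequality) must be applied to $s\mapsto K_h(t-s)H(x,y,s)$, whose total variation is $O(h^{-1})$ uniformly in $(x,y)$ because $H$ is uniformly Lipschitz in $s$. Multiplying by the mesh $C/T$ and dividing by the normalizer yields the $(Th)^{-1}$ rate. Everything after Step 1 is routine.
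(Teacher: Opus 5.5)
Your proposal is correct and follows essentially the same route as the paper: a uniform bound $\sup_{i,j}|\tilde W_{ij}(t)-W_{ij}(t)|=O\{\rho_n(h^\nu+(Th)^{-1})\}$ from the kernel-grid approximation, the telescoping product bound giving $\|g_{1h}-g_1\|_\infty=O\{\rho_n^3(h^\nu+(Th)^{-1})\}$, and then transferring the lower bound on $\xi_1^2$. Your final step via Minkowski's inequality for standard deviations is the right way to make that transfer. The paper instead writes $\xi_{1h}^2\ge\xi_1^2-\operatorname{var}(g_{1h}-g_1)$, which does not hold in general because it drops the cross term $2\operatorname{cov}\{g_1,g_{1h}-g_1\}$; that term is of first order in $h^\nu+(Th)^{-1}$. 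The conclusion $\xi_{1h}^2\gtrsim\rho_n^6$ is unaffected.
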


\begin{proof}[\textbf{Proof of \Cref{variance prop}}]
Decompose
\[
    \tilde U_{nh}(t)-\mu(t)
    =
    \{\tilde U_{nh}(t)-U_{nh}(t)\}
    +
    \{U_{nh}(t)-\mu_h(t)\}
    +
    \{\mu_h(t)-\mu(t)\}.
\]
The last term is deterministic and therefore does not contribute to the
variance. Also,
\[
    \bbE (\tilde U_{nh}(t)-U_{nh}(t)\mid X_1,\ldots,X_n)=0,
\]
whereas \(U_{nh}(t)-\mu_h(t)\) is measurable with respect to
\((X_1,\ldots,X_n)\). Hence
\[
    cov (\tilde U_{nh}(t)-U_{nh}(t),\,U_{nh}(t)-\mu_h(t))=0,
\]
and therefore
\[
    var(\tilde U_{nh}(t)-\mu(t))
    =
    var(\tilde U_{nh}(t)-U_{nh}(t))
    +
    var(U_{nh}(t)-\mu_h(t)).
\]
By the orthogonality of Hoeffding projections in \eqref{eq:hoeffding-decomp-revised},
\[
    var(U_{nh}(t)-\mu_h(t))
    =
    var(Y_1(t))+var(Y_2(t))+var(Y_3(t)).
\]
By \Cref{variance Hoeffding},
it follows that
\[
    var (U_{nh}(t)-\mu_h(t))
    =
    var(Y_1)+O(\rho_n^6 n^{-2}).
\]
Since $Y_1(t)=3n^{-1}\sum_i g_{1h}(X_i,t)$ is a sum of iid mean-zero terms,
\[
var(Y_1(t))=\frac{9}{n}var(g_{1h}(X_1,t))=\frac{9\xi_{1h}^2(t)}{n}.
\]
Lemma~\ref{xi} gives the lower bound $\xi_{1h}^2(t)\gtrsim\rho_n^6$, so $var(Y_1(t))\gtrsim \rho_n^6n^{-1}$.

Moreover, by \Cref{variance linear}, it follows that
\[
\begin{aligned}
    var(\tilde U_{nh}(t)-U_{nh}(t))
    &=
    O\left(
        n^{-2}(Th)^{-1}\rho_n^5
        +n^{-3}(Th)^{-2}\rho_n^4
        +n^{-3}(Th)^{-3}\rho_n^3
    \right).
\end{aligned}
\]
Finally, it is sufficient to compare the leading variance term $var(Y_1(t))\asymp \rho_n^6 n^{-1}$ with all remaining terms. The ratios of the three observation-noise variances to $\rho_n^6n^{-1}$ are respectively
\[
O((nTh\rho_n)^{-1}),\qquad
O(n^{-2}(Th)^{-2}\rho_n^{-2}),\qquad
O(n^{-2}(Th)^{-3}\rho_n^{-3}).
\]
Under Assumption~\ref{assump: sparsity}, each ratio is $o(1)$, and therefore
\(
    var(\tilde U_{nh}(t)-\mu(t))
    =
    \frac{9}{n}\xi_{1h}^2(t)
    \{1+o(1)\}.
\)
\end{proof}

\subsection{Proof of \Cref{variance Hoeffding}}
\label{proof variance Hoeffding}
\begin{proof}
By \Cref{basic moment}, we have
\(
    |\tilde W_{ij}(t)\tilde W_{jk}(t)\tilde W_{ki}(t)|\le C\rho_n^3.
\)
As a result, the Hoeffding projection kernels $g_{1h},g_{2h}$ and $g_{3h}$, which are finite linear combinations of conditional expectations of $\tilde W_{ij}(t)\tilde W_{jk}(t)\tilde W_{ki}(t)$ and its mean, are also uniformly bounded by $C\rho_n^3$. 

For the first-order projection
\(
    Y_1(t)=\frac{3}{n}\sum_{i=1}^n g_{1h}(X_i,t),
\)
which is an average of i.i.d.\ centered bounded random variables, Bernstein's inequality gives
\[
    Y_1(t)=\tilde O_{p,2}\bigl(\rho_n^3n^{-1/2}\log^{1/2}n\bigr).
\]
For the second- and third-order projections, by applying the multivariate version of Bernstein's inequality (Theorem 1 in \cite{10.1214/EJP.v12-430}), we have
\[
    Y_2(t)=\tilde O_{p,2}\bigl(\rho_n^3n^{-1}\log n\bigr),
    \qquad
    Y_3(t)=\tilde O_{p,2}\bigl(\rho_n^3n^{-3/2}\log^{3/2}n\bigr).
\]
We next compute the variances. Since $Y_1(t)$ is a sum of iid centered terms,
\[
    var(Y_1(t))
    =\frac{9}{n}var(g_{1h}(X_1,t))
    =\frac{9}{n}\xi_{1h}^2(t).
\]
The bound $\|g_{1h}\|_\infty\le C\rho_n^3$ implies
$\xi_{1h}^2(t)=O(\rho_n^6)$, and hence
\(
    var(Y_1(t))=O(\rho_n^6n^{-1}).
\)
Together with the nondegeneracy property in \Cref{xi}, this yields
$var(Y_1(t))\asymp \rho_n^6n^{-1}$.

For the second projection, by its definition, cross-covariances vanish
unless the two index pairs coincide. Consequently,
\[
\begin{aligned}
    var(Y_2(t))
    &=\left\{\frac{6}{n(n-1)}\right\}^2
      \sum_{i<j} \bbE(g_{2h}^2(X_i,X_j,t))  \\
    &=O(\rho_n^6n^{-2}).
\end{aligned}
\]
The same argument for the third-order projection gives
$var(Y_3(t))=O(\rho_n^6n^{-3})$. 
Finally, the Hoeffding projections are mutually
orthogonal, so the covariance between any of $Y_1(t)$, $Y_2(t)$, and $Y_3(t)$ is zero. 
\end{proof}

\subsection{Proof of \Cref{variance linear}}
\label{proof variance linear}

We first state the concentration bounds for the quadratic and cubic observation-noise terms in the following lemma, the proof of which is given in \Cref{quadratic R and cubic R}.

\begin{lemma}[Tail bounds for $R_q(t)$ and $R_c(t)$]
\label{lemma: quadratic R and cubic R}
Assume the probability model and \Cref{assump: time obs,assump: smooth graphon,non-u,assump: sparsity}, and the
variance bounds
\[
    var(R_q(t))=O\{n^{-3}(Th)^{-2}\rho_n^4\},
    \qquad
    var(R_c(t))=O\{n^{-3}(Th)^{-3}\rho_n^3\}.
\]
Then
\[
    R_q(t)
    =\tilde O_{p,2}\bigl(n^{-3/2}\rho_n^2(Th)^{-1}\sqrt{\log n}\bigr),
\]
and
\[
    R_c(t)
    =\tilde O_{p,2}\bigl(n^{-3/2}\rho_n^{3/2}(Th)^{-3/2}\sqrt{\log n}\bigr)
    +\tilde O_{p,2}\bigl(n^{-2}\rho_n^2(Th)^{-1}\log n\bigr).
\]
\end{lemma}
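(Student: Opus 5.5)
The plan is to work conditionally on $X$. Given $X$, both $R_q(t)$ and $R_c(t)$ are centered multilinear forms of degree two and three in the conditionally independent variables $\tilde\eta_{ij}(t)$. I would apply a Bernstein/Hanson--Wright-type tail inequality for decoupled Rademacher or bounded chaos, in the form of a moment bound for homogeneous polynomials of independent centered bounded variables (e.g.\ Lata{\l}a-type or Gin\'e--Lata{\l}a--Zinn moment inequalities for U-statistics). Choosing the moment order $p\asymp\log n$ converts these bounds into $\tilde O_{p,2}$ statements. The deterministic coefficients ($\tilde W_{ij}(t)=O(\rho_n)$ and the normalizations $\binom n3^{-1}$) are bounded uniformly by \Cref{basic moment}, so every conditional bound holds almost surely in $X$ and integrates to an unconditional $O(n^{-2})$ tail.

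For $R_q(t)$, I would write it as $\sum_{(i,k)\neq(j,k)} a_{ik,jk}\tilde\eta_{ik}\tilde\eta_{jk}$ with $|a|\le Cn^{-3}\rho_n$. This is an off-diagonal quadratic form in independent centered variables, each with $|\tilde\eta|\le C$ and $\bbE\tilde\eta^2\le C\rho_n(Th)^{-1}$. The Hanson--Wright/Bernstein bound has two contributions.
\begin{itemize}
\item[(i)] A Gaussian part $\sqrt{\log n}$ times the standard deviation, which is $\sqrt{var(R_q)}=O(n^{-3/2}\rho_n^2(Th)^{-1})$ by the assumed variance bound.
\item[(ii)] Higher-order parts driven by $\max_\ell|w_\ell|=O((Th)^{-1})$ and by operator-norm-type quantities of the coefficient array. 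Since each $\tilde\eta_{ij}$ is itself a weighted sum $\sum_\ell w_\ell\eta_{ij}(t_\ell)$, I would instead expand into the independent variables $\eta_{ij}(t_\ell)$, whose coefficients are $w_{\ell_1}w_{\ell_2}a$. The operator norm is then smaller than the Frobenius norm by a factor of order $n^{-1/2}$, and the sup-norm terms pick up factors $(Th)^{-1}$.
\end{itemize}
Under \Cref{assump: sparsity}, i.e.\ $nTh\rho_n\gtrsim n^{1/3+\epsilon}$, the terms in (ii) are dominated by the Gaussian term, which yields the stated rate for $R_q$.

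For $R_c(t)$, I would use the same expansion into $\eta_{ij}(t_\ell)$, which gives a cubic form with coefficients $\binom n3^{-1}w_{\ell_1}w_{\ell_2}w_{\ell_3}$ supported on triangles. The moment inequality for degree-3 chaos gives a leading term $\sqrt{\log n}\cdot\sqrt{var(R_c)}=n^{-3/2}\rho_n^{3/2}(Th)^{-3/2}\sqrt{\log n}$. The next term comes from the partition in which two indices are merged (one edge and its two neighbouring edges through a common vertex). This produces a contribution of the form $\log n$ times the size of $n^{-3}\sum_{k}\tilde\eta_{ik}\tilde\eta_{jk}$-type conditional variances. Bounding it via $\bbE\tilde\eta^2=O(\rho_n(Th)^{-1})$ and $\sup|\tilde\eta|=O(1)$, summed over $n$ choices, gives the second term $n^{-2}\rho_n^2(Th)^{-1}\log n$. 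The remaining, fully sup-norm terms carry extra powers of $\log n/n$ and are absorbed using \Cref{assump: sparsity}.

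The main obstacle will be the cubic term, specifically bounding its intermediate partitions sharply. A naive bound that uses only $|\tilde\eta|\le C$ loses the $\rho_n$ and $(Th)^{-1}$ factors. My first attempt would be an iterated-conditioning argument. Conditional on edges not incident to vertex $k$, the cubic form is linear in the remaining variables, so I would apply scalar Bernstein and then control the random conditional variance, itself a quadratic form like $R_q$, by the $R_q$ bound just proved. This reproduces exactly the two-term structure in the statement.
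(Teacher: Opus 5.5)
Your skeleton matches the paper's: condition on $X$, expand $\tilde\eta_{ij}(t)=\sum_\ell w_\ell(t)\eta_{ij}(t_\ell)$ into the conditionally independent time-level noises, and apply a concentration inequality for multilinear polynomials whose Gaussian part is $\sqrt{\mathrm{var}\cdot\log n}$. Everything else is then absorbed with \Cref{assump: sparsity}. Your $R_q$ bookkeeping is also consistent with the paper: the operator-norm term sits a factor $n^{-1/2}$ below the Frobenius term, giving $n^{-2}\rho_n^2(Th)^{-1}\log n$.

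The gap is in $R_c$. You never actually bound the non-Gaussian terms; you leave them as the ``main obstacle'', and neither of the two arguments you offer works as written.
\begin{itemize}
\item[(i)] Your heuristic for the second term uses $\bbE\tilde\eta^2=O(\rho_n(Th)^{-1})$ and $\sup|\tilde\eta|=O(1)$, and these cannot produce $\rho_n^2(Th)^{-1}$. That factor comes from the first absolute moments of the two free variables, $\bbE|\eta_{jk}(t_{\ell_2})|\,\bbE|\eta_{ki}(t_{\ell_3})|=O(\rho_n^2)$, times the single weight $w_{\ell_1}=O((Th)^{-1})$ of the fixed variable.
\item[(ii)] The iterated-conditioning fallback is misdescribed. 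Conditional on the edges not incident to $k$, every triangle $\{i,j,k\}$ still contains two edges incident to $k$, so the form is quadratic, not linear, in the remaining variables. To get a linear form you would have to decouple and condition on two of three copies. The conditional variance is then, up to normalization, $\sum_{i<j}\mathrm{var}(\tilde\eta_{ij}\mid X)\bigl(\sum_k\tilde\eta^{(2)}_{jk}\tilde\eta^{(3)}_{ki}\bigr)^2$, a degree-four sum of squares. Bernstein's sup-term would also need $\max_{i,j}|\sum_k\tilde\eta^{(2)}_{jk}\tilde\eta^{(3)}_{ki}|$. Neither quantity is controlled by the $R_q$ bound.
\end{itemize}

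The missing ingredient is an inequality whose lower-order terms are expressed through first absolute moments. This is what the paper uses: Theorem 1.3 of Schudy--Sviridenko (\Cref{hypergraph lemma}), which applies because bounded variables are central moment bounded. Its parameters $\mu_r$ are maximal sums of $|w_h|\prod_v\bbE|Y_v|$ over monomials containing a fixed $r$-set, and here they are one-line counts. For $R_c$, fixing $\eta_{ij}(t_{\ell_1})$ leaves $n$ choices of $k$ and two free time sums with $\sum_\ell w_\ell=1$. This gives $\mu_{1c}=O(n^{-2}\rho_n^2(Th)^{-1})$, $\mu_{2c}=O(n^{-3}\rho_n(Th)^{-2})$ and $\mu_{3c}=O(n^{-3}(Th)^{-3})$. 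Similarly $\mu_{1q}=O(n^{-2}\rho_n^2(Th)^{-1})$ and $\mu_{2q}=O(n^{-3}\rho_n(Th)^{-2})$. Choosing $\lambda$ so that the tail is $O(n^{-2})$ gives, up to constants, $\sqrt{\mathrm{var}\,\log n}+\sum_r\mu_r\log^r n$. For $R_c$ the first two pieces are exactly the stated terms, and the $\mu_{2c},\mu_{3c}$ pieces are absorbed by \Cref{assump: sparsity}. For $R_q$ everything beyond the Gaussian term is absorbed.

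A variance-based Lata{\l}a/Adamczak-type chaos inequality could probably also be made to work, but then you must compute every partition term, which the proposal does not do. Note that a classical sub-Gaussian Hanson--Wright bound would not suffice, since it does not see the $O(\rho_n)$ moments.
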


\bigskip
\noindent
We now prove \Cref{variance linear} by applying the Bernstein inequality to $Q(t)$ and \Cref{lemma: quadratic R and cubic R} to $R_q(t)$ and $R_c(t)$.
\begin{proof}[\textbf{Proof of \Cref{variance linear}}]

Write
$\tilde A_{ij}(t)=\tilde W_{ij}(t)+\tilde\eta_{ij}(t)$. 
Collecting terms according to their degree in $\tilde\eta$, we obtain
the decomposition in \eqref{Decomp_QR}.
\paragraph{Linear term.} Define
\[
    \theta_{ij}(t)=\frac{3}{n-2}\sum_{k\ne i,j}\tilde W_{ik}(t)\tilde W_{jk}(t),
\]
then 
\(
Q(t)={n\choose2}^{-1}\sum_{i<j}\theta_{ij}(t)\tilde\eta_{ij}(t).
\)
Conditional on $X$, the summands are independent and centered.
Since
$\bbE\{Q(t)\mid X\}=0$,
\[
    var\{Q(t)\}=\bbE\bigl[var\{Q(t)\mid X\}\bigr].
\]
Using the facts $|\theta_{ij}(t)|\le C\rho_n^2$ and
$var\{\tilde\eta_{ij}(t)\mid X\}=O\{\rho_n(Th)^{-1}\}$ from \Cref{basic moment},
\[
\begin{aligned}
    var\{Q(t)\mid X\}
    &\le {n\choose2}^{-2}\sum_{i<j}
        \theta_{ij}^2(t)var\{\tilde\eta_{ij}(t)\mid X\}   =O\{n^{-2}\rho_n^5(Th)^{-1}\}.
\end{aligned}
\]
Thus
\(
    var\{Q(t)\}=O\{n^{-2}\rho_n^5(Th)^{-1}\}.
\)

Moreover, by plugging in the definition of $\tilde\eta_{ij}(t)$, we obtain
$$Q(t)=\bigg(\binom{n}{2}\sum_{\ell=1}^TK_h(t-t_\ell)\bigg)^{-1}\sum_{i<j}\sum_{\ell=1}^TK_h(t-t_\ell)\theta_{ij}(t)\eta_{ij}(t_\ell).$$
Since, conditional on $X$, the variables
$\{\eta_{ij}(t_\ell):1\le i<j\le n,1\le \ell\le T\}$ are independent
and centered, the conditional variance is bounded by
\[
\begin{aligned}
    \sum_{i<j}\sum_{\ell=1}^T
    K_h^2(t-t_\ell)\theta_{ij}^2(t)
    \bbE\{\eta_{ij}^2(t_\ell)\mid X\}
    &\le C n^2\rho_n^5\sum_{\ell=1}^T K_h^2(t-t_\ell)  =O(n^2\rho_n^5T h^{-1}).
\end{aligned}
\]
Bernstein's inequality therefore gives
\[
    Q(t)
    =\tilde O_{p,2}\bigl(n^{-1}(Th)^{-1/2}\rho_n^{5/2}\sqrt{\log n}\bigr)
     +\tilde O_{p,2}\bigl(n^{-2}(Th)^{-1}\rho_n^2\log n\bigr).
\]
Under \Cref{assump: sparsity}, the second term is dominated by the
first. Hence
\[
    Q(t)=\tilde O_{p,2}\bigl(n^{-1}(Th)^{-1/2}\rho_n^{5/2}\sqrt{\log n}\bigr).
\]

\paragraph{Quadratic and cubic terms.} Conditional on $X$, the products appearing in $R_q(t)$ are centered. Cross
covariances vanish unless the same two noise edges appear in both products; if
two products share only one edge, the remaining centered independent factor
forces the covariance to be zero. Therefore
\[
\begin{aligned}
    var\{R_q(t)\mid X\}
    &\le {n\choose3}^{-2}
        \sum_{i<j}\sum_{k\ne i,j}
        \tilde W_{ij}^2(t)
        \bbE\{\tilde\eta_{ik}^2(t)\mid X\}
        \bbE\{\tilde\eta_{jk}^2(t)\mid X\}  \\
    &=O\{n^{-3}\rho_n^4(Th)^{-2}\},
\end{aligned}
\]
where we used the facts $\bbE[\tilde\eta_{ij}^2(t)]=O(\rho_n(Th)^{-1})$ and $\vert \tilde W_{ij}(t)\vert= O(\rho_n)$ from Lemma \ref{basic moment}.
Since $\bbE\{R_q(t)\mid X\}=0$, the law of total variance gives
\[
    var\{R_q(t)\}=O\{n^{-3}\rho_n^4(Th)^{-2}\}.
\]
Similarly,
\[
\begin{aligned}
    var\{R_c(t)\mid X\}
    &\le {n\choose3}^{-2}\sum_{i<j<k}
        \prod_{e\in\{ij,jk,ki\}}
        \bbE\{\tilde\eta_e^2(t)\mid X\} \\
    &=O\{n^{-3}\rho_n^3(Th)^{-3}\},
\end{aligned}
\]
and hence
\[
    var\{R_c(t)\}=O\{n^{-3}\rho_n^3(Th)^{-3}\}.
\]
Applying \Cref{lemma: quadratic R and cubic R} with these variance bounds yields the
stated high-probability orders for $R_q(t)$ and $R_c(t)$.

\paragraph{Covariance.}
It remains only to note the covariance structure. Conditional on $X$, every
cross product between $Q(t)$ and either $R_q(t)$ or $R_c(t)$ contains at least
one centered noise variable that appears to the first power and is independent
of the remaining factors. Hence
\[
    cov\{Q(t),R_q(t)\}=cov\{Q(t),R_c(t)\}=0.
\]
The same argument applies to the cross product between $R_q(t)$ and $R_c(t)$:
a quadratic term and a cubic term can share at most two noise edges,
leaving at least one centered independent noise factor. Therefore
\[
    cov\{R_q(t),R_c(t)\}=0.
\]
\end{proof}
   
\subsubsection{Proof of \Cref{lemma: quadratic R and cubic R}}
\label{quadratic R and cubic R}

\begin{proof}

Let $S_K(t)=\sum_{\ell=1}^T K_h(t-t_\ell)$. By \Cref{kappa j},
$S_K(t)\asymp T$ and $\sum_{\ell=1}^T K_h^2(t-t_\ell)=O(T/h)$.
Conditional on $X$, the variables
$\{\eta_{ij}(t_\ell):1\le i<j\le n,1\le \ell\le T\}$ are independent,
centered, and bounded. 
By Lemma 7.1 in \cite{schudy2011bernstein}, each $\tilde\eta_{ij}$ is central moment bounded because of its boundedness. 
Then, \Cref{hypergraph lemma} could be applied to the polynomial representations below.

\paragraph{Cubic term.}
Using $\tilde\eta_{ij}(t)=S_K(t)^{-1}\sum_\ell K_h(t-t_\ell)\eta_{ij}(t_\ell)$,
write
\[
\begin{aligned}
R_c(t)
&={n\choose3}^{-1}S_K(t)^{-3}
  \sum_{i<j<k}\sum_{\ell_1,\ell_2,\ell_3=1}^T
  K_h(t-t_{\ell_1})K_h(t-t_{\ell_2})K_h(t-t_{\ell_3})  \\
&\hspace{35mm}\times
  \eta_{ij}(t_{\ell_1})\eta_{jk}(t_{\ell_2})\eta_{ki}(t_{\ell_3}).
\end{aligned}
\]
This is a multilinear polynomial of power $3$. 
The hyperedges of the polynomial representation of $R_c(t)$ are of the form
\[
    h=\{\eta_{ij}(t_{\ell_1})\sim\eta_{jk}(t_{\ell_2})\sim\eta_{ki}(t_{\ell_3}): i<j<k, 1\le \ell \le T\}.
\] 
Each hyperedge contains three variables, and each variable is of the form $\eta_{ij}(t_\ell)$ for some $i<j$ and time index $\ell$. 
The hypergraph vertices are the variables, and the hypergraph edges are the monomials in the polynomial representation of $R_c(t)$, which are of the form $\eta_{ij}(t_{\ell_1})\eta_{jk}(t_{\ell_2})\eta_{ki}(t_{\ell_3})$.
We claim that the hypergraph parameters used in \Cref{hypergraph lemma} satisfy
\[
    \mu_{1c}=O\{n^{-2}\rho_n^2(Th)^{-1}\},\qquad
    \mu_{2c}=O\{n^{-3}\rho_n(Th)^{-2}\},\qquad
    \mu_{3c}=O\{n^{-3}(Th)^{-3}\}.
\]
Combining these quantities with
$var\{R_c(t)\}=O\{n^{-3}\rho_n^3(Th)^{-3}\}$, by \Cref{hypergraph lemma}, we
obtain, with probability at least $1-O(n^{-2})$,
\[
\begin{aligned}
|R_c(t)|
&\le C\Bigl[
    n^{-3/2}\rho_n^{3/2}(Th)^{-3/2}\sqrt{\log n}
    + n^{-2}\rho_n^2(Th)^{-1}\log n \\
&\hspace{23mm}
    + n^{-3}\rho_n(Th)^{-2}\log^2 n
    + n^{-3}(Th)^{-3}\log^3 n
\Bigr].
\end{aligned}
\]
Under \Cref{assump: sparsity}, the last two terms are
absorbed by the first two terms. Hence
\[
    R_c(t)
    =\tilde O_{p,2}\bigl(n^{-3/2}\rho_n^{3/2}(Th)^{-3/2}\sqrt{\log n}\bigr)
    +\tilde O_{p,2}\bigl(n^{-2}\rho_n^2(Th)^{-1}\log n\bigr).
\]

We now verify the claimed bounds for $\mu_{1c}$, $\mu_{2c}$, and $\mu_{3c}$.

\noindent
\underline{For $\mu_{1c}$}, after fixing one variable $S=\{\eta_{ij}(t_{\ell_1})\}$, there are $n-2$ choices of
the third index $k \neq i,j$ and two unrestricted time indices. Using
$\bbE\{|\eta_{ij}(t_\ell)|\mid X\}=O(\rho_n)$ and $S_K(t)\asymp T$ gives the bound
 $$\begin{aligned}
    \mu_{1c}&=O\left(\binom{n}{3}^{-1}T^{-3}\sum_k\sum_{\ell_2, \ell_3=1}^TK_h(t-t_{\ell_1})K_h(t-t_{\ell_2})K_h(t-t_{\ell_3})\bbE[\vert\eta_{jk}(t_{\ell_2})\vert\mid X]\bbE[\vert\eta_{ki}(t_{\ell_3})\vert\mid X]\right)\\
    &=O(n^{-2}\rho_n^2(Th)^{-1}),
    \end{aligned}$$

\noindent
\underline{For $\mu_{2c}$}, fixing two variables $S=\{\eta_{ij}(t_{\ell_1}), \eta_{jk}(t_{\ell_2})\}$ leaves only one
unrestricted time index. Then, we have the bound
$$\begin{aligned}
    \mu_{2c}&=O\left(\binom{n}{3}^{-1}T^{-3}\sum_{\ell_3=1}^TK_h(t-t_{\ell_1})K_h(t-t_{\ell_2})K_h(t-t_{\ell_3})\bbE[\vert\eta_{ki}(t_{\ell_3})\vert\mid X]\right)\\
    &=O(n^{-3}\rho_n(Th)^{-2}).
    \end{aligned}$$

\noindent
\underline{For $\mu_{3c}$}, fixing all three variables $S=\{\eta_{ij}(t_{\ell_1}), \eta_{jk}(t_{\ell_2}), \eta_{ki}(t_{\ell_3})\}$ leaves only
the coefficient, giving $\mu_{3c}=O\{n^{-3}(Th)^{-3}\}$.

\paragraph{Quadratic term.}
Similarly,
\[
\begin{aligned}
R_q(t)
&={n\choose3}^{-1}S_K(t)^{-2}
  \sum_{i<j}\sum_{k\ne i,j}\sum_{\ell_1,\ell_2=1}^T
  \tilde W_{ij}(t)K_h(t-t_{\ell_1})K_h(t-t_{\ell_2}) \\
&\hspace{40mm}\times
  \eta_{ik}(t_{\ell_1})\eta_{jk}(t_{\ell_2}).
\end{aligned}
\]
This is a multilinear polynomial of power $2$. Using
$|\tilde W_{ij}(t)|\le C\rho_n$ and
$\bbE\{|\eta_{ij}(t_\ell)|\mid X\}=O(\rho_n)$, its hypergraph parameters satisfy
\[
    \mu_{1q}=O\{n^{-2}\rho_n^2(Th)^{-1}\},
    \qquad
    \mu_{2q}=O\{n^{-3}\rho_n(Th)^{-2}\}.
\]
The factor $n^{-2}$ in $\mu_{1q}$ comes from the normalization
${n\choose3}^{-1}$ and the $O(n)$ possible third vertices containing a fixed
noise variable. By the variance bound for $R_q(t)$ and
\Cref{hypergraph lemma}, with probability at least $1-O(n^{-2})$,
\[
\begin{aligned}
|R_q(t)|
&\le C\Bigl[
    n^{-3/2}\rho_n^2(Th)^{-1}\sqrt{\log n}
    +n^{-2}\rho_n^2(Th)^{-1}\log n
    +n^{-3}\rho_n(Th)^{-2}\log^2n
\Bigr].
\end{aligned}
\]
The middle term is smaller than the first because $\log n=o(n)$, and the last term is also absorbed by the first under \Cref{assump: sparsity}. Therefore
\[
    R_q(t)=\tilde O_{p,2}\bigl(n^{-3/2}\rho_n^2(Th)^{-1}\sqrt{\log n}\bigr).
\]
Combining the bounds for $R_q(t)$ and $R_c(t)$ proves the lemma.

\end{proof}

\subsection{Proof of~\Cref{xi}} 
\label{Proof of xi}
\begin{proof}
By \Cref{assump: smooth graphon} and the kernel-grid approximation bounds in
\Cref{kappa i1,kappa j},
\[
    \sup_{i,j}
    |\tilde W_{ij}(t)-W_{ij}(t)|
    =
    O(\rho_n(h^\nu+(Th)^{-1})).
\]
Since \(W_{ij}(t)=O(\rho_n)\) and \(\tilde W_{ij}(t)=O(\rho_n)\) by \Cref{basic moment}, the product
difference bound
\[
    |abc-a'b'c'|
    \le
    |a-a'||b||c|
    +|a'||b-b'||c|
    +|a'||b'||c-c'|
\]
gives
\[
\begin{aligned}
&\left|
\tilde W_{12}(t)\tilde W_{23}(t)\tilde W_{31}(t)
-
W_{12}(t)W_{23}(t)W_{31}(t)
\right|  =
O(\rho_n^3(h^\nu+(Th)^{-1})).
\end{aligned}
\]
Taking conditional expectations and subtracting the corresponding means yields
\[
    \|g_{1h}(\cdot,t)-g_1(\cdot,t)\|_\infty
    =
    O(\rho_n^3(h^\nu+(Th)^{-1})).
\]
Therefore, $\xi_{1h}^2(t) \ge \xi_1^2(t) - var\{g_{1h}(X_1,t)-g_1(X_1,t)\} \ge C\rho_n^6 - C' \rho_n^6(h^\nu+(Th)^{-1})^2 $ with some constant $C'$. 
For \(h^\nu+(Th)^{-1}=o(1)\), this gives
\(
    \xi_{1h}^2(t)\gtrsim \rho_n^6.
\)
\end{proof}

\section{Proof of Theorem \ref{main thm}}
\label{Proof of main thm}

To prove the distributional approximation for the studentized statistic, we consider the following decomposition:
\begin{align}
\hat{T}_{nh}(t)=&\frac{\tilde{U}_{nh}(t)-\mu(t)}{\widehat{S}_{nh}(t)}=\frac{\tilde{U}_{nh}(t)-\mu(t)}{\sigma_{nh}(t)}\cdot\frac{\sigma_{nh}(t)}{\widehat{S}_{nh}(t)}\nonumber\\
=&\frac{\tilde{U}_{nh}(t)-\mu_h(t)+\mu_h(t)-\mu(t)}{\sigma_{nh}(t)}\cdot(1+\frac{\widehat{S}_{nh}^2(t)-\widehat{\sigma}_{nh}^2(t)}{\sigma_{nh}^2(t)}+\frac{\widehat{\sigma}_{nh}^2(t)-\sigma_{nh}^2(t)}{\sigma_{nh}^2(t)})^{-1/2}\label{eq:taylor},
\end{align}
where $\sigma_{nh}(t):=3\xi_{1h}/\sqrt n \asymp \rho_n^3 n^{-1/2}$ is a non-random approximation of the standard deviation of $\tilde{U}_{nh}(t)$, and $\widehat{\sigma}_{nh}^2(t)$ is an intermediate variance estimator conditional on $X$ that serves as a bridge between $\widehat{S}_{nh}^2(t)$ and $\sigma_{nh}^2(t)$:
$$\hat \sigma_{nh}^2(t):=\frac{9}{n^2}\sum_{i=1}^n\left\{\binom{n-1}{2}^{-1}\sum_{\substack{j<k\\ j, k\neq i}}\tilde W_{ij}(t)\tilde W_{jk}(t)\tilde W_{ki}(t)-U_{nh}(t)\right\}^2.$$

By \Cref{variance Hoeffding,variance linear}, we already have the expansions:

\begin{align}
    \frac{\tilde U_{nh}(t)-\mu_h(t)}{\sigma_{nh}(t)}  = & \frac{U_{nh}(t)-\mu_h(t)}{\sigma_{nh}(t)} + \frac{\tilde U_{nh}(t)-U_{nh}(t)}{\sigma_{nh}(t)} \nonumber \\
    = & \frac{Y_1(t)+Y_2(t)+Y_3(t)}{\sigma_{nh}(t)}  +  \frac{Q(t)+R_q(t)+R_c(t)}{\sigma_{nh}(t)} \nonumber\\
    = &  \frac{Y_1(t)}{\sigma_{nh}(t)} + \frac{Y_2(t)}{\sigma_{nh}(t)} + \frac{Q(t)}{\sigma_{nh}(t)}+\frac{Y_3(t)+R_q(t)+R_c(t)}{\sigma_{nh}(t)}\nonumber\\
 := &  L_n^{(1)}(t) + L_n^{(2)}(t) + \check\Delta_n(t) + R_n(t),\label{eq:U expansion}
\end{align}
where 
\begin{align*}
L_n^{(1)}(t) & = \frac{1}{\sqrt n\xi_{1h}}\sum_{i=1}^ng_{1h}(X_i, t) = \tilde O_{p, 2}(\log^{1/2}n),\\
L_n^{(2)}(t) & = \frac{2}{\sqrt n(n-1)\xi_{1h}}\sum_{i<j}g_{2h}(X_i, X_j, t) = \tilde O_{p, 2}(n^{-1/2}\log n),\\
\check\Delta_n(t) & = \frac{Q(t)}{\sigma_{nh}(t)} = \tilde O_{p, 2}\{n^{-1/2}(\rho_nTh)^{-1/2}\sqrt{\log n}\},\\
R_n(t) & = \tilde O_{p, 2}\{n^{-1}\log^{3/2}n + n^{-1}(\rho_n Th)^{-1}\sqrt{\log n}+n^{-1}(\rho_n Th)^{-3/2}\sqrt{\log n}\}.
\end{align*}
The following lemma further provides the expansions for the bias and the variance estimators. The proof of \Cref{Thm1} is deferred to \Cref{Proof of multiple points}.
\begin{lemma}\label{Thm1}
Under \Cref{assump: time obs,assump: smooth graphon,non-u,assump: sparsity}, the following expansions hold:
\begin{enumerate}
    \item[(a)] for the bias, we have
    \[
    \frac{\mu_h(t)-\mu(t)}{\sigma_{nh}(t)}= O\left(\sqrt n h^\nu+\frac{\sqrt n}{Th}\right);
    \]
    \item[(b)] for
    \(
    \hat\delta_n(t):=(\hat S_{nh}^2(t)-\hat\sigma_{nh}^2(t))/\sigma_{nh}^2(t),
    \)
    we have
    \[
    \hat\delta_n(t)=\tilde O_{p,1}\{n^{-1}(\rho_n Th)^{-1}\log n + n^{-1}(\rho_n Th)^{-1/2}\sqrt{\log n} + n^{-1}(\rho_n Th)^{-3/2}\sqrt{\log n}\};
    \]
    \item[(c)] for
    \(
    \delta_n(t):=(\hat\sigma_{nh}^2(t)-\sigma_{nh}^2(t))/\sigma_{nh}^2(t),
    \)
    we have
    \[
    \delta_n(t)=\check\delta_n(t)+\tilde O_{p,1}(n^{-1}\log n), %
    \]
where $$\check\delta_n:=\frac{1}{n}\sum_{i=1}^n\frac{g_{1h}^2(X_i, t)-\xi_{1h}^2}{\xi_{1h}^2}+\frac{4}{n(n-1)}\sum_{i\neq j}\frac{g_{1h}(X_i, t)g_{2h}(X_i, X_j, t)}{\xi_{1h}^2}=\tilde O_{p, 1}(n^{-1/2}\log^{1/2}n).$$
    \end{enumerate}
\end{lemma}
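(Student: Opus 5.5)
I will prove the three parts of \Cref{Thm1} separately, always dividing by $\sigma_{nh}^2(t)\asymp\rho_n^6n^{-1}$, which is guaranteed by \Cref{xi}.

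\textbf{Part (a): bias.} Write $\mu_h(t)-\mu(t)=\bbE[\tilde W_{12}\tilde W_{23}\tilde W_{31}-W_{12}W_{23}W_{31}]$ and use the product-difference inequality from the proof of \Cref{xi}. The deterministic step is the bound $\sup_{x,y}|\tilde W(x,y,t)-W(x,y,t)|=O(\rho_n(h^\nu+(Th)^{-1}))$, where $W=\rho_nF(1-2G)$; this follows from \Cref{kappa i1,kappa j}. I will split it in two:
\begin{itemize}[label={}]
\item \emph{Riemann sum versus integral.} Replace $\sum_\ell w_\ell(t)W(t_\ell)$ by $\int K_h(t-s)W(s)\,ds$. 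This costs $O((Th)^{-1})$, using the bounded variation of $K$ from (K4) and $\Delta_T\le C/T$.
\item \emph{Kernel bias.} Taylor-expand $W(s)$ to order $\nu'$ around $t$. The vanishing moments (K1) kill the terms of order below $\nu$, the Hölder remainder gives $O(h^\nu)$, and the exponential tails (K3) together with the interior condition handle truncation at the boundary of $\mcalT$.
\end{itemize}
This yields $|\mu_h-\mu|=O(\rho_n^3(h^\nu+(Th)^{-1}))$. Dividing by $\sigma_{nh}\asymp\rho_n^3n^{-1/2}$ gives the claim.

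\textbf{Part (c): latent variation of the variance estimator.} Let $\bar g_i:=\binom{n-1}{2}^{-1}\sum_{j<k}\tilde W_{ij}\tilde W_{jk}\tilde W_{ki}$. This is a U-statistic in $X_{-i}$ with the anchor $X_i$ fixed. Its Hoeffding expansion is
\[
\bar g_i-U_{nh}=g_{1h}(X_i)+\frac{2}{n-1}\sum_{j\neq i}g_{2h}(X_i,X_j)+r_i,
\]
where the remainder $r_i$ collects the degenerate parts and the difference between the global and the leave-$i$ means. Using the Bernstein bounds of \Cref{variance Hoeffding}, together with a union bound over $i$, I expect $\max_i|r_i|=\tilde O_{p,2}(\rho_n^3n^{-1}\log n)$ and the $g_{2h}$ average to be $\tilde O_{p,2}(\rho_n^3n^{-1/2}\log^{1/2}n)$.

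Squaring, averaging, and dividing by $9\xi_{1h}^2/n$ then proceeds as follows:
\begin{itemize}[label={}]
\item the $g_{1h}^2$ term and the cross term give exactly $\check\delta_n$, up to symmetrizing the double sum;
\item the square of the $g_{2h}$ average is $O(n^{-1}\log n)$ after normalization;
\item the cross terms with $r_i$ are of the same order.
\end{itemize}
The bound $\check\delta_n=\tilde O_{p,1}(n^{-1/2}\log^{1/2}n)$ follows from Bernstein applied to the i.i.d.\ sum and from the U-statistic Bernstein inequality for the second sum. That second sum needs first to be projected onto $\frac1n\sum_i\bbE[g_{1h}(X_i)g_{2h}(X_i,X_j)\mid X_j]$ plus a degenerate part.

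\textbf{Part (b): observation noise in the variance estimator.} This is the main obstacle. Write $\hat g_i$ (the noisy analogue of $\bar g_i$) as $\bar g_i+Q_i+R_{q,i}+R_{c,i}$, the node-$i$ linear, quadratic, and cubic noise polynomials, built exactly as in \eqref{Decomp_QR} but with the anchor $i$. The centering $\tilde U_{nh}-U_{nh}=Q+R_q+R_c$ is already controlled by \Cref{variance linear}. Then
\[
\hat S^2-\hat\sigma^2=\frac{9}{n^2}\sum_i\Bigl[2(\bar g_i-U_{nh})D_i+D_i^2\Bigr],\qquad D_i:=\hat g_i-\bar g_i-(\tilde U_{nh}-U_{nh}).
\]

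\emph{Cross term.} The piece $\frac1n\sum_i(\bar g_i-U_{nh})Q_i$ is a single linear form $\sum_{j<k}c_{jk}\tilde\eta_{jk}$ with $c_{jk}=O(\rho_n^5/n^2)$. Bernstein, conditional on $X$ and using the variance bound of \Cref{basic moment}, gives order $\rho_n^{11/2}n^{-2}(Th)^{-1/2}\sqrt{\log n}$. After normalizing by $\rho_n^6n^{-1}$ this produces the $n^{-1}(\rho_nTh)^{-1/2}\sqrt{\log n}$ term.

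\emph{Edge terms.} Note that $Q_i$ also contains edges incident to $i$, namely $\tilde\eta_{ij}$ weighted by $\sum_k\tilde W_{jk}\tilde W_{ki}$. This part of the cross term is handled identically and contributes the same order.

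\emph{Quadratic and cubic pieces.} These are treated with \Cref{hypergraph lemma}, recomputing the $\mu_1,\mu_2,\mu_3$ parameters with the node-averaging weights, as in \Cref{lemma: quadratic R and cubic R}. This gives the $(\rho_nTh)^{-1}$ and $(\rho_nTh)^{-3/2}$ terms.

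\emph{Diagonal term.} For $\frac1n\sum_iD_i^2$, I bound $\max_iQ_i^2$ by a union bound: $Q_i$ is a linear form in about $n$ noise variables, so $Q_i=\tilde O_{p,2}(\rho_n^{5/2}n^{-1/2}(Th)^{-1/2}\sqrt{\log n})$. Its square, normalized, is $O(n^{-1}(\rho_nTh)^{-1}\log n)$, which is the first term of the claimed rate.

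The delicate bookkeeping is the verification that the hypergraph parameters for the anchored, node-averaged polynomials yield exactly the stated orders. Using $\tilde O_{p,1}$ rather than $\tilde O_{p,2}$ leaves slack for the union bounds over $i$.
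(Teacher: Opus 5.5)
Your route matches the paper's in all three parts. In (a) you bound $\sup|\tilde W-W|$ and apply the product-difference inequality. In (b) you use the same algebra: since $\sum_i(\bar g_i-U_{nh})=0$, your $D_i$-identity reduces to the paper's $\frac1n\sum_i(\hat a_i-a_i)^2+\frac2n\sum_i(a_i-U_{nh})(\hat a_i-a_i)-(\tilde U_{nh}-U_{nh})^2$, followed by conditional Bernstein for the linear cross term and the hypergraph inequality for the rest. Your node-averaged treatment of the quadratic/cubic cross pieces is at least as sharp as the paper's uniform-in-$i$ bounds times $\max_i|a_i-U_{nh}|$. In (c) you use the anchored Hoeffding expansion. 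Your projection of the second sum of $\check\delta_n$ onto $\mathbb{E}[g_{1h}(X_1)g_{2h}(X_1,X_2)\mid X_2]$ is more accurate than the paper's remark that this sum is degenerate.

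The step that fails is the claim $\max_i|r_i|=\tilde O_{p,2}(\rho_n^3n^{-1}\log n)$ in (c). Given $X_i$, the first-order projection of the anchored U-statistic onto $X_j$ is $g_{1h}(X_j)+g_{2h}(X_i,X_j)$, not $g_{2h}(X_i,X_j)$ alone. So $\bar g_i$ carries $\frac{2}{n-1}\sum_{j\neq i}g_{1h}(X_j)$, while $U_{nh}$ carries $\frac3n\sum_jg_{1h}(X_j)$. The mismatch ($2$ versus $3$) leaves $-m_n$ inside $r_i$, where $m_n:=n^{-1}\sum_jg_{1h}(X_j)$. Equivalently, averaging your expansion over $i$ and using $\sum_i(\bar g_i-U_{nh})=0$ gives
\[
\frac1n\sum_{i=1}^n r_i=-m_n-\frac{2}{n(n-1)}\sum_{i\neq j}g_{2h}(X_i,X_j),
\]
and $m_n$ has standard deviation $\xi_{1h}/\sqrt n\asymp\rho_n^3n^{-1/2}$. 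With the true size of $r_i$, your bound $2\max_i|g_{1h}(X_i)|\max_i|r_i|$ on $\frac2n\sum_ig_{1h}(X_i)r_i$ only gives $\rho_n^6n^{-1/2}\sqrt{\log n}$. After normalization that is the same order as $\check\delta_n$ itself, so the remainder $\tilde O_{p,1}(n^{-1}\log n)$ in (c) is not established as written.

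The repair is short. Write $r_i=-m_n+r_i'$, where $r_i'$ consists of:
\begin{itemize}
\item $\frac{2}{n-1}\{m_n-g_{1h}(X_i)\}$, which is deterministically $O(\rho_n^3/n)$;
\item the leave-$i$ degenerate $g_{2h}$-average minus $Y_2$;
\item the anchored $g_{3h}$-average (degenerate in $X_{-i}$ given $X_i$) minus $Y_3$.
\end{itemize}
These pieces do satisfy $\max_i|r_i'|=\tilde O_{p,2}(\rho_n^3n^{-1}\log n)$. Because $m_n$ does not depend on $i$, its cross term collapses: $\frac2n\sum_ig_{1h}(X_i)(-m_n)=-2m_n^2=\tilde O_{p,1}(\rho_n^6n^{-1}\log n)$. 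With $B_i:=\frac{2}{n-1}\sum_{j\neq i}g_{2h}(X_i,X_j)$, the terms $m_n^2$ and $\frac2n\sum_iB_im_n$ are of the same order, and (c) follows. The paper's displayed expansion of $a_i-\bar a_i$ drops the same $\frac{2}{n-1}\sum_{j\neq i}g_{1h}(X_j)$ term and needs the same fix.

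A smaller caveat, also shared with the paper, concerns (a). Replacing $\sum_\ell w_\ell(t)W(t_\ell)$ by $\int K_h(t-s)W(s)\,ds$ at cost $O((Th)^{-1})$ requires $w_\ell(t)\approx K_h(t-t_\ell)\Delta t_\ell$, i.e.\ essentially equally spaced timestamps. Under $\Delta_T\le C/T$ alone, a design denser on one side of $t$ can produce a bias of order $h$ rather than $h^\nu$.
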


\paragraph{Decomposition of the studentized statistic.}
By collecting all the expansions above and applying the Taylor expansion, we have the following expansion for the studentized statistic $\hat T_{nh}(t)$. The proof of \Cref{prop:studentization-expansion} is deferred to \Cref{Proof of Theorem prop:studentization-expansion}.

\begin{lemma}
    \label{prop:studentization-expansion}
    Under \Cref{assump: time obs,assump: smooth graphon,non-u,assump: sparsity}, we have 
    $$\hat T_{nh}(t)=\underbrace{L_n^{(1)}(t)+L_n^{(2)}(t)-\frac{1}{2}L_n^{(1)}(t)\check\delta_n(t)}_{L_{n}(t)}+\check\Delta_n(t)+\tilde O_{p, 1}(\mcalM(n, \rho_n, T, h)).$$ 
\end{lemma}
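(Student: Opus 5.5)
The argument starts from the decomposition \eqref{eq:taylor}. I write the numerator as $N:=L_n^{(1)}+L_n^{(2)}+\check\Delta_n+R_n+B_n$, where $B_n:=(\mu_h(t)-\mu(t))/\sigma_{nh}(t)$, and the denominator factor as $(1+\hat\delta_n+\delta_n)^{-1/2}$, using \eqref{eq:U expansion} and \Cref{Thm1}. The plan is to Taylor-expand $(1+x)^{-1/2}=1-\tfrac12 x+O(x^2)$ on the event where $|\hat\delta_n+\delta_n|\le 1/2$. This event has probability $1-O(n^{-1})$ by \Cref{Thm1}(b)(c), because each of the bounds there is $o(1)$ under \Cref{assump: sparsity}. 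Multiplying out, I keep $L_n^{(1)}+L_n^{(2)}+\check\Delta_n-\tfrac12 L_n^{(1)}\check\delta_n$ and show that every other term is $\tilde O_{p,1}(\mcalM)$. Since all the pieces are stated in $\tilde O_{p,2}$ or $\tilde O_{p,1}$ form, products and sums are handled by a union bound over finitely many events, each of probability $O(n^{-1})$.

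The remainder terms are checked one at a time.
\begin{enumerate}
\item The bias term $B_n$, and its cross terms $B_n\cdot O(\hat\delta_n+\delta_n)$, are $O(\sqrt n h^\nu+\sqrt n/(Th))$ by \Cref{Thm1}(a). This is exactly the last two pieces of $\mcalM$.
\item The term $R_n$ is $\tilde O_{p,2}\{n^{-1}\log^{3/2}n\,\max(1,(\rho_nTh)^{-3/2})\}$. This holds because $(\rho_nTh)^{-1}\le\max(1,(\rho_nTh)^{-3/2})$.
\item The cross term $L_n^{(1)}\hat\delta_n$ is $\tilde O_{p,1}(\log^{1/2}n)$ times \Cref{Thm1}(b), giving $n^{-1}\log^{3/2}n\max\{1,(\rho_nTh)^{-3/2}\}$. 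Here the middle exponent $-1/2$ is also dominated by the max.
\item The correction $L_n^{(1)}(\delta_n-\check\delta_n)$ is $\tilde O_{p,1}(n^{-1}\log^{3/2}n)$.
\item The second-order terms $L_n^{(2)}(\hat\delta_n+\delta_n)$ and $\check\Delta_n(\hat\delta_n+\delta_n)$ are products of factors that are both small. For example, $\check\Delta_n\check\delta_n=\tilde O_{p,1}(n^{-1}(\rho_nTh)^{-1/2}\log n)$, which is absorbed in the same way.
\item The quadratic Taylor remainder satisfies $N\cdot O((\hat\delta_n+\delta_n)^2)=\tilde O_{p,1}(\log^{1/2}n\cdot n^{-1}\log n)$, since $\check\delta_n^2=\tilde O_{p,1}(n^{-1}\log n)$.
\end{enumerate}

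The expected obstacle is the bookkeeping of logarithmic factors and of the $(\rho_nTh)$ powers, so that every cross term really fits inside $n^{-1}\log^{3/2}n\max\{1,(Th\rho_n)^{-3/2}\}$. The delicate cases are products such as $\check\Delta_n\hat\delta_n$ and $R_n\check\delta_n$, which carry extra $n^{-1/2}$ factors. For these I will use \Cref{assump: sparsity}, which gives $n\rho_nTh\gtrsim n^{1/3}$, to show that the additional factors are $o(1)$. A second subtlety is that the bias term enters multiplied by $(1+\hat\delta_n+\delta_n)^{-1/2}$. I handle this by bounding $|B_n|\cdot|\hat\delta_n+\delta_n|\le|B_n|$, which is already of order $\mcalM$, so no further care is needed there.
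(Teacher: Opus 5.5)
Your proposal is correct and follows essentially the same route as the paper. You Taylor-expand $(1+\hat\delta_n+\delta_n)^{-1/2}=1-\tfrac12\check\delta_n+\tilde O_{p,1}(\cdot)$ using Lemma~\ref{Thm1}, multiply it through $L_n^{(1)}+L_n^{(2)}+\check\Delta_n+R_n+b_n$, and absorb everything except $L_n^{(1)}+L_n^{(2)}-\tfrac12L_n^{(1)}\check\delta_n+\check\Delta_n$ into $\mcalM(n,\rho_n,T,h)$ via the sparsity assumption. Your term-by-term bookkeeping, including bounding the bias factor directly on the event $|\hat\delta_n+\delta_n|\le 1/2$ rather than through the quadratic remainder, is just a more explicit version of the paper's one-line absorption step.
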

\noindent Here, the randomness of $L_n(t)$ comes from the latent variables $X$ and the randomness of $\check\Delta_n(t)$ comes from the edge-noise variables. 
Moreover, $L_n(t)$ can be further expanded as a degree-two U-statistic and a remainder term. The proof of \Cref{T_n U} is deferred to \Cref{Proof of T_n U}.

\begin{lemma}[Degree-two representation of $L_n$] \label{T_n U}
Under \Cref{assump: time obs,assump: smooth graphon,non-u,assump: sparsity}, we have 
\[
\begin{aligned}
L_n(t)
&=
\underbrace{-\frac{1}{\sqrt n\,\xi_{1h}^3(t)}
\left\{
    \frac12\bbE[g_{1h}^3(X_1,t)]
    +
    2\bbE[
        g_{1h}(X_1,t)g_{1h}(X_2,t)g_{2h}(X_1,X_2,t)
    ]
\right\} }_{=: \alpha_h(t)}                                                  \\
&\quad+
\frac{1}{\sqrt n\,\xi_{1h}(t)}
\sum_{i=1}^n g_{1h}(X_i,t)
+
\frac{2}{\sqrt n(n-1)}
\sum_{i<j}\tilde g_{2h}(X_i,X_j,t)
+
\tilde O_{p,1}(n^{-1}\log^{3/2}n),
\end{aligned}
\]
where \(\tilde g_{2h}\) is a symmetric degenerate degree-two kernel defined as
\begin{align}
\tilde g_{2h}(x,y,t)
&:=
\frac{g_{2h}(x,y,t)}{\xi_{1h}(t)}                 \nonumber      \\
&\quad
-\frac{n-1}{4n\,\xi_{1h}^3(t)}
\Big[
    g_{1h}(x,t)\{g_{1h}^2(y,t)-\xi_{1h}^2(t)\}
    +
    g_{1h}(y,t)\{g_{1h}^2(x,t)-\xi_{1h}^2(t)\}
\Big]                                             \nonumber      \\
&\quad
-\frac{n-1}{n\,\xi_{1h}^3(t)}
\Big[
    g_{1h}(x,t)\zeta_h(y,t)
    +
    g_{1h}(y,t)\zeta_h(x,t)
\Big], \label{tilde g2h expanded}
\end{align}
with \[
    \zeta_h(x,t)
    :=
    \bbE\{g_{1h}(X_2,t)g_{2h}(X_1,X_2,t)\mid X_1=x\},
\]
and it satisfies
\(
    \bbE\{\tilde g_{2h}(X_1,X_2,t)\mid X_1\}=0\) and 
\(
    \bbE\{\tilde g_{2h}^2(X_1,X_2,t)\}=O(1).
\)
\end{lemma}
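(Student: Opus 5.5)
The plan is to expand the product $L_n^{(1)}(t)\check\delta_n(t)$ into a double sum over indices, apply a Hoeffding decomposition to it, and collect the pieces. The degree-two parts get absorbed into $\tilde g_{2h}$, the constant parts into $\alpha_h(t)$, and the higher-order degenerate pieces into the remainder.

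Write $L_n^{(1)}=\frac{1}{\sqrt n\xi_{1h}}\sum_i g_{1h}(X_i)$, dropping the argument $t$ throughout. Recall
\[
\check\delta_n=\frac1n\sum_j\frac{g_{1h}^2(X_j)-\xi_{1h}^2}{\xi_{1h}^2}+\frac{4}{n(n-1)}\sum_{j\neq k}\frac{g_{1h}(X_j)g_{2h}(X_j,X_k)}{\xi_{1h}^2}.
\]
Then $-\tfrac12 L_n^{(1)}\check\delta_n$ is a sum of two pieces:
\[
(\mathrm{i})\ \ -\frac{1}{2n^{3/2}\xi_{1h}^3}\sum_{i,j}g_{1h}(X_i)\{g_{1h}^2(X_j)-\xi_{1h}^2\},\qquad
(\mathrm{ii})\ \ -\frac{2}{n^{3/2}(n-1)\xi_{1h}^3}\sum_i\sum_{j\ne k}g_{1h}(X_i)g_{1h}(X_j)g_{2h}(X_j,X_k).
\]

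For (i), split the sum into the diagonal $i=j$ and the off-diagonal $i\neq j$.

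The diagonal part is $-\frac{1}{2n^{3/2}\xi_{1h}^3}\sum_i\{g_{1h}^3(X_i)-\xi_{1h}^2g_{1h}(X_i)\}$. Its mean is $-\frac{1}{2\sqrt n\xi_{1h}^3}\bbE g_{1h}^3$, which contributes the first part of $\alpha_h$. Its centered fluctuation is $O(n^{-1})\cdot \tilde O_{p,1}(\log^{1/2}n)$, using $\|g_{1h}\|_\infty\lesssim\rho_n^3$ and $\xi_{1h}\asymp\rho_n^3$ (\Cref{xi}) together with Bernstein.

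The off-diagonal part is already a U-statistic with kernel $g_{1h}(x)\{g_{1h}^2(y)-\xi_{1h}^2\}$. This kernel has mean zero, and its projection onto $x$ vanishes. Its projection onto $y$ is also zero, since $\bbE g_{1h}=0$. Hence it is degenerate. Symmetrizing over $i<j$ gives exactly the middle term of \eqref{tilde g2h expanded}, with the factor $\frac{n-1}{4n}$ arising from the ratio between $n^{-3/2}$ and the normalization $\frac{2}{\sqrt n(n-1)}$.

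For (ii), split into the case where $i$ is distinct from $\{j,k\}$, and the coincidence cases $i=j$ and $i=k$.

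When $i=k$, the mean is $\bbE[g_{1h}(X_1)g_{1h}(X_2)g_{2h}(X_1,X_2)]$. With prefactor $-\frac{2}{n^{3/2}(n-1)}\cdot n(n-1)$, this gives $-\frac{2}{\sqrt n\xi_{1h}^3}\bbE[g_{1h}g_{1h}g_{2h}]$, the second part of $\alpha_h$. When $i=j$, the term $g_{1h}^2(X_j)g_{2h}(X_j,X_k)$ has mean zero because $g_{2h}$ is degenerate. In both coincidence cases the centered fluctuations form U-statistics of order $n^{-1}\cdot\tilde O_{p,1}(\log n)$.

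For distinct $i,j,k$ we have a degree-three U-statistic, and I will Hoeffding-decompose it.
\begin{itemize}
\item Its mean is zero, and so are its first-order projections. Conditioning on $X_i$ leaves $g_{1h}(X_i)\bbE[g_{1h}g_{2h}]$ against a centered quantity, and conditioning on $X_j$ or $X_k$ kills the term via $\bbE g_{1h}=0$ or the degeneracy of $g_{2h}$.
\item Its second-order projection onto $(X_i,X_k)$ is $g_{1h}(X_i)\zeta_h(X_k)$. The projections onto $(X_i,X_j)$ and $(X_j,X_k)$ vanish, again by centering.
\item After symmetrizing and counting the $n-2$ free indices, this produces the last term of \eqref{tilde g2h expanded}, with its factor $\frac{n-1}{n}$ coming from the normalization constants.
\item The fully degenerate third-order remainder has order $n^{-3/2}n^{-1}\cdot n^{3/2}\log^{3/2}n=n^{-1}\log^{3/2}n$ by the Bernstein-type bound for degenerate U-statistics (\cite{10.1214/EJP.v12-430}), matching the stated $\tilde O_{p,1}(n^{-1}\log^{3/2}n)$.
\end{itemize}

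Adding $L_n^{(2)}=\frac{2}{\sqrt n(n-1)}\sum_{i<j}g_{2h}/\xi_{1h}$ supplies the first term of $\tilde g_{2h}$.

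Finally, check the properties of $\tilde g_{2h}$. Degeneracy follows term by term from what was shown above. The bound $\bbE\tilde g_{2h}^2=O(1)$ follows because each term is a ratio of order $\rho_n^6/\rho_n^3\cdot\rho_n^{-3}$ after using $\|g_{\cdot h}\|_\infty\lesssim\rho_n^3$, $|\zeta_h|\lesssim\rho_n^6$ and $\xi_{1h}\gtrsim\rho_n^3$.

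I expect the main obstacle to be the combinatorial bookkeeping. The constants $\frac12$, $2$, $\frac{n-1}{4n}$ and $\frac{n-1}{n}$ must come out exactly, and each coincidence-index fluctuation must be verified to be $O(n^{-1}\log^{3/2}n)$ with probability $1-O(n^{-1})$ rather than just $O_p$. I will handle this uniformly by invoking the exponential inequality for degenerate U-statistics, with boundedness in units of $\rho_n^3$.
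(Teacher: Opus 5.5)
Your proposal is correct. It handles the first summand of $\check\delta_n$ exactly as the paper does, but it organizes the second summand differently.

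\textbf{The paper's route.} The paper first Hoeffding-decomposes $\frac{4}{n(n-1)}\sum_{i\neq j}g_{1h}(X_i)g_{2h}(X_i,X_j)/\xi_{1h}^2$ on its own. This gives the linear projection $\frac{4}{n\xi_{1h}^2}\sum_i\zeta_h(X_i)$ plus a degenerate part of size $\tilde O_{p,1}(n^{-1}\log n)$. Only then does it multiply by $L_n^{(1)}$. The product with the linear projection is split into a diagonal part, which gives the $2\bbE[g_{1h}g_{1h}g_{2h}]$ piece of $\alpha_h$, and an off-diagonal part, which gives the $\zeta_h$ kernel. The product with the degenerate part is bounded crudely as $\tilde O_{p,1}(\log^{1/2}n)\cdot\tilde O_{p,1}(n^{-1}\log n)$.

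\textbf{Your route.} You expand $L_n^{(1)}$ times the whole summand as a triple sum. You treat the coincidences $i=j$ and $i=k$ separately, and Hoeffding-decompose the distinct-index degree-three U-statistic. The remainder is then identified exactly as centred coincidence terms plus a fully degenerate third-order U-statistic.

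\textbf{Trade-off.} The paper's route is shorter and needs only Bernstein's inequality and the degree-two exponential inequality. Yours needs the third-order degenerate inequality and more bookkeeping. In return it gives an explicit mean-zero structure for the remainder, which would also yield moment control and not just tail control.

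\textbf{Correction.} Because $j$ ranges over $n-2$ values, your distinct-index term produces the coefficient $\frac{n-2}{n}$, not $\frac{n-1}{n}$, in front of $g_{1h}(x)\zeta_h(y)+g_{1h}(y)\zeta_h(x)$. The discrepancy is $\frac{2}{n^{3/2}(n-1)\xi_{1h}^3}\sum_{i\neq k}g_{1h}(X_i)\zeta_h(X_k)$. This is a degenerate degree-two U-statistic of order $\tilde O_{p,1}(n^{-3/2}\log n)$, so you can move it into the remainder. In the paper's route the exact $\frac{n-1}{n}$ appears only because this piece sits inside the product of $L_n^{(1)}$ with the degenerate part.
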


\noindent \underline{By combining the decomposition in \Cref{prop:studentization-expansion,T_n U}, we obtain}
$$\hat T_{nh}(t)=\underbrace{\alpha_h(t)+L_n^{(1)}(t)+L_n^{(2)}(t)-\frac{1}{2}L_n^{(3)}(t) -\frac{1}{2}L_n^{(4)}(t)}_{=:J_n(t)}+\check\Delta_n(t)+\tilde O_{p, 1}(\mcalM(n, \rho_n, T, h)),$$ 
where $L_n^{(3)}(t)$ and $L_n^{(4)}(t)$ are defined as
\begin{align*}
    L_n^{(3)}(t):= & \frac{1}{n^{3/2}\xi_{1h}^3(t)}
\sum_{i<j} 
\Big[
    g_{1h}(X_i,t)\{g_{1h}^2(X_j,t)-\xi_{1h}^2(t)\}
    +
    g_{1h}(X_j,t)\{g_{1h}^2(X_i,t)-\xi_{1h}^2(t)\}
\Big]\\
=& \tilde O_{p, 2}(n^{-1/2}\log n),\\
L_n^{(4)}(t):= &\frac{4}{n^{3/2}\xi_{1h}^3(t)}
\sum_{i<j}\Big[
    g_{1h}(X_i,t)\zeta_h(X_j,t)
    +
    g_{1h}(X_j,t)\zeta_h(X_i,t)
\Big] = \tilde O_{p, 2}(n^{-1/2}\log n).
\end{align*}

\paragraph{Distributional approximation.} To bound the approximation error of the population Edgeworth expansion in \Cref{main thm}, we follow the same strategy as in \citet{zhang2022edgeworth}. 
Let $\mcalF_Z$ be the distribution function of a random variable $Z$.
Recall that $G_{nh}(\cdot)$ is the Edgeworth expansion defined in \Cref{main thm}. We have the following three lemmas to bound the approximation error of $G_{nh}$ to $\mcalF_{\hat T_{nh}+\delta_T}$, where $\delta_T\sim\mcalN(0, c_\delta n^{-1}\log n)$ is an independent Gaussian perturbation with a sufficiently large constant $c_\delta$.

\medskip
\noindent
\underline{First}, in \Cref{8.55}, we show that the edge-noise term $\check\Delta_n(t)$ can be approximated by a Gaussian distribution, which yields the bound for $\left\Vert \mcalF_{J_n+\check\Delta_n+\delta_T}(\cdot)-\mcalF_{J_n+\tilde\Delta_n+\delta_T}(\cdot)\right\Vert_\infty$. The proof of this lemma is provided in \Cref{Proof of 8.55}.

\begin{lemma}[Replacing $\check\Delta_n$ by its Gaussian approximation]
    \label{8.55}
    Under \Cref{assump: time obs,assump: smooth graphon,non-u,assump: sparsity}, we have
    \begin{itemize}
        \item[(a)] Conditional on $X$, the random variable $\check\Delta_n(t)$ is asymptotically normal as follows,
        $$\left\Vert \mcalF_{\check\Delta_n|X}(\cdot)-\mcalF_{\mcalN\left(0, (n\rho_n)^{-1}\sigma_X^2\right)}(\cdot)\right\Vert_\infty=\tilde O_{p, 1}\left((\rho_n Th)^{-\frac{1}{2}}n^{-1}\right),$$
        where $\sigma_X^2:=n\rho_n var(\check\Delta_n(t)|X) \asymp (Th)^{-1}$ with probability at least $1-O(n^{-1})$. 
        
        \item[(b)] Denote $\tilde\Delta_n \mid X \sim\mcalN(0, (n\rho_n)^{-1}\sigma_X^2)$. Then, 
        $$\left\Vert \mcalF_{J_n+\check\Delta_n+\delta_T}(\cdot)-\mcalF_{J_n+\tilde\Delta_n+\delta_T}(\cdot)\right\Vert_\infty=O((\rho_n Th)^{-1/2}n^{-1} + n^{-1}).$$ 
    \end{itemize}
\end{lemma}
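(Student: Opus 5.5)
Part (a) will come from a conditional Berry--Esseen bound. Part (b) will then follow by integrating that bound over the law of $X$ and exploiting the smoothing supplied by $\delta_T$.

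\textbf{Part (a).} Conditional on $X$, write
\[
\check\Delta_n(t)=\sigma_{nh}(t)^{-1}\Big({n\choose2}S_K(t)\Big)^{-1}\sum_{i<j}\sum_{\ell=1}^T K_h(t-t_\ell)\,\theta_{ij}(t)\,\eta_{ij}(t_\ell).
\]
This is a sum of independent, centered, bounded summands indexed by $(i,j,\ell)$, and the coefficients $\theta_{ij}(t)$ are $X$-measurable. I will apply the classical Berry--Esseen theorem for non-identically distributed independent summands. The resulting Kolmogorov error is at most $C\sum_{i<j,\ell}\bbE[|\xi_{ij\ell}|^3\mid X]/\{var(\check\Delta_n\mid X)\}^{3/2}$, where $\xi_{ij\ell}$ denotes a summand.

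\emph{Third moments.} Use $|\theta_{ij}|\le C\rho_n^2$, $\bbE[|\eta_{ij}(t_\ell)|^3\mid X]=O(\rho_n)$ (Lemma~\ref{basic moment}), $\sum_\ell K_h^3(t-t_\ell)=O(T/h^2)$ (by Lemma~\ref{kappa j}), $S_K\asymp T$, and $\sigma_{nh}\asymp\rho_n^3n^{-1/2}$. Together these give
\[
\sum\bbE|\xi|^3\lesssim \frac{n^{3/2}}{\rho_n^9}\cdot\frac{n^2\rho_n^7}{n^6T^3}\cdot\frac{T}{h^2}=n^{-5/2}\rho_n^{-2}(Th)^{-2}.
\]

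\emph{Lower bound on the conditional variance.} I need $var(\check\Delta_n\mid X)\gtrsim (n\rho_nTh)^{-1}$, i.e.\ $\sigma_X^2\asymp (Th)^{-1}$. The upper bound is immediate from the computation in the proof of Lemma~\ref{variance linear}. For the lower bound, note that $var(A_{ij}(t_\ell)\mid X)\ge \rho_nF-\rho_n^2F^2\gtrsim\rho_nF(X_i,X_j,t_\ell)$. It then suffices that
\[
\frac1{n^2}\sum_{i<j}\theta_{ij}^2\,F(X_i,X_j,t)\gtrsim\rho_n^4
\]
with probability $1-O(n^{-1})$. I expect this to hold for the following reasons. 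By temporal smoothness, $F(\cdot,\cdot,t_\ell)$ may be replaced by $F(\cdot,\cdot,t)$ inside the kernel window. Next, $\theta_{ij}/(3\rho_n^2)$ concentrates, uniformly in $(i,j)$ and via Bernstein plus a union bound, around $\int F(1-2G)(X_i,y)F(1-2G)(X_j,y)\,dy$. Finally, the average over pairs concentrates around a positive constant, by a U-statistic concentration argument.

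Positivity of that constant is where the non-degeneracy of Assumption~\ref{non-u} has to enter. If the constant vanished, then $W_{12}W_{23}W_{31}$ would be zero on a set of full measure, so $g_1\equiv 0$, which contradicts $\xi_1^2\gtrsim\rho_n^6$. This requires some care, and it is the main obstacle I expect.

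\emph{Conclusion of (a).} Plugging in, the Berry--Esseen error is
\[
n^{-5/2}\rho_n^{-2}(Th)^{-2}\cdot (n\rho_nTh)^{3/2}=(\rho_nTh)^{-1/2}n^{-1}
\]
on the good event, which gives the $\tilde O_{p,1}$ statement.

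\textbf{Part (b).} Condition on $X$ and on $\delta_T$. Since $J_n$ is $X$-measurable and $\check\Delta_n$, $\tilde\Delta_n$ are conditionally independent of $\delta_T$, for every $x$ we have
\[
\big|\bbP(J_n+\check\Delta_n+\delta_T\le x\mid X)-\bbP(J_n+\tilde\Delta_n+\delta_T\le x\mid X)\big|\le \sup_y\big|\mcalF_{\check\Delta_n|X}(y)-\mcalF_{\tilde\Delta_n|X}(y)\big|.
\]
To get this, condition further on $\delta_T$ and apply the shift $y=x-J_n-\delta_T$. Taking expectations over $X$, I split on the good event from (a), which gives $(\rho_nTh)^{-1/2}n^{-1}$, and on its complement, which has probability $O(n^{-1})$ and contributes at most $n^{-1}$ because the difference is bounded by $1$. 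The supremum over $x$ then yields the claimed $O((\rho_nTh)^{-1/2}n^{-1}+n^{-1})$. In this step the perturbation $\delta_T$ is not essential; it is kept only so that the statement matches the later lemmas.
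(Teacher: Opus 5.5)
Your proposal is correct and follows essentially the same route as the paper: a conditional Berry--Esseen bound for the independent $(i,j,\ell)$ summands with the same third-moment count $n^{-5/2}\rho_n^{-2}(Th)^{-2}$, and the same lower bound $\sigma_X^2\gtrsim (Th)^{-1}$ via $\operatorname{var}(A_{ij}\mid X)\gtrsim\rho_nF$, uniform concentration of $\theta_{ij}$ around $3f_{2h}(X_i,X_j,t)$, and U-statistic concentration; part (b) is handled, as in the paper, by conditioning on $(X,\delta_T)$ plus an $O(n^{-1})$ exceptional event.

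The one real difference is the positivity step. Stated precisely, your contradiction shows that $W_{12}f_2(X_1,X_2)=\bbE[W_{12}W_{23}W_{31}\mid X_1,X_2]$ vanishes a.e., not $W_{12}W_{23}W_{31}$ itself; this still gives $g_1\equiv 0$, but only because $F,G$ do not depend on $n$. The paper instead uses Cauchy--Schwarz, $\xi_1^2\le\bbE[(\bbE[W_{12}f_2\mid X_1])^2]\le C\rho_n^2\,\bbE[Ff_2^2]$, which yields $\bbE[Ff_2^2]\gtrsim\rho_n^4$ quantitatively without that appeal.
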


\medskip
\noindent
\underline{Second}, in \Cref{8.54}, we show that the distribution of $\hat T_{nh}(t)$ can be approximated by the distribution of $J_n(t)+\check\Delta_n(t)$, which yields the bound for $\left\Vert \mcalF_{\hat T_{nh}+\delta_T}(\cdot)-\mcalF_{J_n+\check\Delta_n+\delta_T}(\cdot)\right\Vert_\infty$. The proof of this lemma is provided in \Cref{Proof of 8.54}.

\begin{lemma}[Bound the remainder]
\label{8.54}
Under \Cref{assump: time obs,assump: smooth graphon,non-u,assump: sparsity}, we have
    $$\left\Vert \mcalF_{\hat T_{nh}+\delta_T}(\cdot)-\mcalF_{J_n+\check\Delta_n+\delta_T}(\cdot)\right\Vert_\infty=O(\mcalM(n, \rho_n, T, h)).$$
\end{lemma}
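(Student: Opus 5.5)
We need to show that the KS distance between $\hat T_{nh}+\delta_T$ and $J_n+\check\Delta_n+\delta_T$ is $O(\mcalM)$. By \Cref{prop:studentization-expansion,T_n U}, $\hat T_{nh}(t)=J_n(t)+\check\Delta_n(t)+\mathcal{R}_n(t)$ with $\mathcal{R}_n=\tilde O_{p,1}(\mcalM(n,\rho_n,T,h))$. We will use the standard perturbation inequality: for any random variables $V$ and $\mathcal{R}$ and any $\epsilon>0$,
\[
\sup_x\bigl|\bbP(V+\mathcal{R}\le x)-\bbP(V\le x)\bigr|\le \sup_x \bbP(x-\epsilon<V\le x+\epsilon)+\bbP(|\mathcal{R}|>\epsilon).
\]
We take $V=J_n+\check\Delta_n+\delta_T$ and $\epsilon=C\mcalM(n,\rho_n,T,h)$, where $C$ is the constant in the $\tilde O_{p,1}$ statement. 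The tail term is then $O(n^{-1})$, which is $O(\mcalM)$ because $\mcalM\ge n^{-1}\log^{3/2}n$.

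It remains to bound the concentration function $\sup_x\bbP(|V-x|\le\epsilon)$ by $O(\epsilon)+O(\mcalM)$. First, invoke \Cref{8.55}(b) to replace $\check\Delta_n$ by its conditional Gaussian surrogate $\tilde\Delta_n$. This changes interval probabilities by at most $2\cdot O((\rho_nTh)^{-1/2}n^{-1}+n^{-1})$, which is $O(\mcalM)$: if $\rho_nTh\ge1$ it is $O(n^{-1})$, and otherwise $(\rho_nTh)^{-1/2}\le(\rho_nTh)^{-3/2}$. Second, compare $J_n+\tilde\Delta_n+\delta_T$ with an Edgeworth-type smooth distribution. Conditional on $X$, $\tilde\Delta_n+\delta_T$ is Gaussian with variance at least $c_\delta n^{-1}\log n$, so it smooths the lattice part of $J_n$. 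The term $J_n$ is a non-degenerate U-statistic in $X$ with $\xi_{1h}^2\gtrsim\rho_n^6$ (\Cref{xi}), up to bounded degenerate and cubic corrections. Under Cramér-type smoothing from the added Gaussian, the Edgeworth expansion for studentized U-statistics, as in \citet{zhang2022edgeworth} following Bickel–Götze–van Zwet, applies to $J_n+\tilde\Delta_n+\delta_T$. It gives a distribution function whose density is bounded uniformly in $n$, with error $O(n^{-1}\log^{3/2} n)$. The interval probability is therefore at most $O(\epsilon)+O(n^{-1}\log^{3/2}n)$.

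Combining these bounds gives the claim, after also absorbing the event where $\sigma_X^2\not\asymp(Th)^{-1}$, which has probability $O(n^{-1})$.

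The main obstacle is the anti-concentration step. One needs a density bound for $J_n+\tilde\Delta_n+\delta_T$ uniformly in $x$, while $J_n$ contains the products $L_n^{(1)}\check\delta_n$ and the variance of $\tilde\Delta_n$ is random. I would handle this by conditioning on $X$ only through $\sigma_X^2$, bounding it on the high-probability event, and applying the characteristic-function (Esseen smoothing) argument to $J_n$. The factor $\exp(-c_\delta u^2\log n/(2n))$ then controls frequencies $|u|\gtrsim\sqrt{n}$. This is where a large enough $c_\delta$ is needed.
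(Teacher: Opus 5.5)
Your proposal is correct and follows essentially the same route as the paper: write $\hat T_{nh}+\delta_T=Y+Z$ with $Y=J_n+\check\Delta_n+\delta_T$ and $Z=\tilde O_{p,1}(\mcalM)$, then control the concentration of $Y$ by swapping $\check\Delta_n$ for $\tilde\Delta_n$ (Lemma~\ref{8.55}), comparing with $G_{nh}$, and using that $G_{nh}$ has bounded density, which is exactly how the paper verifies the smoothness hypothesis of Lemma~\ref{random perturbation}. The only difference is that the Edgeworth comparison you sketch as the ``main obstacle'' is already Lemma~\ref{8.56}, with error $O(n^{-1}\log n+(n\rho_nTh)^{-1}\log n)$ rather than your stated $O(n^{-1}\log^{3/2}n)$; the extra term comes from the conditional variance $(n\rho_n)^{-1}\sigma_X^2$ of $\tilde\Delta_n$, which $G_{nh}$ does not include, but it is still $O(\mcalM)$.
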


\medskip
\noindent\underline{Third}, in \Cref{8.56}, we show that the distribution of $J_n(t)+\tilde\Delta_n(t)$ can be approximated by the Edgeworth expansion $G_{nh}(u)$, which yields the bound for $\left\Vert \mcalF_{J_n+\tilde\Delta_n+\delta_T}(\cdot)-G_{nh}(\cdot)\right\Vert_\infty$. The proof of this lemma is provided in \Cref{Proof of Lemma 8.56}.
\begin{lemma}[Edgeworth approximation for $J_n(t)+\tilde\Delta_n(t)$]
    \label{8.56}%
    Under \Cref{assump: time obs,assump: smooth graphon,non-u,assump: sparsity}, with sufficiently large constant $c_\delta>0$, we have
    $$\left\Vert \mcalF_{J_{n}+\tilde\Delta_n+\delta_T}(\cdot)-G_{nh}(\cdot)\right\Vert_\infty=O(n^{-1}\log n+(n\rho_n Th)^{-1}\log n).$$
\end{lemma}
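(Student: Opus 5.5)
The plan is to condition on $X$ and reduce to an Edgeworth expansion for a perturbed degree-two U-statistic. Conditional on $X$, $\tilde\Delta_n\sim\mcalN(0,(n\rho_n)^{-1}\sigma_X^2)$ and $\delta_T\sim\mcalN(0,c_\delta n^{-1}\log n)$ are independent Gaussians. Hence $\tilde\Delta_n+\delta_T$ given $X$ is centered Gaussian with variance $\tau_X^2:=(n\rho_n)^{-1}\sigma_X^2+c_\delta n^{-1}\log n$. By \Cref{8.55}(a), on an event of probability $1-O(n^{-1})$ we have $\sigma_X^2\asymp (Th)^{-1}$. On that event $\tau_X^2=\tau^2+O\bigl((n\rho_nTh)^{-1}\bigr)$ with the deterministic choice $\tau^2:=c_\delta n^{-1}\log n$. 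Since $J_n$ is $X$-measurable, we can write
\[
\mcalF_{J_n+\tilde\Delta_n+\delta_T}(u)=\bbE\bigl[\Phi\bigl((u-J_n)/\tau_X\bigr)\bigr].
\]
The first step replaces $\tau_X$ by $\tau$. The map $s\mapsto\Phi((u-J_n)/s)$ has derivative bounded by $O(1/s)$ uniformly in $u$, so the replacement costs $O(|\tau_X^2-\tau^2|/\tau^2)=O\bigl((\rho_nTh\log n)^{-1}\bigr)$. That is too crude. To get the stated error I will instead bound the difference of the two Gaussian-smoothed distribution functions through the variance increment. Convolving with an extra $\mcalN(0,v)$, $v=O((n\rho_nTh)^{-1})$, changes a function with bounded second derivative by $O(v)$. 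The distribution of $J_n$ smoothed at scale $\tau$ has such a bounded second derivative, since $J_n$ has a density-like Edgeworth behaviour; this bound can be made rigorous a posteriori from the next step. This gives the $(n\rho_nTh)^{-1}\log n$ term, with the $\log n$ coming from the event and the truncation. The complement event contributes $O(n^{-1})$.

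The second step is to show $\|\mcalF_{J_n+\delta_T}-G_{nh}\|_\infty=O(n^{-1}\log n)$. Here $J_n=\alpha_h+n^{-1/2}\xi_{1h}^{-1}\sum_i g_{1h}(X_i)+\frac{2}{\sqrt n(n-1)}\sum_{i<j}\tilde g_{2h}(X_i,X_j)$ up to the $\tilde O_{p,1}(n^{-1}\log^{3/2}n)$ terms already absorbed, by \Cref{T_n U}. This is a standardized non-degenerate U-statistic of degree two with bounded kernels after dividing by $\rho_n^3$, nondegenerate by \Cref{xi}, plus a deterministic shift $\alpha_h=O(n^{-1/2})$. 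I would invoke the Edgeworth expansion for such U-statistics in the form used by \citet{zhang2022edgeworth}: smoothing by $\delta_T$ with $c_\delta$ large removes the need for a Cramér condition, as in \cite{Shao05082025}. The smoothing is handled by bounding the characteristic function on $|s|\gtrsim\sqrt{n/\log n}$ through the Gaussian factor $e^{-\tau^2 s^2/2}\le n^{-c}$. The region $|s|\le\sqrt{n/\log n}$ is handled by standard Esseen smoothing with the cumulant expansion of the U-statistic. Computing the first three cumulants of $J_n$ to order $n^{-1/2}$ gives mean $\alpha_h$ and third cumulant
\[
n^{-1/2}\xi_{1h}^{-3}\bigl(\bbE g_{1h}^3+3\bbE[g_{1h}g_{1h}g_{2h}]\bigr)
\]
adjusted by the $\tilde g_{2h}$ corrections. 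Assembling these reproduces exactly the polynomial $\frac{2x^2+1}{6}\bbE g_{1h}^3+(x^2+1)\bbE[g_{1h}g_{1h}g_{2h}]$ in $G_{nh}$. The $O(n^{-1}\log n)$ residual is the Gaussian perturbation's own effect, of order $\tau^2$, on the smooth $G_{nh}$.

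I expect the main obstacle to be the first step. The variance $\tau_X^2$ of the Gaussian part is random through $X$ and correlated with $J_n$, so it cannot be pulled out naively. The fix is to write $\sigma_X^2=\bar\sigma^2+(\sigma_X^2-\bar\sigma^2)$, where $\bar\sigma^2:=\bbE\sigma_X^2$. The fluctuation $\sigma_X^2-\bar\sigma^2$ is a U-statistic in $X$ of relative size $n^{-1/2}$, so it enters only at order $(n\rho_nTh)^{-1}n^{-1/2}$. The deterministic part $\bar\sigma^2/(n\rho_n)=O((n\rho_nTh)^{-1})$ is then absorbed by differentiating $G_{nh}$ in the variance parameter, which yields the stated $(n\rho_nTh)^{-1}\log n$ bound.
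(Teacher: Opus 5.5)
Your route differs from the paper's, and it is viable, but two steps need repair as written.

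\textbf{How the routes differ.} The paper stays on the Fourier side throughout. It applies Esseen's lemma with $\gamma=n$ to $J_n+\tilde\Delta_n+\delta_T$, and in the low and middle bands it carries the random factor $\exp\{-s^2\sigma_X^2/(2n\rho_n)\}$ inside the expectation:
\begin{itemize}
\item for $|s|\le n^{\epsilon'}$ it Taylor-expands this factor and bounds $\bbE[e^{isJ_n}\sigma_X^2]$ through the Hoeffding decomposition of $\sigma_X^2$;
\item for $n^{\epsilon'}\le|s|\le C_1\sqrt n$ it folds $\bbE[\sigma_X^2]+g_{\sigma;1}$ into the one-summand factor $\gamma_n(s)$ of a Bickel--G\"otze--van Zwet argument, with a separate exponential bound on the event $\bbE[\sigma_X^2]+n^{-1}\sum_i g_{\sigma;1}(X_i)\le 0$;
\item $\delta_T$ is used only for $|s|\ge C_1\sqrt n$.
\end{itemize}
You instead replace the random conditional variance by a deterministic $\bar\tau^2$ in real space. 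After that the characteristic function factorizes as $e^{-\bar\tau^2 s^2/2}\bbE[e^{isJ_n}]$, and you only need a Gaussian-smoothed Edgeworth expansion for the degree-two statistic $J_n$. The deterministic increment $\bar v=\bbE[\sigma_X^2]/(n\rho_n)$ is then handled cleanly: with $\phi_v$ the $\mcalN(0,v)$ density, use $\|(F-G)*\phi_{\bar v}\|_\infty\le\|F-G\|_\infty$ and $\|G_{nh}*\phi_{\bar v}-G_{nh}\|_\infty=O(\bar v)$. So the a posteriori second-derivative bound you mention is not even needed. This is more modular than the paper and removes the $\sigma_X^2$ bookkeeping from the middle band.

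\textbf{First repair: the fluctuation step.} Set $v_X:=\sigma_X^2/(n\rho_n)$. The increment $v_X-\bar v$ is random and correlated with $J_n$, so no convolution argument applies to it. Saying it has ``relative size $n^{-1/2}$'' does not give order $(n\rho_nTh)^{-1}n^{-1/2}$. Let $\Delta:=(n\rho_nTh)^{-1}n^{-1/2}\sqrt{\log n}$, the high-probability size of $|v_X-\bar v|$ by \Cref{variance U}. The pointwise bound $|\partial_{\sigma^2}\Phi(x/\sigma)|\lesssim\sigma^{-2}$ then only yields $\Delta/\tau^2\asymp(\rho_nTh)^{-1}n^{-1/2}(\log n)^{-1/2}$, which is not within the target in general.

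What works is to average over $J_n$ instead. From your step two and $\bbP(|\delta_T|\le r)\gtrsim 1$, you get $\bbP(|J_n-u|\le r)\lesssim r+n^{-1}\log n$ for all $r\ge\tau$. This gives $\bbE[|u-J_n|\,\bar\tau^{-3}\varphi((u-J_n)/(2\bar\tau))]=O(\bar\tau^{-1})$, and hence an error $O(\Delta/\tau)=O\{(n\rho_nTh)^{-1}\}$. Note that $\Delta\ll\bar v$, so $\tau_X\asymp\bar\tau$ on the good event. The choice $\tau^2\asymp n^{-1}\log n$ is exactly what absorbs the $n^{-1/2}\sqrt{\log n}$ fluctuation.

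\textbf{Second repair: the frequency split in step two.} The bound $e^{-\tau^2 s^2/2}\le n^{-c}$ holds only for $|s|\ge C_1\sqrt n$; at $|s|\asymp\sqrt{n/\log n}$ the factor is a constant. The band $n^{\epsilon'}\le|s|\le C_1\sqrt n$ cannot be handled by a cumulant expansion. It needs the Bickel--G\"otze--van Zwet bound $|\bbE e^{isJ_n}|\lesssim n^{-3/2}\log^{3/2}n$ for small $C_1$ (no Cramér condition). This is the argument of \Cref{c} without the random factor.

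\textbf{Minor slip.} With $L_n^{(2)}$ normalized as in the paper, the unadjusted third cumulant carries $6\,\bbE[g_{1h}(X_1,t)g_{1h}(X_2,t)g_{2h}(X_1,X_2,t)]$, not $3$.
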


\medskip
\noindent\underline{
We are now ready to prove \Cref{main thm} by combining the three lemmas above.}
\begin{proof}[\textbf{Proof of \Cref{main thm}}]

    Recall that the perturbed statistic in the theorem is
\[
\tilde T_{nh}(t)=\hat T_{nh}(t)+\delta_T,
\]
where $\delta_T\sim N(0,c_\delta n^{-1}\log n)$ is independent of the data.  By the triangle inequality,
\[
\begin{aligned}
\left\|\mcalF_{\tilde T_{nh}}-G_{nh}\right\|_\infty
&=\left\|\mcalF_{\hat T_{nh}+\delta_T}-G_{nh}\right\|_\infty\\
&\le \left\|\mcalF_{\hat T_{nh}+\delta_T}-\mcalF_{J_n+\check\Delta_n+\delta_T}\right\|_\infty\\
&\quad+\left\|\mcalF_{J_n+\check\Delta_n+\delta_T}-\mcalF_{J_n+\tilde\Delta_n+\delta_T}\right\|_\infty\\
&\quad+\left\|\mcalF_{J_n+\tilde\Delta_n+\delta_T}-G_{nh}\right\|_\infty.
\end{aligned}
\]
The three terms are controlled by Lemmas~\ref{8.54}, \ref{8.55}, and \ref{8.56}, respectively, and the approximation errors are included in $\mcalM(n,\rho_n,T,h)$.  This proves
\[
\left\|\mcalF_{\tilde T_{nh}}-G_{nh}\right\|_\infty=O\{\mcalM(n,\rho_n,T,h)\}.
\]
For the empirical Edgeworth approximation, use the same arguments as in \citet{zhang2022edgeworth} to show that the difference between the empirical Edgeworth expansion $\hat G_{nh}$ and the population Edgeworth expansion $G_{nh}$ is also of order $\tilde O_{p,1}\{\mcalM(n,\rho_n,T,h)\}$. The triangle inequality gives
\[
\left\|\mcalF_{\tilde T_{nh}}-\hat G_{nh}\right\|_\infty
\le
\left\|\mcalF_{\tilde T_{nh}}-G_{nh}\right\|_\infty
+
\left\|G_{nh}-\hat G_{nh}\right\|_\infty
=\tilde O_{p,1}\{\mcalM(n,\rho_n,T,h)\}.
\]
\end{proof}

\subsection{Proof of \Cref{Thm1}}
\label{Proof of multiple points}

To prove part (a), we first establish the following bounds for the derivatives of $W_{ij}(t)$ and $\mu(t)$, as well as the smoothing bias of $\tilde W_{ij}(t)$. The proof of \Cref{derivative mu} is deferred to \Cref{proof of derivative mu}.
\begin{lemma}[Derivative and smoothing-bias bounds]
\label{derivative mu}
Under \Cref{assump: time obs,assump: smooth graphon}, for every
$1\le r\le \nu$, we have the the bounds for the $r$-th derivative for $\mu(t)$ and $W_{ij}(t)$:
\[
    \sup_{i<j}\left|W_{ij}^{(r)}(t)\right|=O(\rho_n),
    \qquad
    \left|\mu^{(r)}(t)\right|=O(\rho_n^3).
\]
Moreover,
\[
    \sup_{i<j}\left|\tilde W_{ij}(t)-W_{ij}(t)\right|
    =
    O\left\{\rho_n\left(h^\nu+(Th)^{-1}\right)\right\}.
\]
\end{lemma}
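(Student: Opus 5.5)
The plan is to work pointwise in $(x,y)$, since $W_{ij}(s)=\rho_n\,\phi(X_i,X_j,s)$ with $\phi(x,y,s):=F(x,y,s)\{1-2G(x,y,s)\}$, and then pass to $\mu$ through an integral representation.

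\textbf{Derivatives of $W_{ij}$.} By Assumption~\ref{assump: smooth graphon}, $F(x,y,\cdot)$ and $G(x,y,\cdot)$ lie in $\mcalH^\beta(L)$ with $\nu'\ge\nu$. Hence $\phi(x,y,\cdot)$ is $\nu'$-times differentiable, and by the Leibniz rule
\[
\partial_s^r\phi=\sum_{m\le r}\binom rm\,\partial_s^mF\,\partial_s^{r-m}(1-2G).
\]
I would first argue that the intermediate derivatives $\partial_s^mF$ and $\partial_s^mG$ ($m\le\nu'$) are bounded uniformly in $(x,y)$. This follows from boundedness of $F$ and $G$ together with the uniform Hölder constant $L$ on the compact interval $\mcalT$, via the standard interpolation argument for Hölder classes (a Taylor expansion with the Hölder remainder, solved for the derivatives using the values at $\nu'+1$ points). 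This gives $\sup_{i<j}|W^{(r)}_{ij}(t)|\le\rho_n\sup_{x,y}|\partial_s^r\phi|=O(\rho_n)$.

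\textbf{Derivatives of $\mu$.} Write
\[
\mu(s)=\rho_n^3\int\!\!\int\!\!\int\phi(x_1,x_2,s)\,\phi(x_2,x_3,s)\,\phi(x_3,x_1,s)\,dx_1\,dx_2\,dx_3 .
\]
The $s$-derivatives of the integrand are uniformly bounded, so dominated convergence allows differentiation under the integral. The Leibniz rule then bounds $|\mu^{(r)}(t)|$ by $\rho_n^3$ times a constant.

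\textbf{Smoothing bias.} Decompose
\[
\tilde W_{ij}(t)-W_{ij}(t)=\sum_\ell w_\ell(t)\{W_{ij}(t_\ell)-W_{ij}(t)\}
\]
into two parts. The first is a discretization error: the Riemann sum $\sum_\ell K_h(t-t_\ell)\,g(t_\ell)\,\Delta_\ell$ differs from $\int K_h(t-s)\,g(s)\,ds$ by $O(T^{-1}\cdot\mathrm{TV})$. Here I would use (K4), bounded variation of $u^jK(u)$, the bounded derivative of $g=W_{ij}$, and $\Delta_T\le C/T$. The variation of $K_h$ is $O(h^{-1})$, so after normalizing by $\sum_\ell K_h\asymp T$ this part is $O(\rho_n(Th)^{-1})$. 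This is presumably the content of the cited Lemmas~\ref{kappa i1} and~\ref{kappa j}, which I would invoke.

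The second part is the continuous bias, $\int K(u)\{W_{ij}(t-hu)-W_{ij}(t)\}\,du$. Taylor expand to order $\nu$ with a remainder involving $W^{(\nu)}$. The vanishing moments (K1) kill orders $1,\dots,\nu-1$, and the $\nu$-th term is $O(\rho_n h^\nu)$ by the derivative bound above. Two further pieces remain. Boundary truncation is handled by the interior condition $\min\{\mcalL-t,t\}\ge\delta$ together with the exponential tails (K3), giving $O(e^{-c(\delta/h)^\alpha})=o(h^\nu)$. The normalization mismatch between $\sum_\ell w_\ell$ and $\int K$ is absorbed into the $(Th)^{-1}$ term. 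All constants are uniform in $(i,j)$ because the derivative bounds are uniform in $(x,y)$.

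The main obstacle is the uniform control of intermediate derivatives from only a Hölder condition on the top derivative, together with bookkeeping of the Riemann-sum error uniformly over the $\binom n2$ pairs. The first I would settle by the interpolation inequality on compact $\mcalT$. The second follows because every bound depends on $(X_i,X_j)$ only through $\sup_{x,y}$ quantities.
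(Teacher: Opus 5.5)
Your proof is correct and uses the same ingredients as the paper's, arranged differently in two places.

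First, the paper simply asserts that Assumption~\ref{assump: smooth graphon} gives uniformly bounded time-derivatives of $F$ and $G$ up to order $\nu$. The assumption only controls the H\"older seminorm of the top derivative, so your interpolation step is what actually justifies this: boundedness of $F$ and $G$, plus a Vandermonde solve at $\nu'+1$ points of the compact $\mcalT$. It also gives the uniform control of Taylor remainders at grid points far from $t$.

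Second, the bias is handled in the opposite order. The paper Taylor-expands $W_{ij}(t_\ell)-W_{ij}(t)$ inside the discrete sum. It then controls each discrete moment $T^{-1}\sum_\ell K_h(t-t_\ell)s_\ell^r$, with $s_\ell=(t_\ell-t)/h$, through Lemmas~\ref{kappa j} and~\ref{kappa i1}: Koksma--Hlawka applied to $u^rK(u)$ for each $r\le\nu$, plus the (K3) tails. You instead replace the sum by an integral once, for the single integrand $K_h(t-\cdot)\{W_{ij}(\cdot)-W_{ij}(t)\}$. This needs only the variation of $K$ and a Lipschitz bound on $W_{ij}$. You then expand in the continuum, where (K1) applies after an exponentially small truncation.

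What each buys: your ordering is more self-contained. The paper's reuses lemmas it needs elsewhere and keeps each order's discretization error separate. Each of those errors carries a factor $h^r$ with $r\ge1$, so the paper's route actually gives $O(\rho_n T^{-1})$ for this part, though the paper does not use this.

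One caveat applies to both arguments. The weight $w_\ell(t)$ carries no factor $\Delta t_\ell$, so treating $\sum_\ell w_\ell(t)g(t_\ell)$ as a normalized $\Delta t_\ell$-weighted Riemann sum is exact only for an equally spaced grid. Under $\Delta_T\le C/T$ alone, a grid with different spacings on the two sides of $t$ makes $\sum_\ell w_\ell(t)(t_\ell-t)$ of exact order $h$. That produces a first-order bias of order $h|W_{ij}'(t)|$, generically $\rho_n h$. For the Gaussian kernel ($\nu=2$) this exceeds $\rho_n\{h^\nu+(Th)^{-1}\}$ once $Th^2\to\infty$. Equal spacing, as in the simulations, should therefore be assumed explicitly.
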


\begin{proof}[\textbf{Proof of \Cref{Thm1}(a)}]
By \Cref{derivative mu},
\[
    \sup_{i<j}\left|\tilde W_{ij}(t)-W_{ij}(t)\right|
    =
    O(\rho_n\left(h^\nu+(Th)^{-1}\right)).
\]
Since both $\tilde W_{ij}(t)$ and $W_{ij}(t)$ are uniformly $O(\rho_n)$,
\[
\begin{aligned}
&\left|
\tilde W_{12}(t)\tilde W_{23}(t)\tilde W_{31}(t)
-
W_{12}(t)W_{23}(t)W_{31}(t)
\right|  \\
&\qquad \le
\left|\tilde W_{12}(t)-W_{12}(t)\right|\,|\tilde W_{23}(t)|\,|\tilde W_{31}(t)|
+
|W_{12}(t)|\,\left|\tilde W_{23}(t)-W_{23}(t)\right|\,|\tilde W_{31}(t)|  \\
&\qquad\quad
+
|W_{12}(t)|\,|W_{23}(t)|\,\left|\tilde W_{31}(t)-W_{31}(t)\right|  \\
&\qquad =
O(\rho_n^3 \left(h^\nu+(Th)^{-1}\right)).
\end{aligned}
\]
Taking expectations gives
\(
    \mu_h(t)-\mu(t)=O(\rho_n^3 \left(h^\nu+(Th)^{-1}\right)).
\)
Because
\(
    \sigma_{nh}(t)=\frac{3\xi_{1h}(t)}{\sqrt n}\) 
and
\(
    \xi_{1h}(t)\asymp \rho_n^3,
\)
we obtain
\[
    \frac{\mu_h(t)-\mu(t)}{\sigma_{nh}(t)}
    =
    O\left\{\sqrt n\left(h^\nu+(Th)^{-1}\right)\right\}
    =
    O\left(\sqrt n h^\nu+\frac{\sqrt n}{Th}\right).
\]
This proves part (a).
\end{proof}

To prove part (b), we state the following bound for quadratic and cubic noise remainders when one node is fixed. The proof of \Cref{quadratic Ra} is deferred to \Cref{proof of quadratic Ra}.

\begin{lemma}[Nodewise quadratic and cubic noise remainders]
\label{quadratic Ra}
Fix $i\in\{1,\ldots,n\}$ and define
\[
\begin{aligned}
    R_{a,q}^{(i)}(t)
&=
\binom{n-1}{2}^{-1}
\sum_{\substack{j<k\\ j,k\neq i}}
\tilde W_{ki}(t)\tilde\eta_{ij}(t)\tilde\eta_{jk}(t),
\\
R_{a,c}^{(i)}(t)
&=
\binom{n-1}{2}^{-1}
\sum_{\substack{j<k\\ j,k\neq i}}
\tilde\eta_{ij}(t)\tilde\eta_{jk}(t)\tilde\eta_{ki}(t).
\end{aligned}
\]
Under \Cref{assump: time obs,non-u,assump: sparsity}, uniformly
over $i$,
\[
    R_{a,q}^{(i)}(t)
    =
    \tilde O_{p,2}\left\{
        n^{-1}(Th)^{-1}\rho_n^2\log n
    \right\},
\]
and
\[
    R_{a,c}^{(i)}(t)
    =
    \tilde O_{p,2}\left\{
        n^{-1}(Th)^{-3/2}\rho_n^{3/2}\sqrt{\log n}
    \right\}
    +
    \tilde O_{p,2}\left\{
        n^{-1}(Th)^{-1}\rho_n^2\log n
    \right\}.
\]
Consequently, the nodewise remainder
\[
    R_a^{(i)}(t):=R_{a,q}^{(i)}(t)+R_{a,c}^{(i)}(t)=\tilde O_{p,2}\left\{
        n^{-1}(Th)^{-1}\rho_n^2\log n
    \right\}
    +
    \tilde O_{p,2}\left\{
        n^{-1}(Th)^{-3/2}\rho_n^{3/2}\sqrt{\log n}
    \right\}.
\]
\end{lemma}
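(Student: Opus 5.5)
Fix $i$ and condition on $X$. The plan mirrors the proof of \Cref{lemma: quadratic R and cubic R}: write $R_{a,q}^{(i)}(t)$ and $R_{a,c}^{(i)}(t)$ as multilinear polynomials in the independent, centered, bounded variables $\{\eta_{jk}(t_\ell)\}$, compute their conditional variances, bound the hypergraph parameters, apply \Cref{hypergraph lemma} at confidence level $1-O(n^{-3})$, and finish with a union bound over the $n$ choices of $i$ to get uniformity. The $\tilde O_{p,2}$ statement then follows after integrating over $X$, since all the coefficient bounds from \Cref{basic moment} hold almost surely.

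For the quadratic term, substitute $\tilde\eta_{e}(t)=S_K(t)^{-1}\sum_\ell K_h(t-t_\ell)\eta_e(t_\ell)$. Then
\[
R_{a,q}^{(i)}=\binom{n-1}{2}^{-1}S_K^{-2}\sum_{j<k}\sum_{\ell_1,\ell_2}\tilde W_{ki}K_h(t-t_{\ell_1})K_h(t-t_{\ell_2})\,\eta_{ij}(t_{\ell_1})\eta_{jk}(t_{\ell_2}).
\]
Two monomials are correlated only if they share both edges, so
\[
var(R_{a,q}^{(i)}\mid X)\lesssim n^{-4}\cdot n^2\cdot\rho_n^2\cdot\{\rho_n(Th)^{-1}\}^2=n^{-2}\rho_n^4(Th)^{-2}.
\]
This term contributes $n^{-1}\rho_n^2(Th)^{-1}\sqrt{\log n}$. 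For the hypergraph parameters:
\begin{itemize}
\item \emph{Fixing one variable.} If we fix $\eta_{ij}(t_{\ell_1})$, the index $k$ ranges over $n$ values and $\ell_2$ is free, giving $\mu_1\lesssim n^{-2}\cdot n\cdot\rho_n\cdot\rho_n(Th)^{-1}=n^{-1}\rho_n^2(Th)^{-1}$. If instead we fix $\eta_{jk}(t_{\ell_2})$, the partner edge $ij$ is determined and $\ell_1$ is free, which gives something smaller.
\item \emph{Fixing two variables.} This gives $\mu_2\lesssim n^{-2}\rho_n(Th)^{-2}$.
\end{itemize}
\Cref{hypergraph lemma} then yields
\[
n^{-1}\rho_n^2(Th)^{-1}\sqrt{\log n}+n^{-1}\rho_n^2(Th)^{-1}\log n+n^{-2}\rho_n(Th)^{-2}\log^2 n .
\]
The last term is absorbed using \Cref{assump: sparsity}, since $n\rho_n Th\gg\log^2 n$. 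This leaves $n^{-1}(Th)^{-1}\rho_n^2\log n$.

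For the cubic term, the expansion runs over three time indices. The conditional variance has only diagonal contributions, so
\[
var\lesssim n^{-4}\cdot n^2\cdot\{\rho_n(Th)^{-1}\}^3,
\]
which gives the term $n^{-1}\rho_n^{3/2}(Th)^{-3/2}\sqrt{\log n}$. For the hypergraph parameters:
\begin{itemize}
\item \emph{Fixing one variable.} If we fix $\eta_{ij}(t_{\ell_1})$ with $i$ as the fixed node, then $k$ ranges over $n$ values with two free time indices, giving $\mu_1\lesssim n^{-1}\rho_n^2(Th)^{-1}$. If we fix $\eta_{jk}$, both other edges are determined, so there is no free node sum and the bound is $n^{-2}\rho_n^2(Th)^{-1}$, which is smaller.
\item \emph{Fixing two variables.} This gives $\mu_2\lesssim n^{-2}\rho_n(Th)^{-2}$.
\item \emph{Fixing three variables.} This gives $\mu_3\lesssim n^{-2}(Th)^{-3}$.
\end{itemize}
\Cref{hypergraph lemma} then gives the two stated terms plus $n^{-2}\rho_n(Th)^{-2}\log^2n$ and $n^{-2}(Th)^{-3}\log^3 n$. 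Both of these are absorbed under \Cref{assump: sparsity}. For example, $n^{-2}(Th)^{-3}\log^3n\le n^{-1}\rho_n^{3/2}(Th)^{-3/2}\sqrt{\log n}$ requires $(n^{2/3}\rho_n Th)^{3/2}\gtrsim n^{1/2}\log^{5/2}n$ roughly, which \Cref{assump: sparsity} supplies.

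The main obstacle is carefully computing $\mu_1$ for each of the three edge roles, because the fixed node $i$ breaks the symmetry. Edges incident to $i$ carry a free summation over one node, whereas the edge $jk$ does not. If $\mu_1$ for the $ij$ role were larger than claimed, the $\log n$ term would not match the statement. I would check this by explicit counting, as in the $\mu_{1c}$ computation in \Cref{quadratic R and cubic R}. A second point needing care is that the probability in \Cref{hypergraph lemma} must be taken at level $n^{-3}$ so that the union bound over $i$ still leaves $O(n^{-2})$; this only changes the constants in front of the $\log n$ factors. Summing the two bounds gives the stated bound for $R_a^{(i)}$.
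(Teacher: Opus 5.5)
Your proposal is correct and follows the paper's proof essentially step for step. It uses the same conditional variances $n^{-2}\rho_n^4(Th)^{-2}$ and $n^{-2}\rho_n^3(Th)^{-3}$, and the same hypergraph parameters $\mu_1\lesssim n^{-1}\rho_n^2(Th)^{-1}$, $\mu_2\lesssim n^{-2}\rho_n(Th)^{-2}$, $\mu_3\lesssim n^{-2}(Th)^{-3}$ fed into Lemma~\ref{hypergraph lemma}. It then absorbs the higher-order terms under Assumption~\ref{assump: sparsity} and takes a union bound over $i$, as the paper does.

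There is one arithmetic slip. Absorbing $n^{-2}(Th)^{-3}\log^3 n$ into $n^{-1}\rho_n^{3/2}(Th)^{-3/2}\sqrt{\log n}$ requires $(n^{2/3}\rho_nTh)^{3/2}\gtrsim \log^{5/2}n$, that is, $\rho_n\gtrsim n^{-2/3}(Th)^{-1}\log^{5/3}n$, with no extra factor $n^{1/2}$. The condition as you wrote it would need $\epsilon\ge 1/3$ and is not implied by Assumption~\ref{assump: sparsity}, whereas the correct condition is.
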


\begin{proof}[\textbf{Proof of \Cref{Thm1}(b)}]
Define
\begin{align*}
    \hat a_i(t)
    & :=
    \binom{n-1}{2}^{-1}
    \sum_{\substack{j<k\\ j,k\neq i}}
    \tilde A_{ij}(t)\tilde A_{jk}(t)\tilde A_{ki}(t),\\
    a_i(t)
    & :=
    \binom{n-1}{2}^{-1}
    \sum_{\substack{j<k\\ j,k\neq i}}
    \tilde W_{ij}(t)\tilde W_{jk}(t)\tilde W_{ki}(t)
    =
    \bbE\{\hat a_i(t)\mid X\}.
\end{align*}
Then we have
\(
    \frac1n\sum_{i=1}^n\hat a_i(t)=\tilde U_{nh}(t)\)
and 
\(
    \frac1n\sum_{i=1}^n a_i(t)=U_{nh}(t).
\)
Therefore,
\[
    \frac{n\hat S_{nh}^2(t)}{9}
    =
    \frac1n\sum_{i=1}^n\{\hat a_i(t)-\tilde U_{nh}(t)\}^2,
\qquad
    \frac{n\hat\sigma_{nh}^2(t)}{9}
    =
    \frac1n\sum_{i=1}^n\{a_i(t)-U_{nh}(t)\}^2.
\]
Using the fact 
\(
    \frac1n\sum_{i=1}^n\{\hat a_i(t)-U_{nh}(t)\}
    =
    \tilde U_{nh}(t)-U_{nh}(t),
\)
we obtain
\[
\begin{aligned}
\frac{n}{9}\{\hat S_{nh}^2(t)-\hat\sigma_{nh}^2(t)\}
&=
\frac1n\sum_{i=1}^n\{\hat a_i(t)-a_i(t)\}^2 + \frac2n\sum_{i=1}^n\{a_i(t)-U_{nh}(t)\}\{\hat a_i(t)-a_i(t)\}  \\
&\quad
-\{\tilde U_{nh}(t)-U_{nh}(t)\}^2 .
\end{aligned}
\]
We bound the three terms on the right-hand side. 
First, by \Cref{variance linear},
\[
    \tilde U_{nh}(t)-U_{nh}(t)=Q(t)+R_q(t)+R_c(t),
\]
and hence
\[
\{\tilde U_{nh}(t)-U_{nh}(t)\}^2
=
\tilde O_{p,2}\{n^{-2}(Th)^{-1}\rho_n^5\log n\}
+
\tilde O_{p,2}\{n^{-3}(Th)^{-3}\rho_n^3 \log n \cdot \max\{1, Th\rho_n\}\}.
\]
Second, decompose $\hat a_i(t)-a_i(t)$ into its terms that are linear,
quadratic, and cubic in $\{\tilde\eta_{ij}(t):=\tilde A_{ij}(t)-\tilde W_{ij}(t)\}$. 
By defining $\Theta_{ij}(t)=\sum_{k\neq i, j}\tilde W_{jk}(t)\tilde W_{ki}(t)$, we have
\[
\begin{aligned}
\hat a_i(t)-a_i(t)
&=
\binom{n-1}{2}^{-1}
\sum_{j\neq i}\Theta_{ij}(t)\tilde\eta_{ij}(t) +
\binom{n-1}{2}^{-1}
\sum_{\substack{j<k\\ j,k\neq i}}
\tilde W_{ij}(t)\tilde W_{ki}(t)\tilde\eta_{jk}(t)
+
R_a^{(i)}(t),
\end{aligned}
\]
where $R_a^{(i)}(t)$ collects the quadratic and cubic terms in
$\tilde\eta$. 
By Bernstein's inequality, \Cref{basic moment}, and the fact $|\Theta_{ij}(t)|= O(n\rho_n^2)$, we have
\begin{align*}
\binom{n-1}{2}^{-1}
    \sum_{j\neq i}\Theta_{ij}(t)\tilde\eta_{ij}(t)
    & =
    \tilde O_{p,2}\left\{
        n^{-1/2}(Th)^{-1/2}\rho_n^{5/2}\sqrt{\log n}
    \right\}, \\
    \binom{n-1}{2}^{-1}
    \sum_{\substack{j<k\\ j,k\neq i}}
    \tilde W_{ij}(t)\tilde W_{ki}(t)\tilde\eta_{jk}(t)
    & =
    \tilde O_{p,2}\left\{
        n^{-1}(Th)^{-1/2}\rho_n^{5/2}\sqrt{\log n}
    \right\},
\end{align*}
uniformly in $i$. 
Together with the bound for $R_a^{(i)}(t)$ in \Cref{quadratic Ra}, this yields
\[
\begin{aligned}
\frac1n\sum_{i=1}^n\{\hat a_i(t)-a_i(t)\}^2
&=
\tilde O_{p,1}\{(nTh)^{-1}\rho_n^5\log n\} +
\tilde O_{p,1}\{n^{-2}(Th)^{-3}\rho_n^3\log n \cdot \max\{1, Th\rho_n \log n\}\}.
\end{aligned}
\]
Thrid, by using the leading linear representation above, we have
\[
\begin{aligned}
\hat a_i(t)-a_i(t)
&=
\frac{1}{n-1}
\sum_{j\neq i}
\tilde\Theta_{ij}(t)\tilde\eta_{ij}(t)  +
\tilde O_{p,1}
\left\{
    n^{-1}(Th)^{-1/2}\rho_n^{5/2}\sqrt{\log n}
\right\} \\
&\quad+
\tilde O_{p,1}
\left\{
    n^{-1}(Th)^{-1}\rho_n^2\log n
\right\}
+
\tilde O_{p,1}
\left\{
    n^{-1}(Th)^{-3/2}\rho_n^{3/2}\sqrt{\log n}
\right\},
\end{aligned}
\]
where
\[
    \tilde\Theta_{ij}(t)
    :=
    \frac{2}{n-2}\sum_{k\neq i,j}
    \tilde W_{jk}(t)\tilde W_{ki}(t),
    \qquad\text{with} \quad
    |\tilde\Theta_{ij}(t)|\lesssim \rho_n^2.
\]
Therefore, the cross term can be decomposed as
\[
\begin{aligned}
& \frac{2}{n}\sum_{i=1}^n \{a_i(t)-U_{nh}(t)\}\{\hat a_i(t)-a_i(t)\} \\
 = & 
\frac{2}{n(n-1)}
\sum_{i\neq j}
 \{a_i(t)-U_{nh}(t)\}\tilde\Theta_{ij}(t)\tilde\eta_{ij}(t)   +
\tilde O_{p,1}
\left\{
    n^{-1}(Th)^{-1/2}\rho_n^{11/2}\sqrt{\log n}
\right\} \\
& +
\tilde O_{p,1}
\left\{
    n^{-1}(Th)^{-1}\rho_n^5\log n
\right\}  +
\tilde O_{p,1}
\left\{
    n^{-1}(Th)^{-3/2}\rho_n^{9/2}\sqrt{\log n}
\right\}.
\end{aligned}
\]
It remains to bound the leading linear term. Conditional on \(X\), the
coefficients \(a_i(t)-U_{nh}(t)\) and \(\tilde\Theta_{ij}(t)\) are fixed, and the
variables \(\tilde\eta_{ij}(t)\) are centered and independent across edges.
Using the fact that 
\(
    |a_i(t)-U_{nh}(t)|=O(\rho_n^3).
\)  and
\(|\tilde\Theta_{ij}(t)|\lesssim \rho_n^2\), by Bernstein's inequality, we have
\[
\begin{aligned}
\frac{2}{n(n-1)}
\sum_{i\neq j}
\{a_i(t)-U_{nh}(t)\}\tilde\Theta_{ij}(t)\tilde\eta_{ij}(t)
&=
\tilde O_{p,1}
\left\{
    n^{-1}(Th)^{-1/2}\rho_n^{11/2}\sqrt{\log n}
\right\}  \\
&\quad+
\tilde O_{p,1}
\left\{
    n^{-2}(Th)^{-1}\rho_n^5\log n
\right\}.
\end{aligned}
\]
The second term is dominated by the first under \(\Cref{assump: sparsity}\).
Therefore,
the cross term is bounded by
\[
\begin{aligned}
\frac{2}{n}\sum_{i=1}^n
\{a_i(t)-U_{nh}(t)\}\{\hat a_i(t)-a_i(t)\}
&=
\tilde O_{p,1}
\left\{
    n^{-1}(Th)^{-1/2}\rho_n^{11/2}\sqrt{\log n}
\right\}  \\
&\quad+
\tilde O_{p,1}
\left\{
    n^{-1}(Th)^{-1}\rho_n^5\log n
\right\} \\
&\quad+
\tilde O_{p,1}
\left\{
    n^{-1}(Th)^{-3/2}\rho_n^{9/2}\sqrt{\log n}
\right\}.
\end{aligned}
\]
Combining the three results above, under
\Cref{assump: sparsity}, gives
\[
\begin{aligned}
\frac{n}{9}\{\hat S_{nh}^2(t)-\hat\sigma_{nh}^2(t)\}
&=
\tilde O_{p,1}\left\{
    n^{-1}(Th)^{-1}\rho_n^5\log n
\right\}  \\
&\quad+
\tilde O_{p,1}\left\{
    n^{-1}(Th)^{-1/2}\rho_n^{11/2}\sqrt{\log n}
\right\} \\
&\quad+
\tilde O_{p,1}\left\{
    n^{-1}(Th)^{-3/2}\rho_n^{9/2}\sqrt{\log n}
\right\}.
\end{aligned}
\]
Finally, since
\(
    \frac{n\sigma_{nh}^2(t)}{9}
    =
    \xi_{1h}^2(t)
    \asymp \rho_n^6,
\)
we divide both sides by $\xi_{1h}^2(t)$ and obtain
\[
\hat\delta_n(t)
=
\tilde O_{p,1}\left\{
    n^{-1}(\rho_n Th)^{-1}\log n
\right\}
+
\tilde O_{p,1}\left\{
    n^{-1}(\rho_n Th)^{-1/2}\sqrt{\log n}
\right\}
+
\tilde O_{p,1}\left\{
    n^{-1}(\rho_n Th)^{-3/2}\sqrt{\log n}
\right\}.
\]
This proves part (b).
\end{proof}

We finally analyze the decomposition of $\delta_n(t)=\{\hat\sigma_{nh}^2(t)-\sigma_{nh}^2(t)\}/\sigma_{nh}^2(t)$. The main idea is also to fix one node index and then to apply Hoeffding expansion to the leave-one-out U-statistic.

\begin{proof}[\textbf{Proof of \Cref{Thm1}(c)}]
Since $n^{-1}\sum_i a_i(t)=U_{nh}(t)$, we have
$$
    \begin{aligned}
        \frac{n\hat\sigma_{nh}^2(t)}{9}&=\frac{1}{n}\sum_{i=1}^n(a_i(t)-U_{nh}(t))^2\\
        &=\frac{1}{n}\sum_{i=1}^n(a_i(t)-\mu_h(t))^2-(\mu_h(t)-U_{nh}(t))^2\\
        &=\frac{1}{n}\sum_{i=1}^n(a_i(t)-\mu_h(t))^2-\tilde O_{p, 1}\left(\rho_n^{6}n^{-1}\log n\right),
    \end{aligned}
    $$
where the last step used the fact  \(
    U_{nh}(t)-\mu_h(t)
    =
    \tilde O_{p,1}\left(\rho_n^3n^{-1/2}\sqrt{\log n}\right)
\)
in \Cref{variance Hoeffding}.

Define
\(
    \bar a_i(t):=\mu_h(t)+g_{1h}(X_i,t)=\bbE\left[\tilde W_{12}(t)\tilde W_{23}(t)\tilde W_{31}(t)|X_1=X_i\right].
\)
Then
\[
\begin{aligned}
\frac{n\hat\sigma_{nh}^2(t)}{9}
&=
\frac1n\sum_{i=1}^n g_{1h}^2(X_i,t)
+
\frac2n\sum_{i=1}^n\{a_i(t)-\bar a_i(t)\}g_{1h}(X_i,t) \\
&\quad+
\frac1n\sum_{i=1}^n\{a_i(t)-\bar a_i(t)\}^2
+
\tilde O_{p,1}\left(\rho_n^6n^{-1}\log n\right).
\end{aligned}
\]
Using the same arguments (8.48)-(8.50) in \cite{zhang2022edgeworth}, by Hoeffding expansion, we have  uniformly in $i$ that
$$\begin{aligned}
        a_i(t)-\tilde a_i(t)&=\binom{n-1}{2}^{-1}\sum_{\substack{j<k\\i\neq j, k}}\left\{\tilde W_{ij}(t)\tilde W_{jk}(t)\tilde W_{ki}(t)-\bbE[\tilde W_{ij}(t)\tilde W_{jk}(t)\tilde W_{ki}(t)|X_i]\right\}\\
        &=\frac{2}{n-1}\sum_{j\neq i}g_{2h}(X_i,X_j,t)
    +
    \frac{2}{(n-1)(n-2)}
    \sum_{\substack{j<k\\ j,k\neq i}}
    g_{3h}(X_i,X_j,X_k,t)\\
        &=\tilde O_{p, 2}\left(\rho_n^3 n^{-\frac{1}{2}}\sqrt{\log n}\right)+\tilde O_{p, 2}\left(\rho_n^3 n^{-1}\log n\right).
    \end{aligned}$$
Hence,
\[
    \frac1n\sum_{i=1}^n\{a_i(t)-\bar a_i(t)\}^2
    =
    \tilde O_{p,1}\left(\rho_n^6n^{-1}\log n\right).
\]
Moreover, for the cross term, we have
\[
\begin{aligned}
\frac2n\sum_{i=1}^n\{a_i(t)-\bar a_i(t)\}g_{1h}(X_i,t)
& =\frac{2}{n}\sum_{i=1}^n\left(\frac{2}{n-1}\sum_{j\neq i}g_{2h}(X_i, X_j, t)+\tilde O_{p, 2}\left(\rho_n^3 n^{-1}\log n\right)\right)g_{1h}(X_i, t)\\
&=
\frac{4}{n(n-1)}
\sum_{i,j: i\neq j}
g_{1h}(X_i,t)g_{2h}(X_i,X_j,t) +
\tilde O_{p,1}\left(\rho_n^6n^{-1}\log n\right),
\end{aligned}
\]
where we used the fact that $g_{1h}(X_i,t)=O(\rho_n^3)$.

Combining the two results above, we have
\[
\begin{aligned}
\frac{n\hat\sigma_{nh}^2(t)}{9}
=
\frac1n\sum_{i=1}^n g_{1h}^2(X_i,t)
+
\frac{4}{n(n-1)}
\sum_{i,j: i\neq j}
g_{1h}(X_i,t)g_{2h}(X_i,X_j,t)  +
\tilde O_{p,1}\left(\rho_n^6n^{-1}\log n\right).
\end{aligned}
\]
Since
\(
    \frac{n\sigma_{nh}^2(t)}{9}=\xi_{1h}^2(t)\asymp \rho_n^6,
\)
we obtain
\[
\begin{aligned}
\delta_n(t)
&=
\frac{\hat\sigma_{nh}^2(t)-\sigma_{nh}^2(t)}{\sigma_{nh}^2(t)} \\
&=
\frac1n\sum_{i=1}^n
\frac{g_{1h}^2(X_i,t)-\xi_{1h}^2(t)}{\xi_{1h}^2(t)}
+
\frac{4}{n(n-1)}
\sum_{i\neq j}
\frac{g_{1h}(X_i,t)g_{2h}(X_i,X_j,t)}{\xi_{1h}^2(t)}
+
\tilde O_{p,1}(n^{-1}\log n).
\end{aligned}
\]
The first two terms are exactly $\check\delta_n(t)$. Since
$g_{1h}^2(X_i,t)/\xi_{1h}^2(t)$ is uniformly bounded and centered after
subtracting one, Bernstein's inequality gives
\[
    \frac1n\sum_{i=1}^n
    \frac{g_{1h}^2(X_i,t)-\xi_{1h}^2(t)}{\xi_{1h}^2(t)}
    =
    \tilde O_{p,1}(n^{-1/2}\sqrt{\log n}).
\]
The second term in $\check\delta_n(t)$ is a degenerate second-order
U-statistic with bounded kernel and, by Theorem 1 in \cite{10.1214/EJP.v12-430}, it is no larger. Hence
\[
    \check\delta_n(t)
    =
    \tilde O_{p,1}(n^{-1/2}\sqrt{\log n}),
\]
and
\[
    \delta_n(t)
    =
    \check\delta_n(t)+\tilde O_{p,1}(n^{-1}\log n).
\]
This proves part (c).
\end{proof}

\subsubsection{Proof of \Cref{derivative mu}}
\label{proof of derivative mu}
\begin{proof}
Recall that
\(
    W_{ij}(t)=\rho_n F(X_i,X_j,t)\{1-2G(X_i,X_j,t)\}.
\)
By \Cref{assump: smooth graphon}, the functions $F$ and $G$ have uniformly
bounded derivatives up to order $\nu$ on the interior time interval. Since
products of bounded smooth functions have the same order of smoothness, every
derivative $W_{ij}^{(r)}(t)$, $1\le r\le \nu$, is bounded by $C\rho_n$ uniformly
over $i,j$.

Similarly, for
\(
    \mu(t)=\bbE\{W_{12}(t)W_{23}(t)W_{31}(t)\},
\)
differentiating this product and using the bound
$\sup_{i<j}|W_{ij}^{(r)}(t)|=O(\rho_n)$ gives
$|\mu^{(r)}(t)|=O(\rho_n^3)$ for $1\le r\le \nu$.

Finally, we give the standard bound for the smoothing approximation error. We prove it here for self-consistency.
Let
\(
    D_T(t):=\frac1T\sum_{\ell=1}^T K_h(t-t_\ell)\) and 
\(
    s_\ell:=\frac{t_\ell-t}{h}.
\)
By \Cref{kappa j}, \(D_T(t)\asymp 1\).  Since the kernel weights are
normalized,
\[
\tilde W_{ij}(t)-W_{ij}(t)
=
D_T(t)^{-1}
\frac1T\sum_{\ell=1}^T
K_h(t-t_\ell)\{W_{ij}(t_\ell)-W_{ij}(t)\}.
\]
By the Taylor expansion around \(t\), we have
\[
W_{ij}(t_\ell)-W_{ij}(t)
=
\sum_{r=1}^{\nu-1}
\frac{h^r s_\ell^r}{r!}W_{ij}^{(r)}(t)
+
\frac{h^\nu s_\ell^\nu}{\nu!}W_{ij}^{(\nu)}(t)
+
o(\rho_n h^\nu |s_\ell|^\nu).
\]
For the lower-order terms with \(1\le r\le\nu-1\), by \Cref{kappa i1},
we have
\[
\frac1T\sum_{\ell=1}^T K_h(t-t_\ell)s_\ell^r
=
O\left\{(Th)^{-1}+h^{-(r-1)}e^{-c'h^{-\alpha}}\right\},
\]
as the kernel moments satisfy \(\kappa_{r1}=0\). Therefore, together with \(W_{ij}^{(r)}(t)=O(\rho_n)\) uniformly in \(i,j\),  the
lower-order Taylor terms contribute
\[
O\{\rho_n(Th)^{-1}\}+o(\rho_n h^\nu).
\]
For the \(\nu\)-th-order term,
\[
\frac1T\sum_{\ell=1}^T K_h(t-t_\ell)s_\ell^\nu
=
\kappa_{\nu1}
+
O\left\{(Th)^{-1}+h^{-(\nu-1)}e^{-c'h^{-\alpha}}\right\}
=
O(1),
\]
and hence this term contributes \(O(\rho_n h^\nu)\).  The Taylor
remainder is also \(O(\rho_n h^\nu)\).  Since \(D_T(t)\asymp1\), we conclude
that
\[
    \sup_{i<j}
    |\tilde W_{ij}(t)-W_{ij}(t)|
    =
    O\left\{\rho_n\left(h^\nu+(Th)^{-1}\right)\right\}.
\]
\end{proof}

\subsubsection{Proof of \Cref{quadratic Ra}}
\label{proof of quadratic Ra}
\begin{proof}
We give the proof for a fixed $i$; the uniform version follows by applying the
same bound with constants large enough to allow a union bound over
$i=1,\ldots,n$. We use the same technique as in the proof of \Cref{lemma: quadratic R and cubic R} in \Cref{quadratic R and cubic R}.

First, consider $R_{a,q}^{(i)}(t)$. Conditional on $X$, the summands are
centered except when the same two noise variables reappear. As in the proof of
\Cref{lemma: quadratic R and cubic R},
\[
    var\{R_{a,q}^{(i)}(t)\mid X\}
    =
    O\left\{n^{-2}(Th)^{-2}\rho_n^4\right\}.
\]
The hypergraph parameters required for \Cref{hypergraph lemma} satisfy
\[
    \mu_{1q}
    =
    O\left\{n^{-1}(Th)^{-1}\rho_n^2\right\},
    \qquad
    \mu_{2q}
    =
    O\left\{n^{-2}(Th)^{-2}\rho_n\right\}.
\]
Therefore, with probability at least $1-O(n^{-2})$,
\[
\begin{aligned}
|R_{a,q}^{(i)}(t)|
&\le
C\Bigl[
    n^{-1}(Th)^{-1}\rho_n^2\sqrt{\log n}
    + n^{-1}(Th)^{-1}\rho_n^2\log n   + n^{-2}(Th)^{-2}\rho_n\log^2 n
\Bigr].
\end{aligned}
\]
The last term is dominated by
$n^{-1}(Th)^{-1}\rho_n^2\log n$ under \Cref{assump: sparsity}, so
\[
    R_{a,q}^{(i)}(t)
    =
    \tilde O_{p,2}\left\{
        n^{-1}(Th)^{-1}\rho_n^2\log n
    \right\}.
\]

Next, consider $R_{a,c}^{(i)}(t)$. Again using the same hypergraph argument,
\[
    var\{R_{a,c}^{(i)}(t)\mid X\}
    =
    O\left\{n^{-2}(Th)^{-3}\rho_n^3\right\},
\]
and
\[
    \mu_{1c}
    =
    O\left\{n^{-1}(Th)^{-1}\rho_n^2\right\},
    \quad
    \mu_{2c}
    =
    O\left\{n^{-2}(Th)^{-2}\rho_n\right\},
    \quad
    \mu_{3c}
    =
    O\left\{n^{-2}(Th)^{-3}\right\}.
\]
Hence, with probability at least $1-O(n^{-2})$,
\[
\begin{aligned}
|R_{a,c}^{(i)}(t)|
&\le
C\Bigl[
    n^{-1}(Th)^{-3/2}\rho_n^{3/2}\sqrt{\log n}
    +n^{-1}(Th)^{-1}\rho_n^2\log n \\
&\hspace{25mm}
    +n^{-2}(Th)^{-2}\rho_n\log^2 n
    +n^{-2}(Th)^{-3}\log^3 n
\Bigr].
\end{aligned}
\]
The third term is dominated by the second term under
\Cref{assump: sparsity}. The last term is dominated by the first term because
\[
    n^{-2}(Th)^{-3}\log^3 n
    \le
    n^{-1}(Th)^{-3/2}\rho_n^{3/2}\sqrt{\log n}
\]
is equivalent to
\[
    \rho_n
    \gtrsim
    n^{-2/3}(Th)^{-1}\log^{5/3}n.
\]
This condition is implied by \Cref{assump: sparsity}. Therefore,
\[
    R_{a,c}^{(i)}(t)
    =
    \tilde O_{p,2}\left\{
        n^{-1}(Th)^{-3/2}\rho_n^{3/2}\sqrt{\log n}
    \right\}
    +
    \tilde O_{p,2}\left\{
        n^{-1}(Th)^{-1}\rho_n^2\log n
    \right\}.
\]
Combining the bounds for $R_{a,q}^{(i)}(t)$ and $R_{a,c}^{(i)}(t)$ proves the
lemma.
\end{proof}

\subsection{Proof of \Cref{prop:studentization-expansion}}
\label{Proof of Theorem prop:studentization-expansion}
\begin{proof}
For brevity, suppress the dependence on $t$.  Define the bias term
\(
    b_n:=\frac{\mu_h-\mu}{\sigma_{nh}}.
\)
By \Cref{Thm1}(a),
\[
    b_n=O\left(\sqrt n h^\nu+\frac{\sqrt n}{Th}\right).
\]
Next, define
\(
    z_n:=\hat\delta_n+\delta_n.
\)
By \Cref{Thm1}(b)--(c),
\[
    z_n
    =
    \check\delta_n
    +
    \tilde O_{p,1}(\widetilde{\mcalM} (n, \rho_n, T, h)),
\]
where
\[
\begin{aligned}
\widetilde{\mcalM} (n, \rho_n, T, h)
&:=
n^{-1}\log n
+n^{-1}(\rho_n Th)^{-1}\log n
+n^{-1}(\rho_n Th)^{-1/2}\sqrt{\log n}  \\
&\quad
+n^{-1}(\rho_n Th)^{-3/2}\sqrt{\log n}.
\end{aligned}
\]
Since $\check\delta_n=\tilde O_{p,1}(n^{-1/2}\sqrt{\log n})$, Taylor's
expansion gives
\[
    (1+z_n)^{-1/2}
    =
    1-\frac12\check\delta_n+\tilde O_{p,1}(\widetilde{\mcalM} (n, \rho_n, T, h)).
\]
Therefore, together with the decomposition in \eqref{eq:U expansion},
\[
\begin{aligned}
\hat T_{nh}
&=
\left(
    L_n^{(1)}+L_n^{(2)}+\check\Delta_n+R_{n}+b_n
\right)
\left(
    1-\frac12\check\delta_n+\tilde O_{p,1}(\widetilde{\mcalM} (n, \rho_n, T, h))
\right)  \\
&=
L_n^{(1)}+L_n^{(2)}-\frac12L_n^{(1)}\check\delta_n
+\check\Delta_n
+
\tilde O_{p,1}\{\mcalM(n,\rho_n,T,h)\}.
\end{aligned}
\]
Here, the terms involving $R_{n}$, $b_n$,
$\Delta_n\check\delta_n$, $\check\Delta_n\check\delta_n$, and the Taylor
remainder are all absorbed into $\mcalM(n,\rho_n,T,h)$ by its definition and \Cref{assump: sparsity}. Thus,
\[
    \hat T_{nh}(t)
    =
    L_n(t)+\check\Delta_n(t)
    +
    \tilde O_{p,1}\{\mcalM(n,\rho_n,T,h)\},
\]
where
\[
    L_n(t)
    =
    L_n^{(1)}(t)+L_n^{(2)}(t)-\frac12L_n^{(1)}(t)\check\delta_n(t).
\]
This proves the lemma.
\end{proof}

\subsection{Proof of \Cref{T_n U}}
\label{Proof of T_n U}

\begin{proof}
To simplify notation in this proof only, define the shorthands
\[
    g_i:=g_{1h}(X_i,t),
    \qquad
    g_{ij}:=g_{2h}(X_i,X_j,t),
    \qquad
    \xi:=\xi_{1h}(t).
\]
Recall that
\[
    L_n^{(1)}(t)=\frac{1}{\sqrt n\,\xi}\sum_{i=1}^n g_i=\tilde O_{p, 2}(\log^{1/2} n),
    \quad
    L_n^{(2)}(t)
    =
    \frac{2}{\sqrt n(n-1)\xi}
    \sum_{i<j}g_{ij}=\tilde O_{p, 2}(n^{-1/2}\log n),
\]
and
\[
\check\delta_n(t)
=
\frac1n\sum_{i=1}^n\frac{g_i^2-\xi^2}{\xi^2}
+
\frac{4}{n(n-1)}
\sum_{i\ne j}\frac{g_i g_{ij}}{\xi^2}.
\]
We treat the two summands in \(\check\delta_n(t)\) separately.

For the first product term, we have
\[
\begin{aligned}
L_n^{(1)}
\left\{
    \frac1n\sum_{j=1}^n\frac{g_j^2-\xi^2}{\xi^2}
\right\}
&=
\frac{1}{n^{3/2}\xi^3}
\sum_{i=1}^n\sum_{j=1}^n
g_i(g_j^2-\xi^2)                                      \\
&=
\frac{1}{n^{3/2}\xi^3}
\sum_{i=1}^n g_i(g_i^2-\xi^2)          +
\frac{1}{n^{3/2}\xi^3}
\sum_{i<j}
\left[
    g_i(g_j^2-\xi^2)+g_j(g_i^2-\xi^2)
\right].
\end{aligned}
\]
Since
\(
    \bbE\{g_1(g_1^2-\xi^2)\}=\bbE[g_1^3],
\)
Bernstein's inequality gives
\[
    \frac{1}{n^{3/2}\xi^3}
    \sum_{i=1}^n g_i(g_i^2-\xi^2)
    =
    \frac{\bbE[g_{1h}^3(X_1,t)]}{\sqrt n\,\xi_{1h}^3(t)}
    +
    \tilde O_{p,1}(n^{-1}\log^{1/2}n).
\]
Thus, the first part contributes a deterministic shift, while the off-diagonal part is a symmetric degree-two term.

For the second summand in \(\check\delta_n(t)\), Hoeffding decomposition gives
\[
\frac{4}{n(n-1)}
\sum_{i\ne j}\frac{g_i g_{ij}}{\xi^2}
=
\frac{4}{n\xi^2}\sum_{i=1}^n\zeta_h(X_i,t)
+
\tilde O_{p,1}(n^{-1}\log n),
\]
where the first Hoeffding projection is defined as
\[
    \zeta_h(x,t)
    :=
    \bbE\{g_{1h}(X_2,t)g_{2h}(X_1,X_2,t)\mid X_1=x\},
\]
satisfying 
\(
    \bbE\{\zeta_h(X_1,t)\}=0\) and
\(
    \|\zeta_h(\cdot,t)\|_\infty=O(\rho_n^6).
\)
Since
\(
    L_n^{(1)}=\tilde O_{p,1}(\log^{1/2}n),
\)
the product of \(L_n^{(1)}\) with this degenerate remainder is
\(
    \tilde O_{p,1}(n^{-1}\log^{3/2}n).
\)
The product of \(L_n^{(1)}\) with the first Hoeffding projection is
\[
\begin{aligned}
L_n^{(1)}
\left\{
    \frac{4}{n\xi^2}
    \sum_{j=1}^n\zeta_h(X_j,t)
\right\}
&=
\frac{4}{n^{3/2}\xi^3}
\sum_{i=1}^n\sum_{j=1}^n
g_i\zeta_h(X_j,t)                                      \\
&=
\frac{4}{n^{3/2}\xi^3}
\sum_{i=1}^n g_i\zeta_h(X_i,t)                         +
\frac{4}{n^{3/2}\xi^3}
\sum_{i<j}
\left[
    g_i\zeta_h(X_j,t)+g_j\zeta_h(X_i,t)
\right].
\end{aligned}
\]
Note that
\[
\begin{aligned}
\bbE\{g_1\zeta_h(X_1,t)\}
&=
\bbE\{
    g_{1h}(X_1,t)g_{1h}(X_2,t)
    g_{2h}(X_1,X_2,t)
\}.
\end{aligned}
\]
Therefore, Berstein's inequality gives
\[
\begin{aligned}
\frac{4}{n^{3/2}\xi^3}
\sum_{i=1}^n g_i\zeta_h(X_i,t)
&=
\frac{4}{\sqrt n\,\xi_{1h}^3(t)}
\bbE[
    g_{1h}(X_1,t)g_{1h}(X_2,t)
    g_{2h}(X_1,X_2,t)
]                                                     +
\tilde O_{p,1}(n^{-1}\log^{1/2}n).
\end{aligned}
\]

Combining these expansions with
\(
    L_n
    =
    L_n^{(1)}
    +
    \Delta_n
    -
    \frac12L_n^{(1)}\check\delta_n,
\)
we obtain
\[
\begin{aligned}
L_n(t)
&=
-\frac{1}{\sqrt n\,\xi_{1h}^3(t)}
\left\{
    \frac12\bbE[g_{1h}^3(X_1,t)]
    +
    2\bbE[
        g_{1h}(X_1,t)g_{1h}(X_2,t)g_{2h}(X_1,X_2,t)
    ]
\right\}                                               \\
&\quad+
\frac{1}{\sqrt n\,\xi_{1h}(t)}
\sum_{i=1}^n g_{1h}(X_i,t)                              \\
&\quad+
\frac{2}{\sqrt n(n-1)}
\sum_{i<j}\tilde g_{2h}(X_i,X_j,t)
+
\tilde O_{p,1}(n^{-1}\log^{3/2}n),
\end{aligned}
\]
where the degree-two kernel is defined as
\begin{align*}
\tilde g_{2h}(x,y,t)
&:=
\frac{g_{2h}(x,y,t)}{\xi_{1h}(t)}                 \nonumber      \\
&\quad
-\frac{n-1}{4n\,\xi_{1h}^3(t)}
\Big[
    g_{1h}(x,t)\{g_{1h}^2(y,t)-\xi_{1h}^2(t)\}
    +
    g_{1h}(y,t)\{g_{1h}^2(x,t)-\xi_{1h}^2(t)\}
\Big]                                             \nonumber      \\
&\quad
-\frac{n-1}{n\,\xi_{1h}^3(t)}
\Big[
    g_{1h}(x,t)\zeta_h(y,t)
    +
    g_{1h}(y,t)\zeta_h(x,t)
\Big]. 
\end{align*}

It remains only to check the stated properties of \(\tilde g_{2h}\). The first
term is degenerate because \(g_{2h}\) is the second Hoeffding projection. For
the second term, conditional on \(X_1=x\), we have
\(
    \bbE\{g_{1h}^2(X_2,t)-\xi_{1h}^2(t)\}=0\) and
    \(
    \bbE\{g_{1h}(X_2,t)\}=0.
\)
For the third term, we have
\(
    \bbE\{\zeta_h(X_2,t)\}=0\) and
\(
    \bbE\{g_{1h}(X_2,t)\}=0.
\)
It follows that
\(
    \bbE\{\tilde g_{2h}(X_1,X_2,t)\mid X_1\}=0.
\)
Finally, using
\(
    \|g_{1h}(\cdot,t)\|_\infty+\|g_{2h}(\cdot,\cdot,t)\|_\infty
    \le C\rho_n^3\),
\(
    \xi_{1h}(t)\asymp \rho_n^3\), and
\(
    \|\zeta_h(\cdot,t)\|_\infty=O(\rho_n^6),
\)
we obtain
\[
    \|\tilde g_{2h}(\cdot,\cdot,t)\|_\infty\le C,
    \qquad
    \bbE\{\tilde g_{2h}^2(X_1,X_2,t)\}=O(1).
\]
This proves the lemma.

\end{proof}

\subsection{Proof of \Cref{8.55}}
\label{Proof of 8.55}

\begin{proof}[\textbf{Proof of Part (a)}]
Recall that
\[
    Q(t)=\binom n2^{-1}\sum_{i<j}\theta_{ij}(t)\tilde\eta_{ij}(t),
    \qquad
     \theta_{ij}(t):=\frac{3}{n-2}\sum_{k\neq i,j}
        \tilde W_{ik}(t)\tilde W_{jk}(t).
\]
Let
\(
    w_\ell(t):=K_h(t-t_\ell)/\sum_{\ell=1}^T K_h(t-t_\ell)
\),
then
\[
    \check\Delta_n(t)
    =
    \frac{Q(t)}{\sigma_{nh}(t)}
    = \sigma_{nh}^{-1}(t)
    \binom n2^{-1} 
    \sum_{i<j}\sum_{\ell=1}^T
    w_\ell(t)
      \theta_{ij}(t)
    \eta_{ij}(t_\ell).
\]
Conditional on $X$, the variables
$\{\eta_{ij}(t_\ell):1\le i<j\le n,1\le \ell\le T\}$ are independent and
centered.  
Therefore,
\[
\operatorname{var}\{\check\Delta_n(t)\mid X\}
=\sigma_{nh}^{-2}(t)
 {n\choose 2}^{-2}
\sum_{i<j} \sum_{\ell=1}^T
    w^2_\ell(t)   \theta_{ij}^2 
\operatorname{var}\{A_{ij}(t_\ell)\mid X_i, X_j\}.
\]

We first show that
\[
    \sigma_X^2
    :=
    n\rho_n\operatorname{var}\{\check\Delta_n(t)\mid X\}
    \asymp
    (Th)^{-1}
\]
with probability at least \(1-O(n^{-1})\).
Under the model,
\[
\begin{aligned}
\operatorname{var}\{A_{ij}(s)\mid X_i=x,X_j=y\}
&=
\rho_nF(x,y,s)
\left[
1-\rho_nF(x,y,s)\{1-2G(x,y,s)\}^2
\right].
\end{aligned}
\]
Since \(F\) is uniformly bounded, \(0\le G\le1\), and \(\rho_n=o(1)\),
\[
    1-\rho_nF(x,y,s)\{1-2G(x,y,s)\}^2
    \ge
    1-\rho_n\|F\|_\infty
    \ge C_0
\]
with some constant \(C_0>0\) for large \(n\). Therefore,
\(
   \operatorname{var}\{A_{ij}(s)\mid X_i=x,X_j=y\}
    \gtrsim
    \rho_nF(x,y,s).
\)
Next, let
\(
    f_{2h}(x_1,x_2,t)
    :=
    \bbE \left[\tilde W_{13}(t)\tilde W_{23}(t)\mid X_1=x_1, X_2=x_2\right].
\)
Then, by Bernstein's inequality and a union bound, we have
\[
     \theta_{ij}(t)
    =
    3f_{2h}(X_i,X_j,t)+r_{ij}(t),
\qquad \text{with} \quad \max_{i<j}|r_{ij}(t)|
    =
    O\left(\rho_n^2\sqrt{\frac{\log n}{n}}\right),
\]
with probability at least \(1-O(n^{-1})\).
Define the non-smoothed version of \(f_{2h}\):
\[
f_{2}(x_1,x_2,t):=
    \bbE \left[ W_{13}(t) W_{23}(t)\mid X_1=x_1, X_2=x_2\right].
\]
Then, by definition, we have
\[
    g_1(x,t)
    =
    \bbE[W_{12}(t)f_{2}(X_1,X_2,t)\mid X_1=x]-\mu(t).
\]
By Cauchy-Schwarz inequality, we have
\[
\begin{aligned}
& \left[
\bbE[W_{12}(t)f_{2}(X_1,X_2,t)\mid X_1=x]
\right]^2\\
\le & 
\bbE\left[
    \frac{W_{12}^2(t)}{F(X_1,X_2,t)}
    \mathbf 1\{F(X_1,X_2,t)>0 \} \mid X_1=x
\right]
\bbE\left[
    F(X_1,X_2,t)f_{2}^2(X_1,X_2,t) \mid X_1=x
\right] \\
\le & 
C\rho_n^2
\bbE\left[
    F(X_1,X_2,t)f_{2}^2(X_1,X_2,t) \mid X_1=x
\right],
\end{aligned}
\]
where the third inequality uses the fact that $W_{12}(t)=\rho_nF(X_1,X_2,t)\{1-2G(X_1,X_2,t)\}$ and the boundedness of $F$ in \Cref{assump: smooth graphon}.
It follows that
\[
    \xi_1^2(t)
    \le
    C\rho_n^2
    \bbE[F(X_1,X_2,t)f_{2}^2(X_1,X_2,t)].
\]
Since \(\xi_1^2(t)\ge C\rho_n^6\) by \Cref{non-u}, we obtain
\[
    \bbE[F(X_1,X_2,t)f_{2}^2(X_1,X_2,t)]
    \gtrsim
    \rho_n^4.
\]
By the smoothness of \(F\) and \Cref{kappa j}, this also implies
\[
    \bbE\left[
        f_{2h}^2(X_1,X_2,t)
            \frac{
    \sum_{\ell=1}^T K_h^2(t-t_\ell)F(X_i,X_j,t_\ell)
}{
    \sum_{\ell=1}^T K_h^2(t-t_\ell)
}
    \right]
    \gtrsim
    \rho_n^4.
\]
A bounded U-statistic concentration inequality then yields
\[
\begin{aligned}
&{n\choose2}^{-1}
\sum_{i<j}
f_{2h}^2(X_i,X_j,t)
\frac{
    \sum_{\ell=1}^T K_h^2(t-t_\ell)F(X_i,X_j,t_\ell)
}{
    \sum_{\ell=1}^T K_h^2(t-t_\ell)
}
\gtrsim
\rho_n^4
\end{aligned}
\]
with probability at least \(1-O(n^{-1})\). Combining this with
\(\theta_{ij}=3f_{2h}(X_i,X_j)+r_{ij}\) and the bound on \(r_{ij}\), we get
\[
\begin{aligned}
&{n\choose2}^{-1}
\sum_{i<j}
\theta_{ij}^2(t)
\frac{
    \sum_{\ell=1}^T K_h^2(t-t_\ell)F(X_i,X_j,t_\ell)
}{
    \sum_{\ell=1}^T K_h^2(t-t_\ell)
}
\gtrsim
\rho_n^4
\end{aligned}
\]
with probability at least \(1-O(n^{-1})\).

Using this lower bound, \(\sum_\ell K_h(t-t_\ell)\asymp T\),
\(\sum_\ell K_h^2(t-t_\ell)\asymp T/h\), and
\(\sigma_{nh}^2(t)=9\xi_{1h}^2(t)/n\asymp\rho_n^6/n\), we obtain
\[
\begin{aligned}
\sigma_X^2
&=
n\rho_n \sigma_{nh}^{-2}(t)
{n\choose2}^{-2}
\left\{\sum_{m=1}^TK_h(t-t_m)\right\}^{-2}
\sum_{i<j}\sum_{\ell=1}^T
K_h^2(t-t_\ell)
\theta_{ij}^2(t)
\operatorname{var}\{A_{ij}(t_\ell)\mid X\}  \\
&\gtrsim
n\rho_n
\left(\frac{n}{\rho_n^6}\right)
n^{-4}
\rho_n
\frac{T/h}{T^2}
\left(n^2\rho_n^4\right) \\
&\gtrsim
(Th)^{-1}.
\end{aligned}
\]
The matching upper bound follows from
\(|\theta_{ij}(t)|\le C\rho_n^2\) and
\(\operatorname{var}\{A_{ij}(t_\ell)\mid X\}\le C\rho_n\). Hence,
\(
    \sigma_X^2\asymp (Th)^{-1}
\)
with probability at least \(1-O(n^{-1})\).

We now apply the Berry--Esseen inequality conditionally on \(X\). On the event above,
\[
    \operatorname{var}\{\check\Delta_n(t)\mid X\}
    =
    (n\rho_n)^{-1}\sigma_X^2
    \asymp
    (n\rho_nTh)^{-1}.
\]
Define
\(
    Z_{ij\ell}:=\sigma_{nh}^{-1}(t) \binom n2^{-1}w_\ell(t)\theta_{ij}(t)\eta_{ij}(t_\ell)\), then
        \(
    \check\Delta_n(t)=\sum_{i<j}\sum_{\ell=1}^T Z_{ij\ell}.
\)
By
\Cref{basic moment,kappa j},
we have 
\(
    |Z_{ij\ell}|
    \lesssim
    n^{-3/2}(\rho_n Th)^{-1},
\) uniformly in $i,j,\ell$. 
Therefore, 
\[\frac{|Z_{ij\ell}|}{\sqrt{\operatorname{var}\{\check\Delta_n(t)\mid X\}}} \lesssim n^{-1}(\rho_n Th)^{-1/2} =o(1),\]
where the last step follows from \Cref{assump: sparsity}. 
By Theorem 2.1 of \cite{chen2001non}, $\left\Vert \mcalF_{\frac{\check\Delta_n(t)}{\sqrt{var(\check\Delta_n(t)|X)}}}-\Phi\right\Vert_\infty$ is upper bounded by the third moment of each term in the summands of $\check\Delta_n(t)/\sqrt{var(\check\Delta_n(t)|X)}$:
\[
\begin{aligned}
\left\|
\mcalF_{\check\Delta_n\mid X}
-
\mcalF_{\mcalN(0,(n\rho_n)^{-1}\sigma_X^2)}
\right\|_\infty  & \lesssim
\sum_{i<j}\sum_{\ell=1}^T
    \frac{\bbE\{|Z_{ij\ell}|^3\mid X\}}{
\operatorname{var}^{3/2}\{\check\Delta_n(t)\mid X\}
}\\
& \lesssim \frac{n^2\{\sum_{\ell=1}^T |K_h(t-t_\ell)|^3\}
        n^{-6}T^{-3}n^{3/2}\rho_n^{-3}\rho_n}
        {n^{-3/2}(\rho_n Th)^{-3/2}}\\
& \lesssim n^{-1}(\rho_n Th)^{-1/2},
\end{aligned}
\]
because $\sum_{\ell=1}^T |K_h(t-t_\ell)|^3=O(Th^{-2})$ by \Cref{kappa j}. 
This proves the bound on the above event and hence in the stated $\tilde O_{p,1}$ sense.

\end{proof}

\begin{proof}[\textbf{Proof of Part (b)}]
Conditional on $X$ and $\delta_T$, the random variable $J_n(t)+\delta_T$ is
fixed, whereas $\check\Delta_n(t)$ and $\tilde\Delta_n(t)$ have conditional
CDFs whose Kolmogorov distance is bounded by part (a).  Therefore, uniformly
in $u$,
\[
\begin{aligned}
&\left|
\mcalF_{J_n+\check\Delta_n+\delta_T}(u)
-\mcalF_{J_n+\tilde\Delta_n+\delta_T}(u)
\right| \\
&\qquad=
\left|
\bbE\left[
\mcalF_{\check\Delta_n\mid X}\{u-J_n-\delta_T\}
-
\mcalF_{\tilde\Delta_n\mid X}\{u-J_n-\delta_T\}
\right]
\right|                                      \\
&\qquad\le C n^{-1}(\rho_nTh)^{-1/2}+O(n^{-1}),
\end{aligned}
\]
where the $O(n^{-1})$ term accounts for the exceptional event in the
$\tilde O_{p,1}$ bound. This proves part (b). 
\end{proof}

\subsection{Proof of \Cref{8.54}}
\label{Proof of 8.54}
\begin{proof}
    Let
\(
    Y:=J_n(t)+\check\Delta_n(t)+\delta_T\) and
\(
    Z:=\hat T_{nh}(t)-J_n(t)-\check\Delta_n(t).
\)
Then $\hat T_{nh}(t)+\delta_T=Y+Z$.  By
\Cref{prop:studentization-expansion}, we have
\(
    Z=\tilde O_{p,1}\{\mcalM(n,\rho_n,T,h)\}.
\)
It remains to verify the smoothness condition in \Cref{random perturbation} for $Y$.

Let $\tilde Y:=J_n(t)+\tilde\Delta_n(t)+\delta_T$.  For any $u\in\bbR$ and
$a>0$,
\[
\begin{aligned}
\mcalF_Y(u+a)-\mcalF_Y(u)
&\le
2\|\mcalF_Y-\mcalF_{\tilde Y}\|_\infty
+2\|\mcalF_{\tilde Y}-G_{nh}\|_\infty
+G_{nh}(u+a)-G_{nh}(u).
\end{aligned}
\]
By \Cref{8.55,8.56}, the first two terms are bounded by
\[
    O\{n^{-1}(\rho_nTh)^{-1/2}+n^{-1}\}
    +O\{n^{-1}\log n+(n\rho_nTh)^{-1}\log n\},
\]
respectively.
The Edgeworth approximation $G_{nh}$ has a uniformly bounded density, because
it is a standard normal density multiplied by a fixed-degree polynomial plus $\Phi$.  Hence,
\[
    G_{nh}(u+a)-G_{nh}(u)\le Ca,
\]
for some constant $C$.  
Thus, the smoothness condition in \Cref{random perturbation} holds with
\[
    \zeta_n=
    n^{-1}\log n+n^{-1}(\rho_nTh)^{-1/2}+(n\rho_nTh)^{-1}\log n,
\]
which is bounded by $\mcalM(n,\rho_n,T,h)$.
Applying \Cref{random perturbation} gives
\[
    \left\|
    \mcalF_{\hat T_{nh}+\delta_T}
    -\mcalF_{J_n+\check\Delta_n+\delta_T}
    \right\|_\infty
    =O\{\mcalM(n,\rho_n,T,h)\}.
\]

\end{proof}

\subsection{Proof of Lemma \ref{8.56}}
\label{Proof of Lemma 8.56}
We first state four auxiliary bounds (\Cref{a,b,c,d}) for the proof of \Cref{8.56}. Let $Ch.f.(G;s):=\int_{-\infty}^\infty e^{isx}dG(x)$ denote the characteristic function for any function $G$.

\begin{lemma}
    \label{a}Under \Cref{assump: time obs,assump: smooth graphon,non-u,assump: sparsity},
    for any fixed $\epsilon'=\frac{\epsilon}{2}>0$, we  have $$\int_{n^{\epsilon'}}^n\left\vert\frac{Ch.f.(G_{nh};s)}{s}\right\vert ds=O(n^{-1}).$$
\end{lemma}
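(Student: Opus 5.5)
The plan is to compute the characteristic function of $G_{nh}$ in closed form and observe that it carries a Gaussian factor $e^{-s^2/2}$, which makes the integral over $[n^{\epsilon'},n]$ super-polynomially small.

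First, write
\[
G_{nh}(x)=\Phi(x)+\frac{1}{\sqrt n}\,\varphi(x)\,p(x),\qquad p(x)=a_h x^2+b_h,
\]
with
\[
a_h=\frac{\bbE[g_{1h}^3(X_1,t)]}{3\xi_{1h}^3}+\frac{\bbE[g_{1h}(X_1,t)g_{1h}(X_2,t)g_{2h}(X_1,X_2,t)]}{\xi_{1h}^3},
\]
and $b_h$ defined analogously. I first check that $a_h,b_h$ are $O(1)$ uniformly in $n$:
\begin{itemize}
\item by the proof of \Cref{variance Hoeffding}, $\|g_{1h}\|_\infty,\|g_{2h}\|_\infty\le C\rho_n^3$, so both numerators are $O(\rho_n^9)$;
\item by \Cref{xi}, which applies since $h\to0$ and $Th\to\infty$ under \Cref{assump: time obs}, we have $\xi_{1h}^3\gtrsim\rho_n^9$.
\end{itemize}
Hence $|a_h|+|b_h|\le C$.

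Next I compute $Ch.f.(G_{nh};s)=\int e^{isx}\,dG_{nh}(x)$ term by term.
\begin{itemize}
\item The $\Phi$ part gives $e^{-s^2/2}$.
\item For the correction, $dG_{nh}$ contains $n^{-1/2}\{\varphi(x)p(x)\}'\,dx=n^{-1/2}\varphi(x)\{p'(x)-xp(x)\}\,dx$, and $p'(x)-xp(x)$ is a polynomial of degree $3$ with $O(1)$ coefficients.
\item Using $\int e^{isx}x^k\varphi(x)\,dx=(-i\,d/ds)^k e^{-s^2/2}=q_k(s)e^{-s^2/2}$, with $q_k$ a polynomial of degree $k$, the correction's transform equals $n^{-1/2}q(s)e^{-s^2/2}$ with $\deg q\le3$ and bounded coefficients.
\end{itemize}
Altogether,
\[
|Ch.f.(G_{nh};s)|\le C(1+|s|^3)e^{-s^2/2}\qquad\text{for all } s.
\]

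Finally I bound the integral:
\[
\int_{n^{\epsilon'}}^n\left|\frac{Ch.f.(G_{nh};s)}{s}\right|ds\le C\int_{n^{\epsilon'}}^\infty (s^{-1}+s^2)e^{-s^2/2}\,ds\le C' n^{2\epsilon'}e^{-n^{2\epsilon'}/2}.
\]
The right-hand side decays faster than any polynomial and is therefore $O(n^{-1})$.

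The only nontrivial step is the uniform boundedness of the Edgeworth coefficients, which rests on the nondegeneracy $\xi_{1h}^2\gtrsim\rho_n^6$ from \Cref{xi}. Everything else is an explicit Gaussian Fourier computation.
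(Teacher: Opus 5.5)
Your proof is correct and takes the same route as the paper: both write $Ch.f.(G_{nh};s)$ as $e^{-s^2/2}$ times a fixed-degree polynomial, and both use the Gaussian factor to make the integral over $[n^{\epsilon'},n]$ super-polynomially small. You also verify explicitly that the polynomial's coefficients are $O(1)$, using $\|g_{1h}\|_\infty,\|g_{2h}\|_\infty\le C\rho_n^3$ and $\xi_{1h}^3\gtrsim\rho_n^9$ from \Cref{xi}, which makes precise a point the paper leaves implicit.
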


\begin{lemma}\label{b}
    Under \Cref{assump: time obs,assump: smooth graphon,non-u,assump: sparsity}, for any fixed $C_1>0$ and sufficiently large $c_\delta$, we have $$\int_{C_1\sqrt n}^n\left\vert\frac{\bbE[\exp(is(J_n+\tilde \Delta_n+\delta_T))]}{s}\right\vert ds=O(n^{-1}\log n)$$
\end{lemma}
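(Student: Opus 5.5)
The key observation is that $\delta_T$ is independent of $(X,\{A_{ij}(t_\ell)\})$ and also of the auxiliary Gaussian $\tilde\Delta_n$. The characteristic function therefore factorizes as
\[
\bbE[\exp(is(J_n+\tilde\Delta_n+\delta_T))]=\bbE[\exp(is\delta_T)]\cdot \bbE[\exp(is(J_n+\tilde\Delta_n))]=e^{-c_\delta s^2\log n/(2n)}\,\bbE[\exp(is(J_n+\tilde\Delta_n))].
\]
Since $|\bbE[\exp(is(J_n+\tilde\Delta_n))]|\le 1$, the integrand is bounded by $s^{-1}\exp\{-c_\delta s^2\log n/(2n)\}$. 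I will not try to exploit any cancellation in $J_n+\tilde\Delta_n$ itself. This is the whole point of the Gaussian perturbation: on the range $s\ge C_1\sqrt n$ the damping factor from $\delta_T$ alone is strong enough.

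I then bound the integral directly. Substitute $v=s/\sqrt n$, so that $ds/s=dv/v$ and the range becomes $v\in[C_1,\sqrt n]$. This gives
\[
\int_{C_1\sqrt n}^n \frac{e^{-c_\delta s^2\log n/(2n)}}{s}\,ds=\int_{C_1}^{\sqrt n}\frac{e^{-c_\delta v^2\log n/2}}{v}\,dv\le C_1^{-1}\int_{C_1}^{\infty}e^{-c_\delta v^2\log n/2}\,dv.
\]
On $v\ge C_1$ I bound $v^2\ge C_1 v$, which gives at most $C_1^{-1}\cdot\frac{2}{c_\delta C_1\log n}\,n^{-c_\delta C_1^2/2}$. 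Alternatively, I can bound the integrand crudely by $n^{-c_\delta C_1^2/2}/C_1$ and multiply by the interval length $\sqrt n$.

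In either case the result is $O(n^{1/2-c_\delta C_1^2/2})$. Choosing $c_\delta\ge 3/C_1^2$ makes this $O(n^{-1})$, which is in particular $O(n^{-1}\log n)$; this is where ``sufficiently large $c_\delta$'' is used. If $C_1$ is arbitrary but fixed, the requirement on $c_\delta$ depends on $C_1$. Since the statement says ``for any fixed $C_1>0$ and sufficiently large $c_\delta$'', this dependence is allowed.

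There is no real obstacle. The only care needed is to state the independence of $\delta_T$ from $(J_n,\tilde\Delta_n)$ explicitly, so that the expectation factorizes, and to check that $|\cdot|\le1$ holds for the remaining factor even though $\tilde\Delta_n$ is only conditionally Gaussian given $X$. That bound holds trivially for any random variable.
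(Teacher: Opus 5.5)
Your proposal is correct and follows essentially the paper's argument: factor out the Gaussian characteristic function $e^{-c_\delta s^2\log n/(2n)}$ of the independent $\delta_T$, bound the remaining factor in modulus by one, and take $c_\delta$ large relative to $C_1^{-2}$. The only cosmetic difference is that the paper bounds the integrand uniformly by $n^{-c_\delta C_1^2/2}/s$ and integrates $1/s$ to obtain the $O(n^{-1}\log n)$ bound, whereas your substitution $v=s/\sqrt n$ yields the slightly sharper $O(n^{-1})$.
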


\begin{lemma}
    \label{c}
    Assume \Cref{assump: time obs,assump: smooth graphon,non-u,assump: sparsity}.
    Let $C_1>0$ be a sufficiently small constant and $\epsilon'=\frac{\epsilon}{2}>0$ be a fixed constant. %
we have $$\int_{n^{\epsilon'}}^{C_1\sqrt n}\left\vert\frac{\bbE[\exp(is(J_n+\tilde \Delta_n+\delta_T))]}{s}\right\vert ds=O((n\rho_nTh)^{-1}\log n+n^{-1}\log n).$$
\end{lemma}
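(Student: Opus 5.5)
The statement to prove is Lemma~\ref{c}, the medium-frequency bound. I would condition on $X$ and exploit the factorization of the characteristic function. Given $X$, $\tilde\Delta_n$ is Gaussian with variance $(n\rho_n)^{-1}\sigma_X^2$, and $\delta_T$ is an independent Gaussian. Hence
\[
\bbE[\exp(is(J_n+\tilde\Delta_n+\delta_T))]=\bbE\Bigl[e^{isJ_n}\exp\bigl\{-\tfrac{s^2}{2}(n\rho_n)^{-1}\sigma_X^2\bigr\}\Bigr]\exp\bigl\{-\tfrac{s^2}{2}c_\delta n^{-1}\log n\bigr\}.
\]
By Lemma~\ref{8.55}(a), on an event of probability $1-O(n^{-1})$ we have $\sigma_X^2\asymp (Th)^{-1}$. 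On this event I would replace $\sigma_X^2$ by a deterministic proxy $\bar\sigma^2\asymp(Th)^{-1}$. The natural choice is the population analogue built from $\bbE[\theta_{ij}^2 \,\mathrm{var}(A_{ij}\mid X)]$. The error coming from the complement event contributes at most $O(n^{-1})\int_{n^{\epsilon'}}^{C_1\sqrt n}s^{-1}ds=O(n^{-1}\log n)$.

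Next I would control the fluctuation $\sigma_X^2-\bar\sigma^2$. This quantity is a U-statistic in $X$ of order $(Th)^{-1}\sqrt{\log n/n}$. Using $|e^{-a}-e^{-b}|\le|a-b|$, the replacement costs $s^2(n\rho_n)^{-1}\cdot(Th)^{-1}$ times this relative fluctuation. I would not bound that crudely. Instead I would expand $\sigma_X^2$ into its linear Hoeffding part in $X$ and absorb that part into the characteristic function of $J_n$, just as $L_n^{(3)}$ and $L_n^{(4)}$ are handled. The remaining piece is a deterministic Gaussian damping factor $e^{-s^2\bar\sigma^2/(2n\rho_n)}$ multiplying $\bbE[e^{isJ_n}]$, plus corrections of relative size $(n\rho_nTh)^{-1}s^2$ times centered U-statistic terms.

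I would then bound $|\bbE e^{isJ_n}|$ on $s\in[n^{\epsilon'},C_1\sqrt n]$ by the standard argument of \citet{zhang2022edgeworth} (their Lemma on the medium range), following Bickel--G\"otze--van Zwet. Write $J_n$ as $\alpha_h$ plus a normalized linear sum of $g_{1h}(X_i)/\xi_{1h}$ plus a degenerate degree-two term with bounded kernel $\tilde g_{2h}$ (Lemma~\ref{T_n U}). Split the indices into a block of size $m\asymp n/s^2\cdot\log n$ and condition on the rest. The linear part over the block then gives a factor $|\phi_g(s/\sqrt n)|^{m}\le e^{-cms^2/n}$. Here I use the non-degeneracy $\xi_{1h}^2\gtrsim\rho_n^6$ from Lemma~\ref{xi} and the Taylor bound $|\phi_g(u)|\le 1-cu^2$ for $|u|\le C_1$, valid for small $C_1$ since $g_{1h}/\xi_{1h}$ is bounded. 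The quadratic cross terms are handled by a Taylor expansion of $e^{isn^{-3/2}\sum\tilde g_{2h}}$. This yields $|\bbE e^{isJ_n}|\lesssim e^{-cs^2}+s^2n^{-1}+(\text{poly}(s)/n)$-type terms, whose integral against $ds/s$ over the range is $O(n^{-1}\log n)$. The extra damping and correction terms from $\tilde\Delta_n$ then contribute $\int (n\rho_nTh)^{-1}s^2\cdot|\text{char.\ fn.\ bound}|\,ds/s$. Since the bound decays like $e^{-cs^2}$ plus $O(n^{-1})$-level pieces, this gives $O((n\rho_nTh)^{-1}\log n)$.

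The main obstacle is the coupling between $\sigma_X^2$, which depends on $X$, and $e^{isJ_n}$. One cannot simply factor the expectation, and a naive bound loses a factor $s^2$ that is too large near $s\sim\sqrt n$. I would first try to show that the damping factor is bounded by one and monotone in $\sigma_X^2$, so that on the high-probability event $|\bbE[e^{isJ_n}\exp(-s^2\sigma_X^2/(2n\rho_n))]|$ is handled by conditioning on the outside-block variables. Within the block, the damping depends on block variables only through a sum of $O(m)$ bounded smooth terms. Treating it as a bounded smooth multiplier, I would run the same block argument, with $e^{-s^2\sigma_X^2/(2n\rho_n)}$ expanded to first order in the block-dependent part. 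Failing this, the Gaussian perturbation $\delta_T$ already supplies $e^{-c_\delta s^2\log n/(2n)}$, which for large $c_\delta$ makes everything above $s\gtrsim\sqrt{n/\log n}$ negligible; this can be used to trim the range.
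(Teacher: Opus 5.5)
You locate the right obstacle---the $X$-dependence of $\exp\{-s^2\sigma_X^2/(2n\rho_n)\}$ sitting next to $e^{isJ_n}$---and you even name the right remedy, but the way you carry it out fails on this frequency range.

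You propose to handle the linear Hoeffding part of $\sigma_X^2$ ``just as $L_n^{(3)}$ and $L_n^{(4)}$ are handled'', i.e.\ as a deterministic damping times $\bbE[e^{isJ_n}]$ plus first-order corrections of relative size $(n\rho_nTh)^{-1}s^2$. That is a Taylor expansion of the exponential. The paper does exactly this for $|s|\le n^{\epsilon'}$ in Lemma~\ref{d}, where $s^2\sigma_X^2/(n\rho_n)=o(1)$. For $s\asymp\sqrt n$, however, your expansion parameter is $(\rho_nTh)^{-1}$. The aggregated linear fluctuation $\frac{s^2}{2n^2\rho_n}\sum_i g_{\sigma;1}(X_i)$ can be of size $(\rho_nTh)^{-1}n^{-1/2}$. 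Since Assumption~\ref{assump: sparsity} allows $\rho_nTh$ down to $n^{-2/3+\epsilon}$ up to logarithms, neither quantity is small, and no fixed-order expansion is justified.

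The paper instead keeps the linear part exactly in the exponent. With $\bar g(x)=\bbE[\sigma_X^2]+g_{\sigma;1}(x)$ from Lemma~\ref{variance U}, the exponent is $\sum_i\{\frac{is}{\sqrt n\,\xi_{1h}}g_{1h}(X_i)-\frac{s^2}{2n^2\rho_n}\bar g(X_i)\}$ plus the degenerate quadratic part. This still factorizes over indices in the Bickel--G\"otze--van Zwet argument. Per index the extra real term is only $O(s^2/(n^2\rho_nTh))=O((n\rho_nTh)^{-1})=o(1)$, so the one-variable factor $\gamma_n(s)$ still satisfies $|\gamma_n(s)|\le e^{-s^2/(3n)}$.

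Only the degree-$\ge 2$ remainder $R_{\sigma,n}=\tilde O_{p,1}(n^{-1}(Th)^{-1}\log n)$ is removed perturbatively, bounded crudely in absolute value at cost $s^2n^{-2}(\rho_nTh)^{-1}\log n$. Integrating this against $ds/s$ up to $C_1\sqrt n$ is exactly where the lemma's $(n\rho_nTh)^{-1}\log n$ term comes from. The surrogate $\bbE[\sigma_X^2]+n^{-1}\sum_i g_{\sigma;1}(X_i)$ need not be nonnegative, so you also need its exponentially small probability of being $\le 0$; on that event the damping is at most $e^{Cs^2/(n\rho_nTh)}$.

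Second, the output you attribute to the block argument is too weak. A bound $|\bbE e^{isJ_n}|\lesssim e^{-cs^2}+s^2n^{-1}+\mathrm{poly}(s)/n$ integrates against $ds/s$ over $[n^{\epsilon'},C_1\sqrt n]$ to order one, not $O(n^{-1}\log n)$. What the argument actually delivers is a uniform bound. Take $m\asymp rn\log n/s^2$ and Taylor-expand only the block part $\square_n(m)$ to order $r$. Then $|\gamma_n(s)|^{m-2v}\le Cn^{-2r}$ and the remainder is $(s\sqrt m/n)^{r+1}\asymp(\log n/n)^{(r+1)/2}$, so $r=2$ gives $O(n^{-3/2}\log^{5/2}n)$ after integration.

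Your $\delta_T$ fallback cannot replace this. The factor $e^{-c_\delta s^2\log n/(2n)}$ equals the constant $e^{-c_\delta/2}$ at $s=\sqrt{n/\log n}$ and is $O(n^{-1})$ only for $s\gtrsim\sqrt{n/c_\delta}$. So it only lets you shrink $C_1$, which is the role of Lemma~\ref{b}.

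Your last-paragraph alternative---keep the exact, nonnegative $\sigma_X^2$ and freeze its out-of-block part---is workable. By bounded differences, $|\sigma_X^2-\bbE[\sigma_X^2\mid X_{m+1},\ldots,X_n]|\lesssim m/(nTh)$, which makes the block-dependent exponent $O(\log n/(n\rho_nTh))$. But bounding that piece crudely loses a factor $\log n$ after integration. To reach the stated rate you would have to expand it to first order and push the first-order term through the same factorization.
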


\begin{lemma}
\label{d}
    Under \Cref{assump: time obs,assump: smooth graphon,non-u,assump: sparsity}, for fixed $\epsilon'=\frac{\epsilon}{2}>0$ chosen such that $\epsilon'\leq1/7$, %
    we have $$\int_0^{n^{\epsilon'}}\left\vert\frac{\bbE[e^{is(J_n+\tilde\Delta_n+\delta_T)}]-Ch.f.(G_{nh};s)}{s}\right\vert ds=O(n^{-1}\log n+(n\rho_nTh)^{-1}).$$ 
\end{lemma}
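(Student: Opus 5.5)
The plan is to separate the three Gaussian/shift layers in $J_n+\tilde\Delta_n+\delta_T$ and to compare each one with the Edgeworth characteristic function on the low-frequency range $|s|\le n^{\epsilon'}$. Since $\delta_T$ is independent of everything and $\tilde\Delta_n\mid X\sim\mcalN(0,(n\rho_n)^{-1}\sigma_X^2)$, I would first write
\[
\bbE[e^{is(J_n+\tilde\Delta_n+\delta_T)}]
= e^{-\frac{c_\delta s^2\log n}{2n}}\,
\bbE\Bigl[e^{isJ_n}\,e^{-\frac{s^2\sigma_X^2}{2n\rho_n}}\Bigr].
\]
The target $Ch.f.(G_{nh};s)$ is the characteristic function of a variance-one Edgeworth law. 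It therefore suffices to control three differences:
\begin{enumerate}
\item[(i)] $\bbE[e^{isJ_n}]-Ch.f.(G_{nh};s)$;
\item[(ii)] the effect of the factor $e^{-s^2\sigma_X^2/(2n\rho_n)}$;
\item[(iii)] the effect of the factor $e^{-c_\delta s^2\log n/(2n)}$.
\end{enumerate}
Each difference is then divided by $s$ and integrated over $[0,n^{\epsilon'}]$.

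For (i), I would use the degree-two representation of $J_n$ from \Cref{T_n U}. Here $J_n$ is the constant shift $\alpha_h=O(n^{-1/2})$, plus the normalized linear sum $L_n^{(1)}$, plus the degenerate degree-two U-statistic with bounded kernel $\tilde g_{2h}$, whose second moment is $O(1)$ by \Cref{T_n U}. The plan is to repeat the standard low-frequency characteristic-function expansion for such statistics, as in (8.56) of \citet{zhang2022edgeworth} and Bickel--G\"otze--van Zwet. Expanding $e^{is(\alpha_h+\text{U-part})}$ to second order around $e^{isL_n^{(1)}}$ and using the i.i.d.\ structure of the $g_{1h}(X_i,t)$, this gives, for $|s|\le n^{\epsilon'}$,
\[
\bigl|\bbE e^{isJ_n}-Ch.f.(G_{nh};s)\bigr|\lesssim n^{-1}(|s|+|s|^{6})e^{-s^2/4}.
\]
The ingredients are that $g_{1h}/\xi_{1h}$ is bounded (by \Cref{xi}) and that the coefficients of $G_{nh}$ match the cumulants $\bbE g_{1h}^3$ and $\bbE[g_{1h}g_{1h}g_{2h}]$ produced by $\alpha_h$ and $\tilde g_{2h}$. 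Integrating against $ds/s$ yields $O(n^{-1})$. The same expansion also gives the crude bound $|\bbE e^{isJ_n}|\lesssim e^{-s^2/4}+n^{-1}(|s|+|s|^6)e^{-s^2/4}$ on this range, which I will reuse below.

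For (iii), I would use $|1-e^{-a}|\le a$ with $a=c_\delta s^2\log n/(2n)$, together with the crude bound on $|\bbE e^{isJ_n}|$. Then $\int_0^{n^{\epsilon'}} s^{-1}\cdot s^2 n^{-1}\log n\,e^{-s^2/4}\,ds=O(n^{-1}\log n)$. For (ii), the difficulty is that $\sigma_X^2$ depends on $X$ and so cannot be pulled out of the expectation. I would handle this in three steps.
\begin{enumerate}
\item[(a)] Restrict to the event of \Cref{8.55}(a), where $\sigma_X^2\asymp(Th)^{-1}$. Its complement has probability $O(n^{-1})$. On the complement, bound the integrand by $|s|^{-1}\min\{2,\,|s|\,\bbE|J_n+\tilde\Delta_n+\delta_T|\}$; since this is $O(|s|)$ near zero and $O(1/|s|)$ beyond, it integrates to $O(n^{-1}\log n)$ over $[0,n^{\epsilon'}]$.
\item[(b)] Write $\sigma_X^2=\bar\sigma^2+r_X$, where $\bar\sigma^2$ is deterministic (the expectation of the U-statistic form of $\sigma_X^2$ from the proof of \Cref{8.55}). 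By Bernstein's inequality and U-statistic concentration, $r_X=\tilde O_{p,1}\{(Th)^{-1}n^{-1/2}\sqrt{\log n}\}$.
\item[(c)] With $\bar\sigma^2$ deterministic, the factor $e^{-s^2\bar\sigma^2/(2n\rho_n)}$ comes out of the expectation. Combining $|1-e^{-a}|\le a$ with the crude bound on $|\bbE e^{isJ_n}|$ gives a contribution $O((n\rho_nTh)^{-1})$.
\end{enumerate}
The fluctuation $r_X$ contributes at most
\[
\int_0^{n^{\epsilon'}} |s|\,(n\rho_nTh)^{-1}n^{-1/2}\sqrt{\log n}\,ds
\lesssim n^{2\epsilon'-1/2}\sqrt{\log n}\,(n\rho_nTh)^{-1},
\]
because no Gaussian decay in $s$ is available for this piece. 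This is $O((n\rho_nTh)^{-1})$ precisely because $\epsilon'\le 1/7<1/4$, and this is where the restriction on $\epsilon'$ is used.

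I expect the main obstacle to be step (i). It requires verifying that the cumulant bookkeeping of $\alpha_h$, $L_n^{(3)}$, $L_n^{(4)}$ and $g_{2h}$ reproduces exactly the two coefficients in $G_{nh}$, uniformly in $h$, with all constants controlled through the lower bound $\xi_{1h}\gtrsim\rho_n^3$ from \Cref{xi}. Here I would follow \citet{zhang2022edgeworth} line by line, replacing $g_1,g_2$ by $g_{1h},g_{2h}$. Summing the contributions from (i), (ii) and (iii) gives the claimed bound $O(n^{-1}\log n+(n\rho_nTh)^{-1})$.
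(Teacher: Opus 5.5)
Your argument is correct. Steps (i) and (iii) follow the paper. The paper also expands $\bbE e^{isJ_n}$ around $e^{isL_n^{(1)}}$ using the i.i.d.\ structure and Petrov's lemma, and matches the coefficients coming from $\alpha_h$, $L_n^{(2)}$, $L_n^{(3)}$, $L_n^{(4)}$. It likewise controls the $\delta_T$ factor against the Gaussian decay of $\bbE e^{isJ_n}$.

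You genuinely differ in the treatment of the random variance $\sigma_X^2$.
\begin{itemize}
\item The paper Taylor-expands the combined factor $\exp\{-\frac{s^2}{2}(\sigma_X^2/(n\rho_n)+c_\delta n^{-1}\log n)\}$ to first order, which costs a second-order remainder $R_G$. It then inserts the Hoeffding decomposition of \Cref{variance U}, $\sigma_X^2=\bbE[\sigma_X^2]+n^{-1}\sum_i g_{\sigma;1}(X_i)+R_{\sigma,n}$. It asserts, by ``the same expansion technique'', that $\bbE[e^{isJ_n}g_{\sigma;1}(X_1)]$ still decays like $e^{-s^2/12}$. Only $R_{\sigma,n}=\tilde O_{p,1}(n^{-1}(Th)^{-1}\log n)$ is integrated without decay.
\item You instead freeze $\sigma_X^2$ at a deterministic centre and treat that part exactly. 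You bound the whole fluctuation, linear projection included, by its $L^1$ size $(Th)^{-1}n^{-1/2}\sqrt{\log n}$ with no decay in $s$. The resulting loss $n^{2\epsilon'-1/2}$ is absorbed because $\epsilon'<1/4$.
\end{itemize}
Your route is more elementary. It needs only concentration of $\sigma_X^2$ and the deterministic bound $0\le\sigma_X^2\lesssim(Th)^{-1}$. It also avoids both $R_G$ and the decay claim for $\bbE[e^{isJ_n}g_{\sigma;1}(X_1)]$, which the paper only sketches. The paper's route, granting that claim, wastes less in this step and would tolerate any $\epsilon'<1/2$ there. Its restriction $\epsilon'\le 1/7$ is needed only for $R_G$ and for the cubic remainder in the expansion of $\bbE e^{isJ_n}$.

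Two minor repairs.
\begin{itemize}
\item Your step (a) is unnecessary, and as written it does not bound an actual term of your decomposition. Since $\sigma_X^2\ge0$ and $\bar\sigma^2\ge0$, you have pointwise
\[
\bigl|e^{-s^2\sigma_X^2/(2n\rho_n)}-e^{-s^2\bar\sigma^2/(2n\rho_n)}\bigr|\le\tfrac{s^2}{2n\rho_n}|r_X|.
\]
The bound $\bbE|r_X|\lesssim(Th)^{-1}(n^{-1/2}\sqrt{\log n}+n^{-1})$ follows from the $\tilde O_{p,1}$ statement together with $|r_X|\lesssim(Th)^{-1}$ on the exceptional set. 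No event restriction is needed.
\item The pointwise bound $n^{-1}(|s|+|s|^6)e^{-s^2/4}$ in (i) is stronger than what a second-order expansion delivers. The cubic Taylor remainder is only $O(|s|^3\,\bbE|J_n-L_n^{(1)}|^3)=O(|s|^3n^{-3/2}\log^3 n)$, with no Gaussian decay, exactly as in the paper. This is harmless: it integrates to $o(n^{-1})$ for $\epsilon'<1/6$. The extra polynomial term it adds to your crude bound on $|\bbE e^{isJ_n}|$ contributes only $o(n^{-1}\log n)$ in (iii) and $o((n\rho_nTh)^{-1})$ in (c). You should still carry it explicitly rather than claim the cleaner form.
\end{itemize}
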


The proofs of \Cref{a,b,c,d} are deferred to \Cref{Proof of a}, \Cref{Proof of b}, \Cref{Proof of c}, \Cref{Proof of d}, respectively.

\begin{proof}[\textbf{Proof of \Cref{8.56}}]
   We apply \Cref{smoothing lemma} with $\gamma=n$, and split the smoothing integral into
\[
    [0,n^{\epsilon'}],\qquad
    [n^{\epsilon'},C_1\sqrt n],\qquad
    [C_1\sqrt n,n].
\]
The first interval is controlled by \Cref{d}; the second and third intervals
are controlled by \Cref{c,b}; and the tail of the Edgeworth characteristic
function is controlled by \Cref{a}.  Since $\sup_u |G_{nh}'(u)|=O(1)$, the
smoothing remainder is $O(n^{-1})$.  Combining the four bounds gives
\[
    \left\|
        \mcalF_{J_n+\tilde\Delta_n+\delta_T}-G_{nh}
    \right\|_\infty
    =O\{n^{-1}\log n+(n\rho_nTh)^{-1}\log n\}.
\]
\end{proof}

\subsubsection{Proof of \Cref{a}}
\label{Proof of a}
\begin{proof}
The characteristic function associated with the Edgeworth expansion has the
form
\[
    Ch.f.(G_{nh};s)=e^{-s^2/2}P(s),
\]
where $P$ is a fixed-degree polynomial.  For every fixed integer $d\ge -1$,
there exists $C_d<\infty$ such that $|s|^d e^{-s^2/2}\le C_d e^{-s^2/3}$ for
$|s|>1$.  Therefore,
\[
    \int_{n^{\epsilon'}}^n
    \left|\frac{Ch.f.(G_{nh};s)}{s}\right|ds
    \le C\int_{n^{\epsilon'}}^\infty e^{-s^2/3}ds
    =O(n^{-1}).
\]
\end{proof}

\subsubsection{Proof of \Cref{b}}
\label{Proof of b}
\begin{proof}
Conditional on $X$, $\tilde\Delta_n\sim N\{0,(n\rho_n)^{-1}\sigma_X^2\}$.
Since $\delta_T\sim N(0,c_\delta n^{-1}\log n)$ is independent of all other
variables,
\[
\begin{aligned}
\bbE e^{is(J_n+\tilde\Delta_n+\delta_T)}
&=
\exp\left(-\frac12 c_\delta n^{-1}\log n\,s^2\right)
\bbE[e^{isJ_n}\bbE[e^{is\tilde\Delta_n}|X]]\\
&=
\exp\left(-\frac12 c_\delta n^{-1}\log n\,s^2\right)
\bbE\left[e^{isJ_n}
\exp\left\{-\frac12(n\rho_n)^{-1}\sigma_X^2s^2\right\}\right].
\end{aligned}
\]
The absolute value of the expectation is at most one.  Hence, for
$|t|\ge C_1\sqrt n$,
\[
    \left|\bbE e^{is(J_n+\tilde\Delta_n+\delta_T)}\right|
    \le \exp\left(-\frac12 c_\delta C_1^2\log n\right).
\]
Choosing $c_\delta$ large enough makes the above $O(n^{-1})$.  Thus, we have
\[
    \int_{C_1\sqrt n}^n
    \left|
        \frac{\bbE e^{is(J_n+\tilde\Delta_n+\delta_T)}}{s}
    \right|ds
    \le O(n^{-1})\int_{C_1\sqrt n}^n s^{-1}ds
    =O(n^{-1}\log n).
\]

\end{proof}

\subsubsection{Proof of \Cref{c}}
\label{Proof of c}

We first collect some facts about the conditional variance
\(\sigma_X^2=n\rho_n\,var(\check\Delta_n\mid X)\) that will be used for proving \Cref{c}.

\begin{lemma}[Hoeffding decomposition of $\sigma_X^2$ and lower-tail bound]
    \label{variance U}
Under \Cref{assump: time obs,assump: smooth graphon,non-u,assump: sparsity}, we have
\[
    \sigma_X^2
    =
    \bbE[\sigma_X^2]
    +
    \frac1n\sum_{i=1}^n g_{\sigma;1}(X_i)
    +
    R_{\sigma,n},
    \qquad \text{with} \quad 
    R_{\sigma,n}
    =
    \tilde O_{p,1}\{n^{-1}(Th)^{-1}\log n\}.
\]
where $g_{\sigma;1}$ is defined in \eqref{eq:def-gsigma1} with \(
    \|g_{\sigma;1}\|_\infty\le C(Th)^{-1}
\), \(|R_{\sigma,n}|\le C(Th)^{-1}\) for some constant $C>0$, and
\(
    \bbE[\sigma_X^2]\asymp (Th)^{-1}
\).
Moreover, there exists a positive constant $c_1>0$ such that
\[
    \bbP\left(
        \bbE[\sigma_X^2]
        +
        \frac1n\sum_{i=1}^n g_{\sigma;1}(X_i)
        \le 0
    \right)
    \le
    \exp(-c_1 n).
\]
\end{lemma}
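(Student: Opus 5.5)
The plan is to write $\sigma_X^2$ explicitly as a function of the latent variables only, recognise it (up to negligible diagonal pieces) as a U-statistic of fixed order with a bounded kernel, and then apply the Hoeffding decomposition together with exponential concentration. From the proof of \Cref{8.55}(a),
\[
\sigma_X^2=n\rho_n\,\sigma_{nh}^{-2}(t){n\choose2}^{-2}\sum_{i<j}\sum_{\ell=1}^T w_\ell^2(t)\,\theta_{ij}^2(t)\,v_{ij}(t_\ell),\qquad v_{ij}(s):=\operatorname{var}\{A_{ij}(s)\mid X_i,X_j\}.
\]
Here $\sigma_{nh}(t)$ and $w_\ell(t)$ are deterministic, and $v_{ij}(s)$ depends only on $(X_i,X_j)$. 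Expanding
\[
\theta_{ij}^2=\frac{9}{(n-2)^2}\sum_{k,k'\neq i,j}\tilde W_{ik}\tilde W_{jk}\tilde W_{ik'}\tilde W_{jk'}
\]
shows that $\sigma_X^2$ is a U-statistic of order four in $(X_i,X_j,X_k,X_{k'})$ with distinct indices, plus a diagonal part with $k=k'$. The four-index kernel is
\[
H(x_i,x_j,x_k,x_{k'})=c_{n}\sum_\ell w_\ell^2\,v(x_i,x_j,t_\ell)\,\tilde W(x_i,x_k)\tilde W(x_j,x_k)\tilde W(x_i,x_{k'})\tilde W(x_j,x_{k'}),
\]
which I would symmetrise over the four arguments. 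Using $\sigma_{nh}^2\asymp\rho_n^6/n$, $\sum_\ell w_\ell^2\asymp (Th)^{-1}$, $v=O(\rho_n)$ and $\tilde W=O(\rho_n)$ from \Cref{basic moment} and \Cref{kappa j}, the normalisation gives $\|H\|_\infty\le C(Th)^{-1}$. The diagonal $k=k'$ part has one fewer summation index, so it is deterministically bounded by $Cn^{-1}(Th)^{-1}$ and goes into $R_{\sigma,n}$.

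Next I would take the Hoeffding decomposition of the order-four U-statistic: its mean, the first projection, and the degenerate parts of orders $2$ to $4$. The first projection is
\[
g_{\sigma;1}(x):=4\bigl\{\bbE[H\mid X_1=x]-\bbE H\bigr\},
\]
with the factor of four from the Hoeffding coefficient; this is the object defined in \eqref{eq:def-gsigma1}, and it immediately satisfies $\|g_{\sigma;1}\|_\infty\le C(Th)^{-1}$. The degenerate parts of order $r\ge2$ have bounded kernels of size $(Th)^{-1}$. The multivariate Bernstein inequality for degenerate U-statistics (Theorem 1 in \cite{10.1214/EJP.v12-430}, as already used in \Cref{variance Hoeffding}) then bounds them by $n^{-r/2}(Th)^{-1}\log^{r/2}n$ with probability $1-O(n^{-1})$. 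The $r=2$ term dominates, which yields $R_{\sigma,n}=\tilde O_{p,1}\{n^{-1}(Th)^{-1}\log n\}$. The deterministic bound $|R_{\sigma,n}|\le C(Th)^{-1}$ follows because every piece is an average of kernels bounded by $C(Th)^{-1}$. The difference between $\bbE[\sigma_X^2]$ and $\bbE H$, caused by the diagonal part, is absorbed the same way.

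For $\bbE[\sigma_X^2]\asymp(Th)^{-1}$: the upper bound is immediate from $\|H\|_\infty$. For the lower bound I would repeat the expectation-level part of the argument in \Cref{8.55}(a). There, \Cref{non-u} plus Cauchy--Schwarz gave $\bbE[F f_2^2]\gtrsim\rho_n^4$, and smoothness transferred this to the $f_{2h}$-weighted version. Since $\bbE[\theta_{ij}^2\mid X_i,X_j]\ge 9f_{2h}^2(X_i,X_j)+O(\rho_n^4/n)$ and $v\gtrsim\rho_n F$, taking expectations gives $\bbE[\sigma_X^2]\ge c_0(Th)^{-1}$ with no concentration step needed.

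The lower-tail bound is then a plain Hoeffding inequality for the i.i.d. bounded centred variables $g_{\sigma;1}(X_i)$:
\[
\bbP\Bigl(\tfrac1n\textstyle\sum_i g_{\sigma;1}(X_i)\le -c_0(Th)^{-1}\Bigr)\le\exp\{-2n c_0^2/(4C^2)\}.
\]
The factors of $(Th)^{-1}$ cancel, so $c_1=c_0^2/(2C^2)$. I expect the main obstacle to be the bookkeeping that reduces $\sigma_X^2$ cleanly to a symmetric order-four U-statistic with the correct normalisation, in particular handling the $k=k'$ diagonal and the $n$-dependent $(n-2)^{-2}$ factors. A second point needing care is checking that the lower bound on the mean holds uniformly in $h$ and $T$, with the smoothing error $\rho_n^4(h^\nu+(Th)^{-1})$ negligible.
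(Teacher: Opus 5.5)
Your proposal is correct, but it organizes the decomposition differently from the paper.

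\textbf{Decomposition.} The paper does not expand $\theta_{ij}^2$ as a double sum over $(k,k')$. Instead it centers $\theta_{ij}$ at its conditional mean, $\theta_{ij}=3f_{ij}+\frac{3}{n-2}\sum_k\psi_{ij;k}$ with $f_{ij}=\bbE[\tilde W_{ik}\tilde W_{jk}\mid X_i,X_j]$. Squaring then gives three pieces:
\begin{itemize}
\item an explicit order-two U-statistic $U_{\sigma,2}$ (the $9f_{ij}^2$ term);
\item an order-three U-statistic $U_{\sigma,3}$ (the cross term);
\item a squared-fluctuation piece $R_{\sigma,0}$.
\end{itemize}
The last piece is not handled by U-statistic theory at all. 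A scalar Bernstein bound plus a union bound over pairs gives $\max_{i<j}|(n-2)^{-1}\sum_k\psi_{ij;k}|=O(\rho_n^2\sqrt{\log n/n})$, hence $R_{\sigma,0}=\tilde O_{p,1}\{n^{-1}(Th)^{-1}\log n\}$. The paper's $g_{\sigma;1}$ is the first projection of $U_{\sigma,2}+U_{\sigma,3}$ only.

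\textbf{Your $g_{\sigma;1}$.} Your single order-four kernel absorbs all three pieces at once. The $\psi_{ij;k}\psi_{ij;k'}$ part has zero first projection, because the two factors are conditionally independent and centered given $(X_i,X_j)$. So your $g_{\sigma;1}$ agrees with the paper's only up to the factor $(n-3)/(n-2)$ and the $k=k'$ diagonal, i.e.\ up to a deterministic $O\{n^{-1}(Th)^{-1}\}$ function. It is therefore not literally the object in \eqref{eq:def-gsigma1}, but the discrepancy is harmlessly absorbed into $R_{\sigma,n}$.

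\textbf{Mean lower bound.} You argue at the expectation level via Jensen: $\bbE[\theta_{ij}^2\mid X_i,X_j]\ge 9f_{2h}^2(X_i,X_j)$ holds exactly, so your $O(\rho_n^4/n)$ correction is unnecessary. The paper instead deduces $\bbE[\sigma_X^2]\gtrsim(Th)^{-1}$ from the high-probability lower bound in Lemma~\ref{8.55}(a) combined with the deterministic upper bound. Both rest on the same transfer from $\bbE[Ff_2^2]\gtrsim\rho_n^4$ to its smoothed analogue. The lower-tail bound is the same Hoeffding/Bernstein step in both.

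\textbf{Comparison.} The paper's route keeps the low-order kernels explicit and needs only scalar concentration for the quadratic fluctuation. Yours is more mechanical: it needs one bounded kernel and the generic exponential inequality for degenerate U-statistics. Your expectation-level argument for the mean also avoids the paper's detour through a high-probability statement.
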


\begin{proof}[\textbf{Proof of \Cref{variance U}}]
    Recall from the proof of \Cref{8.55} that, conditional on \(X\),
\[
\begin{aligned}
\sigma_X^2
&=
n\rho_n\sigma_{nh}^{-2}
{n\choose2}^{-2}
S_K(t)^{-2}
\sum_{i<j}\sum_{\ell=1}^T
K_h^2(t-t_\ell)
\theta_{ij}^2(t)
var\{A_{ij}(t_\ell)\mid X\},
\end{aligned}
\]
where
\[
    S_K(t)=\sum_{\ell=1}^T K_h(t-t_\ell),
    \qquad
    \theta_{ij}(t)=\frac{3}{n-2}\sum_{k\ne i,j}
    \tilde W_{ik}(t)\tilde W_{jk}(t).
\]
For ease of proof, we define the shorthands
\[
c_{\sigma}(n,\rho_n, T,h):= n\rho_n\sigma_{nh}^{-2}
{n\choose2}^{-2}
S_K(t)^{-2}, 
\qquad
v_{ij}(t_\ell):= var\{A_{ij}(t_\ell)\mid X_i,X_j\},
\]
and
\[
    f_{ij}(t)
    :=
    \bbE[\tilde W_{ik}(t)\tilde W_{jk}(t) \mid X_i, X_j],
    \qquad 
    \psi_{ij;k}(t)
    :=
    \tilde W_{ik}(t)\tilde W_{jk}(t)-f_{ij}(t).
\]
Then, we have for each \(i<j\),
\[
    \theta_{ij}(t)
    =
    \frac{3}{n-2}\sum_{k\ne i,j}
    \tilde W_{ik}(t)\tilde W_{jk}(t)
    =
    3f_{ij}(t)
    +
    \frac{3}{n-2}\sum_{k\ne i,j}
    \psi_{ij;k}(t).
\]
Substituting the decomposition of \(\theta_{ij}(t)\) into the exact formula
for \(\sigma_X^2\) gives
\begin{align*}
\sigma_X^2
&=
c_{\sigma}(n,\rho_n, T,h)
\sum_{i<j}\sum_{\ell=1}^T
K_h^2(t-t_\ell)
\left\{
    3f_{ij}(t)
    +
    \frac{3}{n-2}\sum_{k\ne i,j}\psi_{ij;k}(t)
\right\}^2
v_{ij}(t_\ell)                                                  \\
&=
\underbrace{
c_{\sigma}(n,\rho_n, T,h)
\sum_{i<j}\sum_{\ell=1}^T
K_h^2(t-t_\ell)
9f_{ij}^2(t)v_{ij}(t_\ell)
}_{=:U_{\sigma,2}}
\\
&\quad+
\underbrace{
c_{\sigma}(n,\rho_n, T,h)
\sum_{i<j}\sum_{\ell=1}^T
K_h^2(t-t_\ell)
\frac{18}{n-2}f_{ij}(t)
\left(\sum_{k\ne i,j}\psi_{ij;k}(t)\right)
v_{ij}(t_\ell)
}_{=:U_{\sigma,3}}
\\
&\quad+
\underbrace{
c_{\sigma}(n,\rho_n, T,h)
\sum_{i<j}\sum_{\ell=1}^T
K_h^2(t-t_\ell)
\frac{9}{(n-2)^2}
\left(\sum_{k\ne i,j}\psi_{ij;k}(t)\right)^2
v_{ij}(t_\ell)
}_{=:R_{\sigma,0}} .
\end{align*}
The first term 
\(
    U_{\sigma,2}
    =
    {n\choose2}^{-1}
    \sum_{i<j}H_{\sigma,2}(X_i,X_j)
\)
is a degree-two U-statistic,
where
\[
    H_{\sigma,2}(X_i,X_j)
    :=
    {n\choose2}c_{\sigma}(n,\rho_n, T,h)
    \sum_{\ell=1}^T
    K_h^2(t-t_\ell)
    9f_{ij}^2(t)v_{ij}(t_\ell).
\]
The second term
\(
    U_{\sigma,3}
    =
    {n\choose3}^{-1}
    \sum_{i<j<k}H_{\sigma,3}(X_i,X_j,X_k)
\)
 is a degree-three U-statistic after
symmetrization,
where
\[
\begin{aligned}
H_{\sigma,3}(X_i,X_j,X_k)
&:=
{n\choose3}c_{\sigma}(n,\rho_n, T,h)\frac{18}{n-2}
\sum_{\ell=1}^T K_h^2(t-t_\ell)                         \\
&\quad\times
\Big[
    f_{ij}(t)\psi_{ij;k}(t)v_{ij}(t_\ell) +
    f_{ik}(t)\psi_{ik;j}(t)v_{ik}(t_\ell) +
    f_{jk}(t)\psi_{jk;i}(t)v_{jk}(t_\ell)
\Big].
\end{aligned}
\]
Now we  define \(g_{\sigma;1}\) as the first Hoeffding projection of
\(U_{\sigma,2}+U_{\sigma,3}\):
\begin{equation}
\label{eq:def-gsigma1}
\begin{aligned}
g_{\sigma;1}(x)
&:=
2\Big[
    \bbE\{H_{\sigma,2}(X_1,X_2)\mid X_1=x\}
    -
    \bbE\{H_{\sigma,2}(X_1,X_2)\}
\Big]                                                  \\
&\quad+
3\Big[
    \bbE\{H_{\sigma,3}(X_1,X_2,X_3)\mid X_1=x\}
    -
    \bbE\{H_{\sigma,3}(X_1,X_2,X_3)\}
\Big].
\end{aligned}
\end{equation}
Then \(\bbE\{g_{\sigma;1}(X_1)\}=0\), and using the facts 
\(
    S_K(t)\asymp T\) and \(
    \sum_{\ell=1}^T K_h^2(t-t_\ell)=O(T/h)
\) by \Cref{kappa j},
together with the bounds uniformly in $i,j,\ell$
\[
    v_{ij}(t_\ell)=O(\rho_n),\qquad
    f_{ij}(t)=O(\rho_n^2),\qquad
    \psi_{ij;k}(t)=O(\rho_n^2), \qquad
    \sigma_{nh}^2(t)\asymp \rho_n^6/n,
\]
we have
\[
    \|g_{\sigma;1}\|_\infty\le C(Th)^{-1}.
\]
The Hoeffding decomposition of the two U-statistics gives
\[
    U_{\sigma,2}+U_{\sigma,3}
    =
    \bbE(U_{\sigma,2}+U_{\sigma,3})
    +
    \frac1n\sum_{i=1}^n g_{\sigma;1}(X_i)
    +
    R_{\sigma,1},
\]
where the second- and third-order remainders satisfy
\[
    R_{\sigma,1}
    =
    \tilde O_{p,1}\{n^{-1}(Th)^{-1}\log n\}.
\]
It remains to bound \(R_{\sigma,0}\). Conditional on \(X_i,X_j\),
\(\psi_{ij;k}(t)\) is centered in \(X_k\) and uniformly
\(O(\rho_n^2)\). Bernstein's inequality, followed by a union bound over
\((i,j)\), gives
\[
    \max_{i<j}
    \left|
        \frac1{n-2}\sum_{k\ne i,j}\psi_{ij;k}
    \right|
    =
    O\left(\rho_n^2\sqrt{\frac{\log n}{n}}\right)
\]
with probability at least \(1-O(n^{-1})\). Therefore, it follows that
\[
\begin{aligned}
R_{\sigma,0}
&\le
C\,c_{\sigma}(n,\rho_n, T,h)
\sum_{i<j}\sum_{\ell=1}^T
K_h^2(t-t_\ell)
v_{ij}(t_\ell)
\left\{
    \rho_n^2\sqrt{\frac{\log n}{n}}
\right\}^2                                                \\
&=
\tilde O_{p,1}\{n^{-1}(Th)^{-1}\log n\}.
\end{aligned}
\]
Since \(R_{\sigma,0}\) is also uniformly bounded by \(C(Th)^{-1}\), its
expectation  $\bbE R_{\sigma,0}$ has the order $O\{n^{-1}(Th)^{-1}\log n\}$. Absorbing
\(R_{\sigma,0}-\bbE R_{\sigma,0}\) into the remainder and replacing
\(\bbE(U_{\sigma,2}+U_{\sigma,3})\) by
\(
    \bbE[\sigma_X^2]
\)
yields
\[
    \sigma_X^2
    =
     \bbE[\sigma_X^2]
    +
    \frac1n\sum_{i=1}^n g_{\sigma;1}(X_i)
    +
    \tilde O_{p,1}\{n^{-1}(Th)^{-1}\log n\}.
\]

Finally, the lower-bound argument in \Cref{8.55} gives
\(\sigma_X^2\gtrsim (Th)^{-1}\) with probability at least \(1-O(n^{-1})\),
while the deterministic upper bound gives \(\sigma_X^2\le C(Th)^{-1}\).
Hence,
\(
    \bbE[\sigma_X^2]\asymp (Th)^{-1}.
\)
It follows that
\(
    (Th)\bar\sigma_X^2\ge c>0\)
    and
    \(
    \|(Th)g_{\sigma;1}\|_\infty\le C.
\)
Bernstein's inequality gives

$$\begin{aligned}
    \bbP\left(\frac{1}{n}\sum_{i=1}^ng_{\sigma;1}(X_i)+\bbE[\sigma_X^2]\leq0\right)&=\bbP\left(\frac{1}{n}\sum_{i=1}^n(Th)g_{\sigma;1}(X_i)\leq-(Th)\bbE[\sigma_X^2]\right)\\
    &\leq\exp\left(-\frac{\left((Th)\bbE[\sigma_X^2]\right)^2}{\|(Th)g_{\sigma;1}\|_\infty^2}n\right)\\
    &\leq\exp(-c_1n),
    \end{aligned}$$
    which completes the proof.
\end{proof}

\begin{proof}[\textbf{Proof of \Cref{c}}]

    Fix the target network time point \(t\), and write \(s\) for the Fourier variable. We consider the range
\(
    n^{\epsilon'}\le |s|\le C_1\sqrt n,
\)
where \(C_1>0\) is chosen sufficiently small.

Since \(\delta_T\) is independent and $\vert\exp(is\delta_T)\vert=1$, we have 
\[\vert\bbE[\exp(is(J_n+\tilde\Delta_n+\delta_T))]\vert\leq \vert\bbE[\exp(is(J_{n}+\tilde\Delta_n))]\vert\bbE[\vert\exp(is\delta_T)\vert]\vert\leq\vert\bbE[\exp(is(J_{n}+\tilde\Delta_n))]\vert.\]
Conditional on $X$, by \Cref{8.55}, we have $\tilde\Delta_n \mid X \sim\mcalN(0, (n\rho_n)^{-1}\sigma_X^2)$ and $J_n$ is \(X\)-measurable. It follows that 
\[
\left|
\bbE e^{is(J_n+\tilde\Delta_n)}
\right|
= 
\left|
\bbE [e^{isJ_n} \bbE[e^{is\tilde\Delta_n}\mid X]]
\right|
=
\left|
\bbE\left[
    e^{isJ_n}
    \exp\left\{-\frac{s^2}{2n\rho_n}\sigma_X^2\right\}
\right]
\right|.
\]
By \Cref{variance U}, we have
\[
    \sigma_X^2
    =
    \bbE[\sigma_X^2]
    +
    \frac1n\sum_{i=1}^n g_{\sigma;1}(X_i)
    +
    R_{\sigma,n},
    \qquad
    R_{\sigma,n}
    =
    \tilde O_{p,1}\{n^{-1}(Th)^{-1}\log n\}.
\]
There exists a constant \(C>0\) such that the event  Let
\(
    B_n
    :=
    \left\{
        |R_{\sigma,n}|
        \le
        Cn^{-1}(Th)^{-1}\log n
    \right\},
\)
holds with probability at least \(1-O(n^{-1})\), i.e., the complement event satisfies $\bbP(B_n^c)\lesssim n^{-1}.$
Also define the event
\[
    W_n
    =
    \left\{
        \bbE[\sigma_X^2]
        +
        \frac1n\sum_{i=1}^n g_{\sigma;1}(X_i)
        \le 0
    \right\},
\]
for which \(\bbP(W_n)\le \exp(-c_1n)\) for some positive constant $c_1>0$ by \Cref{variance U}.

Next, we first bound the difference
\[
\begin{aligned}
&\left|
\bbE\left[
e^{isJ_n}
\exp\left\{
-\frac{s^2}{2n\rho_n}\sigma_X^2
\right\}
\right]
-
\bbE\left[
e^{isJ_n}
\exp\left(-\frac{s^2}{2n\rho_n}\left\{\bbE[\sigma_X^2]+\frac{1}{n}\sum_{i=1}^ng_{\sigma;1}(X_i)\right\}\right)
\right]
\right|      ,                                         
\end{aligned}
\]
and then bound the leading term.
To bound the difference, we discuss three cases:\\
(1) \underline{On \(B_n\cap W_n^c\)}, by substituting the decomposition of $\sigma_X^2$, we have 
\begin{align*}
    & \left|
e^{isJ_n}
\exp\left\{
-\frac{s^2}{2n\rho_n}\sigma_X^2
\right\}
-
e^{isJ_n}
\exp\left(-\frac{s^2}{2n\rho_n}\left\{\bbE[\sigma_X^2]+\frac{1}{n}\sum_{i=1}^ng_{\sigma;1}(X_i)\right\}\right)
\right| \\
    = &  \Bigg\vert e^{isJ_n}\exp\left(-\frac{s^2}{2n\rho_n}\left\{\bbE[\sigma_X^2]+\frac{1}{n}\sum_{i=1}^ng_{\sigma;1}(X_i)\right\}\right)\left( \exp\left(-\frac{s^2}{2n\rho_n}R_{\sigma,n}\right) - 1 \right) 
\Bigg \vert\\
\le & \left|
    \exp\left\{-\frac{s^2}{2n\rho_n}R_{\sigma,n}\right\}-1
\right| \\
\lesssim & s^2n^{-2}(\rho_n Th)^{-1}\log n,
\end{align*} 
uniformly for \(n^{\epsilon'}\le |s|\le C_1\sqrt n\). The first inequality follows from the fact that $\vert\exp(iz)\vert=1\ \forall z\in\bbC$, and the second exponential factor is bounded by one. The last inequality follows from the fact that \(\exp(z)=1+O(z)\) as \(z\to0\) and the definition of \(B_n\).\\
(2) \underline{On \(B_n^c \cap W_n^c\)}, we use the basic bounds
\begin{align*}
    & \left|
e^{isJ_n}
\exp\left\{
-\frac{s^2}{2n\rho_n}\sigma_X^2
\right\}
-
e^{isJ_n}
\exp\left(-\frac{s^2}{2n\rho_n}\left\{\bbE[\sigma_X^2]+\frac{1}{n}\sum_{i=1}^ng_{\sigma;1}(X_i)\right\}\right)
\right| \\
\le & \left|
    e^{isJ_n}
    \exp\left\{-\frac{s^2}{2n\rho_n}\sigma_X^2\right\}
\right| + \Bigg\vert e^{isJ_n}\exp\left(-\frac{s^2}{2n\rho_n}\left\{\bbE[\sigma_X^2]+\frac{1}{n}\sum_{i=1}^ng_{\sigma;1}(X_i)\right\}\right)
\Bigg \vert\\
\le & 2.
\end{align*}
(3) \underline{On \(W_n\)}, we use the basic bound
\(
\left|
    e^{isJ_n}
    \exp\left\{-\frac{s^2}{2n\rho_n}\sigma_X^2\right\}
\right|
\le 1
\) 
and apply the univeral bounds $\bbE[\sigma_X^2] \asymp (Th)^{-1}$ and $\|g_{\sigma,1}\|_\infty \lesssim (Th)^{-1}$ by \Cref{variance U} to obtain that
\[
\left|
    \exp\left\{
        -\frac{s^2}{2n\rho_n}
        \left(
            \bbE[\sigma_X^2]
            +
            \frac1n\sum_{i=1}^n g_{\sigma;1}(X_i)
        \right)
    \right\}
\right|
\le
\exp\left\{
    C\frac{s^2}{n\rho_nTh}
\right\}.
\]
Combining the above bounds under the three cases gives
\[
\begin{aligned}
&\left|
\bbE\left[
e^{isJ_n}
\exp\left\{
-\frac{s^2}{2n\rho_n}\sigma_X^2
\right\}
\right]
-
\bbE\left[
e^{isJ_n}
\exp\left\{
-\frac{s^2}{2n\rho_n}
\left(
    \bbE[\sigma_X^2]
    +
    \frac1n\sum_{i=1}^n g_{\sigma;1}(X_i)
\right)
\right\}
\right]
\right|                                                     \\
&\qquad\lesssim s^2n^{-2}(\rho_n Th)^{-1}\log n \cdot \bbP(B_n \cap W_n^c)+ 
    \bbP(B_n^c \cap W_n^c)
    + \left(\exp\left\{
    C\frac{s^2}{n\rho_nTh}
\right\}+1\right) \bbP(W_n)\\
& \qquad \lesssim s^2n^{-2}(\rho_n Th)^{-1}\log n + n^{-1} + \exp\left\{
    C\frac{s^2}{n\rho_nTh}-c_1n \right\}+\exp(-c_1n).
\end{aligned}
\]
Since \(s^2\le C_1^2n\) and \((\rho_nTh)^{-1}=o(n)\) under
\Cref{assump: sparsity}, we have
\[
    \exp\left\{
        C\frac{s^2}{n\rho_nTh}-c_1n
    \right\}
    \le
    \exp(-c_1n/2)
\]
for all sufficiently large \(n\).  Therefore,
\begin{align}
&\int_{n^{\epsilon'}}^{C_1\sqrt n}
\frac{1}{s}
\left|
\bbE\left[
e^{isJ_n}
\exp\left\{
-\frac{s^2}{2n\rho_n}\sigma_X^2
\right\}
\right]
-
\bbE\left[
e^{isJ_n}
\exp\left\{
-\frac{s^2}{2n\rho_n}
\left(
    \bbE[\sigma_X^2]
    +
    \frac1n\sum_{i=1}^n g_{\sigma;1}(X_i)
\right)
\right\}
\right]
\right|ds                                            \nonumber         \\
&\qquad =
O\left\{
    (n\rho_nTh)^{-1}\log n
    +
    n^{-1}\log n
\right\}. \label{eq:diff-bound}
\end{align}

It remains to bound
\[
\Psi_n(s):=
\bbE\left[
e^{isJ_n}
\exp\left\{
-\frac{s^2}{2n\rho_n}
\left(
    \bbE[\sigma_X^2]
    +
    \frac1n\sum_{i=1}^n g_{\sigma;1}(X_i)
\right)
\right\}
\right].
\]
We process by following the proof strategy of \cite{bickel1986edgeworth} to control the leading term.
Recall that 
\[J_n(t) = \alpha_n(t) +
\frac{1}{\sqrt n\,\xi_{1h}(t)}
\sum_{i=1}^n g_{1h}(X_i,t)
+
\frac{2}{\sqrt n(n-1)}
\sum_{i<j}\tilde g_{2h}(X_i,X_j,t),\] 
where $\alpha_n$ is a derministic term and $\tilde g_{2h}$ is a symmetric degenerate degree-two kernel defined in \eqref{tilde g2h expanded} in \Cref{T_n U}. 
For the rest of this proof, we omit the dependence on $t$ for notational simplicity.
Since $\alpha_n$ is a deterministic term and $|e^{is \alpha_n}|=1$, we have
\begin{align*}
     \left| \Psi_n(s) \right|
 = &  \Bigg|
\bbE\Bigg[
\exp\Bigg\{\frac{is}{\sqrt n\,\xi_{1h}}
\sum_{i=1}^n g_{1h}(X_i)
-\frac{s^2}{2n\rho_n}
\left(
    \bbE[\sigma_X^2]
    +
    \frac1n\sum_{i=1}^n g_{\sigma;1}(X_i)
\right)\\
& \qquad \qquad 
+
\frac{2is}{\sqrt n(n-1)}
\sum_{i<j}\tilde g_{2h}(X_i,X_j)
\Bigg\}
\Bigg]
\Bigg|.
\end{align*}
Define the shorthand notation \(
    \bar g(x):=\bbE[\sigma_X^2]+g_{\sigma;1}(x).
\) Then the above expression can be rewritten as
\[
\left|
\bbE\left[
\exp\left\{
    \sum_{i=1}^n
    \left(
        \frac{is}{\sqrt n\,\xi_{1h}}g_{1h}(X_i)
        -
        \frac{s^2}{2n^2\rho_n}\bar g(X_i)
    \right)
    +
    \frac{2is}{\sqrt n(n-1)}
    \sum_{1\le i<j\le n}\tilde g_{2h}(X_i,X_j)
\right\}
\right]
\right|.
\]
Set
\[
    \gamma_n(s)
    :=
    \bbE\left[
    \exp\left\{
        \frac{is}{\sqrt n\,\xi_{1h}}g_{1h}(X_1)
        -
        \frac{s^2}{2n^2\rho_n}\bar g(X_1)
    \right\}
    \right]
.
\]
Since \(g_{1h}(x)=O(\rho_n^3)\), \(\bar g(x)=O((Th)^{-1})\), and \(\xi_{1h}\asymp \rho_n^3\), we have \(\left| \frac{is}{\sqrt n\,\xi_{1h}}g_{1h}(X_1)
        -
        \frac{s^2}{2n^2\rho_n}\bar g(X_1)\right|\) uniformly bounded for all \(|s|\le C_1\sqrt n\). Therefore, by Taylor expansion of the exponential function, 
we have 
\begin{align*}
    \gamma_n(s)
    &=
\bbE\Bigg[
    1
    +
    \frac{is}{\sqrt n\,\xi_{1h}}g_{1h}(X_1)
    -
    \frac{s^2}{2n^2\rho_n}\bar g(X_1)
    \\
    & \qquad  + 
    \frac12 \left(
        \frac{is}{\sqrt n\,\xi_{1h}}g_{1h}(X_1)
    -
    \frac{s^2}{2n^2\rho_n}\bar g(X_1)
    \right)^2+
    O\left(
        \left| \frac{is}{\sqrt n\,\xi_{1h}}g_{1h}(X_1)\right|^3
        +
        \left| \frac{s^2}{2n^2\rho_n}\bar g(X_1) \right|^3
    \right) 
\Bigg] \\
&=
1
-
\frac{s^2}{2n^2\rho_n}\bbE[\bar g(X_1)]
- \frac{s^2}{2n \xi_{1h}^2}\bbE[g_{1h}^2(X_1)]
+
O\left(
    \frac{s^3}{n^{3/2}}
    +
    \frac{s^2}{n^2\rho_nTh}
\right) \\
&=
1
-\frac{s^2}{2n}
+
O\left(\frac{s^3}{n^{3/2}}
+
\frac{s^2}{n^2\rho_nTh}
\right),
\end{align*}
where the last two equalities hold due to the facts that \(\bbE[g_{1h}(X_1)]=0\), \(\bbE[g_{1h}^2(X_1)]=\xi_{1h}^2\), and the universal bounds for \(g_{1h}\) and \(\bar g\).
Using the fact that
\((\rho_nTh)^{-1}=o(n)\) under \Cref{assump: sparsity}, we have
\begin{align}
    |\gamma_n(s)| = 1 - \frac{s^2}{2n} + o\left(\frac{s^2}{n}\right)
    \le
    \exp\left\{-\frac{s^2}{3n}\right\}, \label{gamma_n bound}
\end{align}
over the range $s\leq C_1\sqrt n$ with a sufficiently small constant \(C_1>0\).

Following \citet[(2.17)--(2.20)]{bickel1986edgeworth}, we further define
\[
    \square_n(m)
    :=
    \sum_{i=1}^m\sum_{j=i+1}^n
    \tilde g_{2h}(X_i,X_j;t),
\]
with $m$ being a truncation index to be specified later, and it follows that 
$$\bbE[\vert\square_n(m)\vert^r]=O\left((mn)^{r/2}\right), \quad \text{ for } r\geq2.$$
By applying Taylor expansion of the factor
\(
    \exp\left\{
        \frac{2is}{\sqrt n(n-1)}\square_n(m)
    \right\}
\)
up to a fixed order \(r \ge 2\), we obtain
\begin{align}
& \left| \Psi_n(s) \right|     \nonumber    \\
       \le & 
    \sum_{v=0}^r
    \frac{1}{v!}
    \left(
        \frac{2s}{\sqrt n(n-1)}
    \right)^v
    \Bigg|
    \bbE\Bigg[
        \exp\Bigg\{
            \sum_{i=1}^n
            \left(
                \frac{is}{\sqrt n\,\xi_{1h}}g_{1h}(X_i)
                -
                \frac{s^2}{2n^2\rho_n}\bar g(X_i)
            \right) \nonumber
            \\            
& \qquad \qquad \qquad \qquad \qquad \qquad \qquad  +
            \frac{2is}{\sqrt n(n-1)}
            \sum_{i=m+1}^{n-1}\sum_{j=i+1}^n
            \tilde g_{2h}(X_i,X_j)
        \Bigg\}
        \square_n^v(m)
    \Bigg]
    \Bigg|                                         \nonumber            \\
&
    +
    C
    \left(
        \frac{s}{\sqrt n(n-1)}
    \right)^{r+1}\left(1+ e^{(\rho_n Th)^{-1}-c_1 n}\right)
    \bbE[|\square_n(m)|^{r+1}]. \label{S4}
\end{align}
Here, the remainder term is obtained by applying the fact that \(\vert e^{iz}\vert=1\) for all \(z\in\bbC\) and the event \(W_n:= \{\frac{1}{n}\sum_{i=1}^n \bar g(X_i) \ge 0\}\) holds with probability at least \(1-\exp(-c_1n)\) for some constant \(c_1>0\) by \Cref{variance U} and otherwise \(\bar g(x)\) is universally bounded by \((Th)^{-1}\).
Again, since
\((\rho_nTh)^{-1}=o(n)\) under \Cref{assump: sparsity} and $\bbE[\vert\square_n(m)\vert^{r+1}]=O\left((mn)^{(r+1)/2}\right)$,
the remainder term is in the order of $
        \left(\frac{s\sqrt m}{n}\right)^{r+1}$. 
Choosing \(m = \frac{6rn\log n}{s^2}\) gives the remainder term of order \[
    O\left(
        n^{-\frac{r+1}{2}}(\log n)^{\frac{r+1}{2}}
    \right).
\]

We further bound the terms in the finite Taylor sum by expanding $\square^v_n(m)$ as follows:
\begin{equation}\label{S6}\begin{aligned}
    &\Bigg|
    \bbE\Bigg[
        \exp\Bigg\{
            \sum_{i=1}^n
            \left(
                \frac{is}{\sqrt n\,\xi_{1h}}g_{1h}(X_i)
                -
                \frac{s^2}{2n^2\rho_n}\bar g(X_i)
            \right)+
            \frac{2is}{\sqrt n(n-1)}
            \sum_{i=m+1}^{n-1}\sum_{j=i+1}^n
            \tilde g_{2h}(X_i,X_j)
        \Bigg\}
        \square_n^v(m)
    \Bigg]
    \Bigg| \\
    &=\Bigg\vert\bbE\Bigg[\exp\Bigg\{\sum_{i=1}^n\left(
                \frac{is}{\sqrt n\,\xi_{1h}}g_{1h}(X_i)
                -
                \frac{s^2}{2n^2\rho_n}\bar g(X_i)
            \right)+\frac{2is}{\sqrt n(n-1)}
            \sum_{i=m+1}^{n-1}\sum_{j=i+1}^n
            \tilde g_{2h}(X_i,X_j)\Bigg\}\\
    &\qquad \qquad \cdot \left(\sum_{i=1}^m\sum_{j=i+1}^n\tilde g_{2h}(X_i, X_j)\right)^v\Bigg]\Bigg\vert\\
    &=\sum_{\substack{1\leq j_1\leq m\\j_1\leq k_1\leq n}}\cdots\sum_{\substack{1\leq j_v\leq m\\j_v\leq k_v\leq n}}\Bigg\vert\bbE\Bigg[\exp\Bigg\{\sum_{i=1}^n\left(
                \frac{is}{\sqrt n\,\xi_{1h}}g_{1h}(X_i)
                -
                \frac{s^2}{2n^2\rho_n}\bar g(X_i)
            \right)\\
    &\qquad \qquad \qquad  \qquad \qquad \qquad \qquad +\frac{2is}{\sqrt n(n-1)}
            \sum_{i=m+1}^{n-1}\sum_{j=i+1}^n
            \tilde g_{2h}(X_i,X_j)\Bigg\}\prod_{\ell=1}^v\tilde g_{2h}(X_{j_\ell}, X_{k_\ell})\Bigg]\Bigg\vert.
\end{aligned}
\end{equation}
Expanding \(\square_n^v(m)\) gives at most \((mn)^v\) products of kernels.  For each
such product, at most \(2v\) indices among \(1,\ldots,m\) appear in the
kernels.  Let \(S:=[1:m]\backslash\cup_{\ell=1}^v\{j_\ell, k_\ell\}\) be the remaining indices in \(\{1,\ldots,m\}\). Then \(|S|\ge m-2v\), and the variables indexed by \(S\) factor out by independence.
Therefore, each sum term in (\ref{S6}) can be factored into two terms: 
\begin{align*}
& \left\vert\bbE\left[\exp\left\{\sum_{i\in S}\left(\frac{is}{\sqrt n\,\xi_{1h}}g_{1h}(X_i)
                -
                \frac{s^2}{2n^2\rho_n}\bar g(X_i)\right)\right\}\right]\right\vert\\
       \times & \Bigg\vert\bbE\Bigg[\exp\Bigg\{\sum_{i\notin S}\left(\frac{is}{\sqrt n\,\xi_{1h}}g_{1h}(X_i)
                -
                \frac{s^2}{2n^2\rho_n}\bar g(X_i)\right)+\frac{2is}{\sqrt n(n-1)}\sum_{i=m+1}^{n-1}\sum_{j=i+1}^n\tilde g_{2h}(X_i, X_j)\Bigg\}\\
        &\qquad \qquad  \cdot \prod_{\ell=1}^v\tilde g_{2h}(X_{j_\ell}, X_{k_\ell})\Bigg]\Bigg\vert,
\end{align*}
where  the first term is the contribution of the variables indexed by \(S\) and the second term is the contribution of the remaining variables. We will bound these two terms separately.

For the first term, we have 
\begin{equation}
    \label{S8}
    \left\vert\bbE\left[\exp\left\{\sum_{i\in S}\left(\frac{is}{\sqrt n\,\xi_{1h}}g_{1h}(X_i)
                -
                \frac{s^2}{2n^2\rho_n}\bar g(X_i)\right)\right\}\right]\right\vert \le |\gamma_n(s)|^{|S|} \leq \vert \gamma_n(s) \vert^{m-2v}.
\end{equation} This inequality holds because $|\gamma_n(s)|\leq \exp(-s^2/3n) \le 1$ as shown in \eqref{gamma_n bound} and $\vert S\vert\geq m-2v$.

For the second term, using the property that $\vert\exp(iz)\vert=1$ for all $z\in\bbC$ and $\vert \tilde g_{2h}(X_i, X_j)\vert=O(1)$, we have
\begin{align*}
   &  \Bigg\vert\bbE\Bigg[\exp\Bigg\{\sum_{i\notin S}\left(\frac{is}{\sqrt n\,\xi_{1h}}g_{1h}(X_i)
                -
                \frac{s^2}{2n^2\rho_n}\bar g(X_i)\right)+\frac{2is}{\sqrt n(n-1)}\sum_{i=m+1}^{n-1}\sum_{j=i+1}^n\tilde g_{2h}(X_i, X_j)\Bigg\}\\
        &\qquad \qquad  \cdot \prod_{\ell=1}^v\tilde g_{2h}(X_{j_\ell}, X_{k_\ell})\Bigg]\Bigg\vert
        \le   \bbE \left[
    \exp\left\{
        -\frac{s^2}{2n^2\rho_n}
        \sum_{i\notin S}\bar g(X_i)
    \right\} \right].
\end{align*}
Note that the number of indices in the summation of the above expectation is at least $n-m\asymp n(1-\frac{\log n}{s^2})\ge n(1-n^{-2\epsilon'}\log n)=O(n)$.
Therefore, we can apply the same concentration bound for $\bar g(X_i)$ as shown in \Cref{variance U} to control the above expectation.
In particular, the event $W_{S^c}:=\{\sum_{i\notin S} \bar g(X_i) \ge 0\}$ holds with probability at least $1-\exp(-c_1n)$ for some constant $c_1>0$. \\
(1) \underline{On the event $W_{S^c}$}, we have $\sum_{i\notin S}\bar g(X_i)\geq 0$,
and it follows that 
\begin{align*}
    \exp\left\{
        -\frac{s^2}{2n^2\rho_n}
        \sum_{i\notin S}\bar g(X_i) \right\} \leq 1.
\end{align*}
(2) \underline{On the event $W_{S^c}^c$}, we again apply the univeral bound $\|\bar g(x)\|_\infty \lesssim (Th)^{-1}$ by \Cref{variance U} to obtain that
\[
\left|
    \exp\left\{
        -\frac{s^2}{2n\rho_n} \frac{1}{n}
        \sum_{i\notin S}\bar g(X_i)
    \right\}
\right|
\lesssim 
\exp\left\{
    C\frac{s^2}{n\rho_nTh}
\right\}.
\]
Combining the above two cases, we have
\begin{align}
   \bbE \left[
    \exp\left\{
        -\frac{s^2}{2n^2\rho_n}
        \sum_{i\notin S}\bar g(X_i)
    \right\} \right]
    &\leq \exp\left\{C\frac{s^2}{n\rho_nTh}\right\}\exp(-c_1n)+1=O(1), \label{S9}
\end{align} 
where we use \((\rho_nTh)^{-1}=o(n)\) under \Cref{assump: sparsity}.

Plugging results of (\ref{S8}, \ref{S9}) into (\ref{S6}), we obtain that
\begin{align*}
&\Bigg|
    \bbE\Bigg[
        \exp\Bigg\{
            \sum_{i=1}^n
            \left(
                \frac{is}{\sqrt n\,\xi_{1h}}g_{1h}(X_i)
                -
                \frac{s^2}{2n^2\rho_n}\bar g(X_i)
            \right)+
            \frac{2is}{\sqrt n(n-1)}
            \sum_{i=m+1}^{n-1}\sum_{j=i+1}^n
            \tilde g_{2h}(X_i,X_j)
        \Bigg\}
        \square_n^v(m)
    \Bigg]
    \Bigg| \\
&\quad \leq(mn)^v|\gamma_n(s)|^{m-2v}.
\end{align*}
By further plugging the above inequality and the remainder bound $O\left(
        n^{-\frac{r+1}{2}}(\log n)^{\frac{r+1}{2}}
    \right)$  into (\ref{S4}), we have
\begin{align*}
    \left| \Psi_n(s) \right| &\lesssim \sum_{v=0}^r\left(\frac{s}{\sqrt n(n-1)}\right)^v(mn)^v|\gamma_n(s)|^{m-2v}+n^{-\frac{r+1}{2}}(\log n)^{\frac{r+1}{2}}.
\end{align*}
Since \(m=\lfloor6rn\log n/s^2\rfloor\) and
\(|\gamma_n(s)|\le \exp\{-s^2/(3n)\}\) in \eqref{gamma_n bound}, we have
\[
    |\gamma_n(s)|^{m-2v}
    \le
    Cn^{-2r},
    \qquad \text{ for }0\le v\le r.
\]
Thus, for $s\ge n^{\epsilon'}$, we have
\[
\begin{aligned}
\left| \Psi_n(s) \right| 
&\lesssim
    n^{-2r}
    \sum_{v=0}^r
    \left(
        \frac{\sqrt n\log n}{s}
    \right)^v
    +
    n^{-(r+1)/2}(\log n)^{(r+1)/2}                              \\
&\lesssim
    n^{-\frac{3}{2}r-r\epsilon'}\log^rn+ n^{-(r+1)/2}(\log n)^{(r+1)/2} \\
    & \lesssim n^{-(r+1)/2}(\log n)^{(r+1)/2}.
\end{aligned}
\]
Taking \(r=2\), we obtain, uniformly for
\(n^{\epsilon'}\le s\le C_1\sqrt n\), that
\(
\left| \Psi_n(s) \right| 
\lesssim 
n^{-3/2}(\log n)^{3/2}.
\)
It follows that
\begin{align*}
    \int_{n^{\epsilon'}}^{C_1\sqrt n} \left|\frac{\Psi_n(s)}{s}\right| ds
    &\lesssim
    n^{-3/2}(\log n)^{3/2} \int_{n^{\epsilon'}}^{C_1\sqrt n} \frac{1}{s} ds
    \lesssim
    n^{-3/2}(\log n)^{5/2}.
\end{align*}

Combining this leading-term bound with \eqref{eq:diff-bound} completes the proof.

\end{proof}

\subsubsection{Proof of \Cref{d}}
\label{Proof of d}
\begin{proof}
    Again, fix the target network time point \(t\), and write \(s\) for the Fourier variable. We consider the range
\(
    0\le |s|\le n^{\epsilon'},
\)
with  \(\epsilon' = \epsilon/2 \le \frac17 \).

Conditional on \(X\), we have
\(
    \tilde\Delta_n(t)\mid X
    \sim
    \mcalN\{0,(n\rho_n)^{-1}\sigma_X^2\},
\)
and \(\delta_T\sim \mcalN(0,c_\delta n^{-1}\log n)\) is independent of all
other variables. Therefore, it follows that
\[
\begin{aligned}
\bbE e^{is(J_n+\tilde\Delta_n+\delta_T)}   =
\bbE\left[
    e^{isJ_n}
    \exp\left\{
        -\frac{s^2}{2}
        \left(
            \frac{\sigma_X^2}{n\rho_n}
            +
            c_\delta n^{-1}\log n
        \right)
    \right\}
\right].
\end{aligned}
\]
Since $|s|\le n^{\epsilon'}$ and $\sigma_X^2 \lesssim (Th)^{-1}$, the exponent satisfies
\(s^2 ( \frac{\sigma_X^2}{n\rho_n}
            +
            c_\delta n^{-1}\log n) \lesssim n^{2\epsilon'}(n\rho_n Th)^{-1} + n^{2\epsilon'-1}\log n.\)
As \(\epsilon'=\epsilon/2\le 1/7\), the first term is $o(1)$ by \Cref{assump: sparsity}, while the second term is also $o(1)$.
Then, by applying the Taylor expansion $e^z=1+z+O(z^2)$ for $z=o(1)$, we obtain
\begin{align}\label{T1}
\bbE e^{is(J_n+\tilde\Delta_n+\delta_T)} 
&=
\bbE[e^{isJ_n}]
-
\frac{s^2}{2n\rho_n}
\bbE[e^{isJ_n}\sigma_X^2]                       
-
\frac12 c_\delta n^{-1}\log n\,s^2
\bbE[e^{isJ_n}]
+
R_G(s),
\end{align}
where
\[
\begin{aligned}
\int_0^{n^{\epsilon'}}
\left|
    \frac{R_G(s)}{s}
\right|ds
&\le
C\int_0^{n^{\epsilon'}}
s^3
\bbE\left[
    \left\{
        \frac{\sigma_X^2}{n\rho_n}
        +
        c_\delta n^{-1}\log n
    \right\}^2
\right]ds                                      \\
&=
O\left\{
    n^{4\epsilon'}(n\rho_nTh)^{-2}
    +
    n^{4\epsilon'-2}\log^2 n
\right\}                                      \\
&=
O\{(n\rho_nTh)^{-1}+n^{-1}\log n\}.
\end{aligned}
\]
The last step uses \(\epsilon'=\epsilon /2\le 1/7\) and
\Cref{assump: sparsity}.

Next, we analyze the leading term $\bbE[e^{isJ_n}]$ and the remaining  term $\bbE[e^{isJ_n}\sigma_X^2]$ in \eqref{T1} separately.
\paragraph{(1) The leading term \(\bbE[e^{isJ_n}]\) in \eqref{T1}.}
Recall that
\[
    J_n
    =
    \underbrace{L_n^{(1)}}_{\tilde O_{p,2}(\log^{1/2}n)}
    +
    \underbrace{ 
    \alpha_h +
    L_n^{(2)} -
    \frac12 L_n^{(3)} -
    \frac12 L_n^{(4)}}_{\tilde O_{p,2}(n^{-1/2}\log n)} .
\]
We keep the leading term $L_n^{(1)}$ and use Taylor expansion for the remaining terms:
\begin{align}
\bbE\Big[e^{is J_n}\Big]=&\bbE\Bigg[e^{isL_n^{(1)}}\Bigg\{1+is\Big(\alpha_h+L_n^{(2)}-\dfrac{L_n^{(3)}+L_n^{(4)}}{2}\Big)-\frac{1}{2}s^2\Big(\alpha_h+L_n^{(2)}-\dfrac{L_n^{(3)}+L_n^{(4)}}{2}\Big)^2 \nonumber\\
&\quad \quad \quad \quad \quad +O\left(s^3\left \vert\alpha_h+L_n^{(2)}-\dfrac{L_n^{(3)}+L_n^{(4)}}{2}\right \vert^3\right)\Bigg\}\Bigg]. \label{eq: Ttilde leading term}
\end{align}

    We first analyze the intercept term and the linear (w.r.t\ $s$) term in (\ref{eq: Ttilde leading term}). 
Note that $\bbE[e^{\mathrm{i}sL_n^{(1)}}]=\bbE\Big[e^{\mathrm{i}s\cdot \sum_{i=1}^n g_{1h}(X_i,t)/(\sqrt{n}\xi_{1h}(t))}\Big]=\prod_{i=1}^n\bbE\Big[e^{\mathrm{i}s\cdot g_{1h}(X_i)/(\sqrt{n}\xi_{1h}(t))}\Big]$. Define $\varphi_n(s):=\bbE\left[\exp\left(is\frac{g_{1h}(X_1, t)}{\sqrt n\xi_{1h}}\right)\right]$, then we have $\bbE[e^{\mathrm{i}sL_n^{(1)}}]= \left(\varphi_n(s)\right)^n$.
    By Section VI, Lemma 4 of \cite{petrov1972independent}, we have for $k=0,1,2$, 
    $$\varphi_n^{n-k}(s)=e^{-s^2/2}\left(1-\frac{is^3 }{6\sqrt n\xi_{1h}^3}\bbE[g_{1h}^3(X_1, t)]\right)+O(n^{-1}\log nP_k^{\ge 1}(s)e^{-s^2/12}),$$
    where $\text{P}^{\ge 1}_k(s)$ denotes a polynomial of $s$ with a fixed degree that involves only components with powers greater than $1$. 
    It follows directly that the intercept term satisfies
    \begin{align}
        \bbE\left[\exp\left(isL_n^{(1)}\right)\right]=\varphi_n^n(s)=e^{-s^2/2}\left(1-\frac{is^3 }{6\sqrt n\xi_{1h}^3}\bbE[g_{1h}^3(X_1, t)]\right)+O(n^{-1}\log nP_0(s)e^{-s^2/12}). \label{intercept L_n^{(1)}}
    \end{align}

    The linear term satisfies:
\begin{align}
    \bbE\Big[e^{\mathrm{i}sL_n^{(1)}}\cdot\mathrm{i}s\alpha_h\Big] & =  - \dfrac{\mathrm{i}s}{\sqrt{n}\,\xi_{1h}^3(t)}e^{-s^2/2}\Big(\frac12\bbE[g_{1h}^3(X_1,t)]
    +
    2\bbE[
        g_{1h}(X_1,t)g_{1h}(X_2,t)g_{2h}(X_1,X_2,t)
    ]\Big) \nonumber \\
&\quad  +O\Big(n^{-1}\cdot \text{P}_k^{\ge 1}(t) e^{-t^2 / 12}\Big), \label{linear term alpha}\\
\bbE[e^{isL_n^{(1)}}\cdot isL_n^{(2)}]
            &=\bbE\left[e^{isL_n^{(1)}}is\frac{2}{\sqrt n(n-1)}\sum_{i<j}\frac{g_{2h}(X_i, X_j, t)}{\xi_{1h}(t)}\right] \nonumber\\
        &=\frac{2is}{\sqrt n(n-1)}\binom{n}{2}\varphi_n^{n-2}(s)\bbE\left[\exp\left(is\frac{g_{1h}(X_1, t)+g_{1h}(X_2, t)}{\sqrt n\xi_{1h}(t)}\right)\frac{g_{2h}(X_1, X_2, t)}{\xi_{1h}(t)}\right] \nonumber\\
            &=\frac{is\sqrt n}{\xi_{1h}(t)}\varphi_n^{n-2}(s)\bbE\Bigg[g_{2h}(X_1, X_2, t)+is\frac{(g_{1h}(X_1, t)+g_{1h}(X_2, t))g_{2h}(X_1, X_2, t)}{\sqrt n\xi_{1h}(t)}\nonumber\\
            &\qquad \qquad \qquad \qquad -\frac{s^2(g_{1h}(X_1, t)+g_{1h}(X_2, t))^2g_{2h}(X_1, X_2, t)}{2n\xi_{1h}^2(t)}\nonumber\\
            &\qquad \qquad \qquad \qquad+O(n^{-3/2}s^3\rho_n^3)\Bigg]\nonumber \\
            &=-\frac{is^3}{\sqrt n\xi_{1h}^3(t)}e^{-s^2/2} \bbE[g_{1h}(X_1, t)g_{1h}(X_2, t)g_{2h}(X_1, X_2, t)]+O(n^{-1}e^{-s^2/12}P^{\ge 1}(s)),  \label{linear term L_n^{(2)}}
\end{align}
 where we use the facts that $\{X_i\}_{i=1}^n$ are independent and identically distributed and that $\bbE[g_{2h}(X_1, X_2, t)]=0$ and $\bbE[g_{1h}^k(X_1, t)g_{2h}(X_1, X_2, t)]=0.$

    Similarly, we can derive
\begin{align}
     & \bbE[e^{isL_n^{(1)}}\cdot isL_n^{(3)}] \nonumber \\
       & \quad =   \bbE\left[e^{isL_n^{(1)}}\frac{is}{n^{3/2}\xi_{1h}^3(t)}
\sum_{1 \le i\neq j \le n} 
\Big[
    g_{1h}(X_i,t)\{g_{1h}^2(X_j,t)-\xi_{1h}^2(t)\}
\Big]\right] \nonumber \\
            & \quad = \frac{is(n-1)}{\sqrt n \xi_{1h}^3(t)}\varphi_n^{n-2}(s) \bbE\left[\exp\left(is\frac{g_{1h}(X_1, t)+g_{1h}(X_2, t)}{\sqrt n\xi_{1h}(t)}\right)
\Big[
    g_{1h}(X_1,t)\{g_{1h}^2(X_2,t)-\xi_{1h}^2(t)\}
\Big]\right] \nonumber \\
&\quad =\frac{is(n-1)}{\sqrt n \xi_{1h}^3(t)}\varphi_n^{n-2}(s))\bbE\Bigg[g_{1h}(X_1,t)\{g_{1h}^2(X_2,t)-\xi_{1h}^2(t)\} \nonumber\\
   &\qquad \qquad \qquad \qquad \qquad \qquad  +is\frac{(g_{1h}(X_1, t)+g_{1h}(X_2, t)) g_{1h}(X_1,t)\{g_{1h}^2(X_2,t)-\xi_{1h}^2(t)\}}{\sqrt n\xi_{1h}(t)}\nonumber\\
            &\qquad \qquad \qquad \qquad \qquad \qquad -\frac{s^2(g_{1h}(X_1, t)+g_{1h}(X_2, t))^2 g_{1h}(X_1,t)\{g_{1h}^2(X_2,t)-\xi_{1h}^2(t)\}}{2n\xi_{1h}^2(t)}\nonumber\\
            &\qquad \qquad \qquad \qquad \qquad \qquad  +O(n^{-3/2}s^3\rho_n^9)\Bigg]\nonumber \\
            & \quad =-\frac{is^3(n-1)}{\sqrt n \xi_{1h}^3(t)}\varphi_n^{n-2}(s)\bbE\left[\frac{g_{1h}^2(X_1, t)g_{1h}(X_2, t)(g_{1h}^2(X_2, t)-\xi_{1h}^2(t))}{n\xi_{1h}^2}+O(n^{-3/2}s^3\rho_n^9) \right] \nonumber\\
            &\quad =-\frac{is^3\varphi_n^{n-2}(s)}{\sqrt n\xi_{1h}^3(t)}\bbE[g_{1h}^3(X_1, t)]+O(n^{-1}e^{-s^2/12}P^{\ge 1}(s)),\label{linear term L_n^{(3)}}
\end{align}
    where the last two equalities we use the property that $\bbE[g_{1h}^2(X_i, t)]=\xi_{1h}^2$, $\bbE[g_{1h}(X_i, t)]=0$, and $\xi_{1h}(t)\asymp \rho_n^3 $.
For the forth linear term, we can similarly derive that
\begin{align}
     & \bbE[e^{isL_n^{(1)}}\cdot isL_n^{(4)}] \nonumber \\
       & \quad =   \bbE\left[e^{isL_n^{(1)}}\frac{4is}{n^{3/2}\xi_{1h}^3(t)} 
\sum_{1 \le i\neq j \le n} 
\Big[
    g_{1h}(X_i,t)\zeta_h(X_j,t)
\Big]\right] \nonumber \\
            & \quad = \frac{4is(n-1)}{\sqrt n \xi_{1h}^3(t)}\varphi_n^{n-2}(s) \bbE\left[\exp\left(is\frac{g_{1h}(X_1, t)+g_{1h}(X_2, t)}{\sqrt n\xi_{1h}(t)}\right)
\Big[
    g_{1h}(X_1,t)\zeta_h(X_2,t)
\Big]\right] \nonumber \\
&\quad =\frac{4is(n-1)}{\sqrt n \xi_{1h}^3(t)}\varphi_n^{n-2}(s))\bbE\Bigg[g_{1h}(X_1,t)\zeta_h(X_2,t) \nonumber\\
   &\qquad \qquad \qquad \qquad \qquad \qquad  +is\frac{(g_{1h}(X_1, t)+g_{1h}(X_2, t)) g_{1h}(X_1,t)\zeta_h(X_2,t)}{\sqrt n\xi_{1h}(t)}\nonumber\\
            &\qquad \qquad \qquad \qquad \qquad \qquad -\frac{s^2(g_{1h}(X_1, t)+g_{1h}(X_2, t))^2 g_{1h}(X_1,t)\zeta_h(X_2,t)}{2n\xi_{1h}^2(t)}\nonumber\\
            &\qquad \qquad \qquad \qquad \qquad \qquad  +O(n^{-3/2}s^3\rho_n^9)\Bigg]\nonumber \\
            & \quad =-\frac{4is^3(n-1)}{\sqrt n \xi_{1h}^3(t)}\varphi_n^{n-2}(s)\bbE\left[\frac{g_{1h}^2(X_1, t)g_{1h}(X_2, t)\zeta_h(X_2,t)}{n\xi_{1h}^2}+O(n^{-3/2}s^3\rho_n^9) \right] \nonumber\\
            &\quad =-\frac{4is^3\varphi_n^{n-2}(s)}{\sqrt n\xi_{1h}^3(t)}\bbE[g_{1h}(X_1, t)g_{1h}(X_2, t)g_{2h}(X_1, X_2, t)]+O(n^{-1}e^{-s^2/12}P^{\ge 1}(s)),\label{linear term L_n^{(4)}}
\end{align}
    where the last two equalities we use the property that $\bbE[g_{1h}(X_i, t)]=0$, $\bbE[\zeta_{h}(X_i, t)]=0$, $\bbE[g_{1h}^2(X_i, t)]=\xi_{1h}^2$, $\bbE[g_{1h}(X_i, t)\zeta_h(X_i,t)]=\bbE[g_{1h}(X_i, t)g_{1h}(X_j, t)g_{2h}(X_i, X_j, t)]$, and $\xi_{1h}(t)\asymp \rho_n^3 $.

We combine the results in \Cref{intercept L_n^{(1)},linear term alpha,linear term L_n^{(2)},linear term L_n^{(3)},linear term L_n^{(4)}} and obtain that, for the intercept and linear terms in \eqref{eq: Ttilde leading term}, 
\begin{equation}
        \label{T9}\begin{aligned}
            &\bbE\left[e^{isL_n^{(1)}}\Bigg\{1+is\Big(\alpha_h+L_n^{(2)}-\dfrac{L_n^{(3)}+L_n^{(4)}}{2}\Big) \Bigg\} \right]\\
            &\quad =e^{-s^2/2}\bigg\{1-\frac{is}{\sqrt n\xi_{1h}^3} \left(\frac{\bbE[g_{1h}^3(X_1, t)]}{2}+2\bbE[g_{1h}(X_1, t)g_{1h}(X_2, t)g_{2h}(X_1, X_2, t)]\right)\\
            &\qquad \qquad \qquad+\frac{is^3}{\sqrt n\xi_{1h}^3}\left(\frac{\bbE[g_{1h}^3(X_1,t)]}{3}+\bbE[g_{1h}(X_1, t)g_{1h}(X_2, t)g_{2h}(X_1, X_2, t)]\right)\bigg\}\\
            &\qquad+O(n^{-1}\log n P^{\ge 1}(s)e^{-s^2/12}).
        \end{aligned}
    \end{equation}
Following the same procedure, it can be shown that the quadratic term $\bbE[e^{isL_n^{(1)}}\cdot s^2(\alpha_h+L_n^{(2)}-\dfrac{L_n^{(3)}+L_n^{(4)}}{2})^2]$ in \eqref{eq: Ttilde leading term} can be bounded by the order $O(n^{-1}e^{-s^2/12}P^{\ge 1}(s))$. 

Finally, the remainder term in (\ref{eq: Ttilde leading term}) can be bounded by 
\begin{align}
	\left|\bbE\left[e^{\mathrm{i}sL_n^{(1)}} O\left(\left\vert\alpha_n+L_n^{(2)}-\dfrac{L_n^{(3)}+L_n^{(4)}}{2}\right\vert^3s^3\right)\right] \right| & \lesssim s^3\bbE\left[\left\vert\alpha_n+L_n^{(2)}-\dfrac{L_n^{(3)}+L_n^{(4)}}{2}\right\vert^3\right] \nonumber \\
    & = O(s^3 n^{-3/2}\log^3n). \label{remainder term}
\end{align}
To see the last bound, note that  $\alpha_n\asymp n^{-1/2}$, and for any fixed $k>0$, by applying the multivariate version of Bernstein's inequality (Theorem 1 in \cite{10.1214/EJP.v12-430}), $L_n^{(2)}$, $L_n^{(3)}$, and $L_n^{(4)}$ are bounded by $C(k)n^{-1/2} \log n$ with probability $1-O(n^{-k})$.
On this event, 
\(\left|\alpha_n+L_n^{(2)}-\frac{L_n^{(3)}+L_n^{(4)}}{2}\right|^3
=O(n^{-3/2}\log^3 n).\)
On the complement event, the same quantity is at most $O(n^{3/2})$, since the terms are universally bounded by $\sqrt n$. Choosing $k=3$ makes the contribution from this complement event $O(n^{3/2}n^{-3})=O(n^{-3/2})$, which is absorbed into $O(n^{-3/2}\log^3 n)$. Thus, the expectation is $O(n^{-3/2}\log^3 n)$.

By plugging the bounds in \eqref{T9} and \eqref{remainder term} into \eqref{eq: Ttilde leading term}, we have
\begin{align}
\bbE\Big[e^{is J_n}\Big] = Ch.f.(G_{nh};s) + O(n^{-1}\log n P^{\ge 1}(s)e^{-s^2/12}) + O(s^3 n^{-3/2}\log^3 n),\label{Jn expansion}
\end{align}
where the characteristic function approximation $Ch.f.(G_{nh};s)$ is defined as 
    $$\begin{aligned}
        Ch.f.(G_{nh};s)&:=e^{-s^2/2}\bigg\{1-\frac{is}{\sqrt n\xi_{1h}^3} \left(\frac{\bbE[g_{1h}^3(X_1, t)]}{2}+2\bbE[g_{1h}(X_1, t)g_{1h}(X_2, t)g_{2h}(X_1, X_2, t)]\right)\\
            & \qquad \qqquad+\frac{is^3}{\sqrt n\xi_{1h}^3}\left(\frac{\bbE[g_{1h}^3(X_1,t)]}{3}+\bbE[g_{1h}(X_1, t)g_{1h}(X_2, t)g_{2h}(X_1, X_2, t)]\right)\bigg\}.
    \end{aligned}$$
Correspondingly, we invoke the following two integrals $$\int_{-\infty}^u\frac{1}{2\pi}\int_\bbR e^{-\frac{s^2}{2}}e^{-isx}is^3dsdx=\int_{-\infty}^u\frac{x(3-x^2)}{\sqrt{2\pi}}e^{-\frac{x^2}{2}}dx=(u^2-1)\varphi(u);$$ and $$\int_{-\infty}^u\frac{1}{2\pi}\int_\bbR e^{-\frac{s^2}{2}}e^{-isx}isdsdx=\int_{-\infty}^u\frac{x}{\sqrt{2\pi}}e^{-\frac{x^2}{2}}dx=-\varphi(u),$$ to derive the distribution approximation via the Fourier inversion identity $$\begin{aligned}
        G_{nh}(x)=&\Phi(x)+\frac{\varphi(x)}{\sqrt n\xi_{1h}^3}\cdot\bigg\{(-1)\cdot(-1)\left(\frac{\bbE[g_{1h}^3(X_1, t)]}{2}+2\bbE[g_{1h}(X_1, t)g_{1h}(X_2, t)g_{2h}(X_1, X_2, t)]\right)\\
        &+(x^2-1)\left(\frac{\bbE[g_{1h}^3(X_1, t)]}{3}+\bbE[g_{1h}(X_1, t)g_{1h}(X_2, t)g_{2h}(X_1, X_2, t)]\right)\bigg\}\\
        =&\Phi(x)+\frac{\varphi(x)}{\sqrt n\xi_{1h}^3}\cdot\bigg\{\frac{2x^2+1}{6}\bbE[g_{1h}^3(X_1, t)]+(x^2+1)\bbE[g_{1h}(X_1, t)g_{1h}(X_2, t)g_{2h}(X_1, X_2, t)]\bigg\}.
    \end{aligned}$$
It follows that, by \eqref{Jn expansion}, 
\begin{align}
	& \int_0^{n^{\epsilon'}}\left|\frac{\bbE\big[e^{\mathrm{i}sJ_n}\big]-Ch.f.(\mathcal{G}_{nh};s)}{s}\right| ds \nonumber\\
	= & O\left( n^{-1}\int_0^{n^{\epsilon'}}\mid \text{P}^{\ge 0}(s) e^{-s^2 / 12}\mid ds + n^{-3/2}\log^3n \int_0^{n^{\epsilon'}} s^2 ds\right) \nonumber \\
	= & O(n^{-1} +n^{3\epsilon'-3/2}\log^3n)= O(n^{-1}), \label{eq: Jn bound}
\end{align}
where the last equality holds since $\epsilon' < 1/7$.

\paragraph{(2) The remaining term $\bbE[e^{isJ_{n}}\sigma_X^2]$ in \eqref{T1}.}
By \Cref{variance U}, we have the following expansion:
\begin{align*}
     \bbE[e^{isJ_{n}}\sigma_X^2]   &  = \bbE\left[e^{isJ_{n}}\left(\bbE[\sigma_X^2]+\frac{1}{n}\sum_{i=1}^ng_{\sigma;1}(X_i)+R_{\sigma, n}\right)\right] \\
     & \leq \bbE[e^{isJ_{n}}]\bbE[\sigma_X^2]+\bbE\left[e^{isJ_{n}}g_{\sigma;1}(X_1)\right]+\bbE\left[|R_{\sigma, n}|\right], \\
\end{align*}
where $R_{\sigma, n}=\tilde O_{p, 1}(n^{-1}(Th)^{-1}\log n)$ is the remainder term in the expansion of $\sigma_X^2$ with universal bound $|R_{\sigma, n}|\le C (Th)^{-1}$. 
It follows that $\bbE\left[|R_{\sigma, n}|\right]=O(n^{-1}(Th)^{-1}\log n)$.

By \eqref{Jn expansion}, we have $\bbE[e^{isJ_{n}}]=Ch.f.(G_{nh};s)+O(n^{-1}\log n P^{\ge 1}(s)e^{-s^2/12}+s^3 n^{-3/2}\log^3 n)$, where the leading term is of the form $e^{-s^2/2}P^{\ge 0}(s)$.
Together with $\bbE[\sigma_X^2]\asymp (Th)^{-1}$, we have \(|\bbE[e^{isJ_{n}}]|\bbE[\sigma_X^2] = O(e^{-s^2/2}P^{\ge 0}(s)(Th)^{-1})\).
Moreover, by applying the same expansion technique in \eqref{eq: Ttilde leading term} together with $\|g_{\sigma;1}\|_{\infty} \lesssim (Th)^{-1}$, we can obtain the bound
\(
        \left\vert\bbE[e^{isJ_{n}}g_{\sigma;1}(X_1)]\right\vert=O(e^{-s^2/12}P^{\ge 0}(s)(Th)^{-1}).\)
Therefore, we have
\begin{align}
        \int_0^{n^{\epsilon'}}\left\vert\frac{s^2}{2n\rho_n}
\bbE[e^{isJ_n}\sigma_X^2] \frac{1}{s}\right\vert ds & \lesssim \int_0^{n^{\epsilon'}}\left[e^{-s^2/12}P^{\ge 1}(s)(n\rho_nTh)^{-1} + s(n^{-2}(\rho_nTh)^{-1}\log n)\right] ds \nonumber \\
& =O((n\rho_nTh)^{-1} + n^{2\epsilon'-2}(\rho_nTh)^{-1}\log n) \nonumber \\
& = O((n\rho_nTh)^{-1}) . \label{eq: Jn sigma_X^2 bound}
    \end{align}
Here, we use the fact that $\int_0^\infty e^{-s^2/12}P^{\ge 0}(s) ds$ is bounded by a constant, and $\epsilon'<\frac{1}{2}$.

    Finally, we also have the following bound  
    $$\int_0^{n^{\epsilon'}}\bbE\left\vert e^{isJ_{n}}\frac{s^2}{2}c_\delta n^{-1}\log n\right\vert\cdot s^{-1} ds=n^{-1}\log n\cdot O\left(\int_0^{n^{\epsilon'}}\vert s\bbE[e^{isJ_{n}}]\vert ds\right)=O(n^{-1}\log n),$$ since the leading term  in $\bbE[\exp(isJ_{n})]$ is of the form $e^{-t^2/2}P^{\ge 0}(s)$.

    By combining the above result with (\ref{eq: Jn bound}) and (\ref{eq: Jn sigma_X^2 bound}), we obtain:  $$\int_0^{n^{\epsilon'}}\left\vert\frac{\bbE[e^{is(J_{n}+\tilde\Delta_n+\delta_T)}]-Ch.f.(G_{nh};s)}{s}\right\vert ds=O(n^{-1}\log n+(n\rho_n)^{-1}(Th)^{-1}).$$
\end{proof}

\section{Technical Lemmas}
\begin{lemma}(Koksma-Hlawka Inequality, Theorem 2.11 in \cite{niederreiter1992random})\label{lem:KoksmaHlawka}
    For any $x_1, \cdots, x_N\in[a, b)$, let $f$ have bounded variation $TV(f)$ on $[a, b]$. Then we have $$\left\vert\frac{1}{N}\sum_{i=1}^Nf(x_i)-\int_a^b f(u)du\right\vert\leq TV(f)\cdot D_N^*(x_1, \cdots, x_n),$$ where $D_N^*(x_1, \cdots, x_n)=\sup_{t\in[a, b]\cap[0, \infty]}\left\vert\frac{1}{N}\sum_{i=1}^N\mathbf{1}_{[0, t)}(x_i)-t\right\vert$.
\end{lemma}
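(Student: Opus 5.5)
We would prove the inequality on the normalized interval $[0,1)$, which is the setting implicit in the definition of $D_N^*$ (it compares $\frac1N\#\{i: x_i<t\}$ with $t$). The general interval $[a,b)$ is reduced to this case by the affine map $u\mapsto a+(b-a)u$. This map leaves $TV(f)$ unchanged and rescales the integral by $(b-a)$. For the kernel application this is routine bookkeeping. The idea is to write the quadrature error as a Stieltjes integral of the \emph{local discrepancy} $u\mapsto \frac1N\#\{i:x_i<u\}-u$ against $df$. One then bounds it by $\sup_u|\cdot|$ times the total variation.

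\emph{Step 1 (ordering and telescoping).} Relabel the points so that $x_1\le\cdots\le x_N$, and set $x_0:=0$, $x_{N+1}:=1$. On each gap $(x_k,x_{k+1})$ the empirical counting function equals $k/N$ (up to endpoint conventions). We aim for the Abel-summation identity
\[
\frac1N\sum_{k=1}^N f(x_k)-\int_0^1 f(u)\,du=\sum_{k=0}^{N}\int_{[x_k,x_{k+1})}\Bigl(u-\frac kN\Bigr)\,df(u).
\]
This is the same as $-\int_0^1 (F_N(u)-u)\,df(u)$, where $F_N(u)=\frac1N\#\{i:x_i<u\}$. To verify the identity, integrate by parts on each gap: $\int_{x_k}^{x_{k+1}}(u-k/N)\,df(u)$ equals the boundary terms $(x_{k+1}-k/N)f(x_{k+1})-(x_k-k/N)f(x_k)$ minus $\int_{x_k}^{x_{k+1}}f(u)\,du$. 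Then sum over $k$. The $x_k f(x_k)$ terms telescope, the $-\frac kN f(x_k)+\frac{k-1}N f(x_k)$ terms produce $-\frac1N f(x_k)$, and the endpoint terms vanish because $x_0=0$ and $x_{N+1}-N/N=0$. This gives the identity with an overall sign, which is irrelevant for the absolute value.

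\emph{Step 2 (bound).} For $u\in[x_k,x_{k+1})$ we have $|u-k/N|=|F_N(u)-u|$ (or its one-sided limit), so $|u-k/N|\le D_N^*$ by the definition of the star discrepancy, the supremum over $t$ also controlling one-sided limits. Hence the right-hand side is at most $D_N^*\sum_k\int_{[x_k,x_{k+1})}|df|\le D_N^*\cdot TV(f)$, which is the claim.

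\emph{Main obstacle.} The delicate point is Step 1 when $f$ has jumps, possibly located exactly at the nodes $x_k$. There the Riemann--Stieltjes integration by parts and the choice of half-open gaps must be matched with the left-continuous $F_N$ (strict inequality $x_i<u$). Our first attempt is to work with the Lebesgue--Stieltjes signed measure $df$ of bounded variation and check that mass of $df$ at a node $x_k$ is paired with the value $k/N$ or $(k-1)/N$ consistently. If that bookkeeping becomes messy, we would fall back on approximation: prove the inequality for continuous (or step) $f$ of bounded variation, where integration by parts is clean. We would then pass to general $f$ by a standard approximation argument, adjusting $f$ only at finitely many points to match the values at the $x_k$. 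Neither the integral nor the total-variation bound deteriorates under this approximation.
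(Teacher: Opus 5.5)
Your argument on $[0,1)$ is the classical Abel-summation proof of Koksma's inequality, i.e.\ the one-dimensional case of the cited result. You order the points, telescope over the gaps, bound $|u-k/N|\le D_N^*$ on each gap, and use additivity of variation. The paper gives no proof of its own, only the citation, so in substance you match it. Three points need fixing.

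\textbf{The obstacle you flag is not real.} Read each $\int_{x_k}^{x_{k+1}}(u-k/N)\,df(u)$ as a Riemann--Stieltjes integral over the \emph{closed} gap $[x_k,x_{k+1}]$. The integrand is continuous and $f$ is of bounded variation, so:
\begin{itemize}
\item the integral exists;
\item the integration-by-parts formula you wrote, with the point values $f(x_k)$ and $f(x_{k+1})$, holds for arbitrary BV $f$ with no continuity assumption;
\item $|\int_{x_k}^{x_{k+1}}g\,df|\le \max_{[x_k,x_{k+1}]}|g|\,V(f;[x_k,x_{k+1}])$;
\item $\sum_k V(f;[x_k,x_{k+1}])=V(f;[0,1])$.
\end{itemize}
A jump of $f$ at a node is then split automatically between the two adjacent gaps. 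Each piece is paired with a value of a continuous function that equals $F_N(u)-u$ up to sign on the open gap, where $F_N\equiv k/N$, so it is bounded by $D_N^*$. Degenerate gaps coming from ties contribute nothing.

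By contrast, your displayed identity with half-open gaps and a Lebesgue--Stieltjes measure $df$ only sees one-sided limits of $f$ at the nodes, not the values $f(x_k)$. It is not correct as written for general BV $f$, so drop it. If you keep the approximation fallback, take piecewise-linear interpolants of $f$ on partitions containing all $x_k$ with mesh tending to zero. These preserve node values, do not increase variation, and their integrals are trapezoidal sums converging to $\int_0^1 f$. Modifying an approximant at finitely many points instead destroys continuity and can increase the variation. Also, the $f(x_k)$ coefficients combine to $+1/N$, not $-1/N$, so the identity holds exactly as displayed with no sign change.

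\textbf{The reduction from $[a,b)$ is not bookkeeping.} The affine change of variables gives
\[
\Bigl|\frac1N\sum_{i=1}^N f(x_i)-\frac{1}{b-a}\int_a^b f(u)\,du\Bigr|\le TV(f)\,D_N^*(y_1,\dots,y_N),\qquad y_i=\frac{x_i-a}{b-a}.
\]
The lemma as printed has an unnormalized $\int_a^b f$ and uses the discrepancy of the raw $x_i$ against $[0,t)$. It is false unless $[a,b]=[0,1]$: for $f\equiv 1$ on $[0,2]$ the left side is $1$ and the right side is $0$. State explicitly that what you prove is the normalized version.

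\textbf{The paper's application needs a weighted version.} In Lemma~\ref{kappa j} the lemma is applied to the weighted Riemann sum $\sum_\ell g(t_\ell)\Delta t_\ell$ with possibly unequal spacings, so the equal-weight inequality does not apply verbatim. Your Abel-summation argument goes through with weights $\Delta t_\ell$ in place of $1/N$. More directly, summing $|\int_{t_{\ell-1}}^{t_\ell}g(u)\,du-g(t_\ell)\Delta t_\ell|\le \Delta t_\ell\,V(g;[t_{\ell-1},t_\ell])$ over $\ell$ gives the needed bound $\max_\ell\Delta t_\ell\cdot V(g;[0,\mathcal{L}])$.
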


\begin{lemma}
\label{kappa j} Let $\kappa_{0j}=\int K^j(u)du$ and define for $j \geq 1$
 $$\epsilon_{0j}:=\kappa_{0j}-\frac{1}{h}\sum_{\ell=1}^TK^j\left(\frac{t_\ell-t}{h}\right)\Delta t_\ell, \quad \text{where } \Delta t_\ell:= t_\ell - t_{\ell-1}.$$ 
 Under Assumptions~\ref{assump: time obs} and~(K1)-(K4), for every fixed $j \geq 1$, there exists a constant $c'$ such that uniformly for $t \in [\delta, \mcalL-\delta]$,
 \[\epsilon_{0j} = O\left(\frac{1}{Th}+\exp(-c'h^{-\alpha})\right).\]
In particular, for $j=1$, 
\[\frac{1}{T}\sum_{\ell=1}^TK_h(t_\ell-t) = \kappa_{0j} - O\left(\frac{1}{Th}+\exp(-c'h^{-\alpha})\right) = O(1), \quad \text{where } K_h(u)=\frac{1}{h}K(\frac{u}{h}).\]
\end{lemma}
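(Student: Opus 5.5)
The plan is to read $h^{-1}\sum_{\ell}K^j((t_\ell-t)/h)\Delta t_\ell$ as a left/right Riemann sum of $f_j(s):=h^{-1}K^j((s-t)/h)$ over the partition $0=t_0<t_1<\cdots<t_T=\mcalL$ of $[0,\mcalL]$. I will compare it with $\int_0^{\mcalL}f_j(s)\,ds$ and then compare that integral with $\kappa_{0j}$. This splits $\epsilon_{0j}$ into two pieces:
\[
\epsilon_{0j}=\Bigl\{\kappa_{0j}-\int_0^{\mcalL}f_j(s)\,ds\Bigr\}+\Bigl\{\int_0^{\mcalL}f_j(s)\,ds-\sum_{\ell=1}^T f_j(t_\ell)\Delta t_\ell\Bigr\}.
\]
The first brace is a truncation error and the second is a discretization error.

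\emph{Truncation.} Substituting $u=(s-t)/h$ gives $\int_0^{\mcalL}f_j\,ds=\int_{-t/h}^{(\mcalL-t)/h}K^j(u)\,du$. The first brace is therefore the integral of $K^j$ outside an interval that contains $[-\delta/h,\delta/h]$, because $t\in[\delta,\mcalL-\delta]$. Using $|K|^j\le M^{j-1}|K|$ from (K2), evenness from (K1), and the tail bound (K3), it is at most $2M^{j-1}C'\exp\{-c\delta^\alpha h^{-\alpha}\}$. This gives the term $\exp(-c'h^{-\alpha})$ with $c'=c\delta^\alpha$, uniformly in $t$.

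\emph{Discretization.} On each cell $[t_{\ell-1},t_\ell]$,
\[
\Bigl|\int_{t_{\ell-1}}^{t_\ell}\{f_j(s)-f_j(t_\ell)\}\,ds\Bigr|\le \Delta t_\ell\cdot TV\bigl(f_j;[t_{\ell-1},t_\ell]\bigr).
\]
Summing over $\ell$ bounds the second brace by $\Delta_T\,TV(f_j;\bbR)$. Total variation is invariant under the affine change of variable, so $TV(f_j)=h^{-1}TV(K^j)$. Since $K$ is bounded by $M$ and has bounded variation by (K4) with $j=0$, we get $TV(K^j)\le jM^{j-1}TV(K)<\infty$. With $\Delta_T\le C/T$ from Assumption~\ref{assump: time obs}, the second brace is $O((Th)^{-1})$ uniformly in $t$. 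Alternatively one could invoke Lemma~\ref{lem:KoksmaHlawka}, but the cellwise bound handles non-uniform grids directly.

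\emph{The case $j=1$.} Converting $h^{-1}\sum K(\cdot)\Delta t_\ell$ into $T^{-1}\sum K_h(t_\ell-t)$ needs $\Delta t_\ell$ comparable to $\mcalL/T$. For an equispaced grid, $\Delta t_\ell=\mcalL/T$ and the identity holds exactly up to the constant $\mcalL$. For a general grid I would add a quasi-uniformity lower bound $\Delta t_\ell\ge c/T$ to obtain the two-sided statement $T^{-1}\sum_\ell K_h(t_\ell-t)\asymp 1$; the upper bound $O(1)$ alone already follows from $\max_\ell K_h\le M/h$ together with a counting argument. I expect this normalization step to be the only delicate point. The remaining bounds $\sum_\ell K_h^2\asymp T/h$ and $\max_\ell w_\ell=O((Th)^{-1})$ then follow from the $j=2$ case and from $\|K\|_\infty=M$.
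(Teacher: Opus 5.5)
Your bound on $\epsilon_{0j}$ is correct, and it uses the same split as the paper: a truncation error handled by (K1)--(K3) with $|K|^j\le M^{j-1}|K|$, and a discretization error handled by $TV(K^j)\le jM^{j-1}TV(K)<\infty$. The one difference is in the discretization step. The paper cites the Koksma--Hlawka inequality (Lemma~\ref{lem:KoksmaHlawka}), whereas you sum the cellwise estimate $\bigl|\int_{t_{\ell-1}}^{t_\ell}\{f_j(s)-f_j(t_\ell)\}\,ds\bigr|\le \Delta t_\ell\,TV(f_j;[t_{\ell-1},t_\ell])$. Your version is more elementary and better suited to the weighted, possibly non-uniform sum $\sum_\ell f_j(t_\ell)\Delta t_\ell$. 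Koksma--Hlawka as stated controls equal-weight averages $N^{-1}\sum_i f(x_i)$, so applying it here needs a reduction the paper does not write out. Your bound gives $\Delta_T\,h^{-1}TV(K^j)=O((Th)^{-1})$ directly.

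The problem is in your treatment of $j=1$. You are right that passing from $h^{-1}\sum_\ell K(\cdot)\Delta t_\ell$ to $T^{-1}\sum_\ell K_h(t_\ell-t)$ requires $\Delta t_\ell$ comparable to $1/T$. The paper's proof silently takes $\Delta t_\ell=1/T$, which is exact only for an equispaced grid with $\mcalL=1$. However, your claim that the upper bound $O(1)$ follows from $\max_\ell K_h\le M/h$ plus a counting argument under Assumption~\ref{assump: time obs} alone is false. A counting argument needs $\#\{\ell:|t_\ell-t|\le h\}=O(Th)$, which is a lower bound on local spacing. The condition $\Delta_T\le C/T$ only bounds gaps from above.

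Concretely, suppose $C>\mcalL$. Then $\lceil \mcalL T/C\rceil$ equispaced points already keep every gap at most $C/T$. The remaining $T-\lceil \mcalL T/C\rceil\approx(1-\mcalL/C)T$ points can be inserted into $[t,t+h]\subset[0,\mcalL]$ without violating the assumption. For the Gaussian kernel this gives $T^{-1}\sum_\ell K_h(t_\ell-t)\ge (1-\mcalL/C+O(T^{-1}))\,h^{-1}\min_{|u|\le 1}K(u)\to\infty$, even though $\epsilon_{01}=O((Th)^{-1})$ still holds for the $\Delta t_\ell$-weighted sum. So you need equispacing or quasi-uniformity $\Delta t_\ell\gtrsim 1/T$ for both directions of the $j=1$ claim, not only for the lower bound. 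The exact form $\kappa_{01}-O(\cdot)$ further requires $\Delta t_\ell=1/T$, that is, $\mcalL=1$.
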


\begin{proof}[\textbf{Proof of \Cref{kappa j}}]

We first apply the triangle inequality:
    $$\vert\epsilon_{0j}\vert\leq\left\vert\frac{1}{h}\int_0^{\mcalL}K^j\left(\frac{u-t}{h}\right)du-\frac{1}{h}\sum_\ell K^j\left(\frac{t_\ell-t}{h}\right)\Delta t_\ell\right\vert+\left\vert\int_{-\frac{t}{h}}^{\frac{\mcalL-t}{h}}K^j(u)du-\int_{-\infty}^{\infty}K^j(u)du\right\vert.$$
In the first term, the integrand $g(u):=K^j\left(\frac{u-t}{h}\right)$ has bounded variation on $[0, \mcalL]$ as $TV(g; [0, \mcalL]) \leq  TV(K^j, [-t/h, (\mcalL-t)/h])\leq  TV(K^j, \mathbb{R}) \leq j M^{j-1} TV(K, \mathbb{R})<\infty$, where the second last inequality holds as $K(\cdot)$ is bounded by $M$ by (K2) and the last inequality holds by (K4) with $j=0$. Thus, by
\Cref{lem:KoksmaHlawka} and Assumption~\ref{assump: time obs},
$$\left\vert\frac{1}{h}\int_0^{\mcalL}K^j\left(\frac{u-t}{h}\right)du-\frac{1}{h}\sum_\ell K^j\left(\frac{t_\ell-t}{h}\right)\Delta t_\ell\right\vert\leq \frac{1}{h} \cdot  TV(g;[0, \mcalL]) \cdot\frac{C}{T} =O(\frac{1}{hT}).$$

For the second term, since $t\in [\delta, \mcalL-\delta]$, both lower and upper limits satisfy $t/h \geq \delta/ h$ and $(\mcalL-t)/h \geq \delta /h$. By (K3) and $|K^j(u)| \leq M^{j-1}|K(u)|$, we have 
\begin{align*}
\left|\int_{-\frac{t}{h}}^{\frac{\mcalL -t}{h}} K^j(u) d u - \int_{-\infty}^{\infty}K^j(u)du\right| & \leq \int_{\frac{\mcalL-t}{h}}^{\infty}\left|K^j(u)\right| d u+\int_{-\infty}^{-\frac{t}{h}}\left|K^j(u)\right| d u \\
& \leq M^{j-1}\left(\int_{\frac{\mcalL-t}{h}}^{\infty}|K(u)| d u+\int_{\frac{t}{h}}^{\infty}|K(u)| d u\right) \\
& \leq 2M^{j-1} C^{\prime} \exp \left\{-c(\delta / h)^\alpha\right\} \\
& =O\left(\exp \left\{-c^{\prime} h^{-\alpha}\right\}\right),
\end{align*}
where the second last inequality holds by the tail bound in (K4) and the fact that $K(\cdot)$ is even in (K1), and $c'$ is a constant. 

Combining the bounds for these two terms completes the proof. The special case for $j=1$ holds as $\kappa_{01}=1$ by (K1)
\end{proof}

\begin{lemma}
\label{kappa i1} Let $\kappa_{i 1}:=  \int_{-\infty}^{\infty} u^i K(u) d u$ and define for $1 \le i \le \nu$
$$\epsilon_{i1}:=\kappa_{i1}-\frac{1}{h}\sum_{\ell=1}^T\left(\frac{t_\ell-t}{h}\right)^iK\left(\frac{t_\ell-t}{h}\right) \Delta t_\ell, \quad \text{where } \Delta t_\ell:= t_\ell - t_{\ell-1}.$$
Under Assumptions~\ref{assump: time obs} and~(K1)-(K4), for every fixed $1 \le i \le \nu$, there exists a constant $c'$ such that uniformly for $t \in [\delta, \mcalL-\delta]$,
$$\epsilon_{i1}=O\left(\frac{1}{Th}+h^{-(i-1)}\exp(-{c'}h^{-\alpha})\right).$$ 
Therefore, 
\begin{itemize}
	\item for $i=1, \cdots, \nu-1$, due to vanishing moments $\kappa_{i1}=0$, it follows that 
	\[\frac{1}{h}\sum_{\ell=1}^T\left(\frac{t_\ell-t}{h}\right)^iK\left(\frac{t_\ell-t}{h}\right) \Delta t_\ell = O\left(\frac{1}{Th}+h^{-(i-1)}\exp(-{c'}h^{-\alpha})\right);\]
	\item for $i=\nu$, \[\frac{1}{h}\sum_{\ell=1}^T\left(\frac{t_\ell-t}{h}\right)^\nu K\left(\frac{t_\ell-t}{h}\right) \Delta t_\ell = \kappa_{\nu1}+O\left(\frac{1}{Th}+h^{-(\nu-1)}\exp(-{c'}h^{-\alpha})\right).\]
\end{itemize}

\end{lemma}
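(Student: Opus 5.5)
The plan mirrors the proof of \Cref{kappa j}. The only change is that the integrand now carries the polynomial factor $u^i$. Write $g_i(u):=\left(\frac{u-t}{h}\right)^iK\left(\frac{u-t}{h}\right)$ and split $\epsilon_{i1}$ with the triangle inequality into a discretization term and a truncation term:
\[
|\epsilon_{i1}|\le\left|\frac1h\int_0^{\mcalL}g_i(u)\,du-\frac1h\sum_{\ell=1}^T g_i(t_\ell)\Delta t_\ell\right|+\left|\int_{-t/h}^{(\mcalL-t)/h}u^iK(u)\,du-\int_{-\infty}^{\infty}u^iK(u)\,du\right|.
\]
The first identity uses the change of variables $v=(u-t)/h$, which gives $\frac1h\int_0^{\mcalL}g_i(u)\,du=\int_{-t/h}^{(\mcalL-t)/h}v^iK(v)\,dv$.

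For the discretization term, I would apply the Koksma--Hlawka inequality (\Cref{lem:KoksmaHlawka}) to $g_i$ on $[0,\mcalL]$, viewing the Riemann sum with mesh $\Delta_T\le C/T$ from \Cref{assump: time obs} as a quadrature. The total variation is invariant under the affine reparametrization, so
\[
TV(g_i;[0,\mcalL])\le TV(u^iK(u);\bbR)<\infty
\]
by (K4). The discrepancy of the grid is $O(1/T)$, which gives a bound of $\frac1h\cdot O(1)\cdot O(1/T)=O((Th)^{-1})$.

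For the truncation term, I would use the interior condition $\min\{t,\mcalL-t\}\ge\delta$, so both cutoffs exceed $\delta/h$. The second part of (K3), combined with the evenness of $K$ from (K1), gives
\[
\int_{\delta/h}^\infty|u|^i|K(u)|\,du+\int_{-\infty}^{-\delta/h}|u|^i|K(u)|\,du\le 2C'(\delta/h)^{i-1}e^{-c(\delta/h)^\alpha}=O\left(h^{-(i-1)}e^{-c'h^{-\alpha}}\right).
\]
Adding the two bounds yields the claimed rate for $\epsilon_{i1}$, uniformly in $t\in[\delta,\mcalL-\delta]$. The two bullet consequences follow immediately: $\kappa_{i1}=0$ for $1\le i\le\nu-1$ by (K1), while $\kappa_{\nu1}$ stays as the leading constant when $i=\nu$.

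The main obstacle is the discretization step. Koksma--Hlawka is stated for equal-weight averages $\frac1N\sum f(x_i)$, whereas here the weights $\Delta t_\ell$ are nonuniform. I would handle this directly: write the difference as $\sum_\ell\int_{t_{\ell-1}}^{t_\ell}\{g_i(u)-g_i(t_\ell)\}\,du$ and bound each cell by $\Delta t_\ell\cdot TV(g_i;[t_{\ell-1},t_\ell])$. Summing over cells gives at most $\Delta_T\,TV(g_i)=O(1/T)$. This is the same justification implicitly used in \Cref{kappa j}. The small boundary piece $[0,t_0]$ is empty since $t_0=0$.
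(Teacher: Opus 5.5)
Your proposal is correct and follows essentially the same route as the paper: the same triangle-inequality split into a discretization term, controlled by the total variation of $u^iK(u)$ from (K4) together with $\Delta_T\le C/T$, and a truncation term, bounded by $2\int_{\delta/h}^\infty u^i|K(u)|\,du$ via (K3) and the evenness of $K$. The only difference is in the discretization step, where the paper appeals directly to \Cref{lem:KoksmaHlawka}; your cell-wise bound $\sum_\ell \Delta t_\ell\, TV(g_i;[t_{\ell-1},t_\ell])\le \Delta_T\, TV(g_i;[0,\mcalL])$ handles the nonuniform weights $\Delta t_\ell$ more transparently than that equal-weight form of the inequality.
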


\begin{proof}[\textbf{Proof of \Cref{kappa i1}}] By the triangle inequality,
    $$\begin{aligned}
\vert\epsilon_{i1}\vert&\leq\left\vert\frac{1}{h}\int_0^{\mcalL}\left(\frac{u-t}{h}\right)^iK\left(\frac{u-t}{h}\right)du-\frac{1}{h}\sum_\ell\left(\frac{t_\ell-t}{h}\right)^i K\left(\frac{t_\ell-t}{h}\right)\Delta t_\ell\right\vert\\
        &\quad+\left\vert\frac{1}{h}\int_0^{\mcalL}\left(\frac{u-t}{h}\right)^iK\left(\frac{u-t}{h}\right)du-\int_{-\infty}^{\infty}u^iK(u)du\right\vert. 
    \end{aligned}$$
    For the first term, we observe that $$\frac{1}{h}\int_0^{\mcalL}\left(\frac{u-t}{h}\right)^iK\left(\frac{u-t}{h}\right)du=\int_{-\frac{t}{h}}^{\frac{\mcalL-t}{h}}u^iK(u)du.$$ The integrand $f(u)=u^i K(u)$ has bounded total variation {by (K4) and $\Delta t_\ell /h \le C/(Th)$ by Assumption~\ref{assump: time obs}}. Therefore, by \Cref{lem:KoksmaHlawka}, we obtain the error bound for the first term is $O\left(\frac{1}{Th}\right)$. 
    For the second term, {using $\min\{t,\mcalL-t\} \geq \delta$ and (K3), we have 
    \begin{align*}
\left\vert\int_{-\frac{t}{h}}^{\frac{\mcalL-t}{h}}u^iK(u)du-\int_{-\infty}^{\infty}u^iK(u)du\right\vert & \leq \int_{(\mathcal{L}-t) / h}^{\infty}|u|^i|K(u)| d u+\int_{-\infty}^{-t / h}|u|^i|K(u)| d u \\
& \leq 2 \int_{\delta / h}^{\infty} u^i|K(u)| d u \\
& \leq 2 C\left(\frac{\delta}{h}\right)^{i-1} \exp \left\{-c\left(\frac{\delta}{h}\right)^\alpha\right\} \\
& =O\left(h^{-(i-1)} e^{-c^{\prime}  h^{-\alpha}}\right).
\end{align*}}
Combining these two bounds completes the proof. {The special cases hold due to vanishing moments by (K1).}
\end{proof}

We next present several technical lemmas that are useful for bounding the higher-order residual terms in the difference between the empirical network moment and its noiseless U-statistics, as well as for controlling the CDF approximation error. To this end, we first introduce some basic definitions.

Let $H=(\mcalV(H), \mcalH(H))$ be a hypergraph consisting a vertex set $\mcalV(H)=[n]$ and a set of hyperedges $\mcalH(H)$. Each hyperedge $h$ consists of a set $\mcalV(h)\subset\mcalV(H)$ of $\vert\mcalV(h)\vert$ vertices. We also associate a real-valued weight $w_h$ with each hyperedge $h\in\mcalH(H)$. Given such a weighted hypergraph, we define the corresponding multilinear polynomial {of power $q$:} $$f(x)=\sum_{h\in\mcalH(H){: \vert\mcalV(h)\vert\leq q}}w_h\prod_{v\in\mcalV(h)}x_v.$$ For the statement of lemmas, we  define for each $r\in\{0, 1, \cdots, q\}$, $$\mu_r:=\max_{S {\subseteq}[n]:\vert S\vert=r}\left(\sum_{h\in\mcalH: S \subset \mcalV(h)}\vert w_h\vert\prod_{v\in\mcalV(h)\backslash S}\bbE[\vert Y_v\vert]\right),$$ where $\{Y_v\}_{v\in[n]}$ is a collection of real-valued random variables.

\begin{definition}{(Central Moment Bounded Random Variable)}
\label{central_def}
    A random variable $Z$ is called \textbf{central moment bounded} with real parameter $L>0$, if for any integer $i\geq1$, $$\bbE[\vert Z-\bbE[Z]\vert^i]\leq i\cdot L\cdot\bbE[\vert Z-\bbE[Z]\vert^{i-1}].$$
\end{definition}

\begin{lemma}{(Theorem 1.3 in \cite{schudy2011bernstein})}
\label{hypergraph lemma}
   Let $Y=(Y_1, \cdots, Y_n)$ be $n$ independent central moment bounded random variables  with the same parameter $L$ and $f(y)$ be a multilinear polynomial  of power $q$. Define $f(Y)=f(Y_1, \cdots, Y_n)$, then $$\bbP(\vert f(Y)-\bbE[f(Y)]\vert\geq\lambda)\leq e^2\max\left\{\exp\left(-\frac{\lambda^2}{var(f(Y))C^q}\right), \max_{r\in[q]}\left\{\exp\left(-\left(\frac{\lambda}{\mu_r L^r C^q}\right)^{1/r}\right)\right\}\right\},$$ where $C$ is a universal constant.
\end{lemma}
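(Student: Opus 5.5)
The plan is the moment method: bound $\bbE|f(Y)-\bbE f(Y)|^k$ for every even integer $k\ge 2$, and then apply Markov's inequality with $k$ chosen as a function of $\lambda$. Each of the $1+q$ branches inside the maximum in \Cref{hypergraph lemma} should come out of one regime of $k$.

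\textbf{Step 1 (centering).} Write $Y_v=\bbE Y_v+Z_v$ and expand every monomial $\prod_{v\in\mcalV(h)}Y_v$ as a sum over subsets $A\subseteq\mcalV(h)$. This writes $f(Y)-\bbE f(Y)$ as a multilinear polynomial $g(Z)$ in centered independent variables, with constant term zero. The new weights are
\[
w'_A=\sum_{h:\,A\subseteq\mcalV(h)} w_h\prod_{v\in\mcalV(h)\setminus A}\bbE Y_v .
\]
For any $S$ with $|S|=r$, the sum $\sum_{A\supseteq S}|w'_A|\prod_{v\in A\setminus S}\bbE|Z_v|$ is at most $2^q\mu_r$, using $\bbE|Z_v|\le 2\bbE|Y_v|$. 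So $g$ has $\mu$-parameters comparable to those of $f$. Likewise $\operatorname{var}(g(Z))=\operatorname{var}(f(Y))$.

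\textbf{Step 2 (moment expansion).} Expand $\bbE g(Z)^k$ as a sum over $k$-tuples of hyperedges $(A_1,\dots,A_k)$. By independence and centering, a tuple contributes only if every vertex in $\bigcup A_j$ occurs in at least two of the $A_j$.

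Iterating the central-moment-bounded condition gives
\[
\bbE|Z|^i\le i!\,L^{i-1}\,\bbE|Z|\qquad\text{and}\qquad \bbE|Z|^i\le i!\,L^{i-2}\,\bbE Z^2 .
\]
Vertices of multiplicity exactly two can be paired, and these pairings are the ones that produce the variance term $\operatorname{var}(g)$. Each additional occurrence of a vertex costs a factor $L$ together with a combinatorial factor of order $k$.

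The tuples are organized by processing the hyperedges $A_1,A_2,\dots$ in order. When $A_j$ arrives, let $S_j$ be its set of vertices that have already appeared. The sum over the choices of $A_j$ with a given $S_j$ is bounded by the quantity defining $\mu_{|S_j|}$, each fresh vertex being charged its factor $\bbE|Z_v|$. The resulting goal bound is
\[
\bbE g(Z)^k\le \max\Big\{\big(C^q k\operatorname{var}(g)\big)^{k/2},\ \max_{1\le r\le q}\big(C^q k^r\mu_r L^r\big)^{k}\Big\}.
\]

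\textbf{Step 3 (tail bound).} Apply Markov's inequality, $\bbP(|g|\ge\lambda)\le \lambda^{-k}\bbE g^k$. Choose $k$ as the minimum of $\lambda^2/(e^2C^q\operatorname{var})$ and of $(\lambda/(e\mu_rL^rC^q))^{1/r}$ over $r\in[q]$, rounded to an even integer. Whichever term attains the minimum yields the corresponding exponential in the statement. If this choice of $k$ is below $2$, the claimed bound exceeds $1$ because of the prefactor $e^2$, so there is nothing to prove.

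\textbf{Main obstacle.} The hard part is the combinatorial bookkeeping in Step 2. One must show that the number of ways the $k$ hyperedges can overlap, weighted by the moment factors, grows only like $k^{rk}$ or $k^{k/2}$ and not like $k^{qk}$. The first approach I would try is the Schudy--Sviridenko device: encode each overlap pattern as a sequence that records, for each new hyperedge, which earlier vertices it reuses, and charge every reused vertex a factor $k$ through the central-moment recursion. Getting the exponent of $k$ exactly right in this charging scheme is the delicate point.
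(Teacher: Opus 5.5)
The paper does not prove this lemma; it imports Theorem~1.3 of Schudy and Sviridenko verbatim. Your overall strategy is the moment method of that source: center, bound even moments, then apply Markov with an optimized $k$. Steps~1 and~3 are fine.

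In Step~1 the constant is $3^q$, not $2^q$. Summing over $S\subseteq A\subseteq\mathcal{V}(h)$ gives $\prod_{v\in\mathcal{V}(h)\setminus S}(|\mathbb{E}Y_v|+\mathbb{E}|Z_v|)$, and $|\mathbb{E}Y_v|+\mathbb{E}|Z_v|$ can approach $3\,\mathbb{E}|Y_v|$ for Bernoulli variables. This is absorbed into $C^q$. In Step~3, rounding $k$ to an even integer is exactly what the factor $e^2$ pays for.

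The genuine gap is Step~2. The displayed moment inequality is the entire content of the theorem, and you state it as a goal rather than derive it. Moreover, the charging scheme you sketch, carried out as described, does not produce $\operatorname{var}(f(Y))$.

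A hyperedge with $S_j=\emptyset$, in particular $A_1$, is summed against $\sum_A|w'_A|\prod_{v\in A}\mathbb{E}|Z_v|\le 3^q\mu_0$. A second occurrence of a vertex is paid by $\mathbb{E}Z_v^2/\mathbb{E}|Z_v|\le 2L$. Since up to $k/2$ hyperedges can be fresh, even the exactly paired tuples $A_1=A_2, A_3=A_4,\dots$ come out as $(C^qk\,\mu_0\mu_rL^r)^{k/2}$. That yields only the weaker Bernstein form with $\mu_0\mu_rL^r$ in place of the variance, and the two differ substantially. For example, take $q=1$, Rademacher $Y_i$, $w_1=1$ and $w_i=n^{-3/4}$ for $i\ge 2$. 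Then $\operatorname{var}(f(Y))\approx 1$ while $\mu_0\mu_1L\approx n^{1/4}$.

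Your remark that multiplicity-two vertices ``produce'' the variance is correct only when hyperedges coincide in pairs. By orthogonality of distinct monomials, $\operatorname{var}(g)=\sum_A w_A'^2\prod_{v\in A}\sigma_v^2$ with $\sigma_v^2=\mathbb{E}Z_v^2$. For $q\ge 2$ there are configurations in which every vertex has multiplicity two but no two hyperedges coincide. For $q=2$ these are the cycles $\{1,2\},\{2,3\},\dots,\{k,1\}$, whose total is at most $\operatorname{tr}(N^k)$ with $N=(|w'_{ij}|\sigma_i\sigma_j)_{i,j}$.

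The missing idea is to dominate such configurations by mixed products of $\operatorname{var}(g)$ and $L\mu_r$ terms. For cycles this can be done as follows:
\begin{itemize}
\item Use $\operatorname{tr}(N^k)\le\|N\|_{\mathrm{op}}^{k-2}\|N\|_F^2$, where $\|N\|_F^2\le 2\operatorname{var}(g)$.
\item Apply a weighted Schur test with weights $\sqrt{\mathbb{E}|Z_j|}$, using $\sigma_v^2\le 2L\,\mathbb{E}|Z_v|$, to get $\|N\|_{\mathrm{op}}\le 2L\mu_1'$.
\item Then $a^{k-2}b\le\max(a,b^{1/2})^k$ collapses the bound to the two pure terms, up to a factor $k$ from the $\sim k^k$ cycle arrangements.
\end{itemize}
A version of this for general $q$, with the correct powers of $k$, is what Step~2 still requires. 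Without it, the proposal proves at best a weaker inequality than the one stated.
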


\begin{lemma}
    \label{random perturbation} (Lemma 8.2 in \cite{zhang2022edgeworth})
    Suppose we have random variables $X, Y, Z$ satisfying $$X=Y+Z$$ such that the CDF of $Y$ is smooth, and there exists a universal constant $0<M<\infty$ and a positive sequence $\zeta_n$ such that $\mcalF_Y(u+a)-\mcalF_Y(u)\leq Ma+O(\zeta_n)$ for any $u\in\bbR$ and $a>0$. Also assume that $\bbP(\vert Z\vert\geq\tilde \zeta_n)\leq n^{-1}$, that is $Z=\tilde O_{p, 1}(\tilde \zeta_n)$ for a positive sequence $\tilde \zeta_n$. We have $$\Vert \mcalF_X(u)-\mcalF_Y(u)\Vert_\infty=O(\zeta_n+\tilde \zeta_n+n^{-1}).$$
\end{lemma}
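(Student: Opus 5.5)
The plan is a direct sandwich argument for the Kolmogorov distance. I will compare the events $\{X\le u\}$ and $\{Y\le u\pm\tilde\zeta_n\}$ on the high-probability event $E:=\{|Z|<\tilde\zeta_n\}$. The bound on $Z$ controls $\bbP(E^c)$, and the near-Lipschitz condition on $\mcalF_Y$ converts the shift by $\tilde\zeta_n$ into an error of order $M\tilde\zeta_n+O(\zeta_n)$. Every estimate will hold with constants independent of $u$, so taking the supremum at the end gives the $\|\cdot\|_\infty$ bound.

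\emph{Upper bound.} Fix $u\in\bbR$. On $E$ we have $Y=X-Z<X+\tilde\zeta_n$. Hence $\{X\le u\}\cap E\subseteq\{Y\le u+\tilde\zeta_n\}$, and therefore
\[
\mcalF_X(u)\le \bbP(Y\le u+\tilde\zeta_n)+\bbP(E^c)\le \mcalF_Y(u)+M\tilde\zeta_n+O(\zeta_n)+n^{-1}.
\]
The second inequality applies the hypothesis $\mcalF_Y(u+a)-\mcalF_Y(u)\le Ma+O(\zeta_n)$ with $a=\tilde\zeta_n$, together with $\bbP(|Z|\ge\tilde\zeta_n)\le n^{-1}$.

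\emph{Lower bound.} On $E$, the event $\{Y\le u-\tilde\zeta_n\}$ implies $X=Y+Z<u$. Hence
\[
\mcalF_X(u)\ge \bbP(Y\le u-\tilde\zeta_n)-\bbP(E^c)\ge \mcalF_Y(u)-M\tilde\zeta_n-O(\zeta_n)-n^{-1}.
\]
Here the second inequality applies the same hypothesis at the point $u-\tilde\zeta_n$ with $a=\tilde\zeta_n$, which gives $\mcalF_Y(u)-\mcalF_Y(u-\tilde\zeta_n)\le M\tilde\zeta_n+O(\zeta_n)$.

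Combining the two bounds gives $|\mcalF_X(u)-\mcalF_Y(u)|\le M\tilde\zeta_n+O(\zeta_n)+n^{-1}$ for every $u$. Because $M$ is a universal constant and the $O(\zeta_n)$ term in the hypothesis is uniform in $u$ and $a$, the supremum over $u$ is $O(\zeta_n+\tilde\zeta_n+n^{-1})$.

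There is no real obstacle. The only care needed is bookkeeping: check that the $O(\zeta_n)$ in the hypothesis is uniform in $(u,a)$, as the statement asserts, and keep the strict versus non-strict inequalities consistent with how $E^c$ is defined via $|Z|\ge\tilde\zeta_n$.
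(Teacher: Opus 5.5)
Your proof is correct: on $\{|Z|<\tilde\zeta_n\}$ you get $\{X\le u\}\subseteq\{Y\le u+\tilde\zeta_n\}$ and $\{Y\le u-\tilde\zeta_n\}\subseteq\{X\le u\}$, and applying the increment hypothesis at $u$ and at $u-\tilde\zeta_n$ with $a=\tilde\zeta_n$ gives a bound of $M\tilde\zeta_n+O(\zeta_n)+n^{-1}$ that holds uniformly in $u$. The paper gives no proof of its own; it imports Lemma 8.2 of \citet{zhang2022edgeworth}, whose proof is this same two-sided shift argument, so your route is essentially the standard one, and the smoothness of $\mcalF_Y$ is indeed never needed beyond the stated increment bound.
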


\begin{lemma}(Esseen's smoothing Lemma~\citep[Section XVI.3]{feller1971introduction})
    \label{smoothing lemma}For any distribution function $F$ and a general function $G$ that has universally bounded derivative and satisfies $G(-\infty)=0$ and $G(\infty)=1$, {for an arbitrary constant $\gamma>0$,} there exist some universal constants $C_1, C_2>0$ such that $$\left\Vert F(\cdot)-G(\cdot)\right\Vert_\infty\leq C_1\int_{-\gamma}^\gamma\left\vert\frac{Ch.f.(F;t)-Ch.f.(G;t)}{t}\right\vert dt+\frac{C_2\sum_u\vert G'(u)\vert}{\gamma},$$  where $Ch.f.(G;t)$ is the characteristic function of $G$, defined as $Ch.f.(G;t):=\int_{-\infty}^\infty e^{itx}dG(x).$
\end{lemma}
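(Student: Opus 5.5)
The plan is the classical smoothing argument: convolve both $F$ and $G$ with a probability density whose characteristic function is supported in $[-\gamma,\gamma]$, bound the smoothed difference by Fourier inversion, and then undo the smoothing using the bounded derivative of $G$. Write $\Delta:=F-G$ and $m:=\sup_u|G'(u)|<\infty$; this is how I read the quantity written as $\sum_u|G'(u)|$ in the statement. Since $G(\pm\infty)\in\{0,1\}$ and $F$ is a distribution function, $\Delta$ is bounded and vanishes at $\pm\infty$. If the right-hand side is infinite there is nothing to prove, so assume the integral is finite.

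\emph{Step 1 (smoothing kernel).} Take the Fej\'er density
\[
v_\gamma(x)=\frac{1-\cos(\gamma x)}{\pi\gamma x^2},
\]
whose characteristic function is $\omega_\gamma(s)=(1-|s|/\gamma)_+$, supported on $[-\gamma,\gamma]$ and bounded by one. It has the scaling tail bound $\int_{|x|\ge a}v_\gamma(x)\,dx\le 4/(\pi\gamma a)$. Form $\Delta_\gamma:=\Delta*v_\gamma$. Because $\Delta$ is of bounded variation and vanishes at infinity, $\Delta_\gamma$ has Fourier--Stieltjes transform $\{Ch.f.(F;s)-Ch.f.(G;s)\}\,\omega_\gamma(s)$. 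The inversion formula, justified by dividing by $-is$ (this is exactly where the $1/s$ weight enters, and it uses $\Delta(\pm\infty)=0$), then gives
\[
\sup_x|\Delta_\gamma(x)|\le\frac1{2\pi}\int_{-\gamma}^{\gamma}\left|\frac{Ch.f.(F;s)-Ch.f.(G;s)}{s}\right|ds .
\]
Integrability near $s=0$ is part of the finiteness assumption. In our applications it holds because both characteristic functions equal $1+O(s)$ near zero.

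\emph{Step 2 (de-smoothing).} Let $\eta:=\sup_x|\Delta(x)|$. Suppose the supremum is approached at $x_0$ with $\Delta(x_0)\approx\eta$; the case $-\eta$ is symmetric. Since $F$ is nondecreasing and $G$ is $m$-Lipschitz, for $y\ge0$ we have
\[
\Delta(x_0+y)\ge \eta-my .
\]
Set $h:=\eta/(2m)$ and evaluate $\Delta_\gamma$ at $x_0+h$. On the region $|y|<h$ the shifted point lies in $[x_0,x_0+2h]$, where $\Delta\ge \eta-2mh+m(h-y)\ge \eta/2$ after the standard rearrangement. Outside that region we only use $\Delta\ge-\eta$. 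This yields
\[
\Delta_\gamma(x_0+h)\ge \frac{\eta}{2}\Bigl(1-\frac{4}{\pi\gamma h}\Bigr)-\eta\,\frac{4}{\pi\gamma h}
=\frac{\eta}{2}-\frac{12m}{\pi\gamma}.
\]
Hence $\eta\le 2\sup_x|\Delta_\gamma(x)|+24m/(\pi\gamma)$. Combining with Step 1 gives the claim with $C_1=1/\pi$ and $C_2=24/\pi$.

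\emph{Main obstacle.} The main obstacle is the bookkeeping in Step 2. One has to pick the shift $h$ proportional to $\eta/m$ so that the Fej\'er tail term, of order $\eta/(\gamma h)$, becomes of order $m/\gamma$ rather than involving $\eta$. One also has to handle the case where the supremum is not attained, which is done by an $\varepsilon$-approximation of $x_0$ and letting $\varepsilon\to0$. The Fourier step is routine once the bounded variation of $\Delta$ and its decay at $\pm\infty$ are noted.
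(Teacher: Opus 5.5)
Your argument is correct and is exactly the proof in Feller, Section XVI.3, which the paper cites without reproducing. It uses the same Fej\'er smoothing with $\omega_\gamma(s)=(1-|s|/\gamma)_+$, the same inversion bound $\sup_x|\Delta_\gamma(x)|\le\frac{1}{2\pi}\int_{-\gamma}^{\gamma}|(Ch.f.(F;s)-Ch.f.(G;s))/s|\,ds$, and the same de-smoothing bound $\sup_x|\Delta_\gamma(x)|\ge\eta/2-12m/(\pi\gamma)$, so you recover Feller's constants $C_1=1/\pi$ and $C_2=24/\pi$. Your dismissal of the divergent-integral case neatly replaces Feller's zero-mean and $\gamma'(0)=0$ hypotheses, and you read $\sum_u|G'(u)|$ as $\sup_u|G'(u)|$ correctly.

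Two small precisions are worth fixing:
\begin{itemize}
\item On $|y|<h$ the pointwise bound is only $\Delta(x_0+h+y)\ge\eta/2-my$. The $\frac{\eta}{2}\int_{|y|<h}v_\gamma$ contribution comes from the odd term cancelling against the even kernel, not from a pointwise $\Delta\ge\eta/2$.
\item Bounded variation of $G$ (i.e.\ $G'\in L^1$) is an implicit hypothesis needed for $Ch.f.(G;\cdot)$ to exist. It does not follow from the stated assumptions, so it should be stated rather than asserted.
\end{itemize}
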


\section{More Simulation Results}

\subsection{Analysis of CDF Approximation}
\label{CDF approx}
We compare the distance between the CDF of $\hat T_{nh}+\delta_T$, where $\delta_T\sim\mcalN(0, c_\delta n^{-1}\log n)$ with $c_\delta=0.01$ is the Gaussian smoothing random variable introduced in \Cref{subsec: studentization}, and its Edgeworth expansion approximation in \Cref{main thm}. 
To obtain the CDF of $\hat T_{nh}+\delta_T$, we run simulations across $N=2000$ independent seeds and compute the corresponding empirical CDF.
This comparison is conducted under both Setting 1 and Setting 2 with the default parameter setting. 
 \begin{figure}[t]
    
    \begin{subfigure}{0.48\textwidth}
        \centering
        \includegraphics[width=1\linewidth]{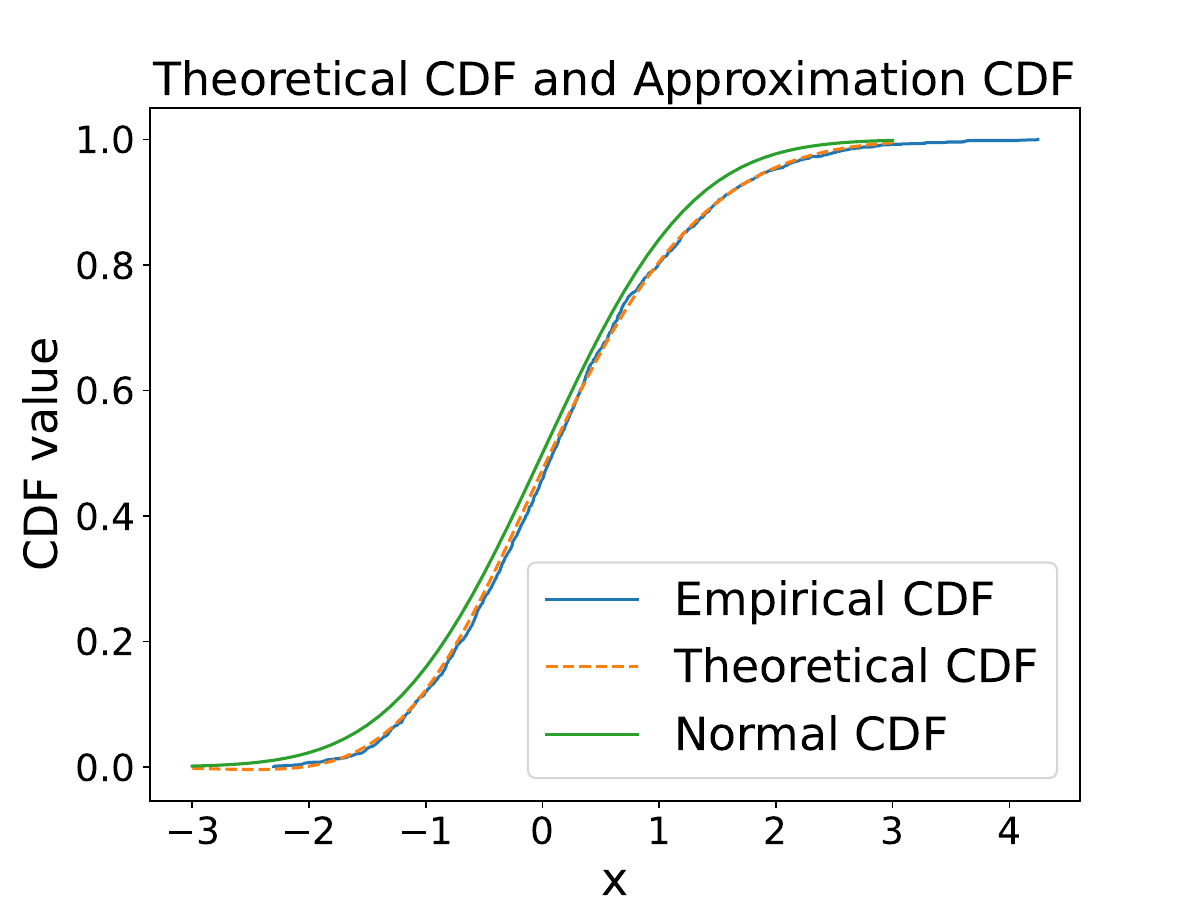}
        \label{fig:TwoPolyN200T200Rho08S08InferidxmidFrac10CombineCDF}
    \end{subfigure}
    \hspace{0.05\textwidth}
    \begin{subfigure}{0.48\textwidth}
        \centering
        \includegraphics[width=1\linewidth]{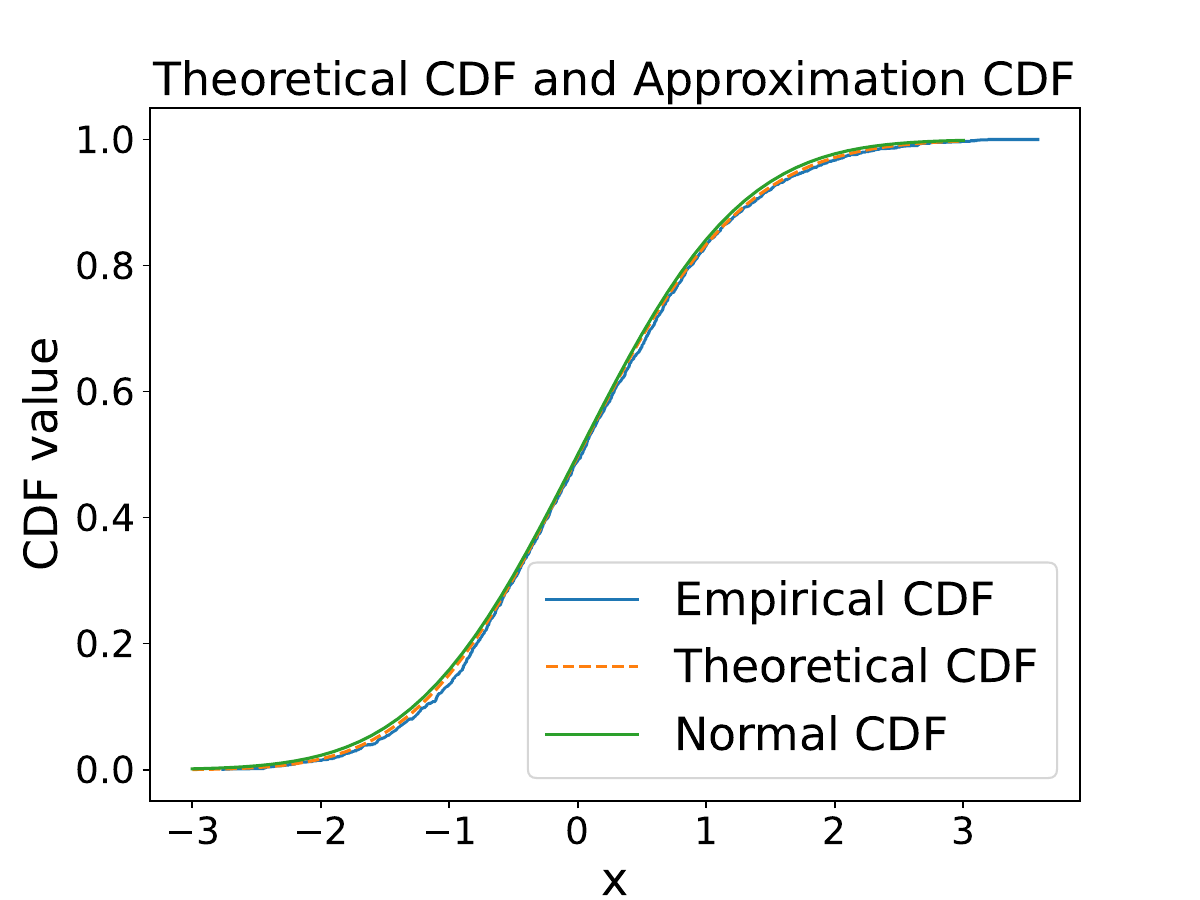} 
        \label{fig:BalanceN200T200Rho025S1Inferidx80Frac10CombineCDF}
    \end{subfigure}
    \caption{
    Comparison between the empirical CDF of the studentized ratio of network moments and its theoretical approximations. Each panel displays the empirical CDF, the CDF of the standard normal distribution, and the Edgeworth Expansion approximation derived in \Cref{main thm}.
    The left panel corresponds to Setting~\ref{two-poly} with the target inference time point $t^*=0$. The right panel corresponds to Setting~\ref{balance} with the target inference time point $t^*=-0.1$.
    }
    \label{fig:CDF.png}
\end{figure}
As shown in \Cref{fig:CDF.png}, compared with the standard normal approximation, the Edgeworth expansion based approximation provides a more accurate characterization of the sampling distribution of the test statistic.

\subsection{Impact of neighbor fraction for tuning}
\label{mcalF}
We further conduct simulation studies tp examine the effect of the fraction $\tau$ used for bandwidth selection in \eqref{opt_select_h}. 
In particular, we consider two settings of $T$ under the graphons defined in Setting \ref{two-poly}. In the first setting, we take $T=100$ and $n=200$, vary $\tau\in\{0.05,0.1,0.5,1\}$, and set the sparsity level to $\rho_n=0.25$. In the second setting, we increase $T$ to $1000$ while keeping all other parameters unchanged.
\begin{figure}[!h]
    \centering
    \includegraphics[width=\linewidth]{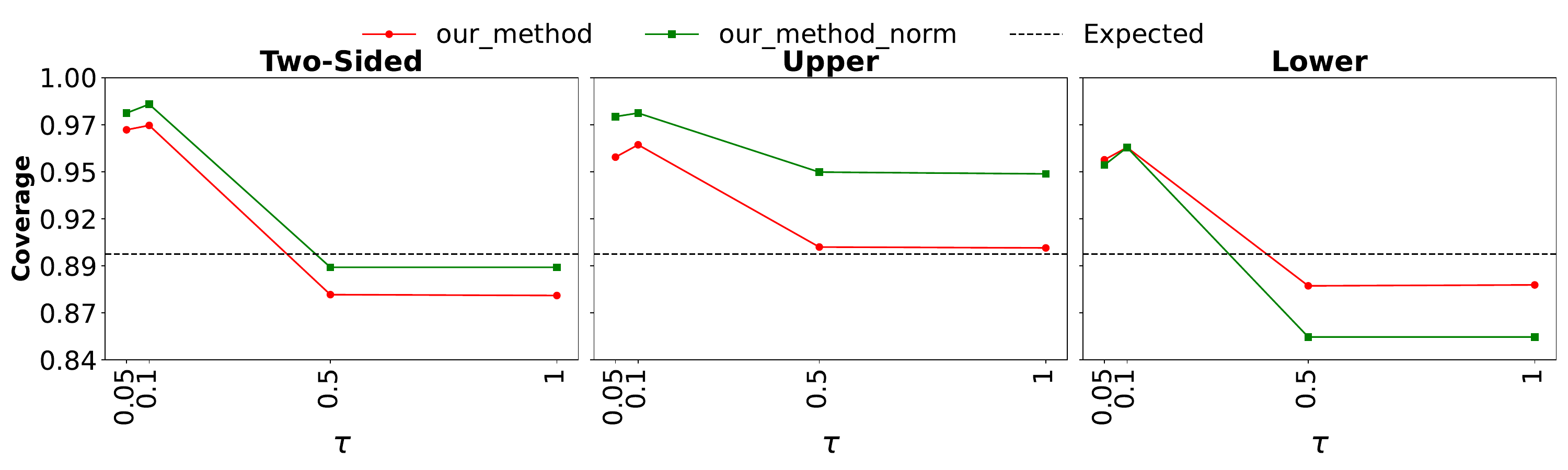}
    \caption{Coverage for Setting \ref{two-poly} with $n=200$, $T=100$ across different fractions $\tau$.}
    \label{CoverageAcrossFractionUnderTwoPolyN200T100T0RhoS08}
\end{figure}

\begin{figure}[!h]
    \centering
    \includegraphics[width=\linewidth]{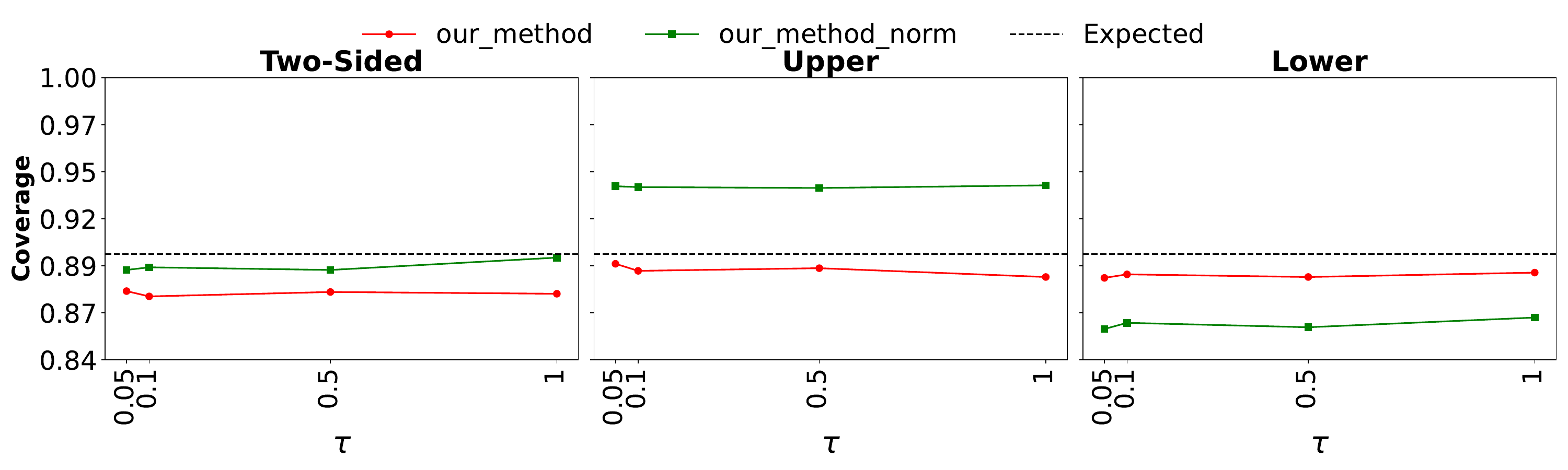}
    \caption{Coverage for Setting \ref{two-poly} with $n=200$, $T=1000$ across different fractions $\tau$.}
    \label{CoverageAcrossFractionUnderTwoPolyN200T1000T0RhoS08}
\end{figure}

As shown in \Cref{CoverageAcrossFractionUnderTwoPolyN200T100T0RhoS08}, when $T$ is moderately large, the coverage becomes closer to the nominal level as $\tau$ increases. 
As shown in
\Cref{CoverageAcrossFractionUnderTwoPolyN200T1000T0RhoS08}, for larger $T$, the coverage is more stable across different $\tau$ values, and further increasing $\tau$ has a smaller effect. These results suggest that using a larger tuning fraction is preferred when the number of observed time points $T$ is small, whereas the procedure is more robust to the choice of $\tau$ when $T$ is large.

\end{document}